\documentclass[11pt,reqno]{article}%{article}
\usepackage{fullpage} % decreases margins
\usepackage{amssymb} % AMS-LaTeX + Theorems + Symbols
\usepackage{graphicx} % to include graphics files
\usepackage{amsmath} % AMS-LaTeX + Theorems + Symbols
\usepackage{amsthm,amssymb,latexsym} % AMS-LaTeX + Theorems + Symbols
\usepackage{amstext, amsfonts,amsbsy} % not sure why I need it
\usepackage{mathrsfs} % for \mathscr
\usepackage{enumerate} % easy customization of enumeration
\usepackage{upgreek} % for \upalpha, etc.
\usepackage{float} % for \upalpha, etc.
\usepackage{color} % adds color
\usepackage{accents} % for \undertilde
\usepackage{mathtools} % for dcases, etc.
\usepackage{tikz}
\usetikzlibrary{matrix,graphs,arrows,positioning,calc,decorations.markings,shapes.symbols,shapes.geometric,shapes.misc}
\usepackage{calc, ifthen, ae, xstring, etoolbox, xkeyval,pgfplots}
\usepackage{tikz-cd}
\usepackage[pdftex,bookmarks,colorlinks,breaklinks]{hyperref}  % PDF hyperlinks, with coloured links
\definecolor{dullmagenta}{rgb}{0.4,0,0.4}   % #660066
\definecolor{darkgreen}{rgb}{0.0,0.5,0.0}   % #660066
\definecolor{darkblue}{rgb}{0,0,0.4}
\hypersetup{linkcolor=red,citecolor=blue,filecolor=dullmagenta,urlcolor=darkblue} % coloured links
\usepackage{cleveref}

\newtheorem{theorem}{Theorem}[section]
\newtheorem{proposition}[theorem]{Proposition}

\newtheorem{lemma}[theorem]{Lemma}
\newtheorem{definition}[theorem]{Definition}

\newtheorem{maintheorem}{Theorem}

\theoremstyle{definition}

\newtheorem{notation}[theorem]{Notation}
\newtheorem*{notation*}{Notation} 

\theoremstyle{remark}
\newtheorem{remark}[theorem]{Remark}
\AddToHook{env/remark/begin}{\crefalias{theorem}{remark}}
\AddToHook{env/lemma/begin}{\crefalias{theorem}{lemma}}
\AddToHook{env/definition/begin}{\crefalias{theorem}{definition}}
\AddToHook{env/proposition/begin}{\crefalias{theorem}{proposition}}
\AddToHook{env/corollary/begin}{\crefalias{theorem}{corollary}}
\AddToHook{env/assumption/begin}{\crefalias{theorem}{assumption}}

\numberwithin{equation}{section}
\newcommand{\Pain}[1]{\text{P}_{\mathrm{#1}}}

\DeclareRobustCommand{\nought}[1]{\accentset{\circ}{#1}}

\newcommand{\C}{\mathbb{C}}

\newcommand{\Z}{\mathbb{Z}}
\newcommand{\p}{\mathbb{P}}
\DeclareMathOperator{\Pic}{Pic}

\DeclareMathOperator{\Aut}{Aut}
\DeclareMathOperator{\Span}{Span}

\newcommand{\E}{\mathcal{E}}
\newcommand{\F}{\mathcal{F}}

\newcommand{\M}{\mathcal{M}}
\newcommand{\K}{\mathcal{K}}
\newcommand{\X}{\mathcal{X}}
\newcommand{\Y}{\mathcal{Y}}
\newcommand{\h}{\mathcal{H}}
\renewcommand{\L}{\mathcal{L}}
\newcommand{\pain}[1]{\text{P}_{\mathrm{#1}}}
\newcommand{\dpain}[1]{\text{dP}_{\mathrm{#1}}}

\newcommand\ubar[1]{%
  \underaccent{\bar}{#1}}
\newcommand{\Bl}{\operatorname{Bl}}

\usepackage{todonotes}
\usepackage{dynkin-diagrams}

\begin{document}

{\noindent\Large\bf On the discrete Painlev\'e equivalence problem, non-conjugate translations and nodal curves% (v1)\\[-3pt]
}
\medskip

\begin{flushleft}

\textbf{Anton Dzhamay}\\
Beijing Institute of Mathematical Sciences and Applications (BIMSA)\\ 
No.~544, Hefangkou Village Huaibei Town, Huairou District, Beijing 101408, China\\
E-mail: \href{mailto:adzham@bimsa.cn}{\texttt{adzham@bimsa.cn}}\qquad 
ORCID ID: \href{https://orcid.org/0000-0001-8400-2406}{\texttt{0000-0001-8400-2406}}\\[5pt]

\textbf{Galina Filipuk}\\
Institute of Mathematics, University of Warsaw, Banacha 2, Warsaw, 02-097, Poland\\
E-mail: \href{mailto:filipuk@mimuw.edu.pl}{\texttt{filipuk@mimuw.edu.pl}}\qquad
ORCID ID: \href{https://orcid.org/0000-0003-2623-5361}{\texttt{0000-0003-2623-5361}}\\[5pt] 

\textbf{Alexander Stokes}\\
Waseda Institute for Advanced Study, Waseda University, Nishi-Waseda Building, 1-21-1 Nishi Waseda, Shinjuku-ku, Tokyo 1-21-1 Nishi Waseda, Tokyo 169-0051, Japan\\
E-mail: \href{mailto:stokes@aoni.waseda.jp}{\texttt{stokes@aoni.waseda.jp}}\qquad
ORCID ID: \href{https://orcid.org/0000-0001-6874-7141}{\texttt{0000-0001-6874-7141}}\\[8pt]

\emph{Keywords}: Orthogonal polynomials, recurrence coefficients, discrete Painlev\'e equations, affine Weyl groups, nodal curves.\\[3pt]

\emph{MSC2020}: 
33E17,      % Painlevé-type functions
33C45.      %Orthogonal polynomials and functions of hypergeometric type (Jacobi, Laguerre, Hermite, Askey scheme, etc.)
39A36.    %Integrable difference and lattice equations; integrability tests
%33D45,     %Basic orthogonal polynomials and functions (Askey-Wilson polynomials, etc.)
%34M55,     %Painlevé and other special ordinary differential equations in the complex domain; classification, hierarchies
%34M56,  	%Isomonodromic deformations for ordinary differential equations in the complex domain
%14E07,     %Birational automorphisms, Cremona group and generalizations
%14J26,  	%Rational and ruled surfaces
%14J81  	%Relationships between surfaces, higher-dimensional varieties, and physics
%39A13      %Difference equations, scaling ($q$-differences) 
%39A45  	%Difference equations in the complex domain
%37F10  	%Dynamics of complex polynomials, rational maps, entire and meromorphic functions; Fatou and Julia sets
\end{flushleft}

\date{}

\begin{abstract}
We consider several examples of nonautonomous systems of difference equations coming from semi-classical orthogonal polynomials via recurrence coefficients and ladder operators, with respect to various generalisations of Laguerre and Meixner weights.
We identify these as discrete Painlev\'e equations and establish their types in the Sakai classification scheme in terms of the associated rational surfaces. 
In particular, we find examples which come from different weights and share a common surface type $D_5^{(1)}$ but are inequivalent in two ways. 
First, their dynamics are generated by non-conjugate elements of $\widehat{W}(A_3^{(1)})$. 
Second, some of the examples have associated surfaces being non-generic in the sense of having nodal curves. 
The symmetries of these examples form subgroups of the generic symmetry group, which we compute. 
In particular, we find $(W(A_1^{(1)})\times W(A_1^{(1)}))\rtimes \Z/2\Z$.
These examples give further weight to the argument that any correspondence between different weights and the Sakai classification  should make use of the refined version of the discrete Painlev\'e equivalence problem, which takes into account not just surface type, but also the group elements generating the dynamics as well as parameter constraints, e.g. those corresponding to nodal curves.
\end{abstract}

\section{Introduction}

Second-order systems of differential or difference equations satisfied by various quantities related to semi-classical orthogonal polynomials often turn out to be differential or discrete Painlev\'e equations. 
The earliest example is perhaps that of a second-order difference equation shown by Shohat  \cite{shohat} to be satisfied by the coefficients in the three-term recurrence for polynomials $P_n(x)$ orthogonal with respect to the weight $w(x;t) = e^{t x^2 - x^4}$. 
This turns out to be a special case of a discrete Painlev\'e equation, which was first dubbed a discrete Painlev\'e-I equation when it was found, also in relation to orthogonal polynomials, in \cite{fokasitskitaev}, due to it admitting a continuum limit to the first Painlev\'e equation $\pain{I}$.
Since then, numerous cases of polynomials orthogonal with respect to semi-classical weights have been shown to be related to Painlev\'e equations, differential or discrete, via systems of differential or difference equations satisfied by recurrence coefficients or other associated quantities (see e.g. the monograph \cite{walterbook} and references therein).

However, the systems may not immediately take the standard form of a Painlev\'e equation, but rather are written in coordinates coming from the orthogonal polynomials side - they are Painlev\'e equations in disguise. 
The problem of unmasking such an example, i.e. finding a birational change of variables to some standard form of a Painlev\'e equation, is known as a ``Painlev\'e equivalence problem" \cite{Cla:2019:OPPE}.
Sometimes, solving it is possible by inspection of the equations themselves, but at other times it is far from obvious.
In \cite{DzhFilSto:2020:RCDOPWHWDPE}, an algorithmic procedure was presented for recognising a discrete Painlev\'e equation written in non-standard coordinates, determining its type in the Sakai classification scheme \cite{SAKAI2001}, and transforming it to a standard form of a discrete Painlev\'e equation of the same type, e.g. as appearing in \cite{KajNouYam:2017:GAPE} (see \cite{DzhFilSto:2022:DERCSOPTRPEGA} for the differential case and \cite{hamiltonians} for considerations of Hamiltonian structures).

The Sakai classification scheme for discrete Painlev\'e equations is based on a classification of certain rational surfaces.
Sakai defined a generalised Halphen surface as a smooth projective rational surface over $\C$ with an effective anticanonical divisor $D= \sum_i m_i D_i$ such that the intersection numbers $D_i\bullet D=0$ for all $i$ (such an anticanonical divisor $D$ is said to be of canonical type).
A generalised Halphen surface is either a Halphen surface of index one (a kind of rational elliptic surface), or has unique effective anticanonical divisor, and in this paper we refer to this latter type as \emph{Sakai surfaces}, which were classified in \cite{SAKAI2001}.
Symmetries of Sakai surfaces give rise to discrete Painlev\'e equations.

Part of the data specifying the type of a discrete Painlev\'e equation is a pair of affine Dynkin diagrams associated to the \emph{surface type} $\mathcal{R}$ and \emph{symmetry type} $\mathcal{R}^{\perp}$.
The surface type $\mathcal{R}$ describes the unique effective anticanonical divisor, in a way similar to the Kodaira classification of singular fibres of elliptic surfaces.
The symmetry type $\mathcal{R}^{\perp}$ is that of the affine Weyl group in terms of which the symmetries of the surface are described, and is determined by $\mathcal{R}$.

When there are multiple instances of discrete Painlev\'e equations arising from some common situation where there is a classification present, it is natural to ask how this matches with the Sakai scheme. 
This is the case for discrete Painlev\'e equations appearing in relation to recurrence coefficients of semiclassical orthogonal polynomials and gap probabilities in related probabilistic models, and one may ask how different weights on the orthogonal polynomials side match up with different types of discrete Painlev\'e equations. 
Among the discrete systems coming from these contexts that have been identified as discrete Painlev\'e equations and placed in the Sakai classification are examples of surface/symmetry type $(\mathcal{R},\mathcal{R}^{\perp})$ being 
$(A_1^{(1)},E_7^{(1)})$ \cite{antonalisa}, 
$(D_4^{(1)},D_4^{(1)})$ \cite{DzhFilSto:2020:RCDOPWHWDPE,borodin2003,borodinboyarchenko}, 
$(D_5^{(1)},A_3^{(1)})$ \cite{borodinboyarchenko,HuDzhChe:2020:PLUEDPE}, 
$(D_6^{(1)}, (A_1+A_1)^{(1)})$ \cite{D6lixing},
$(E_6^{(1)},A_2^{(1)})$ \cite{shohat, DzhFilSto:2022:DERCSOPTRPEGA, antonelizaveta}, $(A_5^{(1)}, (A_2+A_1)^{(1)})$ \cite{qlaguerre}.

However, the pair $(\mathcal{R},\mathcal{R}^{\perp})$ does not uniquely determine a discrete Painlev\'e equation, even up to equivalence via birational changes of variables, and there can be infinitely many inequivalent discrete Painlev\'e equations of the same surface type, corresponding to non-conjugate (quasi-)translation elements of the symmetry group corresponding to $\mathcal{R}^{\perp}$.
% \cite{dPsinfinite}
% see \Cref{rem:nonconjugatetranslations}.
Further, while each surface type determines a unique symmetry type, this refers to symmetries of families of surfaces in the \emph{generic} case. 
When parameters take special values, the symmetries compatible with these form a subgroup of that from the generic case, which may not appear explicitly in the Sakai list. 
The existence of such `exotic' symmetry types arising from non-generic Sakai surfaces has been known about for quite some time, at least as far back as \cite{takenawaD5}, in which the extended affine Weyl group $\widetilde{W}((A_1+A_3)^{(1)})$ was identified as a subgroup of the generic $\widetilde{W}(D_5^{(1)})$ symmetry for surfaces of type $A_3^{(1)}$. 
For more examples of exotic symmetry types being identified in discrete Painlev\'e equations, see \cite{AHJN,yangtranslations,dP2symmetry}. 
Such non-generic symmetry types can arise when the surface has nodal curves (here meaning $(-2)$-curves disjoint from the anticanonical divisor, see \Cref{def:nodalcurve}) or when there are constraints on parameters in the surface and associated equations related to projective reduction \cite{KNT}. 
Combinations of the two are also possible, see \Cref{rem:RCGeqn} in the main text.

The point of this paper is to demonstrate that any correspondence between different weights and different types of discrete Painlev\'e equations should be \emph{finer} than the type $(\mathcal{R},\mathcal{R}^{\perp})$.
When a discrete Painlev\'e equation appearing in the context of orthogonal polynomials (or another of the many areas of mathematics and physics in which they play a role), to specify the equation up to equivalence via birational changes of variables one should provide a tuple $(\mathcal{R},\mathcal{R}^{\perp}, \mathcal{C}, \mathcal{S}, [w])$, where:
\begin{itemize}
    \item $\mathcal{R}$ is the surface type; 
    \item $\mathcal{R}^{\perp}$ is the generic symmetry type, which is determined by $\mathcal{R}$;
    \item $\mathcal{C}$ is the specification of any parameter constraints, which when written in terms of root variables defines a subset of the set of isomorphism classes of surfaces of type $\mathcal{R}$;
    \item $\mathcal{S}$ is the symmetry type, which corresponds to a subgroup (which may be proper if $\mathcal{C}$ is nontrivial) of the generic one corresponding to $\mathcal{R}^{\perp}$;
    \item $[w]$ is the conjugacy class in the symmetry group of the element generating the equation. In generic cases this refers to conjugacy in the generic symmetry group determined by $\mathcal{R}$, but in non-generic cases requires consideration of parameter constraints, see \Cref{rem:conjugacyinsymmetrygroups} in the main text.
\end{itemize}
We demonstrate the necessity to specify not just the pair $(\mathcal{R},\mathcal{R}^{\perp})$ through several examples coming from orthogonal polynomials with respect to different weights.
The examples we consider all share the same $\mathcal{R}$ and $\mathcal{R}^{\perp}$, but have different $\mathcal{C}$, $\mathcal{S}$ and $[w]$.
In \cite{D6lixing}, two of us, together with Li and Zhang, pointed out the necessity to specify $[w]$ when considering a discrete Painlev\'e identification problem. 
In this paper we demonstrate the necessity to specify $\mathcal{C}$ and the $\mathcal{S}$ it determines, and the implications of this for the notion of symmetry types for discrete Painlev\'e equations.

\subsection{Discrete systems from various generalisations of Meixner and Laguerre orthogonal polynomials}
\label{subsec:discretesystemsfromorthogonalpolynomials}

The way in which the examples we consider arise from orthogonal polynomials is via three-term recurrence relations for the polynomials, or the related ladder operator construction.
The former is classical, see e.g. \cite{szegobook, Chiharabook,walterbook}, and the latter was developed by Chen and Ismail \cite{ChenIsmail} and is by now well-known, see \cite[Ch. 3]{Ismailbook}.
We remark that discrete Painlev\'e equations can also be derived from orthogonal polynomials via isomonodromy deformations of linear differential equations satisfied by the polynomials and associated Riemann-Hilbert problems as in, e.g.,  \cite{fokasitskitaev,borodinboyarchenko}.

In each case we consider below, there are two quantities, related to the recurrence coefficients and/or ladder operators of the relevant orthogonal polynomials, which satisfy a system of difference equations with respect to the index $n$ of the polynomials, and a system of differential equations with respect to a parameter in the weight. 

The examples we will consider are as follows. We will not reproduce the derivations of the systems and instead defer these to the references given below.
\subsubsection{Laguerre weight on a finite interval (L)}
This weight was considered in \cite{LC17} and is given by 
    \begin{equation}\label{weight-L}
        w^{\operatorname{L}}(x;\alpha) = x^{\alpha} \exp(-x), \quad x \in [0,s], \quad s \geq 0, \alpha>0.   
    \end{equation}
    This leads to a system of difference equations with respect to $n$ and a system of differential equations with respect to $s$ for functions $f=f_n(s)$, $g=g_n(s)$ coming from the ladder operators for the polynomials orthogonal with respect to the weight \eqref{weight-L}.
    Explicitly the discrete system is given by
\begin{equation}\label{eq-LUE-disc} \tag{L}
	\left\{
		% \bar{f} f = \frac{g^2 - (\alpha+ 2n)g + (\alpha+n)n}{g^2} ,\qquad
		\bar{f} f = \frac{(g - n)(g-\alpha-n)}{g^2} ,\qquad
		\ubar{g} + g = \frac{(1-\alpha + s -2n) f + \alpha + 2n +1}{(f-1)^2},
	\right.
\end{equation}
where $\bar{f}=f_{n+1}$, $\ubar{g}=g_{n-1}$. 
Similar notation will be used to indicate discrete evolution with $n$ throughout the paper.

\subsubsection{Perturbed Laguerre weight on the nonnegative real line (pL)}
    This weight was considered in \cite{MC20}, and also earlier in \cite{forresterormerod}, and is given by 
    \begin{equation} \label{weight-pL}
        w^{\operatorname{pL}}(x;s,\alpha,\gamma) = x^{\alpha} \exp(-x) |x-s|^{\gamma}(A+ B \theta(x-s)), \quad x\in [0,\infty), \quad 
        \begin{gathered}
            s \geq 0, \,\alpha > 0, \,\gamma > 0,\\
            A\geq 0, \,A+B\geq 0.
        \end{gathered} 
    \end{equation}
    This leads to a system of difference equations with respect to $n$ and a system of differential equations with respect to $s$ for functions $f=f_n(s)$, $g=g_n(s)$ coming from the ladder operators for the polynomials orthogonal with respect to the weight \eqref{weight-pL}, in which $\alpha$ and $\gamma$ are regarded as constant parameters.
     Explicitly, the discrete system is given by
   \begin{equation}\label{eq-pLUE-disc} \tag{pL}
	\left\{
		\bar{f} f = \frac{(g-  n)(g-\alpha-n)}{g(g+\gamma)} ,\quad
		\ubar{g} + g = \frac{-\gamma f^2  + ( 1-\alpha + s -2n + \gamma) f + \alpha + 2n +1}{(f-1)^2},
	\right.
    \end{equation}
and reduces to the system \eqref{eq-LUE-disc} when $\gamma=0$.
 \subsubsection{Semi-classical Meixner weight on $\mathbb{Z}_{\geq 0}$ (M) }
    This weight was considered in \cite{BoelenFilipukVanAssche} and is given by
    \begin{equation} \label{weight-M}
        w^{\operatorname{M}}(k;c,\gamma) = \frac{(\gamma)_k c^k}{(k!)^2}, \quad k\in \Z_{\geq 0}, \quad c > 0, \,\gamma > 0.
    \end{equation}    
    This leads to a system of difference equations with respect to $n$ and a system of differential equations with respect to $c$ for functions $x=x_n(c)$, $y=y_n(c)$ related to the coefficients $a_n^2$, $b_n$ from the three-term recurrence satisfied by the polynomials orthonormal with respect to the weight \eqref{weight-M}. 
    Explicitly the discrete system is
\begin{equation}\label{eq-degen-Meixner-disc} \tag{M}
	\left\{
		(\bar{x} + y)(x + y) = \frac{(\gamma-1)}{c^2} y (y-c)^2  ,\quad
		(x + \ubar{y})(x + y) = \frac{x(x+c)^2}{x- \tfrac{cn}{\gamma-1}}.
	\right.
\end{equation}
 
     \subsubsection{Generalised semi-classical Meixner weight on $\mathbb{Z}_{\geq 0}$ (gM)}
    The weight was considered in \cite{SmetVanAssche,FilipukVanAssche} and is given by 
    \begin{equation}\label{weight-gM}
        w^{\operatorname{gM}}(k;c,\beta,\gamma) = \frac{(\gamma)_k c^k}{(\beta)_k k!}, \quad k\in \Z_{\geq 0}, \quad c > 0, \,\beta>0, \, \gamma > 0.
    \end{equation}   
     This leads to a system of difference equations with respect to $n$ and a system of differential equations with respect to $c$ for functions $x=x_n(c)$, $y=y_n(c)$ related to the coefficients $a_n^2$, $b_n$ from the three-term recurrence satisfied by the polynomials orthonormal with respect to the weight \eqref{weight-gM}.
       Explicitly the discrete system is
\begin{equation}\label{eq-Meixner-disc} \tag{gM}
	\left\{
		(\bar{x} + y)(x + y) = \frac{(\gamma-1)}{c^2} y (y-c) \left(y- c \frac{\gamma-\beta}{\gamma-1}\right)  ,\quad
		(x + \ubar{y})(x + y) = \frac{x(x+c)}{x- \tfrac{cn}{\gamma-1}} \left(x + c \frac{\gamma-\beta}{\gamma-1}\right),
	\right.
\end{equation}
and reduces to the system \eqref{eq-degen-Meixner-disc} when $\beta=1$.

\subsubsection{Known relations to discrete Painlev\'e equations}
Regarding connections between the above weights and differential and discrete Painlev\'e equations, we make the following remarks. 

For \eqref{eq-LUE-disc}, the discrete system is a discrete Painlev\'e equation of surface type $D_5^{(1)}$, and the associated differential system with respect to $s$ can be transformed to a special case of one of the standard Hamiltonian forms of the fifth Painlev\'e equation $\pain{V}$, which shares that surface type.
This was established in \cite{HuDzhChe:2020:PLUEDPE}, using the approach of \cite{DzhFilSto:2020:RCDOPWHWDPE, DzhFilSto:2022:DERCSOPTRPEGA}.

For \eqref{eq-pLUE-disc}, we remark that the $\gamma=0$ case of the weight $w^{\operatorname{pL}}$ in equation \eqref{weight-pL} was studied by Basor and Chen in \cite{BasorChenPV}, and they derived a particular case of $\pain{V}$, which coincides with that associated with \eqref{eq-LUE-disc}.
In the general case, the ladder operator construction was performed by Forrester and Ormerod \cite{forresterormerod}, who also compared this with the approach involving a Riemann-Hilbert problem for the orthogonal polynomials along the lines of \cite{fokasitskitaev}.
They connected the differential system with respect to $s$ with $\pain{V}$, which was also done in \cite{MC20} (in a way consistent with the results of \cite{BasorChenPV} when $\gamma=0$), but the discrete system itself was not identified as a discrete Painlev\'e equation.
We remark that the discrete system studied by Zhu, Chen and Zhang in \cite{siqi} is very similar but comes from a different weight, and was identified with a discrete Painlevé equation of surface type $D_5^{(1)}$. This example is the same one with which we identify \eqref{eq-pLUE-disc} below, but the recurrence considered in \cite{siqi} comes with a different initial condition.

For \eqref{eq-degen-Meixner-disc}, the weight $w^{\operatorname{M}}$ in equation \eqref{weight-M} is a special case of one introduced by Ronveaux \cite{ronveaux86}, and when $\gamma=1$, the weight $w^{\operatorname{M}}$ reduces to the classical Charlier weight $\frac{c^k}{k!}$.
In this case, the discrete system breaks down and is not required to compute the recurrence coefficients, which are given by $a_n^2=c n$, $b_n=n+c$.
We remark that the $N=2$ case of the generalised Charlier weight $\frac{c^k}{(k!)^{N}}$ on $\Z_{\geq 0}$ defined in \cite{genCharlier} was shown to be related to a particular case of a discrete Painlev\'e equation known as $\dpain{II}$ in \cite{VanAsscheFoup}, which also shares the surface type $D_5^{(1)}$.

For \eqref{eq-Meixner-disc}, the connection to the fifth Painlev\'e equation was found in \cite{FilipukVanAssche}. 
The discrete system was derived in \cite{SmetVanAssche}, and was identified as a limiting case of a discrete Painlev\'e equation quoted from \cite{GramaniReview} and \cite{VanAsscheDESFOP}.
This limiting procedure involves introducing a parameter $\varepsilon$ into the variables and coefficients in an example of a discrete Painlev\'e equation, then taking a formal limit as $\varepsilon\to0$ to derive the system \eqref{eq-Meixner-disc}.
Therefore neither the type of the discrete system \eqref{eq-Meixner-disc} in the Sakai scheme, nor its possible relation to a standard example of a discrete Painlev\'e equation via birational transformation, were established.
The weight $w^{\operatorname{gM}}$ in equation \eqref{weight-gM} can be considered on either the lattice $\mathbb{N}=\Z_{\geq 0}$ or the shifted lattice $\mathbb{N}+1-\beta$, and these can be combined to give a weight on the bilattice $\mathbb{N}\cup \left(\mathbb{N}+1-\beta\right)$. 
These three situations lead to the same discrete and differential systems but with different initial conditions, as shown in \cite{SmetVanAssche} and explained in \cite{FilipukVanAssche}.
An alternative derivation of the relation between the discrete and differential systems and $\pain{V}$ via Hankel determinant representations and the $\sigma$-form of $\pain{V}$ was given in \cite{ClarksonOPs}.

\subsection{Summary of results}
The systems of difference equations \eqref{eq-LUE-disc}, \eqref{eq-pLUE-disc}, \eqref{eq-degen-Meixner-disc}, and \eqref{eq-Meixner-disc} correspond to discrete Painlev\'e equations which all share a surface type $\mathcal{R}=D_5^{(1)}$, and the systems of differential equations are all related by birational transformations to instances of the fifth Painlev\'e equation $\pain{V}$ in its standard Hamiltonian form, which lives on the same type of surface.
However, the examples of discrete Painlev\'e equations are distinguished by the conjugacy classes of their associated symmetries as well as their symmetry groups, which is related to the presence of nodal curves on the surfaces, as we will elaborate on in \Cref{sec:identificationprocedure}.

The generic symmetry type of a Sakai surface of type $D_5^{(1)}$ is $\mathcal{R}^{\perp}=A_3^{(1)}$.
That is, for a generic surface of type $D_5^{(1)}$, the symmetry group (in the sense of Cremona isometries of the surface and their Cremona action as made precise in \cite{SAKAI2001,ASIDEnotes}) is $\widehat{W}(A_3^{(1)}) := W(A_3^{(1)}) \rtimes \operatorname{Aut}(A_3^{(1)})$, 
where $W(A_3^{(1)})$ is the affine Weyl group of type $A_3^{(1)}$ and $\operatorname{Aut}(A_3^{(1)})$ is the group of automorphisms of the corresponding (extended) Dynkin diagram.
Discrete Painlev\'e equations are generated by translation elements (or more generally elements of infinite order) of this group.

There are two examples of discrete Painlev\'e equations of surface type $D_5^{(1)}$ which are relevant in this paper. 

The first is that appearing in \cite[Sec. 8.1.17]{KajNouYam:2017:GAPE}, which we refer to as the KNY example:  
 \begin{equation} \label{eq-KNY-dP} 
	\left\{
	\begin{aligned}
		\bar{q} 	&= 1 - q - \frac{a_0}{p+t} - \frac{a_2}{p}, \\
		\ubar{p} 	&= - p - t + \frac{a_1}{q} + \frac{a_3}{q-1}, 
	\end{aligned}
	\qquad 
	\begin{aligned}
		\bar{a}_0 &= a_0 + 1, &&\bar{a}_1 = a_1 - 1, \\
		\bar{a}_2 &= a_2 + 1, &&\bar{a}_3 = a_3 - 1,
	\end{aligned}
	\right.
    \tag{$\operatorname{KNY}$}
\end{equation}
where we have adopted the usual convention of writing the non-autonomous nature of a discrete Painlev\'e equation in terms of evolution of parameters $a_i$, which here are normalised so that $a_0+a_1+a_2+a_3=1$.
The associated symmetry is a translation, which we denote by
$T_{\operatorname{KNY}} \in \widehat{W}(A_3^{(1)})$,
associated with a weight of squared length $1$ of the underlying finite $A_3$ type root system. 

The second is one that appeared in \cite[Sec. 7]{SAKAI2001} (though in different coordinates) labelled as $\dpain{IV}$, which we refer to as the Sakai example: 
\begin{equation} \label{eq-Sakai-dP} 
	\left\{
	\begin{aligned}
		\bar{q} 	&= \frac{p+t}{t} \left( 1 - \frac{a_0 + a_1}{ a_0 + (p+t)q} \right) , \\
		\bar{p} 	&= \frac{1}{1-\bar{q}}\left( a_2 + \frac{(a_0+a_1)p}{t+p - t \bar{q}} \right),  
	\end{aligned}
	\qquad 
	\begin{aligned}
		\bar{a}_0 &= a_0 + 1, &&\bar{a}_1 = a_1, \\
		\bar{a}_2 &= a_2, &&\bar{a}_3 = a_3 - 1.
	\end{aligned}
	\right.
    \tag{$\operatorname{Sak}$}
\end{equation}
The associated symmetry is a translation, which we denote by
$T_{\operatorname{Sak}} \in \widehat{W}(A_3^{(1)})$,    
which is associated to a weight of squared length $\tfrac{3}{4}$, so is not conjugate to $T_{\operatorname{KNY}}$.

Our results establishing the types of the systems of difference equations from the various weights above as discrete Painlev\'e equations are summarised as follows.

\begin{maintheorem}
\label{mainthm:pLaguerre}
The system of difference equations \eqref{eq-pLUE-disc} is a discrete Painlev\'e equation of surface type $D_5^{(1)}$ with
        \begin{enumerate}[(1)]
           \item symmetry group element being a translation conjugate to $T_{\operatorname{KNY}}$;
           \item surfaces being generic, in that root variables are free when parametrised by parameters from the weight;
           \item the full symmetry group $\widehat{W}(A_3^{(1)})=W(A_3^{(1)})\rtimes \operatorname{Aut}(A_3^{(1)})$.
        \end{enumerate}       
    The system \eqref{eq-pLUE-disc} is related to the KNY example \eqref{eq-KNY-dP} by a birational change of variables up to appropriate parameter identification.    
\end{maintheorem}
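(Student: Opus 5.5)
The proof follows the identification procedure of \cite{DzhFilSto:2020:RCDOPWHWDPE}. First, I would write \eqref{eq-pLUE-disc} as a birational map $\varphi:(f,g)\mapsto(\bar f,\bar g)$ on $\p^1\times\p^1$. The first equation gives $\bar f$ in terms of $f,g$. Shifting the second equation forward, $\bar g=-g+\frac{-\gamma \bar f^2+(1-\alpha+s-2n-2+\gamma)\bar f+\alpha+2n+3}{(\bar f-1)^2}$, which gives $\bar g$. I would compute the indeterminacy points of $\varphi$ and $\varphi^{-1}$ and resolve them by blowups. The candidates are visible from the formulas: $f=0,\infty$ with $g=n$, $g=\alpha+n$, $g=0$, $g=-\gamma$; and $f=1$ with $g=\infty$, where a cascade of infinitely near points should appear.

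I expect eight base points, arranged in chains, so that the proper transforms of the lines $f=0$, $f=\infty$, $g=\infty$ (and possibly $f=1$ after a fibre change) together with exceptional curves form an anticanonical divisor. The components should be eight $(-2)$-curves whose intersection graph is the $D_5^{(1)}$ Dynkin diagram. This identifies the surface type $\mathcal R=D_5^{(1)}$. I would then verify that $\varphi$ lifts to an isomorphism between the surfaces for $n$ and $n+1$, i.e.\ that it induces an isomorphism of $\Pic$, and compute this action on the basis $\mathcal H_f,\mathcal H_g,\mathcal E_1,\dots,\mathcal E_8$.

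Next, I would choose a basis of the $D_5^{(1)}$ surface roots $\delta_i$ and of the $A_3^{(1)}$ symmetry roots $\alpha_0,\dots,\alpha_3$ in $\Pic$, matching the KNY conventions for \cite[Sec.~8.1.17]{KajNouYam:2017:GAPE}. I would compute $\varphi_*(\alpha_i)=\alpha_i+k_i\delta$, where $\delta=-K$. If the translation vector is $(k_0,k_1,k_2,k_3)=(1,-1,1,-1)$ up to a diagram automorphism and reordering, then $\varphi_*$ is a translation conjugate to $T_{\operatorname{KNY}}$; the invariant check is that the associated weight has squared length $1$, unlike the length $\tfrac34$ of $T_{\operatorname{Sak}}$. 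This gives part (1).

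Then I would compute the period map. Using a holomorphic symplectic form with poles on $D$, I would compute the root variables $a_i=\chi(\alpha_i)$ as functions of $\alpha,\gamma,n,s$. I expect them to be affine combinations such as $n$, $\alpha+n$, $\gamma$ and $1-\alpha-\gamma-2n$ (normalised to sum to $1$); this is the most error-prone step. Part (2) follows by checking that the map $(\alpha,\gamma,n)\mapsto(a_0,\dots,a_3)$ has full rank, so that root variables are free. In particular, no $a_i$ is identically zero or integral, which excludes nodal curves for generic parameters. Part (3) then follows from Sakai's theory \cite{SAKAI2001}: the symmetry group of a generic family is the full $\widehat W(A_3^{(1)})$. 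Finally, I would construct the explicit change of variables by mapping the exceptional configuration onto that of the KNY surface, using Möbius transformations in $f$ and affine transformations in $g$, and possibly an elementary transformation, matching $\mathcal H$'s and $\mathcal E$'s. I would then check that the composition with a conjugating Weyl element reproduces \eqref{eq-KNY-dP} under the parameter identification obtained above.

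The main obstacle is the blowup cascade over $f=1$, $g=\infty$ and the correct identification of the $D_5^{(1)}$ configuration, together with finding the explicit Weyl-group conjugation between $\varphi_*$ and $T_{\operatorname{KNY}}$. I would handle the latter by decomposing $\varphi_*$ into simple reflections and diagram automorphisms.
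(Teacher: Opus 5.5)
Your proposal follows the same route as the paper: the identification procedure of \cite{DzhFilSto:2020:RCDOPWHWDPE} (regularising blowups, surface type from the anticanonical divisor, induced action on $\Pic$, matching with $T_{\operatorname{KNY}}$ on the symmetry roots, root variables from the period map, then genericity and the full $\widehat{W}(A_3^{(1)})$ via Sakai's Torelli-type theorem, and finally the explicit change of variables). Two slips to correct. First, the anticanonical divisor has six components, not eight: the proper transforms of $f=0$, $f=\infty$, $g=\infty$ and the exceptional curves $\mathcal{L}_3-\mathcal{L}_4$, $\mathcal{L}_4-\mathcal{L}_5$, $\mathcal{L}_5-\mathcal{L}_6$ of the cascade over $(1,\infty)$, matching the six nodes of $D_5^{(1)}$. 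Second, the root variables actually come out as $a_0=\alpha+n$, $a_1=-n$, $a_2=\gamma+n$, $a_3=1-\alpha-\gamma-n$; this is still an invertible affine map $(\alpha,\gamma,n)\mapsto\boldsymbol{a}$, so your genericity argument goes through for complex $n$.
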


\begin{maintheorem}
\label{mainthm:Laguerre}
The system of difference equations \eqref{eq-LUE-disc} is a discrete Painlev\'e equation of surface type $D_5^{(1)}$ with
        \begin{enumerate}[(1)]
           \item symmetry group element being a translation conjugate to $T_{\operatorname{KNY}}$;
            \item parameter constraint $a_1+a_2=0$ on root variables, which corresponds to the existence of a nodal curve;
           \item symmetry subgroup $W( ( {A_1} + \underset{|\alpha|^2=4}{A_1})^{(1)} ) \rtimes \mathbb{Z}/2\Z \subset \widehat{W}(A_3^{(1)})$, where the generator of $\mathbb{Z}/2\Z$ acts by a particular automorphism of the pair of two Dynkin diagrams of type $A_1^{(1)}$.
        \end{enumerate}       
    The system \eqref{eq-LUE-disc} is related to the $a_1+a_2=0$ case of the KNY example \eqref{eq-KNY-dP} by a birational change of variables up to appropriate parameter identification.    
\end{maintheorem}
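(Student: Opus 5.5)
The natural route is to obtain Theorem~\ref{mainthm:Laguerre} as the $\gamma=0$ specialisation of Theorem~\ref{mainthm:pLaguerre}, since \eqref{eq-pLUE-disc} reduces to \eqref{eq-LUE-disc} at $\gamma=0$. I would still run the identification procedure of \cite{DzhFilSto:2020:RCDOPWHWDPE} directly, to see which degeneration occurs.

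\textbf{Step 1: the space of initial conditions.} Split \eqref{eq-LUE-disc} into the two half-steps $(f,g)\mapsto(\bar f,g)$ and $(f,g)\mapsto(f,\ubar g)$. Compactify to $\p^1\times\p^1$ and find the indeterminacy points of the forward and backward maps:
\begin{itemize}
\item points on $g=n$, $g=\alpha+n$ and $g=0$ (the last a double pole);
\item points on $f=1$;
\item points at infinity.
\end{itemize}
Resolve these by eight blow-ups, some infinitely near, giving a family $\X_{n}$ of rational surfaces. Compute the classes of the irreducible components of the unique anticanonical divisor. These should form the $D_5^{(1)}$ configuration of six $(-2)$-curves $d_0,\dots,d_5$ in $\Pic(\X)=\Span\{H_f,H_g,E_1,\dots,E_8\}$.

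Relative to \eqref{eq-pLUE-disc}, the difference is that the denominator $g(g+\gamma)$ degenerates to $g^2$. So the two blow-up points over $g=0$ and $g=-\gamma$ on the relevant line collide into infinitely near points, one lying over the other. I expect this collision to produce a $(-2)$-curve of class $E_i-E_j$, or of the form $H_f-E_i-E_j$, that is disjoint from the anticanonical divisor: a nodal curve.

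\textbf{Step 2: identification with KNY.} Choose a basis of the symmetry root lattice $\alpha_0,\dots,\alpha_3$ matching the KNY conventions \cite[Sec.~8.1.17]{KajNouYam:2017:GAPE}, and a map $\Pic(\X)\to\Pic$ of the KNY surface preserving intersection form, canonical class and the $d_i$. Then compute root variables $a_i=\chi(\alpha_i)$ via the period map, using a residue computation of the symplectic form along $d_i$.

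One expects the $a_i$ to be affine in $n,\alpha,s$, with exactly one relation. The nodal class is effective and equal to some root, which after relabelling I would arrange to be $\alpha_1+\alpha_2$; its period then vanishes, giving $a_1+a_2=0$. Next, compute the action of $\bar{\ }$ on $\Pic(\X)$. Check that it is a translation whose translation vector agrees, after possibly conjugating by a Weyl group element, with that of $T_{\operatorname{KNY}}$, i.e.\ squared length $1$ with $a_i\mapsto a_i\pm1$. This gives item (1). The explicit birational change of variables comes by matching exceptional divisors to KNY coordinates: write $q,p$ as coordinates of pencils $|H_q|,|H_p|$ expressed in $f,g$.

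\textbf{Step 3: the symmetry subgroup (the main obstacle).} The symmetries of the constrained family are the elements of $\widehat W(A_3^{(1)})$ that preserve the nodal root $\delta_{\mathrm{nod}}=\alpha_1+\alpha_2$ up to sign, equivalently the hyperplane $a_1+a_2=0$. Equivalently, they are elements mapping the effective nodal curve to an effective class.

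I would compute the stabiliser of $\{\pm\delta_{\mathrm{nod}}\}$ combinatorially.
\begin{itemize}
\item The roots orthogonal to $\delta_{\mathrm{nod}}$ include $\alpha_0+\alpha_1$, $\alpha_2+\alpha_3$ type combinations and $\alpha_1-\alpha_2$-like vectors. Concretely, in the finite $A_3=D_3$ picture, the stabiliser of a long root in $W(D_3)$ contains $W(A_1)\times W(A_1)$.
\item Affinising, the orthogonal complement produces $W(A_1^{(1)})\times W(A_1^{(1)})$. One factor is generated by reflections in roots of squared length $4$ in the original normalisation, which is what the notation $\underset{|\alpha|^2=4}{A_1}$ records. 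These are sums like $\alpha_0+\alpha_1+\alpha_2+\ldots$ whose reflections are products of reflections not individually allowed.
\item Finally, find the diagram automorphisms of $A_3^{(1)}$, composed with Weyl elements, that preserve $\delta_{\mathrm{nod}}$. I expect exactly a $\Z/2\Z$ quotient to survive, acting by a specific automorphism of the pair of $A_1^{(1)}$ diagrams.
\end{itemize}

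Proving completeness, i.e.\ that nothing else preserves the nodal class, requires an argument. I would show that any Cremona isometry must permute the effective $(-2)$-classes disjoint from $D$. Then I would reduce to an explicit lattice computation of $\mathrm{Stab}(\delta_{\mathrm{nod}})$ in $\widehat W(A_3^{(1)})$ via its action on the finite weight lattice together with translations.

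Finally I would verify that $T_{\operatorname{KNY}}$, realised here, lies in this subgroup, which it must since it preserves $a_1+a_2$.
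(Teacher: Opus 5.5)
Your Steps 1 and 2 agree with the paper. The collision of $(\infty,0)$ and $(\infty,-\gamma)$ produces the nodal curve $\L_7-\L_8$. The Picard identification from the perturbed case can be reused, and it sends $\L_7-\L_8$ to $\alpha_1+\alpha_2$. The root variables are $a_1=-n$, $a_2=n$.

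One caveat on Step 2: if you ``conjugate by a Weyl group element'' to reach $T_{\operatorname{KNY}}$, that conjugation must also keep the nodal class at $\alpha_1+\alpha_2$. Otherwise you do not land on the $a_1+a_2=0$ case of the KNY example. Here this is automatic: the only squared-length-$1$ weights orthogonal to $\alpha_1+\alpha_2$ are $\pm h_{\operatorname{KNY}}$, and $\sigma_1$ fixes $\alpha_1+\alpha_2$ while conjugating $T_{\operatorname{KNY}}$ to its inverse. You should still say so.

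\textbf{The gap is in Step 3.} You propose to compute the stabiliser of $\{\pm(\alpha_1+\alpha_2)\}$, ``equivalently the hyperplane $a_1+a_2=0$'', and that is not the group in item (3). By Sakai's Theorem~26 the Cremona isometries form the setwise stabiliser of $\Delta^{\operatorname{nod}}=\{\alpha_1+\alpha_2\}$, namely $\{w : w(\alpha_1+\alpha_2)=\alpha_1+\alpha_2\}$. This is what the paper computes.

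The two groups differ by $r_{\alpha_1+\alpha_2}=w_1w_2w_1$.
\begin{itemize}
\item It preserves $a_1+a_2=0$ and even fixes every root variable there, since $a_i\mapsto a_i+(\alpha_i\bullet(\alpha_1+\alpha_2))(a_1+a_2)$.
\item From the explicit formulas, its Cremona action on the constrained family is the identity map, because the middle $w_2$ acts at $a_2=0$.
\item But it sends the effective nodal class to the non-effective class $-(\alpha_1+\alpha_2)$, so it is not a Cremona isometry.
\end{itemize}
Your third ``equivalently'' (effective classes go to effective classes) is the correct condition, but it is not equivalent to the other two. The $\pm$-stabiliser is $\operatorname{Stab}(\alpha_1+\alpha_2)\times\langle r_{\alpha_1+\alpha_2}\rangle$, which has a central involution, so computing it gives the wrong group. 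Your finite-level count already contains the extra factor. The stabiliser of $\alpha_1+\alpha_2$ in $W(A_3)=\langle w_1,w_2,w_3\rangle\cong S_4$ has order $2$, generated by $w_3w_2w_3=r_{\alpha_2+\alpha_3}$. The second $W(A_1)$ you found is $\langle r_{\alpha_1+\alpha_2}\rangle$.

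Your structural expectations in Step 3 also misplace where the diagram automorphisms enter. The only real roots of $A_3^{(1)}$ orthogonal to $\alpha_1+\alpha_2$ are $\pm(\alpha_2+\alpha_3)+k\delta$. So reflections supply only one factor, $W(A_1^{(1)})=\langle w_0w_1w_0,\,w_3w_2w_3\rangle$.

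The $\underset{|\alpha|^2=4}{A_1^{(1)}}$ factor is generated by $s_0'=\sigma_2w_2w_0$ and $s_1'=\sigma_1\sigma_2\sigma_1w_3w_1$. These act on $(\alpha_1+\alpha_2)^{\perp}=Q_1+Q_2$ as reflections in $\alpha_0+\alpha_2$ and $\alpha_1+\alpha_3$. They have nontrivial $\Aut(A_3^{(1)})$-component and are not products of reflections alone. For instance, $w_0w_2$ sends $\alpha_1+\alpha_2$ to $\alpha_0+\alpha_1$, and only $\sigma_2$ brings it back.

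Together with $\tau=\sigma_1$, the image of the stabiliser in $\Aut(A_3^{(1)})$ is all of $\mathbb{D}_4$. So you should not look for ``exactly a $\Z/2\Z$ of diagram automorphisms''. The $\Z/2\Z$ in the statement is the quotient by $W((A_1+\underset{|\alpha|^2=4}{A_1})^{(1)})$, realised by $\sigma_1$ acting as $(\beta_0\,\beta_1)(\beta_0'\,\beta_1')$.

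Your reduction via translations is the right tool, and it is what the paper does:
\begin{itemize}
\item write elements as $T_h\,w\,\sigma$ with $w\in W(A_3)$ and $\sigma\in\{1,\sigma_1\}$;
\item solve $w\sigma(\alpha_1+\alpha_2)-(w\sigma(\alpha_1+\alpha_2)\bullet h)\delta=\alpha_1+\alpha_2$ for each of the finitely many $w\sigma$;
\item reduce the resulting generators to $s_0,s_1,s_0',s_1',\tau$.
\end{itemize}
It must be applied to the pointwise stabiliser, not the $\pm$ one.
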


\begin{maintheorem}
\label{mainthm:genMeixner}
The system of difference equations \eqref{eq-Meixner-disc} is a discrete Painlev\'e equation of surface type $D_5^{(1)}$ with
        \begin{enumerate}[(1)]
           \item surfaces being generic, in that root variables are free when parametrised by parameters from the weight;
           \item symmetry group element being a translation conjugate to $T_{\operatorname{Sak}}$;
           \item the full symmetry group $\widehat{W}(A_3^{(1)})=W(A_3^{(1)})\rtimes \operatorname{Aut}(A_3^{(1)})$.
        \end{enumerate}       
    The system \eqref{eq-Meixner-disc} is related to the Sakai example \eqref{eq-Sakai-dP} by a birational change of variables up to appropriate parameter identification.    
\end{maintheorem}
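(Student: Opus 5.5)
We follow the identification procedure of \cite{DzhFilSto:2020:RCDOPWHWDPE}, applied to the system \eqref{eq-Meixner-disc}. First we regard \eqref{eq-Meixner-disc} as the composition of two half-steps, $(x,y)\mapsto(\bar x,y)$ given by the first equation and $(x,y)\mapsto(x,\ubar y)$ given by the second. We compactify $\C^2\ni(x,y)$ to $\p^1\times\p^1$. For each half-map we compute the indeterminacy points of the forward and backward maps, including infinitely near ones, and blow up. We expect eight base points in total.

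Some base points are visible directly. The first equation degenerates along $x+y=0$ together with the values $y=0$, $y=c$ and $y=c\tfrac{\gamma-\beta}{\gamma-1}$ on the right-hand side, and near $y=\infty$. The second degenerates at $x=0$, $x=-c$, $x=-c\tfrac{\gamma-\beta}{\gamma-1}$ and $x=\tfrac{cn}{\gamma-1}$, again together with $x+y=0$. We expect these to organise as follows. Four of them are simple points off the divisor. The other four form a cascade of infinitely near points over the intersection of $x+y=0$ with the line at infinity, or over the point $(\infty,\infty)$. The proper transforms of the line $x+y=0$ and of $x=\infty$, $y=\infty$, together with the exceptional curves of the cascade, should then give six $(-2)$-curves. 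We will check that these intersect in the $D_5^{(1)}$ pattern and that the multiplicities give the anticanonical divisor $-K_X$, which establishes the surface type.

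Next we choose the standard geometric basis of $\Pic(X)$ that matches the Sakai/KNY model of surface type $D_5^{(1)}$. We write the classes $D_i$ of the irreducible components of $-K_X$ and the symmetry roots $\alpha_0,\dots,\alpha_3$ in terms of $H_x$, $H_y$ and the exceptional classes $E_i$. We then compute the period map, taking the residues of the symplectic form $\omega=k\,dx\wedge dy/(x+y)$ (up to normalisation) along the components of $-K_X$. This expresses the root variables $a_i=\chi(\alpha_i)$ in terms of $c,\beta,\gamma,n$. We expect the $a_i$ to be affine-linear in $n,\beta,\gamma$ with no relation among them. That gives generic surfaces, which is item (1), and hence the full symmetry group $\widehat{W}(A_3^{(1)})$, which is item (3). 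To confirm genericity we also check that no $(-2)$-class orthogonal to the $D_i$ is effective for generic parameters, i.e.\ that there are no nodal curves.

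For item (2) we compute the linear action of the full step $\varphi_*$ on $\Pic(X)$ and its action on the root lattice. We expect $\varphi_*(\alpha_i)=\alpha_i+k_i\delta$, so that $\varphi$ is a translation. We then decompose it as a word in the generators $s_i$ and $\pi\in\Aut(A_3^{(1)})$. From the translation vector, read off by projecting to the finite $A_3$ weight lattice, we compute the squared length and expect $\tfrac34$. Comparing with $T_{\operatorname{Sak}}$, whose evolution $a_0\mapsto a_0+1$, $a_3\mapsto a_3-1$ has the same form, we find an explicit $w\in\widehat{W}(A_3^{(1)})$ with $\varphi_*=wT_{\operatorname{Sak}}w^{-1}$. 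If the bases are chosen well, we can take $w$ to be a diagram automorphism or trivial.

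Finally, for the explicit birational change of variables, we match the $D_5^{(1)}$ configuration and the exceptional classes of our surface with those of the model for \eqref{eq-Sakai-dP}. The coordinate change is then determined by pencils. The coordinate $q$ of the Sakai model corresponds to a pencil in a class such as $H_x+H_y-E_i-E_j$, and $p$ to another explicit pencil. Writing the corresponding linear systems in $(x,y)$ gives $q$ and $p$ as rational functions, and the constants are fixed by requiring that the $t$ and $a_i$ agree with the root variables. \textbf{Main obstacle:} we expect the base-point cascade to be the hardest part, with its parameter-dependent location $\tfrac{\gamma-\beta}{\gamma-1}$, and especially the choice of identification of $\Pic(X)$ that makes $\varphi_*$ a clean conjugate of $T_{\operatorname{Sak}}$. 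We will first try the ordering of the exceptional classes suggested by the degeneration $\beta\to1$ to \eqref{eq-degen-Meixner-disc}. Then we verify the identification by direct substitution into \eqref{eq-Sakai-dP}.
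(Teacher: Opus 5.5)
Your plan follows the paper's own route: eight blowups ($z_1,z_2,z_3$ on $x+y=0$, $z_4$ on $y=\infty$, and a four-point cascade over $(\infty,\infty)$), the $D_5^{(1)}$ anticanonical divisor from $k\,dx\wedge dy/(x+y)$, and a Picard identification chosen so that $\Phi$ acts on the $\alpha_i$ exactly as $T_{\operatorname{Sak}}$. It then uses free root variables $a_0=n+1$, $a_1=\gamma-1$, $a_2=2-\beta$, $a_3=-\gamma-n$ (with $n$ treated as complex) to get genericity and the full $\widehat{W}(A_3^{(1)})$, and reads off $q,p$ from pencils. Two small corrections: the four non-cascade points lie \emph{on} the anticanonical curve, not off it, since otherwise the proper transforms you list would not become $(-2)$-curves; and the pencils realising $q$ and $p$ have classes $\mathcal{H}_x+2\mathcal{H}_y-\mathcal{M}_1-\mathcal{M}_2-\mathcal{M}_5-\mathcal{M}_6$ and $2\mathcal{H}_x+3\mathcal{H}_y-\mathcal{M}_1-2\mathcal{M}_2-2\mathcal{M}_5-\mathcal{M}_6-\mathcal{M}_7-\mathcal{M}_8$, not bidegree $(1,1)$.
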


\begin{maintheorem}
\label{mainthm:Meixner}
The system of difference equations \eqref{eq-degen-Meixner-disc} is a discrete Painlev\'e equation of surface type $D_5^{(1)}$ with
        \begin{enumerate}[(1)]
           \item symmetry group element being a translation conjugate to $T_{\operatorname{Sak}}$;
           \item parameter constraint $a_2=1$ on root variables, which corresponds to the existence of a nodal curve;
           \item symmetry subgroup $W( ( {A_1} + \underset{|\alpha|^2=4}{A_1})^{(1)} ) \rtimes \mathbb{Z}/2\Z\subset \widehat{W}(A_3^{(1)})$, where the generator of $\mathbb{Z}/2\Z$ acts by a particular automorphism of the pair of two Dynkin diagrams of type $A_1^{(1)}$.
        \end{enumerate}       
    The system \eqref{eq-degen-Meixner-disc} is related to the $a_2=1$ case of the Sakai example \eqref{eq-Sakai-dP} by a birational change of variables up to appropriate parameter identification.    
\end{maintheorem}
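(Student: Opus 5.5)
We prove \Cref{mainthm:Meixner} with the identification procedure of \cite{DzhFilSto:2020:RCDOPWHWDPE}. We do as much as possible inside the already-established \Cref{mainthm:genMeixner}, since \eqref{eq-degen-Meixner-disc} is the $\beta=1$ specialisation of \eqref{eq-Meixner-disc}.

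\textbf{Step 1: compactify and resolve.} We write the forward map $(x,y)\mapsto(\bar x,\bar y)$, obtained by composing the two half-steps of \eqref{eq-degen-Meixner-disc}, on $\p^1\times\p^1$. We then locate its indeterminacy points and those of its inverse. These include points on the lines $y=c$ and $x=-c$, where the factors $(y-c)^2$ and $(x+c)^2$ now appear squared, and points near $x=\tfrac{cn}{\gamma-1}$ and at infinity. Blowing up these eight points, which may be infinitely near, gives a family of surfaces $\X_n$. We identify the irreducible components of the unique anticanonical divisor, and check that their intersection configuration is the $D_5^{(1)}$ diagram.

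\textbf{Step 2: match the Picard lattice and period map.} We choose a basis $\h_x,\h_y,\E_1,\dots,\E_8$ and match the classes of the $D_5^{(1)}$ components with the standard ones for the Sakai example \eqref{eq-Sakai-dP}. This requires a change of basis, which we take as the $\beta\to1$ specialisation of the one used for \Cref{mainthm:genMeixner}. That specialisation works only if the point configuration degenerates continuously, so we check it. Computing the period map on the symmetry roots $\alpha_0,\dots,\alpha_3$ gives the root variables $a_i$ in terms of $c,\gamma,n$. For \eqref{eq-Meixner-disc} these are free; setting $\beta=1$ should force $a_2=1$. With the normalisation $a_0+a_1+a_2+a_3=1$, this is the condition that the period of $\delta-\alpha_2$ vanishes, i.e.\ of $\alpha_0+\alpha_1+\alpha_3$ in the normalisation used. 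We then exhibit the corresponding effective $(-2)$-class explicitly. The natural candidate is the proper transform of a curve through the two coalescing points on $y=c$ (and symmetrically $x=-c$). Because that class has zero period, it is represented by an irreducible curve disjoint from the anticanonical divisor, which is the nodal curve of \Cref{def:nodalcurve}.

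\textbf{Step 3: the translation and the birational change.} We compute the action of the map on $\Pic(\X)$ by tracking the images of exceptional classes. It should translate the root variables as $a_0\mapsto a_0+1$, $a_3\mapsto a_3-1$, with $a_2=1$ preserved. This action equals that of $T_{\operatorname{Sak}}$, possibly after conjugation by an element fixing the constraint. The explicit birational change of variables is the one from \Cref{mainthm:genMeixner} specialised to $\beta=1$. We verify directly that it maps \eqref{eq-degen-Meixner-disc} to \eqref{eq-Sakai-dP} with $a_2=1$, taking care that no denominator degenerates at $\beta=1$.

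\textbf{Step 4: the symmetry subgroup (main obstacle).} Cremona isometries of the constrained family must preserve the set of effective nodal classes. So the symmetry group is the stabiliser in $\widehat{W}(A_3^{(1)})$ of the nodal root $\alpha_n=\alpha_0+\alpha_1+\alpha_3$, up to sign, together with the data of the constraint. To compute it, we find the roots orthogonal to $\alpha_n$ in the $A_3^{(1)}$ root lattice; they should form an $A_1^{(1)}$, e.g.\ generated by $\alpha_1$ and $\alpha_0+\alpha_2+\alpha_3$. A second $A_1^{(1)}$ is generated by reflections in elements of squared length $4$, such as $\alpha_0+\alpha_3$-type combinations, or sums of orthogonal roots realised as products of commuting reflections. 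These act on the root variables preserving $a_2=1$. Diagram automorphisms of $A_3^{(1)}$ stabilising the constraint supply the $\Z/2\Z$, acting by swapping nodes across the two $A_1^{(1)}$ diagrams.

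Proving that this subgroup is the entire stabiliser is the delicate part. The argument is to compute the stabiliser of the sublattice spanned by $\alpha_n$ in the finite quotient and lift the result. The nodal curve is rigid, and its class must map to itself, possibly with sign through the period condition. For this, we first try the analogous computation already made for \Cref{mainthm:Laguerre}, transported through an element of $\widehat{W}(A_3^{(1)})$ mapping the constraint $a_1+a_2=0$ to $a_2=1$. If such an element exists, the subgroups are conjugate and the structure carries over immediately.
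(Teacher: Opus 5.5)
Your Steps 1--3 follow the paper's route. The paper also keeps the Picard identification of \Cref{lem:KNY-to-Meixner-Pic} unchanged and finds $a_0=n+1$, $a_1=\gamma-1$, $a_2=1$, $a_3=-\gamma-n$. It identifies the nodal class with $\delta-\alpha_2=\alpha_0+\alpha_1+\alpha_3$ and uses the $\beta=1$ specialisation of the change of variables.

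One correction: the nodal curve is not the proper transform of a curve through coalescing points on $y=c$ and $x=-c$. For generic $\beta$ the points $z_2=(-c,c)$ and $z_3$ both lie on the line $x+y=0$, which is a component of the anticanonical divisor; $z_3$ lies on $y=c$ only when $\beta=1$. The two points coalesce along that line. At $\beta=1$, $z_3$ is the intersection of $M_2$ with the proper transform of $x+y=0$, and the nodal curve is the exceptional curve $\M_2-\M_3$.

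Step 4 as written does not establish (3).

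\textbf{The sign.} The ``up to sign'' is wrong. Cremona isometries preserve effectiveness, so the relevant group is the exact stabiliser $\{w : w(\alpha_n)=\alpha_n\}$, as in Sakai's theorem quoted in \Cref{subsec:paramconstraintsandobstructionstosymmetries}. Allowing $\alpha_n\mapsto-\alpha_n$ adjoins $r_{\alpha_n}\in W(A_3^{(1)})$. This element commutes with the stabiliser and acts trivially on root variables when $\chi(\alpha_n)=0$. You would therefore get an extra direct factor $\Z/2\Z$ that is not in (3).

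\textbf{The explicit candidates.} These fail:
\begin{itemize}
\item $\alpha_1\bullet\alpha_n=-1$ and $(\alpha_0+\alpha_2+\alpha_3)\bullet\alpha_n=\alpha_1\bullet\alpha_2=1$, so neither is orthogonal to $\alpha_n$.
\item $\alpha_0+\alpha_3$ has square $-2$ and meets $\alpha_n$ with $(\alpha_0+\alpha_3)\bullet\alpha_n=-1$.
\end{itemize}
The correct data are as follows. The real roots orthogonal to $\alpha_n$ are $\pm(\alpha_1+\alpha_2+\alpha_3)+k\delta$, giving $\beta_0=\alpha_0$ and $\beta_1=\alpha_1+\alpha_2+\alpha_3$. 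The squared-length-$4$ pair is $\beta_0'=\alpha_0+\alpha_2+2\alpha_3$ and $\beta_1'=\alpha_1-\alpha_3$. Reflection in $\beta_1'$ is the diagram automorphism $(\alpha_1\alpha_3)=\sigma_1\sigma_2\sigma_1$, not a product of commuting reflections; indeed $r_{\alpha_1}r_{\alpha_3}$ does not even fix $\alpha_2$.

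\textbf{The $\Z/2\Z$ factor.} $(\alpha_1\alpha_3)$ is the only nontrivial diagram automorphism fixing $\alpha_2$, and it does not swap $\beta_0$ and $\beta_1$. So the $\Z/2\Z$ cannot come from diagram automorphisms alone; the paper uses $\tau=\sigma_2\sigma_1\sigma_2w_2w_3$.

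\textbf{The fix.} Your fallback is the correct argument and is exactly the paper's proof, but it needs no conditional. Every element of $\widehat{W}(A_3^{(1)})$ fixes $\delta$, so $\operatorname{Stab}(\delta-\alpha_2)=\operatorname{Stab}(\alpha_2)$. Since $w_1(\alpha_2)=\alpha_1+\alpha_2$, conjugation by $w_1$ identifies this group with the stabiliser of $\alpha_1+\alpha_2$ computed in \Cref{prop:symmetry-type-LUE}. You do not need an element carrying one constraint to the other.
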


The plan of the paper is as follows.
In \Cref{sec:KNY} we recall the realisation of Sakai surfaces of surface type $D_5^{(1)}$ provided in \cite{KajNouYam:2017:GAPE}, as well as the fifth Painlev\'e equation and the KNY and Sakai examples of discrete Painlev\'e equations in coordinates coming from this realisation. 
In \Cref{sec:identificationprocedure} we outline the identification procedure and discuss how nodal curves on Sakai surfaces give rise to restricted symmetry groups, then in Sections \ref{sec:defLUE}-\ref{sec:degenMeixner} we study the four examples and establish the results as summarised above.
We conclude with some discussion in \Cref{sec:discussionconclusion}.

\section{Reference model of Sakai surfaces of type $D_5^{(1)}$} 
\label{sec:KNY}

In this Section, we recall the necessary geometric and algebraic data for the realisation of the $D_5^{(1)}$ surface family and its associated discrete Painlev\'e equations that we will use as our reference model, with which to identify the examples from orthogonal polynomials outlined in \Cref{subsec:discretesystemsfromorthogonalpolynomials}. 
This realisation is the one provided in Section 8.2.18 of the survey paper \cite{KajNouYam:2017:GAPE} of Kajiwara, Noumi and Yamada, most of the details of which can also be found in \cite{HuDzhChe:2020:PLUEDPE}, but we include it here to make the present paper self-contained.
For a formal account of details of the general theory of discrete Painlev\'e equations we refer to \cite{ASIDEnotes} as well as the original paper of Sakai \cite{SAKAI2001}.

To differentiate between the different instances of surfaces of type $D_5^{(1)}$ we encounter in this paper, we provide the following.

\begin{notation} \label{not:KNY}
For the Kajiwara-Noumi-Yamada (KNY) model of $D_5^{(1)}$ surfaces we use the following notation: coordinates $(q,p)$; 
parameters $a_{0},a_{1},a_{2},a_{3}$ (the root variables); continuous independent variable $t$ (extra parameter);
centres of blowups $b_{i}$; exceptional divisors $E_{i}$.	
\end{notation}

\subsection{Point configuration and surfaces}
\label{subsec:KNYpointconfigandsurfaces}

We begin with the sequence of blowups performed in the KNY realisation of Sakai surfaces of surface type $D_5^{(1)}$.
Take $\C^2$ with coordinates $(q,p)$, and compactify this to $\p^1\times\p^1$ with the atlas
\begin{equation} \label{eq:P1P1atlasKNY}
    \p^{1} \times \p^{1} = \C^2_{(q,p)} \cup \C^2_{(Q,p)} \cup \C^2_{(q,P)}\cup \C^2_{(Q,P)},\quad Q=1/q, P=1/p,
\end{equation} 
where here we use subscripts to indicate coordinates.
This provides a compactification of $\C^2_{(q,p)}$, in that we have glued in two copies of $\p^1$ corresponding to $q=\infty$ and $p=\infty$, given respectively by $Q=0$ and $P=0$.
Before we specify the points to be blown up to arrive at the surfaces, we specify our notation for charts.
\begin{notation} \label{not:blowupcharts}
For each blowup, we introduce two coordinate charts according to the following convention:
after blowing up a point $b_i$ in a surface $Y$, we denote the resulting surface by $\Bl_{b_i} Y$, the projection (blowup morphism) by $\pi_i : \Bl_{b_i}Y \rightarrow Y$, and the exceptional divisor of the blowup of $b_i$ by $E_i = \pi^{-1}(b_i) \subset \Bl_{b_i}Y$, $E_i\cong \p^1$.
If $b_i$ is given in some local coordinates $(x,y)$ for an open subset of $Y$ by
$b_i : (x,y) = (x_*, y_*)$,
then the exceptional divisor $E_i$ is covered by two charts on $\Bl_{b_i}Y$ with coordinates $(u_i,v_i)$ and $(U_i,V_i)$, related to those of $Y$ under the blowdown morphism according to
\begin{equation}
\begin{aligned}
x&=  u_i + x_{*}, 			& y &= u_i v_i+ y_{*}  , \\
u_i &= x-x_{*},	 & v_i  &=\frac{y- y_{*}}{x- x_{*} },
\end{aligned}
\qquad \text{and} \qquad 
\begin{aligned}
x &= U_i V_i +x_{*} , 			& y&=V_i+ y_{*} ,\\
U_i &= \frac{x-x_*}{y-y_*},		& V_i &= y - y_*.
\end{aligned}
\end{equation}
In particular, in the coordinates $(u_i,v_i)$ the exceptional divisor $E_i$ is given by the local equation $u_i=0$, with the part of $E_i$ visible in this chart parametrised by $v_i\in\C$.
Similarly, in the coordinates $(U_i,V_i)$, $E_i$ is given by $V_i=0$, with the part of $E_i$ visible in this chart parametrised by $U_i\in\C$.
\end{notation}
\begin{remark}
    There is some variation in the literature in the definitions of charts introduced in the blowup procedure, despite the same symbols $(u_i,v_i)$ and $(U_i,V_i)$ being used. 
    Here we adopt the same convention as \cite{DzhFilSto:2020:RCDOPWHWDPE, DzhFilSto:2022:DERCSOPTRPEGA, hamiltonians}, which is different from, e.g., \cite{quasihamiltonians,takasakipaper}.
\end{remark}
The points to be blown up depend on parameters (root variables) $a_0,a_1,a_2,a_3\in \C$, subject to $a_{0} + a_{1} + a_{2} + a_{3} = 1$, as well as $t\in \C \backslash \{0\}$ (called the `extra parameter' in \cite{SAKAI2001}) which will play the role of independent variable of the Hamiltonian form of $\pain{V}$ which will be given below. 

We initially blow up the four points $b_1,b_3,b_5,b_7$ in $\p^1 \times\p^1$, then perform four subsequent blowups of points which lie on the exceptional divisors of the first four.
We give the locations of these points in coordinates explicitly in Figure \ref{fig:KNY:pointlocations}.

\begin{figure}[htb]
\begin{equation*}
 \begin{array}{ll}
	b_{1} : (q,p) = (\infty,-t) &\leftarrow   \quad b_{2} : (u_1,v_1) = \left(Q, \frac{p+t}{Q} \right) = (0,-a_{0}),\\
	b_{3} : (q,p) = (\infty,0)  &\leftarrow   \quad b_{4} : (u_3,v_3) = \left(Q,\frac{p}{Q}  \right)= (0,-a_{2}),\\
	b_{5} : (q,p) = (0,\infty)  &\leftarrow   \quad b_{6} : (U_5,V_5) = \left(\frac{q}{P},P \right) = (a_{1},0),\\
	b_{7} : (q,p) = (1, \infty)  &\leftarrow  \quad b_{8} : (U_8,V_8) = \left(\frac{q-1}{P} ,P \right) = (a_{3},0).
\end{array}
\end{equation*}
    \caption{Point locations for the KNY realisation of $D_5^{(1)}$ surfaces $X_{\boldsymbol{a},t}$}
    \label{fig:KNY:pointlocations}
\end{figure}
Denote the surface obtained by this sequence of eight blowups by 
\begin{equation*}
\begin{gathered}
    X_{\boldsymbol{a},t} = \Bl_{b_1\cdots b_8}(\p^1\times\p^1), \\ \boldsymbol{a}\in\mathscr{A}= \left\{(a_0,a_1,a_2,a_3)\in \C^4~~|~~  a_0+a_1+a_2+a_3=1 \right\}, \quad t \in\mathscr{T}=\C\setminus\{0\}.
\end{gathered}
\end{equation*}
Here we have introduced the parameter space $\mathscr{A}$ consisting of the root variables, see \cite[Sec. 3.6]{ASIDEnotes}, as well as the extra parameter $t$ which takes values in $\mathscr{T}$. This acts as the independent variable in $\pain{V}$, and $\mathscr{T}$ is the complement in $\C$ of the fixed singularities of $\pain{V}$.

The surfaces $X_{\boldsymbol{a},t}$ for $\boldsymbol{a}\in\mathscr{A}$, $t\in\mathscr{T}$, form a family of Sakai surfaces of type $D_5^{(1)}$ parametrised by root variables, which we will write as 
$$\mathcal{X}\to\mathscr{A}\times\mathscr{T}.$$
We denote the composition of the projection morphisms of the eight blowups by 
$$\pi_{\boldsymbol{a},t} : X_{\boldsymbol{a},t} \rightarrow \p^1\times\p^1.$$
The Picard group of the surface $X_{\boldsymbol{a},t}$, which we think of as the group of classes of divisors on $X_{\boldsymbol{a},t}$ up to linear equivalence, can be written as 
\begin{equation} \label{eq:picXa}
    \Pic(X_{\boldsymbol{a},t}) = \Span_{\Z}\left\{\h_{q},\h_{p},\E_1,\dots,\E_8\right\},
\end{equation}
where $\h_q$ and $\h_p$ are classes of (pullbacks under $\pi_{\boldsymbol{a},t}$ of) fibres of the projections from $\p^1\times\p^1$ onto the two $\p^1$-factors, and $\E_i=[E_i]$ corresponds to the class of the exceptional divisor of the blowup of $b_i$, or more precisely, its pullback under any subsequent blowups. 
We give an illustration of the configuration of points $b_1,\dots,b_8$ and the surface $X_{\boldsymbol{a},t}$, for generic $\boldsymbol{a}$ and $t$, on Figure \ref{fig:KNY-soic-5}, with curves labeled by their classes in $\Pic(X_{\boldsymbol{a},t})$.

\begin{figure}[htb]
	\begin{tikzpicture}[>=stealth,basept/.style={circle, draw=red!100, fill=red!100, thick, inner sep=0pt,minimum size=1.2mm}]
	\begin{scope}[xshift=0cm,yshift=0cm]
	\draw [black, line width = 1pt] (-0.4,0) -- (2.9,0)	node [pos=0,left] {
 \small $p=0$
 } node [pos=1,right] {\small $p=0$};
	\draw [black, line width = 1pt] (-0.4,2.5) -- (2.9,2.5) node [pos=0,left] {
 \small $p=\infty$
 } node [pos=1,right] {\small $p=\infty$};
	\draw [black, line width = 1pt] (0,-0.4) -- (0,2.9) node [pos=0,below] {
  \small $q=0$
 } node [pos=1,above] {\small $q=0$};
	\draw [black, line width = 1pt] (2.5,-0.4) -- (2.5,2.9) node [pos=0,below] {
  \small $q=\infty$
 } node [pos=1,above] {\small $q=\infty$};
	\node (p3) at (2.5,0) [basept,label={[xshift = -8pt, yshift=-18pt] \small $b_{3}$}] {};
	\node (p4) at (3,0.5) [basept,label={[yshift=0pt] \small $b_{4}$}] {};
	\node (p1) at (2.5,1) [basept,label={[xshift = -8pt, yshift=-15pt] \small $b_{1}$}] {};
	\node (p2) at (3,1.5) [basept,label={[yshift=0pt] \small $b_{2}$}] {};
	\node (p5) at (0,2.5) [basept,label={[xshift = 8pt, yshift=0pt] \small $b_{5}$}] {};
	\node (p6) at (-.5,2) [basept,label={[yshift=-18pt] \small $b_{6}$}] {};
	\node (p7) at (1.5,2.5) [basept,label={[xshift = 8pt, yshift=0pt] \small $b_{7}$}] {};
	\node (p8) at (1,2) [basept,label={[yshift=-18pt] \small $b_{8}$}] {};
	\draw [red, line width = 0.8pt, ->] (p2) -- (p1);
	\draw [red, line width = 0.8pt, ->] (p4) -- (p3);
	\draw [red, line width = 0.8pt, ->] (p6) -- (p5);
	\draw [red, line width = 0.8pt, ->] (p8) -- (p7);	
        \node (P1P1) at (1.25,-1.25) {$\p^1\times\p^1$};
	\end{scope}
	\draw [->] (6.5,1)--(4.5,1) node[pos=0.5, below] 
    {$\pi_{\boldsymbol{a},t}$};
	\begin{scope}[xshift=9cm,yshift=0cm]
	\draw [red, line width = 1pt] (-0.4,0) -- (3.5,0)	node [pos=0, left] {\small $\h_{p}-\E_{3}$};
	\draw [red, line width = 1pt] (0,-0.4) -- (0,2.4) node [pos=0, below] {\small $\h_{q}-\E_{5}$};
	\draw [blue, line width = 1pt] (-0.2,1.8) -- (0.8,2.8) node [pos=0, left] {\small $\E_{5}-\E_{6}$};
	\draw [red, line width = 1pt] (-0.1,2.7) -- (0.4,2.2) node [pos=0, above] {\small $\E_{6}$};
	\draw [blue, line width = 1pt] (1.2,1.8) -- (2.2,2.8) node [pos=0, xshift=-14pt, yshift=-5pt] {\small $\E_{7}-\E_{8}$};
	\draw [red, line width = 1pt] (1.6,2.4) -- (2.1,1.9) node [pos=1, below] {\small $\E_{8}$};
	\draw [blue, line width = 1pt] (0.3,2.6) -- (4.2,2.6) node [pos=1,right] {\small $\h_{p} - \E_{5} - \E_{7}$};
	\draw [blue, line width = 1pt] (3,-0.2) -- (4,0.8) node [pos=1,right] {\small $\E_{3} - \E_{4}$};
	\draw [red, line width = 1pt] (3.4,0.4) -- (3.9,-0.1) node [pos=1, below] {\small $\E_{4}$};
	\draw [blue, line width = 1pt] (3.8,0.3) -- (3.8,3) node [pos=1, above] {\small $\h_{q}-\E_{1} - \E_{3}$};	
	\draw [blue, line width = 1pt] (3,1) -- (4,2) node [pos=1,right] {\small $\E_{1} - \E_{2}$};
		\draw [red, line width = 1pt] (3.1,1.9) -- (3.6,1.4) node [pos=0, above] {\small $\E_{2}$};
	\draw [red, line width = 1pt] (-0.4,1.2) -- (3.5,1.2)	node [pos=0, left] {\small $\h_{p}-\E_{1}$};
	\draw [red, line width = 1pt] (1.4,-0.4) -- (1.4,2.4) node [pos=0, below] {\small $\h_{q}-\E_{6}$};
        \node (surfacelabel) at (2,-1.25) {$\mathcal{X}_{\boldsymbol{a},t}$};
	\end{scope}
	\end{tikzpicture}
	\caption{Blowup point configuration and surface $\mathcal{X}_{\boldsymbol{a},t}$ in the KNY model of $D_5^{(1)}$ surfaces}
	\label{fig:KNY-soic-5}
\end{figure}

For any $\boldsymbol{a}\in\mathscr{A}$, $t\in\mathscr{T}$, the surface $X_{\boldsymbol{a},t}$ is a Sakai surface, i.e. a generalised Halphen surface with unique effective anticanonical divisor, see \cite[Def. 3.21 \& Def. 3.22]{ASIDEnotes}.
Here this divisor is given by $D =  - \operatorname{div}\omega_{\boldsymbol{a},t}$, where $\omega_{\boldsymbol{a},t}$ is the rational 2-form on $X_{\boldsymbol{a},t}$ given by the pullback under $\pi_{\boldsymbol{a},t}$ of the 2-form on $\p^1\times\p^1$ given in the four charts in \Cref{eq:P1P1atlasKNY} by
$    dq \wedge dp = - \frac{dQ \wedge dp}{Q^2} = - \frac{dq \wedge dP}{P^2} =  \frac{dQ \wedge dP}{Q^2P^2}$.
In $\Pic(X_{\boldsymbol{a},t})$, the linear equivalence class of $D$ gives the anticanonical divisor class 
$-\mathcal{K}_{X_{\boldsymbol{a},t}}=2\h_q+2\h_p-\E_1-\dots-\E_8$.
\begin{remark}
    If, instead of $a_0+a_1+a_2+a_3=1$, the parameters were subject to $a_0+a_1+a_2+a_3=0$, there would be a pencil of biquadratic curves on $\p^1\times\p^1$ passing through the eight points $b_1,\dots,b_8$ and we would get $\dim|-\mathcal{K}_{X_{\boldsymbol{a},t}}|=1$ and have a Halphen surface of index 1, i.e. a particular kind of rational elliptic surface, with elliptic fibration coming from the linear system $|-\ell \mathcal{K}_{X_{\boldsymbol{a},t}}|$, with $\ell=1$.
\end{remark}

\subsection{Root data and Cremona isometries}
\label{subsec:rootdataandcremonaisometries}

We now introduce bases of simple roots for the surface and symmetry root lattices coming from this realisation of Sakai surfaces of surface type $D_5^{(1)}$, which we use in the algebraic description of symmetries and discrete Painlev\'e equations below.
The geometric realisation of the surfaces in the family $\mathcal{X}$ is interpreted as carrying the data of the sequence of blowups and their enumeration. This gives an identification of all the $\Pic(X_{\boldsymbol{a},t})$ into a single lattice, see \cite[Def. 3.74]{ASIDEnotes}, which we denote by 
$$\Pic(\X) = \Span_{\Z}\left\{\h_{q},\h_{p},\E_1,\dots,\E_8\right\},$$ abusing notation in using the same symbols as for $\Pic(\mathcal{X}_{\boldsymbol{a},t})$ in \Cref{eq:picXa}, including for the anticanonical class $$-\K_{\X}:=2\h_q+2\h_p-\E_1-\E_2-\E_3-\E_4-\E_5-\E_6-\E_7-\E_8\in\Pic(\X),$$
and the symmetric, bilinear product given by the intersection form, defined by extension of 
$$        \h_q\bullet\h_p=1, \quad \h_q\bullet\h_q=\h_p\bullet\h_p=\h_q\bullet\E_i=\h_p\bullet\E_i=0, \quad \E_i\bullet\E_j=-\delta_{ij}, \quad\text{ for }~i,j=1,\dots,8.
$$
\begin{remark}
    Strictly speaking, we are abusing notation here in more than one way, since $\Pic(\X)$ is not the Picard group of the family $\X \to \mathscr{A} \times\mathscr{T}$, but rather an identification of Picard groups of all of its fibres.
\end{remark}
\subsubsection{Surface and symmetry root lattices}
The surface root basis is formed of elements $\delta_i \in \Pic(\X)$ corresponding to classes of the components $D_i$, $i=0,\dots,5$, of the unique effective anticanonical divisor 
$$D=D_0+D_1+2D_2+2D_3+D_4+D_5,$$ 
for each $\X_{\boldsymbol{a},t}$ in the family.
These are indicated in blue on \Cref{fig:KNY-soic-5}.
The elements $\delta_i\in\Pic(\X)$ corresponding to classes of the components $D_i$ are given in \Cref{fig:d-roots-d5-KNY} and play the role of simple roots for the affine root system of type $D_5^{(1)}$, with null root corresponding to the anticanonical class:
$$\delta = \delta_0+\delta_1+2\delta_2+2\delta_3+\delta_4+\delta_5 = - \mathcal{K}_{\X}.$$
These span the surface root lattice $Q(\mathcal{R})$, $\mathcal{R}=D_5^{(1)}$ contained in the orthogonal complement of $\delta$:
\begin{equation}
    Q=Q(D_5^{(1)}) = \Span_{\Z}\left\{ \delta_0, \dots, \delta_5\right\} \subset \delta^{\perp}\subset \Pic(\X).
\end{equation}

\begin{figure}[htb]
\begin{equation*}\label{eq:d-roots-d51}			
	\raisebox{-32.1pt}{\begin{tikzpicture}[
			elt/.style={circle,draw=black!100,thick, inner sep=0pt,minimum size=2mm}]
		\path 	(-1,1) 	node 	(d0) [elt, label={[xshift=-10pt, yshift = -10 pt] $\delta_{0}$} ] {}
		        (-1,-1) node 	(d1) [elt, label={[xshift=-10pt, yshift = -10 pt] $\delta_{1}$} ] {}
		        ( 0,0) 	node  	(d2) [elt, label={[xshift=-10pt, yshift = -11 pt] $\delta_{2}$} ] {}
		        ( 1,0) 	node  	(d3) [elt, label={[xshift=10pt, yshift = -11 pt] $\delta_{3}$} ] {}
		        ( 2,1) 	node  	(d4) [elt, label={[xshift=10pt, yshift = -10 pt] $\delta_{4}$} ] {}
		        ( 2,-1) node 	(d5) [elt, label={[xshift=10pt, yshift = -10 pt] $\delta_{5}$} ] {};
		\draw [black,line width=1pt ] (d0) -- (d2) -- (d1)  (d2) -- (d3) (d4) -- (d3) -- (d5);
	\end{tikzpicture}} \qquad
			\begin{aligned}
            &\begin{alignedat}{2}
			\delta_{0} &= \mathcal{E}_{1} - \mathcal{E}_{2}, &\qquad  \delta_{3} &= \mathcal{H}_{p} - \mathcal{E}_{5} - \mathcal{E}_{7},\\
			\delta_{1} &= \mathcal{E}_{3} - \mathcal{E}_{4}, &\qquad  \delta_{4} &= \mathcal{E}_{5} - \mathcal{E}_{6},\\
			\delta_{2} &= \mathcal{H}_{q} - \mathcal{E}_{1} - \mathcal{E}_{3}, &\qquad  \delta_{5} &= \mathcal{E}_{7} - \mathcal{E}_{8},
			\end{alignedat}
           \\[5pt]
           &~\,\delta  =\delta_0+\delta_1+2\delta_2+2\delta_3+\delta_4+\delta_5.
           \end{aligned}
\end{equation*}
	\caption{The surface root basis for the KNY reference model of $D_{5}^{(1)}$ surfaces}
	\label{fig:d-roots-d5-KNY}	
\end{figure}
The sublattice of $\Pic(\X)$ orthogonal to $Q$ with respect to the intersection form is the symmetry root lattice $Q^{\perp}=Q(\mathcal{R}^{\perp})$, $\mathcal{R}^{\perp}=A_3^{(1)}$, which can be written in terms of simple roots for an affine root system of type $A_3^{(1)}$ as
\begin{equation}
    Q^{\perp}=Q(A_3^{(1)}) = \Span_{\Z}\left\{ \alpha_0, \dots, \alpha_3\right\} \subset \delta^{\perp}\subset \Pic(\X),
\end{equation}
where the simple roots $\alpha_i$ are given in \Cref{fig:a-roots-a3-KNY}.

\begin{figure}[htb]
\begin{equation*}\label{eq:a-roots-a31}			
	\raisebox{-32.1pt}{\begin{tikzpicture}[
			elt/.style={circle,draw=black!100,thick, inner sep=0pt,minimum size=2mm}]
		\path 	(-1,1) 	node 	(a0) [elt, label={[xshift=-10pt, yshift = -10 pt] $\alpha_{0}$} ] {}
		        (-1,-1) node 	(a1) [elt, label={[xshift=-10pt, yshift = -10 pt] $\alpha_{1}$} ] {}
		        ( 1,-1) node  	(a2) [elt, label={[xshift=10pt, yshift = -10 pt] $\alpha_{2}$} ] {}
		        ( 1,1) 	node 	(a3) [elt, label={[xshift=10pt, yshift = -10 pt] $\alpha_{3}$} ] {};
		\draw [black,line width=1pt ] (a0) -- (a1) -- (a2) --  (a3) -- (a0); 
	\end{tikzpicture}} \qquad
			\begin{alignedat}{2}
			\alpha_{0} &= \mathcal{H}_{p} - \mathcal{E}_{1} - \mathcal{E}_{2}, &\qquad  \alpha_{3} &= \mathcal{H}_{q} - \mathcal{E}_{7} - \mathcal{E}_{8},\\
			\alpha_{1} &= \mathcal{H}_{q} - \mathcal{E}_{5} - \mathcal{E}_{6}, &\qquad  \alpha_{2} &= \mathcal{H}_{p} - \mathcal{E}_{3} - \mathcal{E}_{4},
			\\[5pt]
			\delta & = \mathrlap{\alpha_{0} + \alpha_{1} +  \alpha_{2} + \alpha_{3}.} 
			\end{alignedat}
\end{equation*}
	\caption{The symmetry root basis for the KNY reference model of $D_{5}^{(1)}$ surfaces}
	\label{fig:a-roots-a3-KNY}	
\end{figure}

\subsubsection{Extended affine Weyl group acting on $\Pic(\X)$}

The symmetry root basis introduced above is used in the description of the symmetries of the family $\mathcal{X}$ as an extended affine Weyl group, on the level of linear maps on $\Pic(\X)$. 
The affine Weyl group of type $A_3^{(1)}$ has the following presentation by generators and relations, encoded in the Dynkin diagram as in \Cref{fig:a-roots-a3-KNY}:
\begin{equation}
    W(A_3^{(1)}) = \left\langle w_0,w_1,w_2,w_3 ~~|~~
    w_{i}^{2} = 1,  \quad
    % \raisebox{-0.1in}{
    \begin{aligned}
        (w_{i} w_{j})^2 &= 1&  &\text{ when }
   				\raisebox{-0.15in}{\begin{tikzpicture}[
   							elt/.style={circle,draw=black!100,thick, inner sep=0pt,minimum size=1.5mm}]
   						\path   ( 0,0) 	node  	(ai) [elt] {}
   						        ( 0.5,0) 	node  	(aj) [elt] {};
   						\draw [black] (ai)  (aj);
   							\node at ($(ai.south) + (0,-0.2)$) 	{\small ${\alpha_i}$};
   							\node at ($(aj.south) + (0,-0.2)$)  {\small ${\alpha_j}$};
   							\end{tikzpicture}}\\
    (w_{i} w_{j})^3 &= 1& &\text{ when }
   				\raisebox{-0.15in}{\begin{tikzpicture}[
   							elt/.style={circle,draw=black!100,thick, inner sep=0pt,minimum size=1.5mm}]
   						\path   ( 0,0) 	node  	(ai) [elt] {}
   						        ( 0.5,0) 	node  	(aj) [elt] {};
   						\draw [black] (ai) -- (aj);
   							\node at ($(ai.south) + (0,-0.2)$) 	{\small${\alpha_{i}}$};
   							\node at ($(aj.south) + (0,-0.2)$)  {\small${\alpha_{j}}$};
   							\end{tikzpicture}}
    \end{aligned}
    % }
    \right\rangle.
\end{equation}
Here $w_i=r_{\alpha_i}$ is the simple reflection associated to the simple root $\alpha_i$.
    We have an action of $W(A_3^{(1)})$ on $\Pic(\X)$, 
    where the reflections $r_{\alpha_i}$ act according to the formula
\begin{equation}
          r_{\alpha_i} (\F)= \F -2 \frac{\F\bullet\alpha_i }{\alpha_i\bullet\alpha_i}\alpha_i = \F + (\F\bullet\alpha_i )\alpha_i, \quad \F\in\Pic(\X),
\end{equation}   
where $\alpha_i$, $i=0,1,2,3$, are as provided in \Cref{fig:a-roots-a3-KNY}.
This action preserves the intersection form $\bullet$, the anticanonical class $\delta=-\K_{\X}$, and the surface root lattice $Q\subset \Pic(\X)$.

In addition to $W(A_3^{(1)})$, the KNY family of $D_5^{(1)}$ surfaces admits an action of the automorphism group of the $A_3^{(1)}$ Dynkin diagram.
This group $\Aut(A_3^{(1)})$ of graph automorphisms of the $A_3^{(1)}$ Dynkin diagram in \Cref{fig:a-roots-a3-KNY} is isomorphic to the dihedral group $\mathbb{D}_{4}$ of order 8, which we describe as permutations of simple roots $\alpha_0,\alpha_1,\alpha_2,\alpha_3$.
As generators we choose $\sigma_1$ and $\sigma_2$ as follows:
\begin{equation}
		\Aut(A_3^{(1)}) = \langle \sigma_1,\sigma_2\rangle \cong \mathbb{D}_4, \qquad \begin{aligned}
		    \sigma_{1} &= (\alpha_{0}\alpha_{3})(\alpha_{1}\alpha_{2}), \\
            \sigma_{2} &= (\alpha_{0}\alpha_{2}).
		\end{aligned}		
\end{equation}
Their actions on $\Pic(\mathcal{X})$, which act on the simple roots by these permutations, are 
\begin{equation}
    \sigma_1 = (\h_q \h_p)(\E_1 \E_7)(\E_2 \E_8)(\E_3\E_5)(\E_4 \E_6), \quad 
    \sigma_2 = (\E_1 \E_3)(\E_2 \E_4),
\end{equation}
where we again use cycle notation for permutations.
Note that, in constrast to the reflections $w_i$, the Dynkin diagram automorphisms permute the surface roots $\delta_j$.
Together with the action of $W(A_3^{(1)})$ above, this gives an action on $\Pic(\X)$ of the (fully) extended affine Weyl group of type $A_3^{(1)}$,
\begin{equation}
    \widehat{W}(A_3^{(1)}) := W(A_3^{(1)})\rtimes \Aut(A_3^{(1)}),
\end{equation}
where the semi-direct product structure comes from $\sigma w_i \sigma^{-1} = \sigma r_{\alpha_i} \sigma^{-1} = r_{\sigma(\alpha_i)}$, for $\sigma\in\Aut(A_3^{(1)})$.

\subsubsection{Translations}

The group $\widehat{W}(A_3^{(1)})$ contains a normal subgroup of translations, which is contained in a smaller extension of $W(A_3^{(1)})$ by only certain Dynkin diagram automorphisms, which we describe now.
The subgroup of these Dynkin diagram automorphisms in this case is  
\begin{equation}
    \Sigma = \langle \rho := \sigma_1 \sigma_2\rangle \cong \Z/4\Z,    \qquad \rho= (\alpha_0\alpha_1\alpha_2\alpha_3).
\end{equation}
which acts on $\Pic(\mathcal{X})$ according to 
\begin{equation} \label{eq:rhoonpic}
\rho=(\h_q \h_p)(\E_1 \E_5 \E_3 \E_7)(\E_2 \E_6 \E_4 \E_8).
\end{equation}
The extended affine Weyl group of type $A_3^{(1)}$ is
\begin{equation}
    \widetilde{W}(A_3^{(1)}) := W(A_3^{(1)})\rtimes \Sigma \cong W(A_3)\ltimes T_{\nought{P}},
\end{equation}
where $W(A_3)=\langle w_1,w_2,w_3\rangle\subset W(A_3^{(1)})$ is the finite Weyl group of type $A_3$, and $T_{\nought{P}}$ is a subgroup of translations corresponding to the weight lattice of the underlying finite root system. 
The root lattice of the underlying finite root system is $\nought{Q}^{\perp} := \Z \alpha_1+\Z \alpha_1+\Z\alpha_2 \subset Q^{\perp}$, and the weight lattice is
\begin{equation}
    \nought{P} := \left\{ h\in \nought{Q}^{\perp}\otimes \mathbb{Q} ~|~ h\bullet\alpha\in\Z \text{ for all } \alpha\in \nought{Q}^{\perp} \right\}=\Z h_1+\Z h_2+\Z h_3\supset \nought{Q}^{\perp},
\end{equation}
where the fundamental weights $h_{i}$ are 
\begin{equation*}\label{A3w}
    h_1 = \frac{3}{4} \alpha_1 + \frac{1}{2} \alpha_{2} + \frac{1}{4} \alpha_3, \quad
h_2 = \frac{1}{2} \alpha_1 +  \alpha_{2} + \frac{1}{2} \alpha_3, 
\quad
h_3 = \frac{1}{4} \alpha_1 + \frac{1}{2} \alpha_{2} + \frac{3}{4} \alpha_3,
\end{equation*}
so $\alpha_i\bullet h_j=-\delta_{i,j}$.
The associated translations act on $Q^{\perp}$ by 
\begin{equation} \label{eq:translationformula}
    T_{h}(\alpha)=\alpha - (\alpha\bullet h)\delta, \quad \alpha\in Q^{\perp},  h\in \nought{P}.
\end{equation}
Further, $T_{ h}T_{ h'}=T_{ h+ h'}$ for any $ h, h'\in \nought{P}$, and $T_{\nought{P}}=\langle T_{ h_1}, T_{ h_2},T_{ h_3}\rangle$.
On $\Pic(\X)$, the actions of the translations associated to fundamental weights are given by the following compositions of simple reflections and $\rho$ as in \eqref{eq:rhoonpic}:
\begin{equation} \label{eq:translationwords}
         T_{ h_1} = \rho^3 w_{2} w_{3} w_{0}, 
         \quad 
         T_{ h_2} = \rho^2 w_0 w_3 w_1 w_0, 
         \quad 
         T_{ h_3} = \rho w_2 w_1 w_0.
\end{equation}
Note that the action of $T_{\nought{P}}$ on the whole of $\Pic(\X)$, not just $Q^{\perp}$, is not given in general by the formula \eqref{eq:translationformula}, but rather via the expressions in \Cref{eq:translationwords} using the action of $\rho$ on $\Pic(\X)$ given in \Cref{eq:rhoonpic}.
\begin{remark} \label{rem:conjugacynecessaryconditionweight}
    Conjugacy classes of translations in $\widehat{W}(A_3^{(1)})$ correspond to orbits of weights modulo $\Z \delta$, and $T_{w( h)}=w T_h w^{-1}$ for $h\in \nought{P}$, $w\in W(A_3)$.
    In particular, this gives a necessary condition for conjugacy of translations: if two weights $h, h'\in \nought{P}$ are such that $h\bullet h\neq h'\bullet h'$, then $T_h$ and $T_{h'}$ are not conjugate in $\widehat{W}(A_3^{(1)})$.
\end{remark}
\subsection{Cremona action and discrete Painlev\'e equations}

The action of $\widehat{W}(A_3^{(1)})=W(A_3^{(1)})\rtimes \operatorname{Aut}(A_3^{(1)})$ on $\Pic(\X)$ in \Cref{subsec:rootdataandcremonaisometries} is realised by an action of $\widehat{W}(A_3^{(1)})$ on the family $\mathcal{X}\to\mathscr{A}\times\mathscr{T}$, called the Cremona action \cite{SAKAI2001}.
In particular, for each $w\in \widehat{W}(A_3^{(1)})$, there is an action on parameters $\mathscr{A}\times\mathscr{T}\to\mathscr{A}\times\mathscr{T}, (\boldsymbol{a},t)\mapsto(\tilde{\boldsymbol{a}},\tilde{t})$ and isomorphisms $X_{\boldsymbol{a},t}\to X_{\tilde{\boldsymbol{a}},\tilde{t}}$, which induces the action of $w$ on $\Pic(\X)$ by pushforward or pullback depending on convention (see \cite[Sec. 3.6.3]{ASIDEnotes} for details).
In the following lemma we give this action, in the form of the actions of generators of $W(A_3^{(1)})$ and $\Aut(A_3^{(1)})$ on parameters as $\boldsymbol{a}\mapsto\tilde{\boldsymbol{a}}$, as well as the birational map $\C^2\dashrightarrow\C^2, (q,p)\mapsto (\tilde{q},\tilde{p})$ which gives the isomorphism $X_{\boldsymbol{a},t}\to X_{\tilde{\boldsymbol{a}},\tilde{t}}$ when extended to $\p^1\times \p^1$ and lifted under the blowups.

\begin{lemma}\label{thm:bir-weyl-d5} 
	The Cremona action of $\widehat{W}(A_3^{(1)})$ on the family $\mathcal{X}\to\mathscr{A}\times\mathscr{T}$ is given as follows.
    The simple reflections $w_{i}\in W(A_3^{(1)})$ act according to 
	\begin{alignat*}{2}
		w_{0}&: 
		\left(\begin{matrix} a_{0} & a_{1} \\ a_{2} & a_{3} \end{matrix}\ ;\ t\ ;
		\begin{matrix} q \\ p \end{matrix}\right) 
		&&\mapsto 
		\left(\begin{matrix} -a_{0} & a_{0} + a_{1} \\ a_{2} & a_{0} + a_{3} \end{matrix}\ ;\ t\ ;
		\begin{matrix} \displaystyle q + \frac{a_{0}}{p + t} \\ p \end{matrix}\right)\!, \\
		w_{1}&: \left(\begin{matrix} a_{0} & a_{1} \\ a_{2} & a_{3} \end{matrix}\ ;\ t\ ;
		\begin{matrix} q \\ p \end{matrix}\right)
		&&\mapsto 
		\left(\begin{matrix} a_{0} + a_{1} & -a_{1} \\ a_{1} + a_{2} & a_{3} \end{matrix}\ ;\ t\ ;
		\begin{matrix}  q \\ \displaystyle p - \frac{a_{1}}{q} \end{matrix}\right)\!, \\
		w_{2}&: 
		\left(\begin{matrix} a_{0} & a_{1} \\ a_{2} & a_{3} \end{matrix}\ ;\ t\ ;
		\begin{matrix} q \\ p \end{matrix}\right) 
		&&\mapsto 
		\left(\begin{matrix} a_{0} & a_{1} + a_{2} \\ -a_{2} & a_{2} + a_{3} \end{matrix}\ ;\ t\ ;
		\begin{matrix} \displaystyle q + \frac{a_{2}}{p}\\ p \end{matrix}\right)\!, \\
		w_{3}&: 
		\left(\begin{matrix} a_{0} & a_{1} \\ a_{2} & a_{3} \end{matrix}\ ;\ t\ ;
		\begin{matrix} q \\ p \end{matrix}\right) 
		&&\mapsto 
		\left(\begin{matrix} a_{0}+a_{3} & a_{1} \\ a_{2}+a_{3} & -a_{3} \end{matrix}\ ;\ t\ ;
		\begin{matrix} q \\ \displaystyle  p - \frac{a_{3}}{q-1} \end{matrix}\right)\!.
	\end{alignat*}	
    The actions of the generators $\sigma_1,\sigma_2$ of $\Aut(A_3^{(1)})$ are given by 
	\begin{alignat*}{2}
		\sigma_{1}&: 
		\left(\begin{matrix} a_{0} & a_{1} \\ a_{2} & a_{3} \end{matrix}\ ;\ t\ ;
		\begin{matrix} q \\ p \end{matrix}\right) 
		&&\mapsto 
		\left(\begin{matrix} a_{3} & a_{2} \\ a_{1} & a_{0}  \end{matrix}\ ;\ -t\ ;
		\begin{matrix} \displaystyle -\frac{p}{ t} \\ q t \end{matrix}\right)\!, \\
		\sigma_{2}&: 
		\left(\begin{matrix} a_{0} & a_{1} \\ a_{2} & a_{3} \end{matrix}\ ;\ t\ ;
		\begin{matrix} q \\ p \end{matrix}\right) 
		&&\mapsto 
		\left(\begin{matrix} a_{2} & a_{1} \\ a_{0} & a_{3}  \end{matrix}\ ;\ -t\ ;
		\begin{matrix} q \\ p + t \end{matrix}\right)\!.
	\end{alignat*}

\end{lemma}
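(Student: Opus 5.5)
To prove the lemma, I will check directly, for each generator $s\in\{w_0,w_1,w_2,w_3,\sigma_1,\sigma_2\}$, that the stated birational map $\varphi_s:(q,p)\mapsto(\tilde q,\tilde p)$ extends to an isomorphism $X_{\boldsymbol{a},t}\to X_{\tilde{\boldsymbol{a}},\tilde t}$, and that the induced action on $\Pic(\X)$ is the one from \Cref{subsec:rootdataandcremonaisometries}. The induced action means $r_{\alpha_i}$ for $w_i$, and the permutations given for $\sigma_1,\sigma_2$. Since a Cremona isometry is determined by its action on $\Pic(\X)$ together with the requirement that it respects the point configuration, this verification suffices. I will take the pushforward convention of \cite[Sec. 3.6.3]{ASIDEnotes}, so the class images are read off by tracking where curves go.

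First, the parameter maps. I will check that each $\tilde{\boldsymbol{a}}$ again satisfies $\sum\tilde a_i=1$. I will also check that it agrees with the action on root variables induced by the period map. In the KNY model this is $a_i=\chi(\alpha_i)$, the residue pairing of $\omega_{\boldsymbol{a},t}$ along $\alpha_i$. This pairing transforms under $r_{\alpha_j}$ by $a_i\mapsto a_i-(\alpha_i\bullet\alpha_j)a_j$ in the appropriate sign convention, which reproduces, for example, $w_0:(a_0,a_1,a_2,a_3)\mapsto(-a_0,a_0+a_1,a_2,a_0+a_3)$. For $\sigma_1,\sigma_2$ the parameter maps are just the permutations of the $a_i$. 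The sign change $t\mapsto -t$ has to be consistent with the coordinate map, and I will confirm this from the point locations.

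Next, the geometry for the reflections. Take $w_0$, with $\tilde q=q+a_0/(p+t)$ and $\tilde p=p$. This is a de Jonquières-type map preserving the fibration by $p$. Its indeterminacy and contracted curve are concentrated on the fibre $p=-t$, that is, on $b_1,b_2$. I will compute in the charts of \Cref{not:blowupcharts} that the curve $H_p-E_1-E_2$ (the proper transform of $p=-t$) is exchanged with $E_2$. Concretely, I expect
\[
\h_q\mapsto \h_q+\h_p-\E_1-\E_2,\qquad \E_1\mapsto \h_p-\E_2,\qquad \E_2\mapsto \h_p-\E_1,
\]
with all other classes fixed, which is exactly $r_{\alpha_0}$. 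I will also verify that the images of $b_3,\dots,b_8$ are the corresponding points for the new parameters. For example, the image of $b_4$ must have $v_3=-\tilde a_2$, and in the chart near $b_5,b_6$ (where $q\to0$, $p\to\infty$) the shift $a_0/(p+t)$ changes $q/P$ to first order, giving $\tilde a_1=a_0+a_1$. The maps $w_1,w_2,w_3$ are handled identically; $w_1$ and $w_3$ preserve the $q$-fibration and work near $b_5,b_6$ and $b_7,b_8$ respectively. For $\sigma_1,\sigma_2$ the maps are biregular on $\p^1\times\p^1$, so I only need to check that they permute the points $b_i$, with infinitely near points included, as prescribed. For instance, $\sigma_2$ fixes $q$ and shifts $p\mapsto p+t$ with $t\mapsto -t$, so it swaps $b_1\leftrightarrow b_3$ and $b_2\leftrightarrow b_4$ while swapping $a_0\leftrightarrow a_2$.

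The main obstacle will be the careful second-order chart computations at the infinitely near points $b_2,b_4,b_6,b_8$. These are where the parameter transformations, and in particular the $a_1$ and $a_3$ shifts under $w_0$, are actually detected, and sign errors from the chart conventions and from $t\mapsto -t$ are likely there. I will therefore carry them out explicitly in the $(u_i,v_i)$ and $(U_i,V_i)$ coordinates. As a consistency check, I will verify that $\varphi_s^*(d\tilde q\wedge d\tilde p)=\pm dq\wedge dp$, which forces the anticanonical divisor, and hence the surface roots $\delta_j$, to be preserved (for $\sigma$, permuted).
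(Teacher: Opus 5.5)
The paper does not prove this lemma. It is imported, as part of the reference model, from \cite[Sec.~8.2.18]{KajNouYam:2017:GAPE}, so your generator-by-generator verification replaces a citation rather than competing with an argument in the text. It is the standard check, and it goes through.

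For instance, under $w_0$ one finds $(u_1,v_1)\mapsto\bigl(u_1v_1/(v_1+a_0),\,v_1+a_0\bigr)$, $\tilde U_5=U_5+a_0/(1+tV_5)$, $\tilde U_7=U_7+a_0/(1+tV_7)$, and $\tilde v_3=v_3$ on $E_3$. Under $\sigma_1$ one finds $\tilde U_7=-v_1$, $\tilde U_5=-v_3$, $\tilde v_3=-U_5$ and $\tilde v_1=-U_7$. All six maps pull back $d\tilde q\wedge d\tilde p$ to exactly $dq\wedge dp$, so the period-map check of the parameter action works with the pushforward convention.

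What your route buys over the citation is confirmation in this paper's own chart conventions, which (as the paper remarks) differ between references. This includes the $t\mapsto -t$ in $\sigma_1$ and $\sigma_2$.

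There are three things to fix.

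\textbf{The curve swapped with $E_2$ under $w_0$.} It is not $\h_p-\E_1-\E_2$. In the chart $(u_1,v_1)$ the proper transform of $p=-t$ is $v_1=0$, whereas $b_2$ sits at $v_1=-a_0$. So for $a_0\neq 0$ this curve has class $\h_p-\E_1$. The formula above shows it is contracted to $\tilde b_2$, and $E_2$ is sent onto it, in agreement with your own $\E_2\mapsto\h_p-\E_1$. The class $\h_p-\E_1-\E_2=\alpha_0$ is effective only in the nodal case $a_0=0$, and $r_{\alpha_0}$ sends it to $-\alpha_0$, so it cannot be exchanged with $E_2$.

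\textbf{The sign in the root-variable formula.} With $\alpha_i\bullet\alpha_i=-2$, the root variables transform as $a_i\mapsto a_i+(\alpha_i\bullet\alpha_j)a_j=\chi(r_{\alpha_j}\alpha_i)$. Your minus sign would give $a_0\mapsto 3a_0$ under $w_0$.

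\textbf{The uniqueness claim.} Your remark that a Cremona isometry is determined by its action on $\Pic(\X)$ is what guarantees the six formulas satisfy the relations of $\widehat{W}(A_3^{(1)})$, and hence define an action. State it precisely. An isomorphism $X_{\boldsymbol{a},t}\to X_{\boldsymbol{a}',t'}$ acting trivially on $\Pic(\X)$ preserves both rulings. Tracking $b_1,b_3,b_5,b_7$ forces the induced map on the $q$-line to fix $0,1,\infty$ and the map on the $p$-line to fix $0,\infty$. So it descends to $(q,p)\mapsto(q,\lambda p)$ with $t'=\lambda t$. The infinitely near points $b_2,b_4,b_6,b_8$ then force $\boldsymbol{a}'=\lambda\boldsymbol{a}$, and the normalisation $\sum a_i=1$ gives $\lambda=1$.
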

Via the Cremona action, elements of infinite order in $\widehat{W}(A_3^{(1)})$, e.g. translations, define discrete Painlev\'e equations of surface type $D_5^{(1)}$.
There are two reference examples which will be relevant in this paper, which we present next.

\subsubsection{The KNY example of a discrete Painlev\'e equation of surface type $D_5^{(1)}$} 
\label{subsubsec:KNYexample}

The first reference example is that given in \cite[Sec. 8.1.17]{KajNouYam:2017:GAPE}, which comes from the Cremona action of the translation 
\begin{equation}
    T_{\operatorname{KNY}}  := T_{ h_1- h_2+ h_3}= T_{ h_1} T_{ h_2}^{-1} T_{ h_3} = \rho^2 w_1 w_3 w_2 w_0,
\end{equation}
whose associated weight $h_{\operatorname{KNY}}:=h_1-h_2+h_3$ has squared length $- (h_{\operatorname{KNY}}\bullet h_{\operatorname{KNY}})=1$.
We reproduce the expression of this in coordinates as $(q,p)\mapsto (\bar{q},\bar{p})$, with parameter evolution $\boldsymbol{a}=(a_0,a_1,a_2,a_3)\mapsto (\bar{a}_0,\bar{a}_1,\bar{a}_2,\bar{a}_3)=\bar{\boldsymbol{a}}$, as follows: 
\begin{equation} \label{eq-KNY-dP-S3}
	\left\{
	\begin{aligned}
		\bar{q} 	&= 1 - q - \frac{a_0}{p+t} - \frac{a_2}{p}, \\
		\ubar{p} 	&= - p - t + \frac{a_1}{q} + \frac{a_3}{q-1}, 
	\end{aligned}
	\qquad 
	\begin{aligned}
		\bar{a}_0 &= a_0 + 1, &&\bar{a}_1 = a_1 - 1, \\
		\bar{a}_2 &= a_2 + 1, &&\bar{a}_3 = a_3 - 1.
	\end{aligned}
	\right.
\end{equation}
Regarding this as a (family of) birational map(s) 
$$\varphi^{\operatorname{KNY}}_{\boldsymbol{a},t} : \p^1\times \p^1 \dashrightarrow\p^1 \times\p^1, \quad (q,p)\mapsto(\bar{q},\bar{p}),$$
lifting under the blowups to the surfaces gives the isomorphism(s) 
\begin{equation}
\tilde{\varphi}^{\operatorname{KNY}}_{\boldsymbol{a},t} : X_{\boldsymbol{a},t} \rightarrow X_{\bar{\boldsymbol{a}},t}, 
\end{equation}
which act on $\Pic(\X)$ via pushforward according to
\begin{equation}
T_{\operatorname{KNY}} : \left\{
	\begin{aligned}
		\mathcal{H}_q 	&\mapsto 5 \h_q + 2 \h_p - \E_1 -  \E_2 -  \E_3 -  \E_4 - 2\E_5 - 2\E_6 - 2\E_7-2\E_8, \\\
		\mathcal{H}_p 	&\mapsto  2\h_q + \h_p - \E_5 -\E_6 - \E_7 - \E_8, \\
		\E_1			&\mapsto  2\h_q + \h_p -\E_4 - \E_5 -\E_6 - \E_7 - \E_8, \\
		\E_2			&\mapsto  2\h_q + \h_p -\E_3 - \E_5 -\E_6 - \E_7 - \E_8, \\
		\E_3			&\mapsto  2\h_q + \h_p - \E_2-\E_5 -\E_6 - \E_7 - \E_8, \\
		\E_4			&\mapsto  2\h_q + \h_p -\E_1 - \E_5 -\E_6 - \E_7 - \E_8, \\
		\E_5			&\mapsto  \h_q-\E_8, \\
		\E_6			&\mapsto  \h_q-\E_7, \\
		\E_7			&\mapsto  \h_q-\E_6, \\
		\E_8			&\mapsto  \h_q-\E_5, 
	\end{aligned}
\right.
\end{equation}
the restriction of which to $Q^{\perp}$ gives
\begin{equation} \label{eq:translationonroots-KNY}
T_{\operatorname{KNY}} : 
\left\{
	\begin{aligned}
		\alpha_0 &\mapsto \alpha_0 - \delta, \\
		\alpha_1 &\mapsto \alpha_1 + \delta , \\
		\alpha_2 &\mapsto \alpha_2 - \delta, \\
		\alpha_3 &\mapsto \alpha_3 + \delta.	
	\end{aligned}
\right.
\end{equation}

\subsubsection{The Sakai example of a discrete Painlev\'e equation of surface type $D_5^{(1)}$} 
\label{subsubsec:Sakaiexample}

The second reference example corresponds to that labeled $\operatorname{d}$-$\pain{IV}$ in \cite[Sec. 7]{SAKAI2001}, which comes from the Cremona action of the translation 
\begin{equation}
    T_{\operatorname{Sak}}  := T_{ h_3}= \rho w_2 w_1 w_0,
\end{equation}
whose associated weight $h_{\operatorname{Sak}}:=h_3$ has squared length $- (h_{\operatorname{Sak}}\bullet h_{\operatorname{Sak}})=\frac{3}{4}$. 
So $T_{\operatorname{Sak}}$ and $T_{\operatorname{KNY}}$ are not conjugate in $\widehat{W}(A_3^{(1)})$, see \Cref{rem:conjugacynecessaryconditionweight}.

We reproduce the expression of this discrete Painlev\'e equation, recycling the notation for the action on variables $(q,p)\mapsto (\bar{q},\bar{p})$, and parameter evolution $\boldsymbol{a}=(a_0,a_1,a_2,a_3)\mapsto (\bar{a}_0,\bar{a}_1,\bar{a}_2,\bar{a}_3)=\bar{\boldsymbol{a}}$, as follows:
\begin{equation} \label{eq-Sakai-dP-S3}
	\left\{
	\begin{aligned}
		\bar{q} 	&= \frac{p+t}{t} \left( 1 - \frac{a_0 + a_1}{ a_0 + (p+t)q} \right) , \\
		\bar{p} 	&= \frac{1}{1-\bar{q}}\left( a_2 + \frac{(a_0+a_1)p}{t+p - t \bar{q}} \right),  
	\end{aligned}
	\qquad 
	\begin{aligned}
		\bar{a}_0 &= a_0 + 1, &&\bar{a}_1 = a_1, \\
		\bar{a}_2 &= a_2, &&\bar{a}_3 = a_3 - 1.
	\end{aligned}
	\right.
\end{equation}
Regarding this as a (family of) birational map(s) 
$$\varphi^{\operatorname{Sak}}_{\boldsymbol{a},t} : \p^1\times \p^1 \dashrightarrow\p^1 \times\p^1, \quad (q,p)\mapsto(\bar{q},\bar{p}),$$
this gives the isomorphism(s)
\begin{equation}
\varphi^{\operatorname{Sak}}_{\boldsymbol{a},t} : X_{\boldsymbol{a},t} \rightarrow X_{\bar{\boldsymbol{a}},t}, 
\end{equation}
which acts on $\Pic(\X)$ via pushforward of $\varphi^{\operatorname{Sak}}$ according to
\begin{equation}
T_{\operatorname{Sak}} : \left\{
	\begin{aligned}
		\mathcal{H}_q 	&\mapsto 3 \h_q + 2 \h_p  -  \E_3 -  \E_4 - \E_5 - \E_6 - 2\E_7-2\E_8, \\\
		\mathcal{H}_p 	&\mapsto  2\h_q + \h_p - \E_3 -\E_4 - \E_7 - \E_8, \\
		\E_1			&\mapsto  2\h_q + \h_p - \E_3 -\E_4 -\E_6 - \E_7 - \E_8, \\
		\E_2			&\mapsto  2\h_q + \h_p - \E_3 -\E_4 -\E_5 - \E_7 - \E_8, \\
		\E_3			&\mapsto  \h_q - \E_8, \\
		\E_4			&\mapsto  \h_q- \E_7, \\
		\E_5			&\mapsto  \h_q+\h_p-\E_4-\E_7-\E_8, \\
		\E_6			&\mapsto  \h_q+\h_p-\E_3-\E_7-\E_8, \\
		\E_7			&\mapsto  \E_1, \\
		\E_8			&\mapsto  \E_2,
	\end{aligned}
\right.
\end{equation}
the restriction of which to $Q^{\perp}$ gives
\begin{equation}\label{eq:translationonroots-Sakai}
T_{\operatorname{Sak}} : 
\left\{
	\begin{aligned}
		\alpha_0 &\mapsto \alpha_0 - \delta, \\
		\alpha_1 &\mapsto \alpha_1 , \\
		\alpha_2 &\mapsto \alpha_2 , \\
		\alpha_3 &\mapsto \alpha_3 + \delta.	
	\end{aligned}
\right.
\end{equation}

\subsubsection{Hamiltonian form of $\pain{V}$}

The family of surfaces $\mathcal{X}\to\mathscr{A}\times \mathscr{T}$ also provides the initial value spaces for the fifth Painlev\'e equation $\pain{V}$, in the sense of Okamoto \cite{Oka:1979:FAESOPCFPP}.
For the standard form of $\pain{V}$ as a scalar second-order ODE we take the one in \cite{Oka:1987:SPEIFPEP},
\begin{equation}\label{eq:P5-std}
	\frac{d^{2} w}{dt^2} = \left(\frac{1}{2w} + \frac{1}{w-1}\right)\left(\frac{dw}{dt}\right)^{2} - \frac{1}{t} \frac{dw}{dt} + 
		\frac{(w-1)^{2}}{t^{2}}\left(\alpha w + \frac{\beta}{w}\right) + \frac{\gamma}{t} w + \delta \frac{w(w+1)}{w-1},
\end{equation}
where $\alpha,\beta,\gamma,\delta\in \mathbb{C}$ are parameters, and we assume $\delta\neq 0$ since otherwise \Cref{eq:P5-std} reduces to the third Painlev\'e equation $\pain{III}$.
We consider the form of $\pain{V}$ as a non-autonomous Hamiltonian system as given in \cite{KajNouYam:2017:GAPE}, which is polynomial just as that provided by Okamoto for $\pain{V}$ in \cite{Oka:1980:PHAWPE} (see also \cite{Oka:1987:SPEIFPEP,malmquist} - for the relations between various Hamiltonian forms of Painlev\'e equations on the level of geometry see \cite{hamiltonians}).
This is given by 
\begin{equation}\label{eq-KNY-Ham5-sys}
	\left\{
	\begin{aligned}
		\frac{dq}{dt} &= \frac{\partial \operatorname{H}}{\partial p} =\frac{1}{t}\Big(q(q-1)(2p+t) - a_{1}(q-1) - a_{3}q\Big),\\
		\frac{dp}{dt} &= -\frac{\partial \operatorname{H}}{\partial q} =\frac{1}{t}\Big(p(p+t)(1-2q) + (a_{1} + a_{3})p - a_{2}t\Big),
	\end{aligned}
	\right.
\end{equation}
where the Hamiltonian is
\begin{equation}\label{eq:KNY-Ham5}
	\operatorname{H}(q,p;t) = \frac{1}{t}\Big(q (q-1) p(p+t) - (a_{1} + a_{3})q p + a_{1}p + a_{2} t q\Big).	
\end{equation}
The system \eqref{eq-KNY-Ham5-sys} reduces to the standard Painlev\'e equation \eqref{eq:P5-std} for the variable 
$w(t) = 1 -  \frac{1}{q(t)}$ with the parameter values
\begin{equation}\label{eq:root-pars-PV}
	\alpha = \frac{a_{1}^{2}}{2},\quad \beta = -\frac{a_{3}^{2}}{2},\quad \gamma = a_{0}-a_{2}, \quad \delta = -	\frac{1}{2}.
\end{equation}
For given $\boldsymbol{a}=(a_0,a_1,a_2,a_3)\in\mathscr{A}$, the Okamoto space of initial conditions at $t\in\mathscr{T}$ for system \eqref{eq-KNY-Ham5-sys} is obtained by removing the components of the unique effective anticanonical divisor from $X_{\boldsymbol{a},t}$,
i.e. the surface as in Figure \ref{fig:KNY-soic-5}, with blue curves removed.
The Cremona action in \Cref{thm:bir-weyl-d5} provides B\"acklund transformations for the system \eqref{eq-KNY-Ham5-sys}.

\section{The identification procedure}
\label{sec:identificationprocedure}

We now recall the procedure, proposed in \cite{DzhFilSto:2020:RCDOPWHWDPE}, for identifying discrete Painlev\'e equations written in non-standard form. 
The kind of discrete systems the procedure is applicable to are systems of two first-order difference equations for $(x_n,y_n)$, say
\begin{equation} \label{eq:discretesystemgeneral}
    x_{n+1}= f_n(x_n,y_n), \quad y_{n+1}= g_n(x_n,y_n), 
\end{equation}
such that if we regard $(x_n,y_n)$ and $(x_{n+1},y_{n+1})$ as coordinates for two copies of $\C^2$, the discrete system \eqref{eq:discretesystemgeneral} defines a birational map.
Further, we will be considering non-autonomous discrete systems containing parameters, i.e. the functions $f_n,g_n$ in \Cref{eq:discretesystemgeneral} are rational in $x_n,y_n$ with coefficients being functions of $n$ as well as some complex parameters. 

In the case of systems coming from orthogonal polynomials there is often a differential system satisfied by the recurrence coefficients with respect to some parameter in the weight, say $s\in \mathscr{S}$. 
Here $\mathscr{S}\subset \C$ is the domain of analyticity of this differential system, playing the same role as the independent variable space $\mathscr{T}$ of the Painlev\'e equation.

We consider \Cref{eq:discretesystemgeneral} as defining a family of birational maps
\begin{equation}
    \psi_{\boldsymbol{b}, s} : \C^2 \dashrightarrow \C^2, \quad (x,y)\mapsto (\bar{x},\bar{y}),
\end{equation}
where $(x,y)=(x_{n},y_{n})$ and $(\bar{x},\bar{y})=(x_{n+1},y_{n+1})$. 
We take $\boldsymbol{b} \in\mathscr{B}$, where $\mathscr{B}$ is the space of parameters in the equation \eqref{eq:discretesystemgeneral}, which is a subset of $\C^{r}$ for some $r$, along with the parameter evolution $\boldsymbol{b}\mapsto\bar{\boldsymbol{b}}$ induced by $n \mapsto n+1$.

Suppose such a system is indeed a discrete Painlev\'e equation in disguise, related by a birational change of variables to a standard form coming from the Cremona action on a family of Sakai surfaces parametrised by root variables. 
If the discrete Painlev\'e equation in standard form corresponds, in the same way as the examples in \Cref{subsubsec:KNYexample,subsubsec:Sakaiexample}, to a family of birational maps
\begin{equation}
    \varphi_{\boldsymbol{a},t} : \C^2 \dashrightarrow \C^2, \quad (q,p)\mapsto (\bar{q},\bar{p}),
\end{equation}
with evolution of root variables $\boldsymbol{a}\mapsto \bar{\boldsymbol{a}}$, then the change of variables is a family $\upsilon_{\boldsymbol{b},s} : \C^2 \dashrightarrow\C^2$ of birational maps and an identification of parameters $\rho : \mathscr{B} \times \mathscr{S} \to\mathscr{A} \times \mathscr{T}$ that matches $\varphi_{\boldsymbol{a},t}$ and $\psi_{\boldsymbol{b},s}$. 
That is, the following diagrams commute:
\begin{equation} \label{eq:changeofvarsandparamident}
    \begin{tikzcd}
            \C^2 \arrow[r,dashed, "\psi_{\boldsymbol{b},s}"] \arrow[d,dashed, "\upsilon_{\boldsymbol{b},s}"]
        & \C^2   \arrow[d,dashed, "\upsilon_{\bar{\boldsymbol{b}},s}"]      \\
        \C^2  \arrow[r,dashed, "\varphi_{\boldsymbol{a},t}"]
        & \C^2 
    \end{tikzcd}
    \qquad 
    \begin{tikzcd}
            Y_{\boldsymbol{b},s} \arrow[r, "\tilde{\psi}_{\boldsymbol{b},s}"] \arrow[d, "\tilde{\upsilon}_{\boldsymbol{b},s}"]
        & Y_{\bar{\boldsymbol{b}},s}   \arrow[d, "\tilde{\upsilon}_{\bar{\boldsymbol{b}},s}"]      \\
        X_{\boldsymbol{a},t} \arrow[r,"\tilde{\varphi}_{\boldsymbol{a},t}"]
        & X_{\bar{\boldsymbol{a}},t} 
    \end{tikzcd}
    \qquad
    \begin{tikzcd}
        \mathscr{B} \times \mathscr{S} \arrow[r,"n\mapsto n+1"]  \arrow[d,"\rho"]
        &\mathscr{B} \times \mathscr{S}\arrow[d,"\rho"]\\
        \mathscr{A} \times \mathscr{T} \arrow[r,"{\boldsymbol{a}}\mapsto \bar{\boldsymbol{a}}"]
        &  \mathscr{A} \times \mathscr{T}.        
    \end{tikzcd}    
\end{equation}
In the situation under consideration, where there is a system of differential equations with respect to $s\in \mathscr{S}$ satisfied by $x_n,y_n$, the same change of variables will transform this to the Painlev\'e equation for $q,p$ with independent variable $t\in \mathscr{T}$.

The identification procedure to obtain the change of variables and parameter identification as in \cite{DzhFilSto:2020:RCDOPWHWDPE}, phrased with the notation and conventions above, is along the following lines.

\begin{description}
    \item[(Step 1)] \textbf{Construct the surfaces on which the system is regularised.}
    
    Obtain surfaces $Y_{\boldsymbol{b},s}$, of which the birational maps $\psi_{\boldsymbol{b},s}$ give isomorphisms $\tilde{\psi}_{\boldsymbol{b},s} : Y_{\boldsymbol{b},s}\rightarrow Y_{\bar{\boldsymbol{b}},s}$. 
    This is done by first extending the map $\psi_{\boldsymbol{b},s}$ to some compactification of $\C^2$, usually $\p^1 \times\p^1$, then performing blowups to resolve indeterminacies and contractions.
    Note that sometimes blowdowns, not just blowups, are required to obtain $Y_{\boldsymbol{b},s}$ as a Sakai surface from a given compactification (i.e. the initially obtained surfaces need to be minimised), but we will not encounter such examples in this paper.
    All the Picard groups $\Pic(Y_{\boldsymbol{b},s})$ are naturally identified according to the enumeration of blowups from the construction of $Y_{\boldsymbol{b},s}$ into a single lattice $\Pic(\mathcal{Y})$ similar to $\Pic(\X)$.
    
    \item[(Step 2)] \textbf{Determine the surface type according to the Sakai scheme, and choose a reference model of surfaces of this type.} 
    
    Determine the surface type $\mathcal{R}$ of the Sakai surface $Y_{\boldsymbol{b},s}$. 
    This is done by identifying curves of self-intersection $-2$ that form the irreducible components of the unique effective anticanonical divisor.
    Choose a surface root basis in $\Pic(\mathcal{Y})$ corresponding to the classes of these irreducible components.
    Note that there may be other $(-2)$-curves, and if so these will be relevant in determining the symmetry type of the system. 
    Choose a reference model for surfaces of type $\mathcal{R}$, i.e. a family of Sakai surfaces parametrised by root variables, with data of surface and symmetry root bases as well as Cremona action of the generic symmetry group, e.g. from those in \cite{KajNouYam:2017:GAPE} or \cite{SAKAI2001}.
    
    \item[(Step 3)] \textbf{Find the induced linear action on the Picard lattice.} 
    
    Find the pushforward $(\tilde{\psi}_{\boldsymbol{b},s})_* : \Pic(Y_{\boldsymbol{b},s})\rightarrow \Pic(Y_{\bar{\boldsymbol{b}},s})$ of  the isomorphism $\tilde{\psi}_{\boldsymbol{b},s}$ and the induced automorphism $\Psi : \Pic(\mathcal{Y})\to \Pic(\mathcal{Y})$. 
    
    \item[(Step 4)] \textbf{Find an identification with the reference model on the level of the Picard lattice.}

    Find an isomorphism of lattices $\Upsilon : \Pic(\mathcal{Y})\to \Pic(\X)$ that matches matches $\Psi$ with the action on $\Pic(\X)$ of some element $w\in W(\mathcal{R}^{\perp})\rtimes \Aut(\mathcal{R}^{\perp})$, i.e. such that the following diagram commutes:
    \begin{equation} \label{eq:Picidentcommutativediagram}
        \begin{tikzcd}
            \Pic(\mathcal{Y}) \arrow[r, "\Psi"] \arrow[d, "\Upsilon"] & \Pic(\mathcal{Y}) \arrow[d,"\Upsilon"] \\
            \Pic(\X) \arrow[r,"w"] & \Pic(\X).
        \end{tikzcd}
    \end{equation}
    This is done by first finding a preliminary matching of $\Pic(\mathcal{Y})$ and $\Pic(\X)$, which preserves the intersection forms and effectiveness of divisor classes as well as matches the surface root bases. 
    Then the action of $\Psi$ on the symmetry root basis under this matching can be compared with that of $w$ corresponding to a reference example, and the preliminary matching can be adjusted if necessary.
    
    \item[(Step 5)] \textbf{Find the change of variables and parameter identification.}
    
    Find an isomorphism $\tilde{\upsilon}_{\boldsymbol{b},s} : Y_{\boldsymbol{b},s}\rightarrow X_{\boldsymbol{a},t}$ that induces the identification $\Upsilon : \Pic(\mathcal{Y})\to \Pic(\X)$ above. 
    This requires the identification of parameters $\rho : \mathscr{B} \times \mathscr{S} \to\mathscr{A} \times \mathscr{T}$, which is obtained by computing the root variables of $Y_{\boldsymbol{b},s}$ for the symmetry root basis corresponding under $\Upsilon$ to that of the reference example.
    This will transform the discrete system on the family $Y_{\boldsymbol{b},s}$ to the discrete Painlev\'e equation on the family $X_{\boldsymbol{a},t}$ generated by the element $w\in W(\mathcal{R}^{\perp})\rtimes \Aut(\mathcal{R}^{\perp})$, and also the differential systems with respect to $t$ and $s$.
\end{description}

The result of this procedure is a change of variables and parameter identification as in \Cref{eq:changeofvarsandparamident}, with the isomorphism $\tilde{\upsilon}_{\boldsymbol{b},n,s} : Y_{\boldsymbol{b},n,s}\rightarrow X_{\boldsymbol{a},t}$ providing the birational change of variables $\upsilon_{\boldsymbol{b},s}$ as in \eqref{eq:changeofvarsandparamident} when written in the coordinates $(x,y)$ and $(q,p)$.

\subsection{Parameter constraints and obstructions to symmetries}
\label{subsec:paramconstraintsandobstructionstosymmetries}

The parameter space $\mathscr{B}$ may not exhaust all possible values of root variables, and thus not admit the whole Cremona action. 
Such examples occur in several contexts, where Sakai surfaces with constraints on root variables either appear naturally or are engineered.
Below we will describe two such situations which will be relevant in this paper.

\subsubsection{Projective reduction conditions}

In \cite{SAKAI2001}, discrete Painlev\'e equations were said to arise from translation elements of the corresponding extended affine Weyl group.
While elements of infinite order more generally are now widely regarded as also defining discrete Painlev\'e equations (and in fact some well-studied early examples are associated to non-translation elements \cite{dP2symmetry}), one benefit of working with translations is that the evolution of parameters (root variables) under these is such that the discrete dynamical systems can be neatly written as non-autonomous difference equations of additive, $q$-difference, or elliptic difference type. 

When considering an element of infinite order that is not a translation (which will necessarily be a `quasi-translation'), the evolution of root variables is not purely translational and to write the system down as an equation with $n$-dependent coefficients requires periodic terms.
There is a procedure \cite{KNT} called projective reduction, in which parameters are constrained to a subset on which a non-translation element acts translationally, so the corresponding discrete Painlev\'e equation can be written as a non-autonomous difference equation free from periodic $n$-dependence of coefficients.
We will refer to such parameter constraints as \emph{projective reduction conditions}.
See \cite{yangtranslations} for algebraic details of quasi-translations in affine Weyl groups that appear in the context of discrete Painlev\'e equations.

\subsubsection{Parameter constraints from nodal curves}

A second type of parameter constraint that arises naturally is that corresponding to surfaces that include nodal curves.
\begin{definition} \label{def:nodalcurve}
    Let $X$ be a Sakai surface with effective anticanonical divisor $D$. 
    A \emph{nodal curve} is a smooth rational curve of self-intersection $-2$ disjoint from the irreducible components of $D$.
    The set of classes of nodal curves is $\Delta^{\operatorname{nod}} \subset \Pic(X)$. 
\end{definition}
Note that the genus formula immediately implies that if a smooth rational curve of self-intersection $-2$ on a Sakai surface is not among the components of the anticanonical divisor then it must be disjoint from them.

We now explain parameter constraints corresponding to nodal curves. 
The root variables, which are used as parameters for families of Sakai surfaces as above, are defined in terms of a kind of period map, which is a function $\chi$ on the symmetry root lattice $Q^{\perp}$ defined with respect to a choice of 2-form which is free of zeroes on $X$, and whose divisor of poles gives $D$ (see \cite[Sec. 3.6.2]{ASIDEnotes} and \cite[Sec. 5]{SAKAI2001} for details). 
For additive surface types, like the case of $\mathcal{R}=D_5^{(1)}$ we consider in this paper, the period map takes values in $\C$ and the root variables are $\chi(\alpha_i)=a_i$.
The existence of nodal curves implies constrained root variables due to the following.
\begin{proposition}[\cite{SAKAI2001}, Prop. 22.] \label{prop:nodalcurvesrootvars}
    Let $X$ be a Sakai surface, of surface type $\mathcal{R}$ and symmetry type $\mathcal{R}^{\perp}$.
    Denote by $W^{\operatorname{nod}}\subset W(\mathcal{R}^{\perp})$ the subgroup generated by reflections with respect to the elements of $\Delta^{\operatorname{nod}}$. 
    Then $\alpha \in W^{\operatorname{nod}}\cdot\Delta^{\operatorname{nod}} \iff \chi(\alpha)=0$, where $\chi$ is the period map on $X$.
\end{proposition}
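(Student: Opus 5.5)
The statement is an equivalence, and the two directions need different tools. For the direction ``$\Rightarrow$'' the plan is to reduce to the statement for a single nodal curve and then use linearity of $\chi$. For ``$\Leftarrow$'' the plan is to show that a root with vanishing period is, up to sign, effective, and then to peel off nodal curves by reflections.

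\emph{Step 1: $\chi$ vanishes on nodal classes.} Recall that $\chi$ is defined through the $2$-form $\omega$ whose pole divisor is $D$. For $\alpha\in Q^{\perp}$ one represents $\alpha$ by a $2$-cycle supported away from $D$, or by a difference of curves whose intersections with $D$ cancel, and integrates $\omega$ over it. If $\alpha=[C]$ with $C$ a nodal curve, then $C$ is a compact complex curve disjoint from $D$. Hence $\omega$ is holomorphic near $C$, and its restriction to $C$ vanishes for type reasons: a $(2,0)$-form restricts to zero on a complex curve. So $\chi([C])=\int_C\omega=0$.

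\emph{Step 2: orbit invariance.} The map $\chi$ is additive on $Q^{\perp}$. Hence for $\beta\in\Delta^{\operatorname{nod}}$,
\[
\chi(r_\beta\alpha)=\chi(\alpha)+(\alpha\bullet\beta)\chi(\beta)=\chi(\alpha).
\]
Therefore $\chi$ is $W^{\operatorname{nod}}$-invariant, and it vanishes on all of $W^{\operatorname{nod}}\cdot\Delta^{\operatorname{nod}}$. This proves ``$\Rightarrow$''.

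\emph{Step 3: effectivity (``$\Leftarrow$'').} Let $\alpha$ be a root of $Q^{\perp}$, so $\alpha\bullet\alpha=-2$ and $\alpha\bullet\K=0$, with $\chi(\alpha)=0$. Take $L=\mathcal{O}(\alpha)$. The period map identifies the restriction $L|_D\in\Pic^0(D)$ with $\chi(\alpha)$; for additive type this is $\Pic^0(D)\cong\C$. So $\chi(\alpha)=0$ means $L|_D\cong\mathcal{O}_D$.

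Consider the sequence $0\to L(-D)\to L\to L|_D\to 0$, using $L(-D)=L\otimes\mathcal{O}(\K)$. By Serre duality, $h^i(L\otimes\mathcal{O}(\K))=h^{2-i}(-\alpha)$. Combining the long exact sequence with Riemann--Roch, which gives $\chi(\mathcal{O}(\alpha))=0$, I aim to show the following: if neither $\alpha$ nor $-\alpha$ is effective, then the section $1$ of $L|_D\cong\mathcal{O}_D$ forces a contradiction. Concretely, it should force $h^1(-\alpha)\neq 0$ in a way incompatible with rationality of $X$ and uniqueness of the anticanonical divisor. I expect this cohomological step to be the main obstacle. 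The first thing I would try is Sakai's route: replace $\alpha$ by a $W(\mathcal{R}^{\perp})$-conjugate of minimal degree with respect to an ample class, and argue on $\p^1\times\p^1$ with the blowup points.

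\emph{Step 4: descent by reflections.} Assume without loss of generality that $\alpha$ is effective, say $\alpha=[E]$ with $E$ an effective divisor. Since $E\bullet\K=0$ and $E\bullet\delta_i=0$ for all $i$, any component of $E$ supported on $D$ can be removed by subtracting a combination of the $\delta_i$; this is where care is needed, and I would use negative definiteness of $Q$ to show the $D$-part vanishes. Then $E\bullet E=-2<0$ implies some irreducible component $C$ satisfies $C\bullet E<0$. This $C$ is not a $D$-component and has negative self-intersection. It cannot be a $(-1)$-curve, since $(-1)$-curves meet $D$ while $E$ is orthogonal to $D$. So $C$ is a nodal curve. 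Now replace $\alpha$ by $r_{[C]}\alpha=\alpha+(\alpha\bullet C)[C]$. This is still effective with $\chi=0$, and it has strictly smaller degree. By induction on degree we reach a nodal class, which shows $\alpha\in W^{\operatorname{nod}}\cdot\Delta^{\operatorname{nod}}$.
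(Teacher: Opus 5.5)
The paper quotes this statement from Sakai without proof, so there is no in-paper argument to compare with. Your outline follows the standard route, and Steps 1--2 correctly give ``$\Rightarrow$''.

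\textbf{Step 4: removing the $D$-part.} This is the genuine gap. You propose to remove the $D$-part of $E$ using negative definiteness of $Q$, but $Q=Q(D_5^{(1)})$ is an affine root lattice: it is only negative semi-definite, with radical $\Z\delta$. Write $E=E'+F$ with $F=\sum_i c_iD_i$, $c_i\ge 0$, and $E'$ free of $D$-components. Then $F\bullet D_i=-E'\bullet D_i\le 0$. Since $D=\sum_i m_iD_i$ with all $m_i>0$, the identity $\sum_i m_i(F\bullet D_i)=F\bullet\delta=0$ forces $F\bullet D_i=0$, so $F=k\delta$ with $k\ge 0$. Nothing lattice-theoretic gives $k=0$: the root $[C]+\delta$, for $C$ nodal, is represented by $C+D$.

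What excludes $k>0$ is the hypothesis on $\chi$. The divisor $E'$ is now disjoint from $D$, so $\chi([E'])=0$ by your Step 1 applied to its components. Hence $0=\chi(\alpha)=k\,\chi(\delta)$. Moreover $\chi(\delta)\neq 0$ precisely because $X$ is a Sakai surface rather than a Halphen surface; in the paper's normalisation $\chi(\delta)=a_0+a_1+a_2+a_3=1$. Your argument never uses this hypothesis, and that is the missing idea.

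\textbf{Step 4: effectivity after reflection.} The same input is needed for your unjustified claim that $r_{[C]}\alpha$ is still effective. With $k=-\alpha\bullet C$, subtracting $k[C]$ from $E$ requires $k$ to be at most the multiplicity of $C$ in $E$. Semi-definiteness of $Q^{\perp}$ applied to $(\alpha-[C])^2=2k-4$ gives $k\le 2$, with equality only if $\alpha-[C]\in\Z\delta$. In that case $\chi$ forces $\alpha=[C]$. So unless $\alpha=[C]$ you have $k=1$, and $E-C$ is effective.

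\textbf{Step 4: why $C$ is nodal.} ``Not a $(-1)$-curve'' does not by itself make $C$ nodal. Use adjunction: $C\cap D=\emptyset$ gives $C\bullet\K=0$, so $C^2<0$ forces $C^2=-2$ and $p_a(C)=0$.

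\textbf{Step 3.} By contrast, this step is not an obstacle, and you do not need to detour through $W(\mathcal{R}^{\perp})$-conjugation. That detour is delicate exactly on nodal surfaces, where the Cremona action does not respect the blow-down structure. Suppose $h^0(\alpha)=h^0(-\alpha)=0$. Then $h^0(\alpha+\K)=0$ as well, since $D$ is effective. Riemann--Roch gives $h^0(-\alpha)-h^1(-\alpha)+h^2(-\alpha)=1+\tfrac12(\alpha\bullet\alpha+\alpha\bullet\K)=0$, and $h^2(-\alpha)=h^0(\alpha+\K)=0$, so $h^1(-\alpha)=0$. By Serre duality $H^1(L(-D))=H^1(\mathcal{O}(\alpha+\K))=0$. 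Therefore $H^0(L)\to H^0(L|_D)\cong H^0(\mathcal{O}_D)\ni 1$ is onto, so $\alpha$ is effective, a contradiction. Uniqueness of the anticanonical divisor plays no role here.

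What does require a reference or a proof is the comparison you take for granted at the start of Step 3: that $\chi(\alpha)=0$ implies $\mathcal{O}(\alpha)|_D\cong\mathcal{O}_D$ in $\Pic^0(D)\cong\C$, for the non-reduced $D$ of type $D_5^{(1)}$. This is the substantive input of the converse. It does not follow formally from the definition of $\chi$ by integrating the $2$-form.
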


In particular, the surface $X$ having a single nodal curve, $\Delta^{\operatorname{nod}}=\{\alpha\}$, is equivalent to $\chi(\alpha)=0$.
Nodal curves present obstructions to symmetries of Sakai surfaces, and the symmetries compatible with the existence of some set of nodal curves will form a subgroup of the generic one for the corresponding surface type. 
This can be understood both in terms of the parameter constraints that arise as in \Cref{prop:nodalcurvesrootvars} and on the level of Cremona isometries (see \cite{SAKAI2001} for details).
We quote the part of the relevant result which pertains to the surface type $\mathcal{R}=D_5^{(1)}$.

\begin{theorem}[\cite{SAKAI2001}, Th. 26.]
    Let $X$ be a Sakai surface, of surface type $\mathcal{R}\neq A_6^{(1)}$, $A_7^{(1)}$, $A_7^{(1)'}$, $A_8^{(1)}$, $D_7^{(1)}$, $D_8^{(1)}$, and corresponding symmetry type $\mathcal{R}^{\perp}$.
    Then the group of Cremona isometries of $X$ is the setwise stabiliser 
    $$\operatorname{Cr}\left(X\right) \cong \left( W(\mathcal{R}^{\perp})\rtimes \Aut(\mathcal{R}^{\perp}) \right)_{\Delta^{\operatorname{nod}}} = \left\{ w \in W(\mathcal{R}^{\perp})\rtimes \Aut(\mathcal{R}^{\perp}) ~|~ w(\Delta^{\operatorname{nod}})=\Delta^{\operatorname{nod}}\right\}.$$
\end{theorem}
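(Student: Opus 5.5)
This result is quoted from \cite{SAKAI2001}, so rather than re-derive Sakai's general theory I would sketch the two inclusions in the setting of a Sakai surface $X$ of the allowed types. Here a Cremona isometry means an automorphism of $\Pic(X)$ that preserves the intersection form, preserves the canonical class and maps effective classes to effective classes. The statement identifies the Cremona isometries that fix the surface root lattice $Q(\mathcal{R})$, with components permuted as prescribed, with elements of $W(\mathcal{R}^{\perp})\rtimes\Aut(\mathcal{R}^{\perp})$.

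\textbf{Step 1: the generic symmetry group acts by isometries of the right kind.} For the excluded types the generic Cremona action fails to give the full extended affine Weyl group, so I assume $\mathcal{R}$ is not one of them. Then every element of $W(\mathcal{R}^{\perp})\rtimes\Aut(\mathcal{R}^{\perp})$ acts on $\Pic(X)$ preserving the intersection form and $\delta=-\K_X$, and it permutes the classes $\delta_i$ of the components of $D$. It remains to decide which of these preserve effectiveness.

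\textbf{Step 2: necessity.} Suppose $w$ is a Cremona isometry. Each nodal curve $C$ has irreducible, rigid class $[C]$ with $[C]^2=-2$ and $[C]\bullet\delta_i=0$ for all $i$. The image $w([C])$ is effective, has self-intersection $-2$, and is orthogonal to every component of $D$. Write an effective representative as a sum of irreducible curves. Since $w$ preserves the effective cone together with its extremal rays, and irreducible $(-2)$-classes orthogonal to $D$ span extremal rays, $w([C])$ is again the class of a nodal curve. Applying the same argument to $w^{-1}$ shows $w(\Delta^{\operatorname{nod}})=\Delta^{\operatorname{nod}}$.

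\textbf{Step 3: sufficiency.} Conversely, suppose $w$ stabilises $\Delta^{\operatorname{nod}}$; I need to show that $w$ maps effective classes to effective classes. The effective cone of $X$ is generated by the components $D_i$, the exceptional $(-1)$-curves and the nodal curves. It therefore suffices to check that $w$ sends each $(-1)$-class $e$ with $e$ effective to an effective class. Riemann--Roch gives $\chi(e')=1$ for $e'=w(e)$, since $e'^2=-1$ and $e'\bullet\K_X=-1$. Hence either $e'$ or $\K_X-e'$ is effective, and the second possibility is excluded by intersecting with $\delta$.

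\textbf{Main obstacle.} The subtle point in Step 3 is that $e'$ being effective is not enough; I also want $e'$ to be represented by an irreducible $(-1)$-curve. The approach I would take is the following. Decompose $e'$ as a $(-1)$-curve plus components of $D$ and nodal curves. Then reflect by nodal roots in $W^{\operatorname{nod}}$, using \Cref{prop:nodalcurvesrootvars} to control which roots have $\chi=0$. Because $w$ stabilises $\Delta^{\operatorname{nod}}$, the resulting reduction can be transported back and forth under $w$. This is where Sakai's argument, in Prop.~22 and the results preceding it, is genuinely needed, and I would defer to it. The list of excluded types is exactly the list where the period map or the symmetry group degenerate, so that this reduction fails.
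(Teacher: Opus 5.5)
The paper gives no proof of its own here (it only quotes Sakai), so your sketch has to stand on its own. With your definition of a Cremona isometry (an automorphism of $\Pic(X)$ preserving the form, $\K_X$ and effectiveness), it proves only half of the statement.

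\textbf{The gap.} The inclusion $\operatorname{Cr}(X)\subseteq W(\mathcal{R}^{\perp})\rtimes\Aut(\mathcal{R}^{\perp})$ is never established. You build it into your reading of the theorem, and Step~2 uses it silently: concluding that $w([C])$ is orthogonal to every component of $D$ requires $w$ to permute the $\delta_i$. What Steps~2--3 actually give is $\operatorname{Cr}(X)\cap\bigl(W(\mathcal{R}^{\perp})\rtimes\Aut(\mathcal{R}^{\perp})\bigr)=\bigl(W(\mathcal{R}^{\perp})\rtimes\Aut(\mathcal{R}^{\perp})\bigr)_{\Delta^{\operatorname{nod}}}$. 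To close the gap you need three things.
(a) A Cremona isometry $s$ sends each $\delta_i$ to an indecomposable effective class (the argument of your Step~2 applies to the $D_i$), hence to the class of an irreducible curve $C_i$. Since $\sum_i m_iC_i\in|{-\K_X}|=\{D\}$, $s$ permutes the $\delta_i$.
(b) Hence $s$ preserves $Q(\mathcal{R}^{\perp})=Q(\mathcal{R})^{\perp}$, its real roots and $\delta$. By simple transitivity of $W(\mathcal{R}^{\perp})$ on chambers, $s|_{Q(\mathcal{R}^{\perp})}=w\sigma$ with $w\in W(\mathcal{R}^{\perp})$ and $\sigma$ a diagram automorphism.
(c) An isometry of the unimodular lattice $\Pic(X)$ fixing $\K_X$ is determined by its restriction to $\K_X^{\perp}$, in which $Q(\mathcal{R})+Q(\mathcal{R}^{\perp})$ has finite index. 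So $s$ is the lift of $w\sigma$ once you know that an isometry acting trivially on $Q(\mathcal{R}^{\perp})$ and permuting the $\delta_i$ must be the identity.

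Points (b)--(c) are where the position of $Q(\mathcal{R})$ and $Q(\mathcal{R}^{\perp})$ inside $\Pic(X)$ matters, and where the excluded types drop out. The reason for the exclusions is lattice-theoretic, not a degeneration of the period map as your closing sentence suggests.

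\textbf{The stated obstacle is not one.} The ``main obstacle'' you isolate does not arise, and no appeal to \Cref{prop:nodalcurvesrootvars} is needed. To show that $w$ maps effective classes to effective classes, it suffices that $w$ maps each generator of the effective monoid to an effective class; whether $w(e)$ is an irreducible $(-1)$-curve is irrelevant. Your Riemann--Roch computation already shows that every $e'$ with $e'^2=e'\cdot\K_X=-1$ is effective; it uses only rationality, nefness of $-\K_X$ and $\K_X^2=0$. The conditions $w(\delta_i)=\delta_{\sigma(i)}$ and $w(\Delta^{\operatorname{nod}})=\Delta^{\operatorname{nod}}$ handle the remaining generators. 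Applying the same argument to $w^{-1}$ gives equality of effective monoids, and irreducibility of $w(e)$ then follows from indecomposability.

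\textbf{What Step~3 does need.} Step~3 rests on the generation statement, which you assert without proof. It must hold for the integral monoid of effective classes, not merely the real cone. It also genuinely uses $\dim|{-\K_X}|=0$: on a Halphen surface of index one with irreducible fibres, the fibre class is not a sum of classes of negative curves. It needs a proof or a precise reference.
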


\begin{remark} \label{rem:RCGeqn}
    Note that projective reduction conditions and parameter constraints corresponding to nodal curves are not mutually exclusive. 
    For example, there is an elliptic discrete Painlev\'e equation constructed from the Q4 equation in the Adler-Bobenko-Suris list \cite{ABSlist} (also known as Adler's discrete Krichever-Novikov equation \cite{adlerdiscreteKN}) by reduction \cite{joshietal} and deautonomisation \cite{RCG}, whose symmetry group $W(F_4^{(1)})$ was established in \cite{AHJN}.
    The associated surface is non-generic, with parameter constraints that correspond to four nodal curves appearing and that also ensure the evolution of root variables of the non-translational symmetry corresponding to the elliptic discrete Painlev\'e equation in question is translational.
    Another example of the combination of nodal curves and projective reduction conditions in a single example is that of the $\dpain{XXXIV}$ equation \cite{cresswelljoshi} (equivalent to a special case of that known as $\dpain{II}$). 
    This is associated with a half-translation on a $D_5^{(1)}$ surface with a nodal curve and a projective reduction condition, see \cite{dP2symmetry}.
\end{remark}
\begin{remark} \label{rem:conjugacyinsymmetrygroups}
In the introduction, included among the data required to specify a discrete Painlev\'e equation up to equivalence via birational changes of variables  was the conjugacy class $[w]$ of the element generating the equation. 
In the generic case, this refers to conjugacy within the symmetry group determined by the type $\mathcal{R}^{\perp}$, which in the examples considered in this paper is $\widehat{W}(A_3^{(1)})$.
In the non-generic case, the notion of conjugacy must take into consideration parameter constraints.
For constraints corresponding to nodal curves as relevant to this paper, this is along the following lines.
Suppose that $\mathcal{X}\to \mathcal{A}$ and $\mathcal{X}'\to\mathcal{A}'$ are two families of Sakai surfaces of the same surface type $\mathcal{R}$.
Denote their common generic symmetry group by $\widehat{W}(\mathcal{R}^{\perp})$, and suppose that the families have constrained root variables corresponding to $\Delta^{\operatorname{nod}}\subset \Pic(\mathcal{X})$ and $\Delta^{\operatorname{nod}}{'}\subset \Pic(\mathcal{X}')$ respectively, but are otherwise generic.
Discrete Painlev\'e equations associated to elements $w,w'\in \widehat{W}(\mathcal{R}^{\perp})$ will be related by a birational change of variables if there exists $g\in \widehat{W}(\mathcal{R}^{\perp})$ such that $w' =g w g^{-1}$ and $\Delta^{\operatorname{nod}}{'}=g(\Delta^{\operatorname{nod}})$.
This will also enforce that $g$ be compatible with the parameter constraints corresponding to $\Delta^{\operatorname{nod}}{'}$ and $\Delta^{\operatorname{nod}}$.
% CHECK
% assuming 
% $$w'(\Delta^{\operatorname{nod}}{'})=\Delta^{\operatorname{nod}}{'}, \quad 
% w(\Delta^{\operatorname{nod}})=\Delta^{\operatorname{nod}},\quad w' =g w g^{-1}, \quad g(\Delta^{\operatorname{nod}})=\Delta^{\operatorname{nod}}{'},$$
% check compatibility:
% \begin{align*}
% w'(\Delta^{\operatorname{nod}}{'})&=\Delta^{\operatorname{nod}}{'}\\
% g w g^{-1} (\Delta^{\operatorname{nod}}{'})&=\Delta^{\operatorname{nod}}{'}\\
% w (\Delta^{\operatorname{nod}})&=g^{-1}(\Delta^{\operatorname{nod}}{'}) = \Delta^{\operatorname{nod}}
% \end{align*}

% In such a case, $g$ is in the intersection of the symmetry subgroups of the two families.

\end{remark}

\section{Results}

In this section we present the computational details establishing Theorems \ref{mainthm:pLaguerre}-\ref{mainthm:Meixner}.
We omit details of proofs in cases that these are standard computations, since demonstrations of how to perform these can be found in, e.g., \cite{SAKAI2001,KajNouYam:2017:GAPE,AHJN,DzhTak:2018:SASGTDPE,DzhFilSto:2020:RCDOPWHWDPE,DzhFilSto:2022:DERCSOPTRPEGA}.

\subsection{Perturbed Laguerre weight on the nonnegative real line} % (fold)
\label{sec:defLUE}

We consider the discrete system \eqref{eq-pLUE-disc} and establish \Cref{mainthm:pLaguerre}.
\begin{notation*}
For the discrete and differential systems from the perturbed Laguerre weight \eqref{weight-pL} and associated $D_5^{(1)}$ surfaces we use the following notation: coordinates $(f,g)$; parameters $\alpha,\gamma,n$; continuous independent variable  $s$;
centres of blowups $p_{i}$; exceptional divisors $L_{i}$.	
\end{notation*}

\subsubsection{The discrete and differential systems}

First recall from \cite{MC20} the discrete and differential systems satisfied by $f=f_n(s)$ and $g=g_n(s)$, which are 
\begin{equation}\label{eq-pLUE-disc-body}
	\left\{
		\bar{f} f = \frac{(g-  n)(g-\alpha-n)}{g(g+\gamma)} ,\quad
		\ubar{g} + g = \frac{-\gamma f^2  + ( 1-\alpha + s -2n + \gamma) f + \alpha + 2n +1}{(f-1)^2},
	\right.
    \end{equation}
and 
\begin{equation}\label{eq-pLUE-diff}
	\left\{
	\begin{aligned}
		\frac{df}{ds} &=\frac{(\gamma+2 g) f^2  - 4 f g+ (\alpha-\gamma-s +2n)f + 2 g - \alpha - 2n}{s} ,\\
		\frac{dg}{ds} &= \frac{g^2(1-f^2) - (\alpha+2n+\gamma f^2)g + (\alpha+n)n}{s f},
	\end{aligned}
	\right.
\end{equation}
respectively.
We take $f,g$ each as an affine coordinate on a copy of $\p^1$ and consider the discrete system \eqref{eq-pLUE-disc-body} as a family of birational maps
\begin{equation} \label{eq-pLUEbirationalmap}
\begin{aligned}
    \psi_{\boldsymbol{b},s} : \p^1 \times \p^1 &\dashrightarrow \p^1 \times \p^1,\\
    (f,g)&\mapsto (\bar{f},\bar{g}),
\end{aligned}
\end{equation}
with parameter space $\mathscr{B}\ni \boldsymbol{b}=(\alpha,\gamma,n)$, and evolution of parameters $\boldsymbol{b}\mapsto\bar{\boldsymbol{b}}$ given by $n\mapsto n+1$.
The independent variable space for the differential system \eqref{eq-pLUE-diff} is $\mathscr{S}=\C\setminus\{0\}\ni s$.
\subsubsection{Surfaces}

We first construct the surfaces on which the discrete system \eqref{eq-pLUE-disc-body} (and the differential system \eqref{eq-pLUE-diff}) is regularised, in line with Step 1 of the identification procedure outlined in \Cref{sec:identificationprocedure}.
\begin{lemma}
    Through the sequence of eight blowups of points $p_1,\dots,p_8$ given in \Cref{fig:pLUE:pointlocations}, we obtain a rational surface $Y_{\boldsymbol{b},s} = \Bl_{p_1 \cdots p_8}(\p^1 \times \p^1)$ such that the birational map \eqref{eq-pLUEbirationalmap} defined by the discrete system \eqref{eq-pLUE-disc-body} becomes an isomorphism
    $$ \tilde{\psi}_{\boldsymbol{b},s} : Y_{\boldsymbol{b},s} \longrightarrow Y_{\bar{\boldsymbol{b}},s}.$$
\end{lemma}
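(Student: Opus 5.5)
The plan is to find $p_1,\dots,p_8$ directly from the base points of $\psi_{\boldsymbol{b},s}$ and $\psi_{\boldsymbol{b},s}^{-1}$ on $\p^1\times\p^1$, and then check in the blowup charts of \Cref{not:blowupcharts} that the lift is biregular. First I write the forward map explicitly: from the first equation,
$$\bar f=\frac{(g-n)(g-\alpha-n)}{f\,g(g+\gamma)},$$
and from the second equation with $n\mapsto n+1$, $\bar g=R_{n+1}(\bar f)-g$. Here
$$R_{n}(f)=\frac{-\gamma f^2+(1-\alpha+s-2n+\gamma)f+\alpha+2n+1}{(f-1)^2}.$$
The inverse is $\underline{g}=R_n(f)-g$, followed by solving the first equation for $\underline{f}$.

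\textbf{Base points.} Working in the four affine charts of $\p^1\times\p^1$, I locate the indeterminacy points.
\begin{itemize}
\item For $\psi$, the numerator and denominator of $\bar f$ vanish together at $(f,g)=(0,n)$ and $(0,\alpha+n)$. They are also indeterminate at $(\infty,0)$ and $(\infty,-\gamma)$.
\item For $\psi^{-1}$, the double pole of $R_n$ at $f=1$ produces a base point at $(1,\infty)$ with an infinitely near point, i.e.\ a cascade. The behaviour $R_n(f)\to-\gamma$ as $f\to\infty$ also requires inspection near $(\infty,\infty)$.
\end{itemize}
Iterating, blowing up and recomputing in the $(u_i,v_i)$ and $(U_i,V_i)$ charts, should yield eight points in total. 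I expect these to organise, as in \Cref{fig:KNY-soic-5}, into four pairs with one infinitely near point each. The second point of each pair is located by substituting the chart coordinates and reading off the limiting value of $v_i$ or $U_i$, which will be a linear expression in $\alpha,\gamma,n,s$. The configuration must also be consistent for $\bar{\boldsymbol b}$, i.e.\ with $n\mapsto n+1$.

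\textbf{Verification of the isomorphism.} On the blown-up surface I check that every curve contracted by $\psi$ maps onto one of the exceptional divisors of $Y_{\bar{\boldsymbol{b}},s}$, and that every $L_i$ maps onto a curve rather than a point. Typical contracted curves are the lines $f=0$, $f=\infty$ and $g=\infty$, and proper transforms of the form $\h-L_i-L_j$. The same check is done for $\psi^{-1}$.

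A cleaner bookkeeping device is to compute the candidate pushforward on $\Pic$: the classes of $\h_f,\h_g$ from the degree of $\psi$ and from the base points its pencils pass through. I then confirm two things: it preserves the intersection form, and it sends $-\K=2\h_f+2\h_g-\sum L_i$ to itself. Since $\tilde\psi$ is then a birational map between smooth projective surfaces with no indeterminacy and no contracted curves in either direction, it is an isomorphism.

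\textbf{Main obstacle.} I expect the hardest part to be the cascade over $(1,\infty)$ and the points over $f=\infty$. There the double pole of $R_n$ makes the local computation delicate. I need to confirm that exactly two blowups suffice there (rather than three), and that the second point depends on $n$ correctly so that it matches under $n\mapsto n+1$. If a count of eight fails, I would first recheck whether one of $(\infty,0)$ or $(\infty,-\gamma)$ is actually the infinitely near point of a cascade starting at $(\infty,\infty)$. Finally, I would confirm that the blue components of $-\K$ (the $(-2)$-curves) form a $D_5^{(1)}$ diagram, as a consistency check on the point list.
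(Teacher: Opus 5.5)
Your overall strategy matches the paper's "standard computation": find the base points of $\psi_{\boldsymbol{b},s}$ and $\psi_{\boldsymbol{b},s}^{-1}$, blow up until both lift to morphisms, and check compatibility under $n\mapsto n+1$. The gap is that the configuration you predict is wrong, and so are the corrections you plan if the count fails; following them would derail the computation.

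The points are not four pairs. The points $(0,n)$, $(0,\alpha+n)$, $(\infty,0)$, $(\infty,-\gamma)$ are four distinct proper points. What $R_n(\infty)=-\gamma$ does is make the last two of these base points of $\psi^{-1}$ as well, since $\ubar{g}=-\gamma-g$ on $f=\infty$. Nothing happens at $(\infty,\infty)$, where both maps are regular.

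The remaining four points form a single cascade $p_3\leftarrow p_4\leftarrow p_5\leftarrow p_6$ over $(1,\infty)$. This point is also a base point of $\psi$, because there $\bar f\to 1$ and $\bar g=R_{n+1}(\bar f)-g$ has the form $\infty-\infty$. Blow up $p_3$ and then the point $p_4\in L_3$ on the proper transform of $g=\infty$. Both lifts are still indeterminate at the point $v_4=1/N(1)$ of $L_4$, where $N(1)=\lim_{f\to 1}(f-1)^2R_n(f)$ (the same value for $R_{n+1}$). After that blowup they are indeterminate at a point of $L_5$ fixed by the next Laurent coefficient.

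So your check that ``exactly two blowups suffice'' would fail. Your fallback of reinterpreting $(\infty,0)$ or $(\infty,-\gamma)$ as infinitely near points over $(\infty,\infty)$ cannot work, since they are proper points of $\p^1\times\p^1$. This depth-four cascade is exactly what produces the chain $\h_g-\L_3-\L_4$, $\L_3-\L_4$, $\L_4-\L_5$, $\L_5-\L_6$ in the $D_5^{(1)}$ diagram.

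Your guess about contracted curves is also off. The lines $f=0$, $f=\infty$, $g=\infty$ are not contracted: they are anticanonical components and map onto anticanonical components. The curves contracted by $\psi$ are:
\begin{itemize}
\item the lines $g=n$, $g=\alpha+n$, $g=0$, $g=-\gamma$;
\item the curve $\bar f=1$, i.e.\ $(g-n)(g-\alpha-n)=fg(g+\gamma)$.
\end{itemize}
Their classes are $\h_g-\L_1$, $\h_g-\L_2$, $\h_g-\L_7$, $\h_g-\L_8$ and $\h_f+2\h_g-\L_1-\L_2-\L_3-\L_7-\L_8$. They go respectively to $p_2,p_1,p_8,p_7,p_6$ of $Y_{\bar{\boldsymbol{b}},s}$.

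Your lattice test (an isometry fixing $-\K$) is only a consistency check. It does not show the lift has no indeterminacy; that has to come from the chart computations.

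Finally, a practical warning about the constant term $\alpha+2n+1$ in $R_n$, which you copied from the displayed system. With it the computation does not close up:
\begin{itemize}
\item one gets $N(1)=s+2$ instead of $s$;
\item the line $g=n$ is sent to $(0,\alpha+n+3)$ rather than to $p_2$ of $Y_{\bar{\boldsymbol{b}},s}$, which is $(0,\alpha+n+1)$;
\item the fourth cascade points of $\psi$ and $\psi^{-1}$ disagree.
\end{itemize}
The configuration in the figure, and the paper's change of variables to the KNY equation, correspond to the constant term $\alpha+2n-1$. With that constant, $N(1)=s$, and all eight base points of $\psi$ and $\psi^{-1}$ agree with the figure, including $v_5=(\alpha+\gamma-s+2n-1)/s^2$.
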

\begin{proof}
The construction of the space of initial conditions for the discrete system is done by standard computations, identifying points of indeterminacy and contracted curves in $\p^1 \times \p^1$ of $\psi_{\boldsymbol{b},s}$ and its inverse, and resolving them using blowups. 
\end{proof}

\begin{figure}[htb]
\begin{equation*}
\begin{aligned}
	&p_1 : (f,g)=(0,n),							&	&p_2 : (f,g)=(0,\alpha+n),  \\
	&p_3 : (f,g)=\left( 1, \infty \right) \quad \leftarrow 	&	&p_4 : (u_3,v_3) = \left(  f-1, \frac{1}{g(f-1)}  \right) = (0, 0)  \\
	&										& 	&\uparrow \\
	&										&	&p_5 : 	(u_4,v_4) = \left( u_3, \frac{v_3}{u_3} \right) = \left(0, \frac{1}{s}\right) \\
	&										& 	&\uparrow \\
	&										&	&p_6 : 	(u_5,v_5) = \left( u_4, \frac{v_4 - \frac{1}{s}}{u_4} \right) = \left(0, \frac{\alpha +\gamma - s + 2n -1}{s^2}\right), \\
	&p_7 : (f,g) = (\infty,0),  
    &	&p_8 : (f,g) = (\infty,-\gamma).
\end{aligned}
\end{equation*}
    \caption{Blowup point locations from the perturbed Laguerre weight}
    \label{fig:pLUE:pointlocations}
\end{figure}

Alternatively, applying the usual desingularisation procedure to the differential system \eqref{eq-pLUE-diff} yields the exact same family of surfaces.
We give an illustration of the configuration of exceptional divisors $L_1,\dots,L_8$ arising from the blowups in \Cref{fig:pLUE-soic}, with curves labeled by their classes.

\begin{figure}[htb]
	\begin{tikzpicture}[>=stealth,basept/.style={circle, draw=red!100, fill=red!100, thick, inner sep=0pt,minimum size=1.2mm}]
	\begin{scope}[xshift=0cm,yshift=0cm]
	\draw [black, line width = 1pt] (-0.4,0) -- (2.9,0)	node [pos=0,left] {\small $g=0$};
	\draw [black, line width = 1pt] (-0.4,2.5) -- (2.9,2.5) node [pos=0,left] {\small $g=\infty$};
	\draw [black, line width = 1pt] (0,-0.4) -- (0,2.9) node [pos=0,below] {\small $f=0$};
	\draw [black, line width = 1pt] (2.5,-0.4) -- (2.5,2.9) node [pos=0,below] {\small $f=\infty$};
	\node (b1) at (0,0.5) [basept,label={[xshift =-7pt, yshift=-2pt] \small $p_{1}$}] {};
	\node (b2) at (0,1.5) [basept,label={[xshift=-7pt, yshift=-2pt] \small $p_{2}$}] {};
	\node (b3) at (1.25,2.5) [basept,label={[xshift = -7pt, yshift=0pt] \small $p_{3}$}] {};
	\node (b4) at (2,3.1) [basept,label={[xshift = 0pt, yshift=0pt] \small $p_{4}$}] {};
	\node (b5) at (2.6,3.1) [basept,label={[xshift = 0pt, yshift=0pt] \small $p_{5}$}] {};
	\node (b6) at (3.2,3.1) [basept,label={[xshift = 0pt, yshift=0pt] \small $p_{6}$}] {};
	\node (b7) at (2.5,0) [basept,label={[xshift = 7pt, yshift=-2pt] \small $p_{7}$}] {};
	\node (b8) at (2.5,.95) [basept,label={[xshift = 7pt, yshift=-2pt] \small $p_{8}$}] {};
	\draw [red, line width = 0.8pt, ->] (b4) -- (1.55,2.5) -- (b3);
	\draw [red, line width = 0.8pt, ->] (b5) -- (b4);	
	\draw [red, line width = 0.8pt, ->] (b6) -- (b5);	
	\end{scope}
	\draw [->] (5.75,1)--(3.75,1) node[pos=0.5, below] {$\operatorname{Bl}_{p_{1}\cdots p_{8}}$};
	\begin{scope}[xshift=8cm,yshift=0cm]
	\draw [blue, line width = 1pt] (0,-.6) -- (0,3.4)	node [pos=0, below] {\small $\h_{f}-\L_{1}-\L_{2}$};
	\draw [blue, line width = 1pt] (4,-.6) -- (4,3.4)	node [pos=1, above] {\small $\h_{f}-\L_{7}-\L_{8}$};
	\draw [blue, line width = 1pt] (-0.4,3) -- (4.4,3)	node [pos=0, left] {\small $\h_{g}-\L_{3}-\L_{4}$};
	\draw [blue, line width = 1pt] (2,1.8) -- (2,4.2)	node [pos=1, above] {\small $\L_{4}-\L_{5}$};
	\draw [blue, line width = 1pt] (2.2,2) -- (1,2)	node [pos=0, right] {\small $\L_{3}-\L_{4}$};
	\draw [blue, line width = 1pt] (2.2,4) -- (.6,4)	node [pos=1, left] {\small $\L_{5}-\L_{6}$};
	\draw [red, line width = 1pt] (.8,3.8) -- (.8,4.8)	node [pos=1, above] {\small $\L_{6}$};
	\draw [red, line width = 1pt] (.6,1.5) -- (-0.6,0.5) node [pos=1, left] {\small $\L_{1}$};
	\draw [red, line width = 1pt] (.6,2.5) -- (-0.6,1.5)	node [pos=1, left] {\small $\L_{2}$};
	\draw [red, line width = 1pt] (4.3,1.7) -- (3,0.4)	node [pos=0, right] {\small $\L_{8}$};
	\draw [red, line width = 1pt] (4.3,0.7) -- (3,-.6)	node [pos=0, right] {\small $\L_{7}$};
	\draw [red, line width = 1pt] (3.55,-.4) -- (-.4,-.4)	node [pos=1, left] {\small $\h_g - \L_{7}$};
	\end{scope}
	\end{tikzpicture}
	\caption{The $D_{5}^{(1)}$ Sakai surface from the perturbed Laguerre weight}
	\label{fig:pLUE-soic}
\end{figure}
\begin{figure}[htb]
\begin{equation*}\label{eq:d-roots-LUE}			
	\raisebox{-32.1pt}{\begin{tikzpicture}[
			elt/.style={circle,draw=black!100,thick, inner sep=0pt,minimum size=2mm}]
		\path 	(-1,1) 	node 	(d0) [elt, label={[xshift=-10pt, yshift = -10 pt] $\delta_{0}$} ] {}
		        (-1,-1) node 	(d1) [elt, label={[xshift=-10pt, yshift = -10 pt] $\delta_{1}$} ] {}
		        ( 0,0) 	node  	(d2) [elt, label={[xshift=-10pt, yshift = -12 pt] $\delta_{2}$} ] {}
		        ( 1,0) 	node  	(d3) [elt, label={[xshift=10pt, yshift = -12 pt] $\delta_{3}$} ] {}
		        ( 2,1) 	node  	(d4) [elt, label={[xshift=10pt, yshift = -10 pt] $\delta_{4}$} ] {}
		        ( 2,-1) node 	(d5) [elt, label={[xshift=10pt, yshift = -10 pt] $\delta_{5}$} ] {};
		\draw [black,line width=1pt ] (d0) -- (d2) -- (d1)  (d2) -- (d3) (d4) -- (d3) -- (d5);
	\end{tikzpicture}} \qquad
			\begin{alignedat}{2}
			\delta_{0} &= \mathcal{H}_f -\L_1 - \L_2, &\qquad  \delta_{3} &=\L_4 - \L_5,\\
			\delta_{1} &= \mathcal{H}_f - \L_7 - \L_8, &\qquad  \delta_{4} &= \L_3 - \L_4,\\
			\delta_{2} &= \mathcal{H}_g - \L_3 - \L_4, &\qquad  \delta_{5} &= \L_5 - \L_6.
			\end{alignedat}
\end{equation*}
	\caption{The surface root basis from the perturbed Laguerre weight}
	\label{fig:d-roots-pLUE}	
\end{figure}

We denote the identification of all Picard groups $\Pic(Y_{\boldsymbol{b},s})$ into a single lattice by 
$$\Pic(\mathcal{Y}) = \Span_{\Z}\left\{\h_{f},\h_{g},\L_1,\dots,\L_8\right\},$$
with the intersection form being given by
$$        \h_f\bullet\h_g=1, \quad \h_f\bullet\h_f=\h_g\bullet\h_g=\h_f\bullet\L_i=\h_g\bullet\L_i=0, \quad \L_i\bullet\L_j=-\delta_{ij}, \quad\text{ for }~i,j=1,\dots,8.
$$
and the anticanonical class corresponding to 
$$-\K_{\mathcal{Y}}:=2\h_f+2\h_g-\L_1-\L_2-\L_3-\L_4-\L_5-\L_6-\L_7-\L_8\in\Pic(\Y),$$

\begin{proposition} \label{lem:prop:ACdiv-pLUE}
        The surface $Y_{\boldsymbol{b},s}$ is a Sakai surface of type $\mathcal{R}=D_5^{(1)}$.
\end{proposition}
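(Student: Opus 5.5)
I will prove the proposition by exhibiting an explicit rational 2-form on $Y_{\boldsymbol{b},s}$ whose pole divisor is an effective anticanonical divisor of canonical type with the configuration of \Cref{fig:d-roots-pLUE}. Uniqueness of the effective anticanonical divisor then follows from the parameter constraint being nondegenerate.

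First I identify the candidate divisor. The blue curves of \Cref{fig:pLUE-soic} are $D_0,\dots,D_5$, the proper transforms with classes $\delta_0,\dots,\delta_5$ as in \Cref{fig:d-roots-pLUE}. I check that
$$\delta_0+\delta_1+2\delta_2+2\delta_3+\delta_4+\delta_5 = 2\h_f+2\h_g-\L_1-\dots-\L_8=-\K_{\Y}.$$
This holds since $\delta_0+\delta_1=2\h_f-\L_1-\L_2-\L_7-\L_8$ and $2\delta_2+2\delta_3+\delta_4+\delta_5=2\h_g-\L_3-\L_4-\L_5-\L_6$. Each $\delta_i$ has $\delta_i\bullet\delta_i=-2$. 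The pairwise intersections reproduce the $D_5^{(1)}$ diagram, and each is realised geometrically. The lines $f=0$ and $f=\infty$ meet the line $g=\infty$ away from the point $p_3:(1,\infty)$. The line $g=\infty$ meets $L_3-L_4$, since $p_4$ lies on the proper transform of $f=1$ and not of $g=\infty$. The chain then continues $L_4-L_5$ to $L_5-L_6$. From these numbers, $\delta_i\bullet(-\K_{\Y})=0$ for all $i$, which shows $D$ is of canonical type. A small check is that $p_4$ is chosen in direction $v_3=0$, which I read as the direction of $f=1$ rather than $g=\infty$. This fixes which of $\L_3-\L_4$ and $\h_g-\L_3-\L_4$ sits in which position, and I will verify it against the chart conventions of \Cref{not:blowupcharts}.

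Second, I realise $D$ as an actual anticanonical divisor. I take the 2-form $\omega = \frac{df\wedge dg}{f}$ on $\p^1\times\p^1$ (suitably normalised). Its divisor there is $-\{f=0\}-\{f=\infty\}-2\{g=\infty\}$. I then pull it back through the blowups. Since $p_1,p_2$ lie on a simple pole line, the exceptional curves $L_1,L_2$ carry no poles. The same holds for $p_7,p_8$ on $f=\infty$. The point $p_3$ lies on the double pole line $g=\infty$, and the infinitely near points $p_4,p_5,p_6$ lie successively on the new pole components. I compute the pole orders in the charts $(u_i,v_i)$ and expect orders $2,1,1$ on $L_3-L_4$, $L_4-L_5$, $L_5-L_6$ respectively, and order $0$ on $L_6$. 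This pole bookkeeping in the iterated chart is the main routine computation, and the step where an error would show up. Once it is done, $-\operatorname{div}\omega = D_0+D_1+2D_2+2D_3+D_4+D_5$, using the labelling of the multiplicity-2 nodes $\delta_2,\delta_3$, and $Y_{\boldsymbol{b},s}$ is a generalised Halphen surface with this divisor of type $D_5^{(1)}$.

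Finally I show that $D$ is the unique effective anticanonical divisor, i.e.\ that $Y_{\boldsymbol{b},s}$ is not a Halphen surface of index one. I expect this to be the main obstacle. My approach is to verify $\dim|-\K_{\Y}|=0$ directly: no biquadratic curve other than $D$ passes through all eight points. An alternative is to compute the value of the period map on $\delta$, which should be a nonzero constant, as in the remark after \Cref{fig:KNY-soic-5} where the sum of root variables is $1\neq0$. In practice I will compute the residues of $\omega$ along $D$ to get root variables $\chi(\alpha_i)$, and check that their sum is a nonzero normalisation constant for all $\alpha,\gamma,n,s$. This confirms the Sakai surface is of type $D_5^{(1)}$.
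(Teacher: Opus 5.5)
Your strategy is the paper's: take $D=-\operatorname{div}\omega$ for the pullback of $\frac{df\wedge dg}{f}$, read off the components and their $D_5^{(1)}$ configuration, and check uniqueness. The paper leaves uniqueness as a direct calculation. Your route via $\chi(\delta)=a_0+a_1+a_2+a_3=1\neq 0$ (cf.\ \Cref{lem:rootvarspLUE}, and the remark that $\sum a_i=0$ is exactly the Halphen-pencil case) is a fine way to supply it.

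However, you misread the local geometry over $p_3$, and two of your concrete claims fail as a result. In the blowup chart convention, $v_3=\frac{1/g}{f-1}$ is the slope of $1/g$ against $f-1$. So $v_3=0$ on $L_3$ is the direction of $\{g=\infty\}$, while the direction of $\{f=1\}$ is $v_3=\infty$ (i.e.\ $U_3=0$). Hence $p_4=L_3\cap\{g=\infty\}$. This is exactly why the proper transform of $g=\infty$ has class $\h_g-\L_3-\L_4$ and self-intersection $-2$. Under your reading it would be the $(-1)$-curve $\h_g-\L_3$, contradicting the class $\delta_2$ you use. Your reading also gives a linear chain $g=\infty$, $L_3-L_4$, $L_4-L_5$, $L_5-L_6$, which with $f=0$, $f=\infty$ is a finite $D_6$ tree, not $D_5^{(1)}$. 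In fact $\{g=\infty\}$ meets $L_4-L_5$, not $L_3-L_4$, as your own numbers $\delta_2\bullet\delta_4=0$, $\delta_2\bullet\delta_3=1$ say. The curve $L_4-L_5$ is the second trivalent node, adjacent to $g=\infty$, $L_3-L_4$ and $L_5-L_6$.

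The pole orders change accordingly:
\begin{itemize}
\item Along $L_3$ the order is $2-1=1$, since $p_3$ lies only on the double pole line.
\item Since $p_4$ is the crossing of pole curves of orders $1$ and $2$, the order along $L_4$ is $1+2-1=2$.
\item The points $p_5$ (at $v_4=1/s$) and $p_6$ (at finite $v_5$) avoid the other pole components, giving orders $1$ and $0$.
\end{itemize}
So the orders on $L_3-L_4$, $L_4-L_5$, $L_5-L_6$, $L_6$ are $1,2,1,0$, not $2,1,1,0$. Your values contradict your own conclusion $-\operatorname{div}\omega=D_0+D_1+2D_2+2D_3+D_4+D_5$, in which $\delta_3=\L_4-\L_5$ has multiplicity $2$ and $\delta_4=\L_3-\L_4$ has multiplicity $1$. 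Indeed $2(\L_3-\L_4)+(\L_4-\L_5)+(\L_5-\L_6)\neq(\L_3-\L_4)+2(\L_4-\L_5)+(\L_5-\L_6)$, so the divisor you predict would not even have class $-\K_{\Y}$. Once you read $p_4$ correctly, the rest of your argument goes through as planned.
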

\begin{proof}
    It is straightforward to show by direct calculation that the unique effective anticanonical divisor of $Y_{\boldsymbol{b},s}$ is that given by $D=- \operatorname{div} \omega$, where $\omega$ is the pullback under the blowup morphism of the rational 2-form on $\p^1\times\p^1$ defined in the affine chart $(f,g)$ by $\frac{ d f\wedge d g}{f}$.
    Its decomposition into irreducible components is 
    $D = D_0+D_1+2D_2+2D_3+D_4+D_5$,
    where $D_i$ are indicated in blue on \Cref{fig:pLUE-soic}.
    The corresponding elements of $\Pic(\mathcal{Y})$ are given in \Cref{fig:d-roots-pLUE},
as well as the Dynkin diagram encoding their intersection configuration, so the surface type $D_5^{(1)}$ is established.
\end{proof}

\subsubsection{Induced dynamics on the Picard lattice}
\begin{lemma} \label{lem:induceddynamicsonPicpLUE}
The isomorphisms $\tilde{\psi}_{\boldsymbol{b},s}$ induce via pushforward $(\tilde{\psi}_{\boldsymbol{b},s})_* : \Pic(Y_{\boldsymbol{b},s})\to \Pic(Y_{\bar{\boldsymbol{b}},s})$ the following action on $\Pic(\Y)$:
\begin{equation}
\Psi : \left\{
	\begin{aligned}
		\mathcal{H}_f 	&\mapsto 5 \h_f + 2 \h_g - \L_1 -  \L_2 -  2\L_3 -  2\L_4 - 2\L_5 - 2\L_6 - \L_7-\L_8, \\\
		\mathcal{H}_g 	&\mapsto  2\h_f + \h_g - \L_3 -\L_4 - \L_5 - \L_6, \\
		\L_1			&\mapsto  2\h_f + \h_g -\L_2 - \L_3 -\L_4 - \L_5 - \L_6, \\
		\L_2			&\mapsto  2\h_f + \h_g -\L_1 - \L_3 -\L_4 - \L_5 - \L_6, \\
		\L_3			&\mapsto  \h_f  - \L_6, \\
		\L_4			&\mapsto  \h_f  - \L_5, \\
		\L_5			&\mapsto  \h_f-\L_4, \\
		\L_6			&\mapsto  \h_f-\L_3, \\
		\L_7			&\mapsto  2\h_f + \h_g -\L_3 - \L_4 -\L_5 - \L_6 - \L_8, \\
		\L_8			&\mapsto  2\h_f + \h_g -\L_3 - \L_4 -\L_5 - \L_6 - \L_7. 
	\end{aligned}
\right.
\end{equation}
\end{lemma}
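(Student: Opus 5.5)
The plan is to compute the pushforward directly by tracking how the forward map $\psi_{\boldsymbol{b},s}$, written as $(f,g)\mapsto(\bar f,\bar g)$, acts on generating curves. First I would rewrite \eqref{eq-pLUE-disc-body} as an explicit forward map. The first equation gives $\bar f$ in terms of $(f,g)$. Shifting the second equation $n\mapsto n+1$ gives $\bar g = -g + R_{n+1}(\bar f)$, where $R_{n}(f)= \frac{-\gamma f^2+(1-\alpha+s-2n+\gamma)f+\alpha+2n+1}{(f-1)^2}$. So $\psi$ is the composition of $(f,g)\mapsto(\bar f,g)$ with $(\bar f,g)\mapsto(\bar f,\,-g+R_{n+1}(\bar f))$.

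Since pushforward is linear and determined on a basis, it suffices to find the images of $\h_f$, $\h_g$, $\L_1,\dots,\L_8$. I would obtain them as follows.

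\textbf{Step 1: the image of $\h_g$.} Take a generic line $f=c$, whose class is $\h_g$ in the convention of \Cref{fig:pLUE-soic}, where $\h_g-\L_7$ is the curve $g=0$. Its image is parametrised by $g\mapsto(\bar f,\bar g)$ with $\bar f$ of bidegree pattern giving $\h_f$-coefficient $2$. Counting intersections with $\bar g=\mathrm{const}$ and $\bar f=\mathrm{const}$, and checking which of $p_3,\dots,p_6$ the image passes through on $Y_{\bar{\boldsymbol b},s}$, should yield $2\h_f+\h_g-\L_3-\L_4-\L_5-\L_6$. The passage through $p_3,\dots,p_6$ comes from $\bar f\to1$ as $g\to\infty$ together with the cancellation structure of $R_{n+1}$ near $\bar f=1$.

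\textbf{Step 2: the images of the exceptional divisors.} I would identify which curves are contracted by $\psi^{-1}$ onto each base point of $Y_{\boldsymbol{b},s}$, and dually which curves are blown down by $\psi$.
\begin{enumerate}[(i)]
\item The fibres $L_1$, $L_2$ over $f=0$, $g=n,\alpha+n$ map to curves on which $\bar f$ is determined by the direction of approach. The conic-type curve through the other point gives $2\h_f+\h_g-\L_j-\L_3-\dots-\L_6$.
\item The infinitely near chain $L_3,\dots,L_6$ over $(1,\infty)$ maps to the vertical lines $\bar f=\mathrm{const}$ passing through the chain over $\bar f=1$, in reversed order, giving $\h_f-\L_{9-i}$. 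This matches the reversal of the chain $\delta_3,\delta_4,\delta_5$ components.
\item $L_7$ and $L_8$ are handled as in (i).
\end{enumerate}
The image of $\h_f$ then follows without separate computation, for instance from preservation of $-\K_{\mathcal Y}$. That forces $\h_f\mapsto 5\h_f+2\h_g-\L_1-\L_2-2(\L_3+\dots+\L_6)-\L_7-\L_8$.

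\textbf{Step 3: consistency checks.} I would verify that $\Psi$ preserves the intersection form, on the images of the basis elements, and fixes $-\K_{\mathcal Y}$. I would also check that it permutes the classes $\delta_0,\dots,\delta_5$ of \Cref{fig:d-roots-pLUE} consistently with the action on the anticanonical components. For example, $\delta_0=\h_f-\L_1-\L_2\mapsto \h_f-\L_7-\L_8=\delta_1$ should come out of the formulas, and $\delta_3$ should be fixed.

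\textbf{Main obstacle.} The hardest part is Step 2(ii), the chain over $(1,\infty)$. Tracking the images of $L_3,\dots,L_6$ requires expanding the map in the charts $(u_5,v_5)$ to third order and matching with the parameter shift $n\mapsto n+1$ in $p_6$. I would handle it by substituting $f=1+u$ and $g=1/(u^2(\frac1s+u v))$ and computing $\bar f,\bar g$ as series in $u$. If that is too messy, I would instead compute only the images of $\L_1,\L_2,\L_7,\L_8$ and $\h_g$, which are easy. The remaining images are then forced by linearity, preservation of $\bullet$ and $-\K_{\mathcal Y}$, and the requirement that the chain of $(-1)$-classes $\L_3-\L_4$, $\L_4-\L_5$, $\L_5-\L_6$, $\L_6$ go to effective classes.
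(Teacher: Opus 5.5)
Your overall strategy (images of curves under $\tilde\psi_{\boldsymbol{b},s}$, plus checks against $-\K_{\Y}$ and the surface roots) is the paper's ``standard computation''. However, Step~1 is wrong as written.

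In the paper's convention $\h_f$ is the class of a line $f=\mathrm{const}$ and $\h_g$ that of $g=\mathrm{const}$. This is what $\delta_0=\h_f-\L_1-\L_2$ for $f=0$ and $\h_g-\L_7$ for $g=0$ encode, so the line $f=c$ has class $\h_f$. Along it, $\bar f=\frac{(g-n)(g-\alpha-n)}{c\,g(g+\gamma)}$ has degree $2$ in $g$, which gives the $\h_g$-coefficient. Meanwhile $\bar g=-g+S(\bar f)$ has degree $5$, where $S=N/(\bar f-1)^2$ is your $R_{n+1}$. So the image is $5\h_f+2\h_g-\L_1-\L_2-2(\L_3+\dots+\L_6)-\L_7-\L_8$, i.e.\ $\Psi(\h_f)$. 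It meets $\bar p_3$ at the two roots of $\bar f=1$, whereas $\bar f\to 1/c$ as $g\to\infty$. Executing Step~1 therefore does not produce $\Psi(\h_g)$, and your derivation of $\Psi(\h_f)$ from $-\K_{\Y}$ would rest on the wrong input.

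The fix is to use the line $g=c$, parametrised by $f$:
\begin{itemize}
\item $\bar f=K/f$ with $K=\frac{(c-n)(c-\alpha-n)}{c(c+\gamma)}$ has degree $1$, and $\bar g=-c+S(\bar f)$ has degree $2$.
\item At $f=K$ one has $\bar g(\bar f-1)^2=N(1)+N'(1)(\bar f-1)+\dots$. The condition $N(1)=s$ puts the curve through $\bar p_5$, and $-N'(1)/s^2$ is exactly the $v_5$-coordinate of $\bar p_6$ at level $n+1$.
\end{itemize}
This gives $\Psi(\h_g)=2\h_f+\h_g-\L_3-\dots-\L_6$. Your $-\K_{\Y}$ argument then yields $\Psi(\h_f)$, and the image of $f=c$ becomes an independent check.

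Some smaller points.
\begin{itemize}
\item In Step~2(ii) only $L_6$ is an irreducible curve. It maps to the proper transform of $\bar f=1$ (class $\h_f-\L_3$). This is quickest to see from $\psi^{-1}$: along $\bar f=1+\epsilon$ one finds $g(f-1)^2\to s$, and the next order lands on $p_6$.
\item The classes $\L_3-\L_4$, $\L_4-\L_5$, $\L_5-\L_6$ are $(-2)$-classes (not $(-1)$) of anticanonical components, permuted as $\delta_4\leftrightarrow\delta_5$ with $\delta_3$ fixed. They are not images of ``vertical lines''. The formula $\L_i\mapsto\h_f-\L_{9-i}$ then follows by additivity.
\item In your fallback, $-\K_{\Y}$ alone cannot give $\Psi(\h_f)$ without $\Psi(\L_3),\dots,\Psi(\L_6)$. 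What does work is that surface roots go to surface roots: $\Psi(\delta_0)-\Psi(\delta_1)=\Psi(\L_7)+\Psi(\L_8)-\Psi(\L_1)-\Psi(\L_2)=\delta_1-\delta_0$ forces $\Psi(\delta_0)=\delta_1$, so $\Psi(\h_f)=\delta_1+\Psi(\L_1)+\Psi(\L_2)$. After that, $\Psi(\delta_2),\dots,\Psi(\delta_5)$ together with $-\K_{\Y}$ pin down $\Psi(\L_3),\dots,\Psi(\L_6)$.
\end{itemize}

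Finally, by my own check, the explicit computation only closes up if the constant term of the numerator in the second equation of \eqref{eq-pLUE-disc-body} is $\alpha+2n-1$ rather than the printed $\alpha+2n+1$. Taken literally, $R_{n+1}$ has $N(0)=\alpha+2n+3$ and $N(1)=s+2$. Then the image $\bar g=-n+R_{n+1}(\bar f)$ of $L_1$ misses $\bar p_2=(0,\alpha+n+1)$. Likewise, curves through $\bar p_3$ hit $\bar L_4$ at $v_4=1/(s+2)$ instead of at $\bar p_5$. With $\alpha+2n-1$, everything matches \Cref{fig:pLUE:pointlocations}, including $p_6$.
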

\begin{proof}
    The proof is a standard computation, e.g. by finding images under $\tilde{\psi}_{\boldsymbol{b},s}$ of curves on $Y_{\boldsymbol{b},s}$.
\end{proof}

\subsubsection{Identification with the standard model}
We present the identification of $\Pic(\X)$ and $\Pic(\Y)$ that matches the linear dynamics in \Cref{lem:induceddynamicsonPicpLUE} with the translation $T_{\operatorname{KNY}}$ associated to the discrete Painlev\'e equation \eqref{eq-KNY-dP-S3}.
\begin{lemma}\label{lem:KNY-to-pLUE-Pic} 
The identification $ \Pic(\mathcal{Y}) \to \Pic(\X)$ of Picard lattices between the standard Kajiwara-Noumi-Yamada
	and perturbed Laguerre surfaces is given, written using equalities by  abuse of notation,     as follows:
	\begin{equation}\label{eq:basis-KNY-LUE}
		\begin{aligned}
			\mathcal{H}_{q} & = 2 \mathcal{H}_f + \mathcal{H}_g - \L_1 - \L_3 - \L_4 - \L_7, &\qquad 
				\mathcal{H}_{f} &= \h_q + \h_p  - \E_5 - \E_6,\\
			\mathcal{H}_{p} &=  \mathcal{H}_{f} + \mathcal{H}_{g} - \L_1 - \L_7,  &\qquad 	
				\mathcal{H}_{g} & = \mathcal{H}_{q} +2  \mathcal{H}_{p} - \E_1 - \E_3 - \E_5 - \E_6, \\
			\mathcal{E}_{1} &= \mathcal{H}_f - \L_1, &\qquad 
				\mathcal{L}_{1}	&= \mathcal{H}_{q}+\mathcal{H}_p - \E_1 - \E_5 - \E_6,\\ 
			\mathcal{E}_{2} &= \L_2, &\qquad 
				\mathcal{L}_{2}	&= \E_2,\\ 
			\mathcal{E}_{3} &= \mathcal{H}_{f} - \L_{7}, &\qquad 
				\mathcal{L}_{3}	&= \h_p - \E_{6},\\ 
			\mathcal{E}_{4} &= \L_8, &\qquad 
				\mathcal{L}_{4}	&= \h_p - \E_{5},\\ 
			\mathcal{E}_{5} &= \mathcal{H}_f + \mathcal{H}_g -\L_1 - \L_4 - \L_7, &\qquad 
				\mathcal{L}_{5}	&= \E_7,\\ 
			\mathcal{E}_{6} &= \mathcal{H}_f+ \mathcal{H}_g - \L_1 - \L_3 - \L_7, &\qquad 
				\mathcal{L}_{6}	&= \E_8,\\ 
			\mathcal{E}_{7} &= \L_5, &\qquad 
				\mathcal{L}_{7}	&= \h_{q}  + \h_p - \E_3 -\E_5 - \E_6,\\ 
			\mathcal{E}_{8} &=  \L_6, &\qquad 
				\mathcal{L}_{8}	&= \E_4.
		\end{aligned}
	\end{equation}
	This results in the following correspondences between the surface roots:
	\begin{equation}\label{eq:geom-KNY-LUE}
	\begin{aligned}
		\delta_{0} &= \mathcal{E}_{1} - \mathcal{E}_{2} = \h_f - \L_1 - \L_2, &\qquad
			\delta_{3} &=\mathcal{H}_{p} - \mathcal{E}_{5} - \mathcal{E}_{7} = 
				\L_4 - \L_5, \\
		\delta_{1} &= \mathcal{E}_{3} - \mathcal{E}_{4} = \h_f - \L_7 - \L_8, &\qquad
			\delta_{4} &= \mathcal{E}_{5} - \mathcal{E}_{6}  = \L_3 -\L_4, \\
		\delta_{2} &=  \mathcal{H}_{q} - \mathcal{E}_{1} - \mathcal{E}_{3} = \h_g - \L_3 - \L_4 &\qquad
			\delta_{5} &= \mathcal{E}_{7} - \mathcal{E}_{8} = \L_5 - \L_6.
	\end{aligned}
	\end{equation}
	The symmetry root basis in $\Pic(\X)$ given in \Cref{fig:a-roots-a3-KNY} maps to $\Pic(\Y)$ according to
	\begin{equation}\label{eq:sym-KNY-LUE}
	\begin{aligned}
			\alpha_{0} &= \mathcal{H}_{p} - \mathcal{E}_{1} - \mathcal{E}_{2}
		=  \h_g - \L_2 - \L_7, \\
			\alpha_{1} &= \mathcal{H}_{q} - \mathcal{E}_{5} - \mathcal{E}_{6} = - \h_g + \L_1 +\L_7, \\
			\alpha_{2} &= \mathcal{H}_{p} - \mathcal{E}_{3} - \mathcal{E}_{4} = 
			\h_g - \L_1 - \L_8,\\
			\alpha_{3} &= \mathcal{H}_{q} - \mathcal{E}_{7} - \mathcal{E}_{8} = 2 \h_f + \h_g - \L_1 -\L_3 -\L_4 -\L_5 -\L_6 - \L_7.
		\end{aligned}
	\end{equation}
\end{lemma}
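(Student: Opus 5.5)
The proof will be a direct verification in the lattices $\Pic(\Y)$ and $\Pic(\X)$, organised so that it also explains how the identification was found.

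\textbf{Step 1: the map is an isometry.} Define $\Upsilon:\Pic(\Y)\to\Pic(\X)$ by the right-hand column of \eqref{eq:basis-KNY-LUE}, expressing $\h_f,\h_g,\L_i$ in terms of $\h_q,\h_p,\E_j$. I will then check three things.
\begin{itemize}
\item $\Upsilon$ preserves the intersection form. It suffices to verify $\h_f\bullet\h_g=1$, $\h_f^2=\h_g^2=0$, $\h_f\bullet\L_i=\h_g\bullet\L_i=0$ and $\L_i\bullet\L_j=-\delta_{ij}$ on the images. For example, $(\h_q+\h_p-\E_5-\E_6)^2=2-2=0$, and $(\h_q+\h_p-\E_5-\E_6)\bullet(\h_q+2\h_p-\E_1-\E_3-\E_5-\E_6)=3-2=1$.
\item Substituting the left-hand column into the right-hand column returns the identity on basis elements. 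This shows the two columns are mutually inverse, so $\Upsilon$ is an isomorphism of lattices.
\item $\Upsilon(-\K_\Y)=-\K_\X$. Summing images: the coefficient of $\h_q$ is $2+1+1+1+0+\cdots=5-\dots$; rather than tracking this by hand, I will confirm it mechanically, since it also follows from preservation of the canonical class under the construction.
\end{itemize}

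\textbf{Step 2: the root correspondences.} Substituting the left-hand column into the KNY surface roots of \Cref{fig:d-roots-d5-KNY} gives \eqref{eq:geom-KNY-LUE}. For example,
\[
\E_1-\E_2=\h_f-\L_1-\L_2,\qquad \h_p-\E_5-\E_7=(\h_f+\h_g-\L_1-\L_7)-(\h_f+\h_g-\L_1-\L_4-\L_7)-\L_5=\L_4-\L_5 .
\]
This matches \Cref{fig:d-roots-pLUE} node by node, so the $D_5^{(1)}$ diagrams correspond. The same substitution into \Cref{fig:a-roots-a3-KNY} yields \eqref{eq:sym-KNY-LUE}; for instance, $\alpha_1=\h_q-\E_5-\E_6$ gives $-\h_g+\L_1+\L_7$.

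\textbf{Step 3: compatibility with the dynamics.} This is the main point and the main obstacle: checking that $\Upsilon\circ\Psi=T_{\operatorname{KNY}}\circ\Upsilon$ on generators, using \Cref{lem:induceddynamicsonPicpLUE} and the pushforward formula for $T_{\operatorname{KNY}}$. I would first do the cheap check on symmetry roots. For example,
\[
\Psi(\alpha_1)=\Psi(-\h_g+\L_1+\L_7)=-\h_g+\L_1+\L_7+(2\h_f+2\h_g-\textstyle\sum\L_i)=\alpha_1+\delta,
\]
and similarly $\alpha_0\mapsto\alpha_0-\delta$, $\alpha_2\mapsto\alpha_2-\delta$, $\alpha_3\mapsto\alpha_3+\delta$, in agreement with \eqref{eq:translationonroots-KNY}. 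Then I would verify on all ten generators, where the $\L_3,\dots,\L_6$ entries reduce to the $\E_5,\dots,\E_8$ entries via the obvious relabelling. If the root check succeeded but some generator failed, I would compose with a diagram automorphism fixing the surface roots and adjust. The explicit formulas suggest this is unnecessary here.
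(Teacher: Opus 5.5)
Your proposal is correct and is essentially the paper's own argument, which is an unwritten direct lattice computation; every check you list goes through, including the full compatibility $\Upsilon\circ\Psi=T_{\operatorname{KNY}}\circ\Upsilon$ on all ten generators, which is slightly more than the paper records (it only states the action on $\alpha_0,\dots,\alpha_3$). You can drop your muddled anticanonical-class bullet, since $\Upsilon(-\K_\Y)=-\K_\X$ follows from Step 2 via $\delta=\delta_0+\delta_1+2\delta_2+2\delta_3+\delta_4+\delta_5$ on both sides; also, $\L_3,\L_4$ map to $\h_p-\E_6,\h_p-\E_5$, so Step 3 is not a pure relabelling there.
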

Under the identification in \Cref{lem:KNY-to-pLUE-Pic}, the discrete system \eqref{eq-pLUE-disc-body}  corresponds to the following action on the symmetry roots
\begin{equation} \label{eq-PsiOnRoots-pLUE}
\Psi: 
\left\{
	\begin{aligned}
		\alpha_0 &\mapsto \alpha_0 - \delta, \\
		\alpha_1 &\mapsto \alpha_1 + \delta , \\
		\alpha_2 &\mapsto \alpha_2 - \delta, \\
		\alpha_3 &\mapsto \alpha_3 + \delta,	
	\end{aligned}
\right.
\end{equation}
which coincides with that of $T_{\operatorname{KNY}}$, given in \cref{eq:translationonroots-KNY}, which generates the KNY discrete Painlev\'e equation \eqref{eq-KNY-dP-S3}.
To obtain a realisation of the identification in \Cref{lem:KNY-to-pLUE-Pic} as a change of variables and parameter matching between the discrete systems \eqref{eq-pLUE-disc-body} and \eqref{eq-KNY-dP-S3}, we first compute root variables of the surface $Y_{\boldsymbol{b},s}$ with respect to the symmetry root basis corresponding to the standard one under the identification in \Cref{lem:KNY-to-pLUE-Pic}.

\begin{lemma} \label{lem:rootvarspLUE}
    The root variables $a_i = \chi(\alpha_i)$, $i=0,1,2,3$, defined with respect to the period map $\chi$ given by the 2-form on $Y_{\boldsymbol{b},s}$ defined by $ \frac{df \wedge dg}{f}$ for the symmetry roots 
   \begin{equation*} 
   \begin{aligned}
    \alpha_{0} &=  \h_g - \L_2 - \L_7, \quad 
	&&\alpha_{1} = - \h_g + \L_1 +\L_7, \quad\\
	\alpha_{2} &=  \h_g - \L_1 - \L_8, \quad
    &&\alpha_{3} =  2 \h_f + \h_g - \L_1 -\L_3 -\L_4 -\L_5 -\L_6 - \L_7,
    \end{aligned}
    \end{equation*}
     are given by 
	\begin{equation}
		a_{0} = \alpha + n,\quad  a_{1} =-  n ,\quad
		a_{2} = \gamma+n , \quad a_{3} = 1 - \alpha - \gamma - n,
	\end{equation}
    so in particular this matches the normalisation $a_0+a_1+a_2+a_3=1$.
\end{lemma}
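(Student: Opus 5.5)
The plan is to compute the period map $\chi$ on $Q^{\perp}\subset\Pic(\Y)$ by the standard residue recipe (\cite[Sec. 3.6.2]{ASIDEnotes}, \cite[Sec. 5]{SAKAI2001}). Write a symmetry root as $\alpha=[C^{1}]-[C^{0}]$, where $C^{1},C^{0}$ are effective curves (exceptional divisors or proper transforms of simple curves). Each of them should meet the anticanonical divisor in a single point of the same component $D_i$. Then
$$\chi(\alpha)=\int_{P^{0}}^{P^{1}}\operatorname{res}_{D_i}\omega, \qquad P^{j}=C^{j}\cap D_i,$$
with $\omega=\frac{df\wedge dg}{f}$. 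The sign convention will be fixed so that it agrees with the one implicit in the KNY model, i.e.\ the same convention used to obtain the root variables $a_i$ in \Cref{fig:KNY:pointlocations}.

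First I record the residues. Along $D_0$, the proper transform of $\{f=0\}$ of class $\h_f-\L_1-\L_2$, we have $\operatorname{res}\,\omega=dg$. Along $D_1$, the proper transform of $\{f=\infty\}$ of class $\h_f-\L_7-\L_8$, put $F=1/f$; then $\omega=-\frac{dF\wedge dg}{F}$ and the residue is $-dg$.

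I then decompose the roots so that only these two components are needed.
\begin{itemize}
\item For $\alpha_1=\L_1-(\h_g-\L_7)$, the curve $\h_g-\L_7$ is the proper transform of $\{g=0\}$, which meets $D_0$ at $g=0$ and passes through $p_7$. The relevant points on $D_0$ are therefore $L_1\cap D_0$ at $g=n$ and the point $g=0$. This gives $\chi(\alpha_1)=\pm n$, and the sign convention should produce $a_1=-n$.
\item Next, $\alpha_0+\alpha_1=\L_1-\L_2$ is supported on $D_0$, between $g=n$ and $g=\alpha+n$. This gives $a_0+a_1=\alpha$ in the correct sign convention, hence $a_0=\alpha+n$.
\item Likewise $\alpha_1+\alpha_2=\L_7-\L_8$ is supported on $D_1$, between $g=0$ and $g=-\gamma$, with residue $-dg$. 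This gives $a_1+a_2=\gamma$, hence $a_2=\gamma+n$.
\end{itemize}
I will double-check that these signs are mutually consistent by computing $a_0$ directly as well, via $\alpha_0=(\h_g-\L_7)-\L_2$ on $D_0$.

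Finally, $a_3$ follows from $\chi(\delta)=1$, which gives $a_3=1-\alpha-\gamma-n$. This is also where I expect the main obstacle, because the normalisation $\chi(\delta)=1$ is a genuine computation rather than a formal consequence. I would first verify it by computing $a_3$ directly from
$$\alpha_3=(2\h_f+\h_g-\L_1-\L_3-\L_4-\L_5-\L_6-\L_7).$$
Concretely, I would find a curve in the class $2\h_f+\h_g-\L_1-\L_3-\L_4-\L_5-\L_6-\L_7$, i.e.\ a $(2,1)$-curve through $p_1$, $p_7$ and the infinitely near chain $p_3\leftarrow p_4\leftarrow p_5\leftarrow p_6$. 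Writing it as $g=\frac{c_0+c_1f+c_2f^2}{(f-1)^2}$-type and imposing the conditions gives the constant term $\alpha+\gamma-s+2n-1$ from $p_6$. This curve meets $D_5$ (class $\L_5-\L_6$) at a single point.

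Comparing that curve with $\L_6$ along $D_5$ requires the residue of $\omega$ on the exceptional component $D_5$. For this I would pass $\omega$ through the charts $(u_3,v_3)\to(u_4,v_4)\to(u_5,v_5)$ of \Cref{fig:pLUE:pointlocations} and read off $\operatorname{res}_{D_5}\omega$ as a multiple of $dv_5$. The parameter $s$ must then cancel, and this cancellation is exactly the check that $a_0+a_1+a_2+a_3=1$ holds. Should the chart computation turn out messy, I would fall back on proving $\chi(\delta)=1$ via the pole of order two of $\omega$ along $D_2$, as is done in the analogous computations in \cite{DzhFilSto:2020:RCDOPWHWDPE}, and then deduce $a_3$ from the normalisation.
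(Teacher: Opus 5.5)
Your route is the standard residue computation that the paper invokes, and the first part is right. You correctly have $\operatorname{res}_{D_0}\omega=dg$ and $\operatorname{res}_{D_1}\omega=-dg$. The decompositions $\alpha_1=\L_1-(\h_g-\L_7)$, $\alpha_0+\alpha_1=\L_1-\L_2$ and $\alpha_1+\alpha_2=\L_7-\L_8$ give $a_1=-n$, $a_0=\alpha+n$, $a_2=\gamma+n$, all with one common orientation; the opposite orientation negates all four values.

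The step for $a_3$, however, fails as you state it.
\begin{itemize}
\item The class $2\h_f+\h_g-\L_1-\L_3-\L_4-\L_5-\L_6-\L_7$ is $\alpha_3$ itself, of self-intersection $-2$.
\item A $(2,1)$-curve through all of $p_1,p_7,p_3,p_4,p_5,p_6$ would be a nodal curve, and it exists only when $a_3=0$. So imposing the $p_6$ condition is overdetermined.
\item Even when it exists, it would not meet $D_5$, since $\alpha_3\bullet(\L_5-\L_6)=0$.
\end{itemize}

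What you need, and what your next sentence tacitly uses, is $\alpha_3=[C]-\L_6$. Here $C$ is the $(-1)$-curve of class $2\h_f+\h_g-\L_1-\L_3-\L_4-\L_5-\L_7$, i.e.\ the unique curve through $p_1,p_7,p_3,p_4,p_5$ only, namely $g(f-1)^2=n+(s-n)f$.
\begin{itemize}
\item $C$ meets $D_5$ at $v_5=(n-s)/s^2$.
\item $L_6$ meets $D_5$ at $v_5=(\alpha+\gamma-s+2n-1)/s^2$. So the constant from $p_6$ enters as the position of $L_6\cap D_5$, not as a coefficient of the curve.
\item From $f=1+u_5$ and $g=u_5^{-2}(1/s+u_5v_5)^{-1}$ one gets $\operatorname{res}_{D_5}\omega=-s^2\,dv_5$.
\end{itemize}
With the orientation that gave $a_1=-n$, this yields $a_3=-\bigl[(\alpha+\gamma-s+2n-1)-(n-s)\bigr]=1-\alpha-\gamma-n$.

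The cancellation of $s$ is only a consistency check. The value $\chi(\delta)=1$ comes from the $-1$ in the coordinate of $p_6$. This also pins down the global sign, since the other orientation gives $\chi(\delta)=-1$. Independently, the orientation is confirmed by checking that $dq\wedge dp=\frac{df\wedge dg}{f}$ under the change of variables of \Cref{prop:coords-KNY-pLUE}. With this correction your fallback via $D_2$ is unnecessary.
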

\begin{proof}
    This is a standard computation based on residue theorem computations along components of the anticanonical divisor, demonstrations of which can be found in \cite{SAKAI2001} as well as in, e.g., \cite{DzhTak:2018:SASGTDPE, DzhFilSto:2020:RCDOPWHWDPE,  DzhFilSto:2022:DERCSOPTRPEGA, ASIDEnotes}.
\end{proof}
	
We are now ready to establish the remaining part of \Cref{mainthm:pLaguerre}.
\begin{proposition}\label{prop:coords-KNY-pLUE} The following change of variables and parameter matching simultaneously identifies the discrete and differential systems \eqref{eq-LUE-disc-body} and \eqref{eq-LUE-diff} from the perturbed Laguerre weight with the KNY discrete Painlev\'e equation \eqref{eq-KNY-dP-S3} and the KNY Hamiltonian form \eqref{eq-KNY-Ham5-sys} of $\pain{V}$ respectively: 
    \begin{equation}\label{eq:KNYtoLUE}
   	 \left\{
		\begin{aligned}
   	 	q&= \frac{(f-1)(f g - g + n)}{s f},\\
   		p&= \frac{s (g-n)}{f g- g + n}  \\
	   	 \end{aligned}
		 \qquad
		 \begin{aligned}
		 a_{0}&= \alpha+ n,		&\quad &a_{1}= -n,\\  
		a_{2}&= \gamma+ n ,   		&\quad &a_{3}= 1 - \alpha - \gamma- n, 
		\end{aligned}
	\qquad t=s.
	\right.
    \end{equation}
\end{proposition}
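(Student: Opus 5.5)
The plan is to realise the lattice identification of \Cref{lem:KNY-to-pLUE-Pic} as an explicit isomorphism $\tilde{\upsilon}_{\boldsymbol{b},s}:Y_{\boldsymbol{b},s}\to X_{\boldsymbol{a},t}$. The construction uses linear systems, and the root variable matching of \Cref{lem:rootvarspLUE} fixes the parameters. After that, the discrete and differential systems are checked directly.

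\textbf{Choosing the coordinates.} By \eqref{eq:basis-KNY-LUE}, the coordinate $p$ must parametrise the pencil $|\h_p| = |\h_f+\h_g-\L_1-\L_7|$. These are $(1,1)$-curves in $(f,g)$ through $p_1:(0,n)$ and $p_7:(\infty,0)$. Such curves have the form $\lambda(fg-g+n)+\mu(g-n)=0$ up to normalisation, so $p$ is a M\"obius function of $(g-n)/(fg-g+n)$.

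Similarly, $q$ must parametrise $|\h_q|=|2\h_f+\h_g-\L_1-\L_3-\L_4-\L_7|$. These are curves of bidegree $(2,1)$ through $p_1$, $p_7$, $p_3$ and the infinitely near point $p_4$. A two-dimensional basis of this system is spanned by $(f-1)(fg-g+n)$ and $f$, giving $q$ as a M\"obius function of $(f-1)(fg-g+n)/f$.

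The M\"obius normalisations are then fixed by matching special members of each pencil with the reference model. The curve $\h_p-\E_3=$ (image of $\L_8$ side) must correspond to $p=0$, and $\h_p-\E_1$ to $p=-t$. The fibres $q=0,1,\infty$ must match the classes $\h_q-\E_5$, $\h_q-\E_7$, and $\h_q-\E_1-\E_3$ translated via \eqref{eq:basis-KNY-LUE}. Together with $t=s$, these conditions pin down the scalings $s$ and $1/s$ that appear in \eqref{eq:KNYtoLUE}.

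\textbf{Checking the parameters and the maps.} Next I verify that the centres $b_2,b_4,b_6,b_8$, pushed through this map, sit at $-a_0,-a_2,a_1,a_3$. These values must agree with the root variables $a_0=\alpha+n$, $a_1=-n$, $a_2=\gamma+n$, $a_3=1-\alpha-\gamma-n$ of \Cref{lem:rootvarspLUE}. This is automatic once the period maps agree, which they do because the 2-form $df\wedge dg/f$ pulls back to a constant multiple of $dq\wedge dp$. I would check that constant explicitly.

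Since $n\mapsto n+1$ sends these root variables to $\bar a_0=a_0+1$, $\bar a_1=a_1-1$, $\bar a_2=a_2+1$, $\bar a_3=a_3-1$, this is consistent with \eqref{eq-KNY-dP-S3}. On the Picard level, $\Psi$ agrees with $T_{\operatorname{KNY}}$ by \eqref{eq-PsiOnRoots-pLUE}.

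It remains to substitute \eqref{eq:KNYtoLUE} into \eqref{eq-KNY-dP-S3}. The first step computes $\bar q$ from $q,p$ and compares it with the expression built from $\bar f$ given by \eqref{eq-pLUE-disc-body}. The second step does the same for $\ubar p$ using $\ubar g$. For the differential system, substitute into \eqref{eq-KNY-Ham5-sys} with $t=s$ and use \eqref{eq-pLUE-diff}.

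\textbf{Main obstacle.} The hardest part is this final algebraic verification, in particular the $\ubar{p}$ equation. It mixes the backward $g$-shift with the forward $f$-shift: one must express $\ubar p$ through $\ubar g$ and $\ubar f$, where $\ubar f$ is obtained from the first equation of \eqref{eq-pLUE-disc-body} at $n-1$.

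To avoid heavy expansion, I would rely on the rigidity argument. An isomorphism of Sakai surfaces inducing the matching $\Upsilon$ and the same parameter evolution is unique up to the finite ambiguity already fixed by the normalisations above. Therefore the conjugated map $\upsilon\circ\psi\circ\upsilon^{-1}$ is an isomorphism $X_{\boldsymbol{a},t}\to X_{\bar{\boldsymbol{a}},t}$ acting on $\Pic$ as $T_{\operatorname{KNY}}$, so it must equal $\tilde\varphi^{\operatorname{KNY}}$. I would then check the identity symbolically at one step as confirmation.
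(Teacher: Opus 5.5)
Your construction of $\upsilon$ is the standard procedure the paper defers to \cite{DzhFilSto:2020:RCDOPWHWDPE}, and your pencils, bases and normalising fibres are right, with two small corrections. First, the fibre $p=0$ has class $\h_p-\E_3=\h_g-\L_1$, which is the line $g=n$. Second, you need a third condition for $p$, namely $p=\infty$ (for instance along $fg-g+n=0$, whose class is $\E_6$).

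If you keep $t$ symbolic, these conditions give $p=t(g-n)/(fg-g+n)$ and $q=(f-1)(fg-g+n)/(sf)$. One then finds $dq\wedge dp=\tfrac{t}{s}\,\tfrac{df\wedge dg}{f}$. Moreover, $L_2$, $L_8$, $\{fg-g+n=0\}$ and $L_6$ are contracted to the KNY positions of $b_2,b_4,b_6,b_8$, with root variables equal to $\tfrac{t}{s}$ times those of \Cref{lem:rootvarspLUE}. So you need not take $t=s$ as input: the normalisation $a_0+a_1+a_2+a_3=1$ forces it. This is what the paper means when it says the matching of $t$ and $s$ appears in the construction.

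The genuine difference from the paper is how the discrete system is verified. The paper substitutes directly; you argue by rigidity. Your route trades the $\ubar{p}$ algebra for data already in hand (\Cref{lem:induceddynamicsonPicpLUE} and \Cref{lem:KNY-to-pLUE-Pic}), at the cost of depending on those lemmas, whereas direct substitution is self-contained. For the rigidity argument to be complete, two points must be made precise.

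\emph{Agreement on all of $\Pic(\X)$.} \eqref{eq-PsiOnRoots-pLUE} only gives agreement on $Q^{\perp}$. Since $Q+Q^{\perp}$ has rank $9$, and $T_{\operatorname{KNY}}$ moves surface roots (e.g.\ $\delta_0\mapsto\delta_1$), this does not determine the action. You must check $\Upsilon\Psi\Upsilon^{-1}=T_{\operatorname{KNY}}$ on all of $\Pic(\X)$ from the explicit formulas; they do agree.

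\emph{The correct uniqueness statement.} The normalisations of $\upsilon$ do not constrain the conjugated map. What you need is that an automorphism of $X_{\bar{\boldsymbol a},t}$ acting trivially on $\Pic(\X)$ is the identity. This holds because such a map descends to $(q,p)\mapsto(\mu(q),\nu(p))$, where $\mu$ fixes $0,1,\infty$ and $\nu$ fixes $0,-t,\infty$. So the ambiguity is trivial, not merely finite.

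With these two points stated, rigidity is a complete alternative for the discrete system. The differential system still has to be checked by substitution, as you plan and as in the paper.
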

\begin{proof}
    Deriving the change of variables is a standard computation, the details of which we refer to \cite{DzhFilSto:2020:RCDOPWHWDPE}.
    The matching of the independent variables $t$ and $s$ from the differential systems appears in the process of finding the isomorphism between $X_{\boldsymbol{a},t}$ and $Y_{\boldsymbol{b},s}$ that provides the change of variables, see \cite{DzhFilSto:2022:DERCSOPTRPEGA}.
    The fact that the systems are matched can be verified immediately by direct calculation.
\end{proof}

Note that, if $n$ is assumed complex, the surfaces $Y_{\boldsymbol{b},s}$ constructed above are generic, in the sense that the family exhausts the moduli for its surface type $\mathcal{R}=D_5^{(1)}$. 
This can be seen in the fact that the corresponding root variables $a_0,a_1,a_2,a_3$ and extra parameter $s=t$ are free, which implies that the family of surfaces includes representatives of all isomorphism classes of $D_5^{(1)}$ surfaces, through the Torelli-type theorem \cite[Th. 25]{SAKAI2001}.
In this sense the family of surfaces constructed here has the full symmetry group $\widehat{W}(A_3^{(1)})$. The next example we will study does not.

\subsection{Laguerre weight on a finite interval} % (fold)
\label{sec:LUE}

We consider the discrete system \eqref{eq-LUE-disc} and establish \Cref{mainthm:Laguerre}.

\begin{notation*}
For the discrete and differential systems from the Laguerre weight \eqref{weight-L} and associated $D_5^{(1)}$ surfaces we use the same notation as for the perturbed case: coordinates $(f,g)$; parameters $\alpha,n$; continuous independent variable  $s$;
centres of blowups $p_{i}$; exceptional divisors $L_{i}$.	
We will recycle notation for the birational maps $\psi_{\boldsymbol{b},s}$,  surfaces $Y_{\boldsymbol{b},s}$, and isomorphisms $\tilde{\psi}_{\boldsymbol{b},s}$.
\end{notation*}
\subsubsection{The discrete and the differential systems}

We first recall from \cite{LC17} the discrete and differential systems satisfied by $f = f_n(s)$, $g=g_n(s)$, which are 
\begin{equation}\label{eq-LUE-disc-body}
	\left\{
		\bar{f} f = \frac{g^2 - (\alpha+ 2n)g + (\alpha+n)n}{g^2} ,\qquad
		\ubar{g} + g = \frac{(1-\alpha + s -2n) f + \alpha + 2n +1}{(f-1)^2},
	\right.
\end{equation}
and 
\begin{equation}\label{eq-LUE-diff}
	\left\{
	\begin{aligned}
		\frac{df}{ds} &=\frac{2 f^2 g - 4 f g+ (\alpha-s +2n)f + 2 g - \alpha - 2n}{s} ,\\
		\frac{dg}{ds} &= \frac{g^2(1-f^2) - (\alpha+2n)g + (\alpha+n)n}{s f},
	\end{aligned}
	\right.
\end{equation}
respectively. Note that these can be obtained by setting $\gamma=0$ in the systems \eqref{eq-pLUE-disc-body} and \eqref{eq-pLUE-diff}.

As above, we take $f,g$ as an affine coordinates on $\p^1 \times \p^1$ and consider the discrete system \eqref{eq-LUE-disc-body} as a family of birational maps $\psi_{\boldsymbol{b},s}$
with parameter space $\mathscr{B}\ni \boldsymbol{b}=(\alpha,n)$, and evolution of parameters $\boldsymbol{b}\mapsto\bar{\boldsymbol{b}}$ given by $n\mapsto n+1$.
The independent variable space for the differential system \eqref{eq-LUE-diff} is again $\mathscr{S}=\C\setminus\{0\}\ni s$.
\subsubsection{Surfaces}

The surfaces on which the discrete system \eqref{eq-LUE-disc-body} is regularised is essentially the result of setting $\gamma=0$ in those from the perturbed Laguerre case above.
However, this degeneration causes a pair of centres of blowups to merge in a way such that a nodal curve appears, so we provide the details.

\begin{lemma}
    Through the sequence of eight blowups of points $p_1,\dots,p_8$ given in \Cref{fig:LUE:pointlocations}, we obtain a rational surface $Y_{\boldsymbol{b},s} = \Bl_{p_1 \cdots p_8}(\p^1 \times \p^1)$ such that the birational map defined by the discrete system \eqref{eq-LUE-disc-body} becomes an isomorphism
    $$ \tilde{\psi}_{\boldsymbol{b},s} : Y_{\boldsymbol{b},s} \longrightarrow Y_{\bar{\boldsymbol{b}},s}.$$
\end{lemma}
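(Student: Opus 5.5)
The plan is to repeat the construction from the perturbed Laguerre case directly for the system \eqref{eq-LUE-disc-body}, while keeping track of the one place where setting $\gamma=0$ changes the geometry. First I will write $\psi_{\boldsymbol{b},s}$ explicitly as a birational map of $\p^1\times\p^1$. The forward map comes from the first equation of \eqref{eq-LUE-disc-body} together with the shifted second equation $\bar g = -g + \bigl((1-\alpha+s-2(n+1))\bar f+\alpha+2n+3\bigr)/(\bar f-1)^2$. The inverse comes from $\ubar{g}$ and the first equation solved for $\ubar{f}$. I will pass to the charts $(F,g)$, $(f,G)$ and $(F,G)$ with $F=1/f$, $G=1/g$, and list the indeterminacy points of $\psi_{\boldsymbol{b},s}$ and of $\psi_{\boldsymbol{b},s}^{-1}$ together with the curves each map contracts.

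I expect the points $p_1,\dots,p_6$ to be exactly those of \Cref{fig:pLUE:pointlocations} with $\gamma=0$. These are $p_1:(0,n)$ and $p_2:(0,\alpha+n)$, where the numerator of $\bar f f$ vanishes. Then comes the cascade $p_3\leftarrow p_4\leftarrow p_5\leftarrow p_6$ over $(f,g)=(1,\infty)$, located by successive blowup charts $v_4=1/s$ and $v_5=(\alpha-s+2n-1)/s^2$. These should come out as the same local computations with $\gamma$ deleted. The new feature is at $f=\infty$. In the perturbed case, $g(g+\gamma)$ gave two distinct points $(\infty,0)$ and $(\infty,-\gamma)$ on the line $F=0$. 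Now the denominator is $g^2$, so $\bar f$ has a double pole along $g=0$. I will therefore blow up $p_7:(F,g)=(0,0)$ and then find $p_8$ on $L_7$. Since $p_7$ and $p_8$ both lay on the line $F=0$ before merging, I expect $p_8$ to be the point of $L_7$ where the proper transform of $F=0$ meets it. In the chart of \Cref{not:blowupcharts}, with $x=F$ and $y=g$, this is $p_8:(U_7,V_7)=(F/g,g)=(0,0)$. I will confirm this by checking that after these two blowups the map lifts with no remaining indeterminacy over this point.

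The main check, and the step I expect to need the most care, is verifying that the lifted map is an isomorphism $Y_{\boldsymbol{b},s}\to Y_{\bar{\boldsymbol{b}},s}$. After the eight blowups I will compute the images of all exceptional curves and of the relevant coordinate lines. The data to check is the induced action on $\Pic(\Y)$, which should be the $\Psi$ of \Cref{lem:induceddynamicsonPicpLUE}, since the classes are unchanged. This means checking that no curve is contracted and no point is blown up by the lift. The delicate part is the infinitely near pair $p_7\leftarrow p_8$, since the double-pole behaviour must match exactly: $L_7-L_8$ should be mapped to $L_7-L_8$ and the lift should be regular near $L_8$. As a consistency check, the class $L_7-L_8=\alpha_1+\alpha_2$, computed via \eqref{eq:sym-KNY-LUE}, is a $(-2)$-class disjoint from the anticanonical components $\h_f-\L_7-\L_8$ and the others. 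It is therefore the expected nodal curve, and its period $a_1+a_2=\gamma=0$ agrees with \Cref{prop:nodalcurvesrootvars}. Finally, one can check that the desingularisation of the differential system \eqref{eq-LUE-diff} produces the same points, as in the perturbed case.
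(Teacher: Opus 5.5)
Your route is the paper's own. The lemma is the standard resolution of the indeterminacies of $\psi_{\boldsymbol{b},s}$ and $\psi_{\boldsymbol{b},s}^{-1}$, and the only new feature relative to the perturbed case is the merging of $p_7,p_8$. You locate $p_8$ correctly at $(U_7,V_7)=(F/g,g)=(1/(fg),g)=(0,0)$, where $L_7$ meets the proper transform of $f=\infty$. That part, and $p_1,p_2$, use only the first equation and are fine.

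One step will fail if carried out literally. Your forward formula $\bar g=-g+\bigl((1-\alpha+s-2(n+1))\bar f+\alpha+2n+3\bigr)/(\bar f-1)^2$ inherits a misprint in the displayed second equation, and with it you will not recover \Cref{fig:LUE:pointlocations}.

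With constant term $\alpha+2n+1$ in $\ubar{g}+g$, the numerator at $f=1$ is $s+2$. The cascade over $(1,\infty)$ then sits at $v_4=1/(s+2)$ instead of $1/s$. Worse, on $\bar f=0$ the inverse map has base points at $\bar g=\bar n+2$ and $\bar g=\alpha+\bar n+2$, with $\bar n=n+1$. The forward map at level $\bar n$ needs them at $(0,\bar n)$ and $(0,\alpha+\bar n)$, so the eight points do not close up under the dynamics.

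The equation consistent with $p_1,\dots,p_6$ is $\ubar{g}+g=\bigl((1-\alpha+s-2n)f+\alpha+2n-1\bigr)/(f-1)^2$. This is also what the change of variables in \Cref{prop:coords-KNY-LUE} actually produces from \eqref{eq-KNY-dP-S3}. Equivalently,
\[ \bar g=-g+\frac{(-1-\alpha+s-2n)\bar f+\alpha+2n+1}{(\bar f-1)^2}. \]
Near $f=1$ the right-hand side is then exactly $s/(f-1)^2+(1-\alpha+s-2n)/(f-1)$, which gives $v_4=1/s$ and $v_5=(\alpha-s+2n-1)/s^2$. Its value at $f=0$ becomes $\alpha+2n-1$, which puts the base points of $\psi_{\boldsymbol{b},s}^{-1}$ on $\bar f=0$ exactly at $(0,\bar n)$ and $(0,\alpha+\bar n)$.

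With this correction your verification goes through as you describe: no contracted curves, induced action $\Psi$, and $\L_7-\L_8$ fixed.
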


\begin{figure}[htb]
\begin{equation*}
\begin{aligned}
	&p_1 : (f,g)=(0,n),							&	&p_2 : (f,g)=(0,\alpha+n),  \\
	&p_3 : (f,g)=\left( 1, \infty \right) \quad \leftarrow 	&	&p_4 : (u_3,v_3) = \left(  f-1, \frac{1}{g(f-1)}  \right) = (0, 0)  \\
	&										& 	&\uparrow \\
	&										&	&p_5 : 	(u_4,v_4) = \left( u_3, \frac{v_3}{u_3} \right) = \left(0, \frac{1}{s}\right) \\
	&										& 	&\uparrow \\
	&										&	&p_6 : 	(u_5,v_5) = \left( u_4, \frac{v_4 - \frac{1}{s}}{u_4} \right) = \left(0, \frac{\alpha  - s + 2n -1}{s^2}\right), \\
	&p_7 : (f,g) = (\infty,0)  \quad \leftarrow
    &	&p_8 : (U_7,V_7) = \left( \frac{1}{f g},g \right) = (0, 0).
\end{aligned}
\end{equation*}
    \caption{Blowup point locations from the Laguerre weight}
    \label{fig:LUE:pointlocations}
\end{figure}
We give an illustration of the configuration of exceptional divisors $L_1,\dots,L_8$ arising from the blowups in \Cref{fig:LUE-soic}, with curves labelled by their classes in $\Pic(\mathcal{Y}_{\boldsymbol{b},s})$.
Note that the point $p_8$ is now infinitely near to $p_7$, i.e. it is found on the exceptional divisor $L_7$.
Because $p_8$ lies at the intersection of $L_7$ and the strict transform of the curve on $\p^1 \times \p^1$ defined by $f = \infty$, the divisor $L_7-L_8$ on $Y_{\boldsymbol{b},s}$ is effective.
This constitutes a nodal curve, so we indicate it in green in \Cref{fig:LUE-soic}.

\begin{figure}[htb]
	\begin{tikzpicture}[>=stealth,basept/.style={circle, draw=red!100, fill=red!100, thick, inner sep=0pt,minimum size=1.2mm}]
	\begin{scope}[xshift=0cm,yshift=0cm]
	\draw [black, line width = 1pt] (-0.4,0) -- (2.9,0)	node [pos=0,left] {\small $g=0$};
	\draw [black, line width = 1pt] (-0.4,2.5) -- (2.9,2.5) node [pos=0,left] {\small $g=\infty$};
	\draw [black, line width = 1pt] (0,-0.4) -- (0,2.9) node [pos=0,below] {\small $f=0$};
	\draw [black, line width = 1pt] (2.5,-0.4) -- (2.5,2.9) node [pos=0,below] {\small $f=\infty$};
	\node (b1) at (0,0.5) [basept,label={[xshift =-7pt, yshift=-2pt] \small $p_{1}$}] {};
	\node (b2) at (0,1.5) [basept,label={[xshift=-7pt, yshift=-2pt] \small $p_{2}$}] {};
	\node (b3) at (1.25,2.5) [basept,label={[xshift = -7pt, yshift=0pt] \small $p_{3}$}] {};
	\node (b4) at (2,3.1) [basept,label={[xshift = 0pt, yshift=0pt] \small $p_{4}$}] {};
	\node (b5) at (2.6,3.1) [basept,label={[xshift = 0pt, yshift=0pt] \small $p_{5}$}] {};
	\node (b6) at (3.2,3.1) [basept,label={[xshift = 0pt, yshift=0pt] \small $p_{6}$}] {};
	\node (b7) at (2.5,0) [basept,label={[xshift = 7pt, yshift=-15pt] \small $p_{7}$}] {};
	\node (b8) at (2.95,.95) [basept,label={[xshift = 0pt, yshift=0pt] \small $p_{8}$}] {};
	\draw [red, line width = 0.8pt, ->] (b4) -- (1.55,2.5) -- (b3);
	\draw [red, line width = 0.8pt, ->] (b5) -- (b4);	
	\draw [red, line width = 0.8pt, ->] (b6) -- (b5);	
	\draw [red, line width = 0.8pt, ->] (b8) -- (2.5,0.35) -- (b7);	
	\end{scope}
	\draw [->] (5.75,1)--(3.75,1) node[pos=0.5, below] {$\operatorname{Bl}_{p_{1}\cdots p_{8}}$};
	\begin{scope}[xshift=8cm,yshift=0cm]
	\draw [blue, line width = 1pt] (0,-.6) -- (0,3.4)	node [pos=0, below] {\small $\h_{f}-\L_{1}-\L_{2}$};
	\draw [blue, line width = 1pt] (4,-.6) -- (4,3.4)	node [pos=1, above] {\small $\h_{f}-\L_{7}-\L_8$};
	\draw [blue, line width = 1pt] (-0.4,3) -- (4.4,3)	node [pos=0, left] {\small $\h_{g}-\L_{3}-\L_{4}$};
	\draw [blue, line width = 1pt] (2,1.8) -- (2,4.2)	node [pos=1, above] {\small $\L_{4}-\L_{5}$};
	\draw [blue, line width = 1pt] (2.2,2) -- (1,2)	node [pos=0, right] {\small $\L_{3}-\L_{4}$};
	\draw [blue, line width = 1pt] (2.2,4) -- (.6,4)	node [pos=1, left] {\small $\L_{5}-\L_{6}$};
	\draw [red, line width = 1pt] (.8,3.8) -- (.8,4.8)	node [pos=1, above] {\small $\L_{6}$};
	\draw [red, line width = 1pt] (.6,1.5) -- (-0.6,0.5) node [pos=1, left] {\small $\L_{1}$};
	\draw [red, line width = 1pt] (.6,2.5) -- (-0.6,1.5)	node [pos=1, left] {\small $\L_{2}$};
	\draw [red, line width = 1pt] (4.3,-.5) -- (3.1,0.7)	node [pos=0, right] {\small $\L_{8}$};
	\draw [darkgreen, line width = 1pt] (3.5,0.7) -- (2.2,-.6)	node [pos=1, below] {\small $\L_{7}-\L_{8}$};
	\draw [red, line width = 1pt] (2.7,-.4) -- (-.4,-.4)	node [pos=1, left] {\small $\h_g - \L_{7}$};
	\end{scope}
	\end{tikzpicture}
	\caption{The $D_{5}^{(1)}$ Sakai surface (with nodal curve) from the Laguerre weight}
	\label{fig:LUE-soic}
\end{figure}

\begin{proposition} \label{lem:prop:ACdiv-LUE}
        $Y_{\boldsymbol{b},s}$ is a Sakai surface of type $\mathcal{R}=D_5^{(1)}$ with $\L_7-\L_8\in \Delta^{\operatorname{nod}}$ for every $\boldsymbol{b},s$.
\end{proposition}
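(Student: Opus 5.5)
We will follow the argument of \Cref{lem:prop:ACdiv-pLUE}, adapting it to the fact that $p_8$ is now infinitely near to $p_7$. The plan is to show that $Y_{\boldsymbol{b},s}$ carries a unique effective anticanonical divisor of canonical type with the $D_5^{(1)}$ configuration, and then that $L_7-L_8$ is a smooth rational $(-2)$-curve not among its components.

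\textbf{Step 1 (the anticanonical divisor).} Take the 2-form $\omega$ given by the pullback of $\frac{df\wedge dg}{f}$, the $\gamma=0$ specialisation of the form used before. On $\p^1\times\p^1$ this form has a simple pole along $f=0$, a simple pole along $f=\infty$ (since $df/f=-dF/F$ with $F=1/f$), and a double pole along $g=\infty$, with no zeros. This gives divisor class $2\h_f+2\h_g$.

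Next we track the pullback through the blowups by checking vanishing orders at each centre.
\begin{itemize}
\item The points $p_1,p_2$ lie only on the simple pole line $f=0$.
\item The chain $p_3,\dots,p_6$ is unchanged from the perturbed case, so the computation there carries over verbatim.
\item $p_7=(\infty,0)$ lies only on the simple pole line $f=\infty$. In the chart $(U_7,V_7)=(1/(fg),g)$ the form becomes, up to sign, $\frac{dU_7\wedge dV_7}{U_7}$. So $L_7$ is not a pole component, and the strict transform of $f=\infty$ is $U_7=0$. The point $p_8$ lies on that strict transform, hence on a simple pole line, so $L_8$ is also not a pole component.
\end{itemize}
Each blowup subtracts the exceptional class exactly once. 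The pole divisor $D$ therefore has class $2\h_f+2\h_g-\sum_i\L_i=-\K_{\Y}$, with components
$$\h_f-\L_1-\L_2,\quad \h_f-\L_7-\L_8,\quad \h_g-\L_3-\L_4,\quad \L_3-\L_4,\quad \L_4-\L_5,\quad \L_5-\L_6,$$
and multiplicities $1,1,2,2,1,1$.

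\textbf{Step 2 (surface type).} These classes coincide with those in \Cref{fig:d-roots-pLUE}, so the intersection matrix is the same $D_5^{(1)}$ diagram. Each component satisfies $D_i\bullet D=0$. Uniqueness of the effective anticanonical divisor follows because $a_0+a_1+a_2+a_3=1\neq 0$: the Halphen-type pencil does not occur, so $\dim|-\K|=0$. This can be justified as in the perturbed case.

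\textbf{Step 3 (the nodal curve).} Since $p_8\in L_7$ lies on the strict transform of $f=\infty$ and not on the strict transform of $g=0$, the class $\L_7-\L_8$ is represented by the strict transform of $L_7$. This is a smooth rational curve with self-intersection $-2$. It is not among the components listed in Step 1, and we check directly that
$$(\L_7-\L_8)\bullet D_i=0 \quad \text{for all } i.$$
In particular, for $D_1=\h_f-\L_7-\L_8$ we get $(\L_7-\L_8)\bullet(\h_f-\L_7-\L_8)=1-1=0$. By the genus-formula remark after \Cref{def:nodalcurve}, it is therefore disjoint from $D$, hence nodal. None of this depends on $\boldsymbol{b}$ or $s$, so it holds for every $\boldsymbol{b},s$.

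\textbf{Main obstacle.} The only delicate point is the local computation at $p_7,p_8$. We must confirm that $p_8$ lies on the pole component $f=\infty$ rather than on $g=0$. Otherwise the class $\h_f-\L_7-\L_8$ would not be effective, and the picture of $D$ and of the nodal curve would change.
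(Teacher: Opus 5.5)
Your proposal is correct and follows the paper's own argument: the 2-form $\frac{df\wedge dg}{f}$ gives the $D_5^{(1)}$ anticanonical divisor with the same component classes as in the perturbed case, and the nodal curve is the strict transform of $L_7$, since $p_8$ is blown up at the intersection of $L_7$ with the strict transform of $f=\infty$. Your local checks at $p_7$ and $p_8$ just make explicit what the paper calls a direct calculation.

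The one soft spot is uniqueness of $D$. It is cleaner to verify it directly than through the root-variable normalisation, which is only computed afterwards. The eight point conditions force a bidegree $(2,2)$ curve to be $\{f=0\}+\{f=\infty\}+2\{g=\infty\}$, and the decisive constraint comes from the $-1$ in the coordinate of $p_6$.
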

\begin{proof}
    Similarly to \Cref{lem:prop:ACdiv-pLUE}, the unique effective anticanonical divisor on $Y_{\boldsymbol{b},s}$ can be shown by direct calculation to be that coming from the 2-form $\frac{df \wedge dg}{f}$.
    The irreducible components of this gives the same expressions for the surface root basis as in \Cref{fig:d-roots-pLUE}, so we do not reproduce it here, and we see that the surface type is as claimed.
    With the locations of centres of blowups as shown in \Cref{fig:LUE:pointlocations}, we get a $(-2)$-curve which is not among the components of the anticanonical divisor, coming from the strict transform of $L_7$ under the blowup of $p_8$.
    This is clearly a rational curve and we have $\L_7-\L_8\in\Delta^{\operatorname{nod}}$.
\end{proof}

\subsubsection{Identification with the standard model}

Again identifying all $\Pic(Y_{\boldsymbol{b},s})$ into the single lattice $\Pic(\Y)$, the linear action induced via pushforwards by the isomorphisms $\tilde{\psi}_{\boldsymbol{b},s}$ is the same as in \Cref{lem:induceddynamicsonPicpLUE}.
To find an identification with the standard model of $D_5^{(1)}$-surfaces, we will need consideration of nodal curves.

\begin{lemma}\label{lem:KNY-to-LUE} 
The identification $ \Pic(\mathcal{Y}) \to \Pic(\X)$ of Picard lattices 
    can be taken to be the same as in \Cref{lem:KNY-to-pLUE-Pic}.
	This results in the same matching of surface and symmetry root bases, but the class of the nodal curve on $Y_{\boldsymbol{b},s}$ corresponds under this identification to
	\begin{equation}
		\L_7 - \L_8 = \alpha_1 + \alpha_2 = \mathcal{H}_q + \mathcal{H}_p - \E_3 - \E_4 - \E_5 -\E_6.
	\end{equation}
\end{lemma}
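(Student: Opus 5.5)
The plan is to reduce everything to lattice-level checks, using the fact that $\Pic(\Y)$, its intersection form, the anticanonical class, and the induced action $\Psi$ are literally the same for the Laguerre family as for the perturbed Laguerre family. The Laguerre blowups of \Cref{fig:LUE:pointlocations} differ from those in \Cref{fig:pLUE:pointlocations} only in that $p_8$ has become infinitely near to $p_7$. The enumeration and the classes $\h_f,\h_g,\L_1,\dots,\L_8$ are defined exactly as before, so $\Pic(\Y)$ is unchanged. The text before the statement already notes that $\Psi$ is the same as in \Cref{lem:induceddynamicsonPicpLUE}, and \Cref{lem:prop:ACdiv-LUE} shows that the surface root basis is again that of \Cref{fig:d-roots-pLUE}.

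First I check that the map of \Cref{lem:KNY-to-pLUE-Pic} remains a valid identification. It is a lattice isometry, and it sends $-\K_{\X}$ to $-\K_{\Y}$; both facts are pure lattice statements already established in the perturbed case. It matches the surface roots $\delta_i$ with the classes of the components $D_i$, which is unchanged by \Cref{lem:prop:ACdiv-LUE}. It also intertwines $\Psi$ with $T_{\operatorname{KNY}}$. This last property is again purely algebraic, since both $\Psi$ and the matrix of the identification are the same as before. So the image of the symmetry root basis is exactly \eqref{eq:sym-KNY-LUE}, and the action on it is \eqref{eq-PsiOnRoots-pLUE}.

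Next I compute the class of the nodal curve. Adding the expressions for $\alpha_1$ and $\alpha_2$ in \eqref{eq:sym-KNY-LUE} gives
\[
\alpha_1+\alpha_2=(-\h_g+\L_1+\L_7)+(\h_g-\L_1-\L_8)=\L_7-\L_8 .
\]
On the KNY side I add the definitions from \Cref{fig:a-roots-a3-KNY}, which gives $\alpha_1+\alpha_2=\h_q+\h_p-\E_3-\E_4-\E_5-\E_6$. As a consistency check, I also substitute the formulas for $\L_7$ and $\L_8$ from \eqref{eq:basis-KNY-LUE} directly:
\[
(\h_q+\h_p-\E_3-\E_5-\E_6)-\E_4 .
\]
This agrees with the previous expression.

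The main subtlety is not computational. One must make sure that "can be taken to be the same" is legitimate in the presence of the nodal curve: the identification should also respect effectiveness, and a class effective on $Y_{\boldsymbol{b},s}$ should correspond to a class that can be effective on $X_{\boldsymbol{a},t}$. I would address this as follows. The class $\alpha_1+\alpha_2$ is a root of $Q^\perp$ of self-intersection $-2$. By \Cref{prop:nodalcurvesrootvars}, the KNY surfaces with $a_1+a_2=0$ carry precisely such a nodal curve. The $\gamma=0$ specialisation of \Cref{lem:rootvarspLUE}, namely $a_1=-n$ and $a_2=n$, confirms $a_1+a_2=0$, so the identification is consistent with the nodal constraint.
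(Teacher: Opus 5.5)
Your proposal is correct and follows essentially the same reasoning as the paper. The lattice data ($\Pic(\Y)$, $\Psi$, and the surface root basis) are unchanged, so the identification of \Cref{lem:KNY-to-pLUE-Pic} carries over, and adding the images of $\alpha_1,\alpha_2$ gives $\L_7-\L_8=\h_q+\h_p-\E_3-\E_4-\E_5-\E_6$; compatibility then comes from the constraint $a_1+a_2=0$, exactly as the paper notes via \Cref{prop:nodalcurvesrootvars}. One small improvement: specialising \Cref{lem:rootvarspLUE} to $\gamma=0$ skips over the reducibility of $\L_7$, which the paper handles separately in \Cref{prop:rootvarsLUE}. You can instead read off $a_1+a_2=0$ directly from \Cref{prop:nodalcurvesrootvars}, since $\alpha_1+\alpha_2$ is the class of a nodal curve on $Y_{\boldsymbol{b},s}$.
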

Under the identification, the discrete system from the Laguerre weight corresponds to the same action on symmetry roots in \Cref{eq-PsiOnRoots-pLUE}, and coincides with that of the KNY discrete Painlev\'e equation \eqref{eq-KNY-dP-S3}.
To obtain the transformation of the discrete system \eqref{eq-pLUE-disc-body} to a special case of this equation, we introduce a constraint on root variables corresponding to the nodal curve.

\begin{proposition} \label{prop:rootvarsLUE}
    The root variables $a_i=\chi(\alpha_i)$, $i=0,1,2,3$, defined with respect to the period map $\chi$ given by the 2-form on $Y_{\boldsymbol{b},s}$ defined by $ \frac{df \wedge dg}{f}$ are given by 
    \begin{equation}
		a_{0} = \alpha + n,\quad  a_{1} =-  n ,\quad
		a_{2} =  n , \quad a_{3} = 1 - \alpha - n.
	\end{equation}
\end{proposition}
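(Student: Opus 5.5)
We compute the root variables exactly as in \Cref{lem:rootvarspLUE}, using the period map determined by the 2-form $\omega=\frac{df\wedge dg}{f}$. Its pole divisor is $D=D_0+D_1+2D_2+2D_3+D_4+D_5$, with components as in \Cref{fig:d-roots-pLUE}. The symmetry roots are those of \Cref{lem:KNY-to-LUE}, which coincide with the expressions in \Cref{lem:rootvarspLUE}.

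For each $\alpha_i$ we write it as a difference of effective classes, $\alpha_i=[C_i]-[C_i']$. Here $C_i,C_i'$ are curves meeting $D$ only at points of a single component $D_k$ on which $\omega$ has a simple pole. Then $\chi(\alpha_i)=\int_{P'}^{P}\operatorname{res}_{D_k}\omega$, where $P=C_i\cap D_k$ and $P'=C_i'\cap D_k$.

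The roots $\alpha_0=\h_g-\L_2-\L_7$, $\alpha_1=-\h_g+\L_1+\L_7$ and $\alpha_2=\h_g-\L_1-\L_8$ all involve the line $f=0$, i.e.\ $D_0$, which carries the points $p_1,p_2$. On $D_0$ the residue of $\omega$ is $\operatorname{res}_{f=0}\frac{df\wedge dg}{f}=dg$, so these root variables reduce to differences of $g$-coordinates.

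\begin{itemize}
\item For $\alpha_1$, write $\alpha_1=(\L_1)-(\h_g-\L_7)$. The class $\h_g-\L_7$ is the line $g=0$, which meets $D_0$ at $g=0$, while $L_1$ meets $D_0$ at $g=n$. With the sign convention fixed by \Cref{lem:rootvarspLUE} this gives $a_1=-n$, so the convention is consistent.
\item For $\alpha_0$, write $\alpha_0=(\h_g-\L_7)-\L_2$. The difference of $g$-values $0$ and $\alpha+n$ then gives $a_0=\alpha+n$.
\item For $\alpha_2$, write $\alpha_2=(\h_g-\L_7-\L_8)+(\L_7-\L_8)\cdot 0\ldots$; more cleanly, $\alpha_2=(\L_7-\L_8)-\alpha_1$. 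The nodal curve $L_7-L_8$ is disjoint from $D$, so $\chi(\L_7-\L_8)=0$ by \Cref{prop:nodalcurvesrootvars}. Hence $a_2=-a_1=n$. This is exactly the point where the constraint $a_1+a_2=0$ appears. Alternatively, it is the $\gamma=0$ specialisation of \Cref{lem:rootvarspLUE}, since $p_8$ at $g=-\gamma$ degenerates to the infinitely near point $p_8$ on $L_7$.
\end{itemize}

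Finally, $a_3=1-a_0-a_1-a_2=1-\alpha-n$ follows from the normalisation $\chi(\delta)=1$. This normalisation can be cross-checked by computing $\chi(\delta)$ along the chain $D_2,\dots,D_5$ at $p_3,\dots,p_6$. Equivalently, $a_3$ can be computed directly via $\alpha_3$ using the data of $p_6$: the parameter $\frac{\alpha-s+2n-1}{s^2}$ in $v_5$ encodes $a_3$.

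\textbf{Main obstacle.} The delicate step is the normalisation of $\chi(\delta)$, together with the signs. The residue along the multiplicity-2 components and the infinitely near points $p_4,p_5,p_6$ requires careful expansion in the charts $(u_5,v_5)$. Here I would simply invoke continuity in $\gamma$. The period map depends continuously on the point configuration, and the $\gamma\to0$ limit of the configuration in \Cref{fig:pLUE:pointlocations} is that of \Cref{fig:LUE:pointlocations}, since $p_8$ approaches $p_7$ along the direction $g/(1/f)\to0$. Setting $\gamma=0$ in \Cref{lem:rootvarspLUE} then yields the claim. The direct residue computation of $a_0$ and $a_1$ above serves as a consistency check.
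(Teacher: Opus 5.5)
Your argument is correct. For $a_0,a_1$ it is essentially the paper's own computation: $\alpha_0=(\h_g-\L_7)-\L_2$ is exactly the representative the paper singles out to avoid $\L_7$, which is no longer irreducible.

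You depart from the paper for $a_2$ and $a_3$. The paper stays on the degenerate surface and computes $\chi(\alpha_0),\chi(\alpha_1),\chi(\alpha_3)$ by residues, using representatives unaffected by the merging of $p_8$ onto $L_7$. You instead get $a_2=-a_1$ from $\alpha_2=(\L_7-\L_8)-\alpha_1$ and $\chi(\L_7-\L_8)=0$, which nicely exhibits the constraint $a_1+a_2=0$ as a consequence of the nodal curve. You then get $a_3$ (equivalently $\chi(\delta)=1$, which is not automatic for a given $\omega$) by deforming to $\gamma\neq0$ and quoting \Cref{lem:rootvarspLUE}.

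That deformation is legitimate, but the justification is more than ``continuity in the configuration''. What makes it work is that $p_8\to p_7$ along $D_1$, away from the nodes of $D$. Blowing up the corresponding sections over the $\gamma$-line therefore gives a smooth family of pairs $(Y_\gamma,D_\gamma)$ with constant configuration of $D$; the new nodal curve lies in $Y_0\setminus D_0$. The periods of the fixed form $\frac{df\wedge dg}{f}$ then vary holomorphically in $\gamma$.

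So your route trades the residue computation at the end of the chain $p_3\to p_6$ for a deformation argument, while the paper's route is self-contained on the degenerate surface. That residue computation is in fact short. The $(-1)$-curve $g=\frac{n+(s-n)f}{(f-1)^2}$ has class $2\h_f+\h_g-\L_1-\L_3-\L_4-\L_5-\L_7$: it passes through $p_7$ but not $p_8$, since $fg\to s-n$. It meets $D$ only on $D_5$, at $v_5=\frac{n-s}{s^2}$. In the same sign convention that gives $a_1=-n$, one has $\operatorname{res}_{D_5}\omega=s^2\,dv_5$, and $L_6\cap D_5$ lies at $v_5=\frac{\alpha-s+2n-1}{s^2}$. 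This gives $a_3=1-\alpha-n$ directly, making the normalisation an output rather than an input.

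Two slips. First, drop the garbled first decomposition of $\alpha_2$ and keep only $\alpha_2=(\L_7-\L_8)-\alpha_1$. Second, the limiting direction of $p_8\to p_7$ is $\frac{1/f}{g}=U_7\to0$, not $\frac{g}{1/f}\to0$. The latter would put $p_8$ on the strict transform of $g=0$ and make $\h_g-\L_7$ reducible, which would spoil your representatives for $a_0$ and $a_1$.
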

\begin{proof}
    Most of the proof is a standard computation, but we give some details related to the presence of the nodal curve.
    The computations in \Cref{lem:rootvarspLUE} of the values of the period map $\chi$ on $\alpha_0,\alpha_1$ and $\alpha_3$ can be done in a way such that they are not affected by the merging of the point $p_8$ onto the intersection of $L_7$ with the strict transform of the curve on $\p^1 \times \p^1$ defined by $f = \infty$. 
    However, certain choices of how to express a simple root as a difference of classes of $(-1)$-curves become subtle now that $\L_7$ does not correspond to an irreducible/prime divisor any more, but rather can be written as $\L_7=(\L_7-\L_8)+\L_8$.
    For example, in computing the root variable $a_0=\chi(\alpha_0)$, for which we recall that $\alpha_0=\h_g-\L_2-\L_7$, in the generic case this could be done by expressing $\alpha_0$ as the difference of classes of two irreducible curves of self-intersection $-1$, using first $\h_g-\L_2$, which is the strict transform of the line of constant $g$ through $p_2$, and second $\L_7$.
    When the nodal curve $\L_7-\L_8$ is present, this choice of expression is no longer possible since $\L_7$ is not an irreducible curve.
    However, one can still choose a different expression as a difference of exceptional curves, e.g. $\alpha_0=(\h_g-\L_7)-(\L_2)$ using the strict transform of the line $g=0$ and the exceptional divisor $\L_2$, and the computation proceeds as usual. 
\end{proof}

In order to obtain the change of variables to the special case of the KNY discrete Painlev\'e equation \eqref{eq-KNY-dP-S3} with parameter constraint $a_1+a_2=0$, we need to construct the subfamily of surfaces $X_{\boldsymbol{a},t}$ with this constraint on root variables.
Fortunately, this is straightforward given that we have identified in \Cref{lem:KNY-to-LUE} that the nodal curve should correspond to 
$$\alpha_1+\alpha_2 = \h_q+\h_p - \E_3-\E_4-\E_5-\E_6.$$
The subfamily of surfaces should be obtained through the same sequence of blowups as in \Cref{fig:LUE:pointlocations}, but with $b_3,b_4,b_5,b_6$ lying on a curve in $\p^1 \times \p^1$ of bidegree (1,1), or more precisely that there should exist a curve of bidegree (1,1) whose proper transform on $X_{\boldsymbol{a},t}$ has class $\alpha_1+\alpha_2$.
The condition for this to happen is precisely that $\chi(\alpha_1+\alpha_2)=a_1+a_2=0$, which is as expected according to \Cref{prop:nodalcurvesrootvars}.

\begin{proposition}\label{prop:coords-KNY-LUE} The following change of variables and parameter matching simultaneously identifies the discrete and differential systems \eqref{eq-LUE-disc-body} and \eqref{eq-LUE-diff} from the Laguerre weight with the KNY discrete Painlev\'e equation \eqref{eq-KNY-dP-S3} and KNY Hamiltonian form \eqref{eq-KNY-Ham5-sys} of $\pain{V}$ respectively, in the special case $a_1 + a_2 = 0$: 
    \begin{equation}\label{eq:KNYtoLUE}
   	 \left\{
		\begin{aligned}
   	 	q&= \frac{(f-1)(f g - g + n)}{s f},\\
   		p&= \frac{s (g-n)}{f g- g + n},  \\
	   	 \end{aligned}
		 \qquad
		 \begin{aligned}
		 a_{0}&= \alpha+ n,		&\quad &a_{1}= -n,\\  
		a_{2}&= n ,   		&\quad &a_{3}= 1 - \alpha - n, 
		\end{aligned}
	\qquad t=s.
	\right.
    \end{equation}
\end{proposition}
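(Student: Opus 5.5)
The quickest route is to note that the statement is the $\gamma=0$ specialisation of \Cref{prop:coords-KNY-pLUE}. The change of variables $(f,g)\mapsto(q,p)$ and the identification $t=s$ do not involve $\gamma$. The parameter matching reduces to $a_0=\alpha+n$, $a_1=-n$, $a_2=n$, $a_3=1-\alpha-n$ when $\gamma=0$, and these values satisfy $a_1+a_2=0$. The systems \eqref{eq-LUE-disc-body} and \eqref{eq-LUE-diff} are likewise exactly the $\gamma=0$ cases of \eqref{eq-pLUE-disc-body} and \eqref{eq-pLUE-diff}. If the matching for the perturbed Laguerre case has been verified as an identity of rational functions in $(f,g)$ with coefficients polynomial in $\gamma$, setting $\gamma=0$ gives the claim. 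The only thing to check is that no denominator vanishes identically at $\gamma=0$, so that the specialisation is legitimate.

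To make this concrete, I would redo the direct verification in the constrained setting. For the discrete part, I would substitute $q,p$ in terms of $(f,g)$ into \eqref{eq-KNY-dP-S3} with $a_1+a_2=0$. I would then check that $\bar q$ computed from $(f,g)$ equals $q$ evaluated at $(\bar f,\bar g)$ with $n\mapsto n+1$, where $\bar f$ is given by the first equation of \eqref{eq-LUE-disc-body}. Similarly, $\ubar p$ should agree with $p$ at $(f,\ubar g)$ with $n\mapsto n-1$, using the second equation. This matches the mixed forward/backward form of both systems. The parameter evolution is consistent: $n\mapsto n+1$ gives $a_0\mapsto a_0+1$, $a_1\mapsto a_1-1$, $a_2\mapsto a_2+1$, $a_3\mapsto a_3-1$, and the constraint $a_1+a_2=0$ is preserved, as it must be, since $T_{\operatorname{KNY}}$ fixes $\alpha_1+\alpha_2$ by \eqref{eq:translationonroots-KNY}. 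For the differential part, I would compute $dq/ds$ and $dp/ds$ by the chain rule using \eqref{eq-LUE-diff}, and compare them with \eqref{eq-KNY-Ham5-sys} at $t=s$.

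Conceptually, the map should also lift to an isomorphism $Y_{\boldsymbol{b},s}\to X_{\boldsymbol{a},t}$ realising \Cref{lem:KNY-to-LUE}. One can check directly that the point data match. For instance, $q\to\infty$, $p\to 0$ along $g=n$ should correspond to $b_3,b_4$, while $(f,g)=(\infty,0)$ together with the infinitely near point $p_8$ should map to the configuration where $b_3,\dots,b_6$ lie on a $(1,1)$-curve, with $L_7-L_8$ going to that curve. This is a consistency check rather than a necessary step, since the algebraic verification suffices.

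The main obstacle is the rational-function verification of the discrete equation, especially the backward half involving $\ubar g$. Clearing denominators there is tedious and benefits from computer algebra. I would first factor $fg-g+n$ and $(f-1)^2$ to simplify. A subtler point is genericity: at $\gamma=0$ a nodal curve appears, but the change of variables is still birational and nondegenerate for generic $f,g$, so the degeneration causes no trouble.
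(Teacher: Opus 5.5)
Your strategy matches the paper's. The proposition is a verification statement, and the paper's proof is likewise ``a standard computation along the lines of \Cref{prop:coords-KNY-pLUE}'', i.e.\ direct calculation. The constrained subfamily $a_1+a_2=0$ is needed there only to \emph{derive} the map as an isomorphism of surfaces. Your $\gamma=0$ specialisation of \Cref{prop:coords-KNY-pLUE} is a legitimate shortcut: the change of variables and $t=s$ are $\gamma$-free, and none of the denominators $sf$, $(f-1)g+n$, $g-n$, $g(g+\gamma)$, $(f-1)^2$ vanishes identically at $\gamma=0$. However, the direct check as you describe it would not close, for two reasons.

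First, the mixed forms do not pair up the way you say. Computing $q(\bar f,\bar g;n+1)$ needs $\bar g$, hence both equations of \eqref{eq-LUE-disc-body}. Also, $\ubar{p}$ must be compared with $p(\ubar{f},\ubar{g};n-1)$, where $\ubar{f}=(\ubar{g}-n+1)(\ubar{g}-\alpha-n+1)/(f\ubar{g}^2)$, and not with $p(f,\ubar{g};n-1)$; since $p$ depends on $f$, your second check as literally stated fails. A more efficient organisation is as follows. With $W=(f-1)g+n$ one has $p+t=sfg/W$. The $f$-equation $\bar f f g^2=(g-n)(g-\alpha-n)$ alone then collapses the KNY $q$-equation to $\bar q=1+g(\bar f-1)/p$. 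Setting this equal to $q(\bar f,\bar g;n+1)$ is equivalent to the $g$-equation at level $n+1$.

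Second, that computation shows the $g$-equation must read
\[
\ubar{g}+g=\frac{(1-\alpha+s-2n)f+\alpha+2n-1}{(f-1)^2},
\]
not with constant $\alpha+2n+1$ as printed in \eqref{eq-LUE-disc-body} (the same misprint is in \eqref{eq-pLUE-disc-body}). With the printed version the check genuinely fails. For $\alpha=n=1$, $s=2$, $(f,g)=(2,3)$, the KNY map gives $\bar q=-5/3$, while $(\bar f,\bar g)=(1/9,69/16)$ gives $q=22/3$. With the corrected constant one gets $\bar g=57/32$ and $q=-5/3$, and the $p$-equation also holds.

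The corrected form is the one consistent with the paper's own blow-up data. For the corrected system, the inverse map contracts the line $\bar f=1$ to the point over $p_3$ with $v_4=1/s$, $v_5=(\alpha-s+2n-1)/s^2$. This is exactly $p_6$ in \Cref{fig:LUE:pointlocations}, matching $\Psi(\L_6)=\h_f-\L_3$. For the printed system this point has $v_4=1/(s+2)$, off the blow-up locus. So your specialisation argument is only valid for the corrected systems; the differential part is unaffected.

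A minor slip in your geometric sanity check: along $g=n$ one has $p\equiv0$ and $q=n(f-1)/s$, so this is the line $p=0$. It is the line $f=\infty$ that goes to $E_3$.
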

\begin{proof}
    This is a standard computation along the same lines as \Cref{prop:coords-KNY-pLUE}, with the only extra consideration necessary being to work with the family of surfaces $X_{\boldsymbol{a},t}$ with constraint $a_1+a_2=0$ on root variable parameters.
\end{proof}

\subsubsection{Symmetry type}

As anticipated in \Cref{subsec:paramconstraintsandobstructionstosymmetries}, the nodal curve and associated parameter constraint mean that the symmetry type in this case, in the sense of the symmetry group of the family of surfaces, will be a subgroup of the generic one $\widehat{W}(A_3^{(1)})$.

\begin{proposition} \label{prop:symmetry-type-LUE}
    The subgroup of $\widehat{W}(A_3^{(1)})=W(A_3^{(1)})\rtimes \Aut(A_3^{(1)})$ whose Cremona action is compatible with the parameter constraint $a_1+a_2=0$ on root variables $\boldsymbol{a}=(a_0,a_1,a_2,a_3)\in \mathscr{A}$ is 
    $$
    \left< s_0, s_1, w_0, w_1, \tau \right> \cong \left(W(A_1^{(1)})\times W(A_1^{(1)})\right)\rtimes \Z/2\Z \cong  W( (A_1 + \underset{|\alpha|^2=4}{A_1})^{(1)} ) \rtimes \Z / 2\Z,$$
    where the generators are given in terms of the simple reflections $w_i$ and $A_3^{(1)}$ Dynkin diagram automorphisms by
    \begin{equation} \label{eq:stabilisergeneratorsLUE}
        \begin{aligned}
            &s_0 = w_0w_1w_0, &&s_1= w_3 w_2 w_3, && \langle s_0,s_1\rangle\cong W(A_1^{(1)}), &&\tau s_0 = s_1 \tau,
            \\
            &s'_0 = \sigma_2 w_2 w_0, &&s'_1= \sigma_1\sigma_2\sigma_1 w_3 w_1, && \langle s'_0,s'_1\rangle\cong W({A_1^{(1)}}), &&\tau s_0' = s_1' \tau,
        \end{aligned}
         \quad 
        \quad \tau = \sigma_1.
    \end{equation}
\end{proposition}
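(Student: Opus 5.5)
We would identify the subgroup as the setwise stabiliser of the nodal root. By \Cref{prop:nodalcurvesrootvars} and Sakai's Theorem 26, the Cremona isometries compatible with the constraint $a_1+a_2=0$, i.e.\ with $\Delta^{\operatorname{nod}}=\{\alpha_1+\alpha_2\}$ for generic remaining parameters, form the stabiliser of the set $\{\alpha_1+\alpha_2\}$ in $\widehat{W}(A_3^{(1)})$. Since the whole extended group preserves $\delta$, it is natural to work with $\beta:=\alpha_1+\alpha_2$ and its companion $\beta':=\alpha_0+\alpha_3=\delta-\beta$. These are both real roots of squared length $4$ in the $A_3^{(1)}$ sense, namely sums of two adjacent simple roots. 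The subgroup $G$ we want consists of those $w$ with $w(\beta)=\beta$. An element sending $\beta\mapsto -\beta$ would change the sign of the constraint, and $-\beta$ is not effective, so such elements are excluded.

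\textbf{Step 1: the listed generators lie in $G$.} This is a direct check on the simple roots using the reflection formula and the permutations $\sigma_1,\sigma_2$.
\begin{itemize}
\item $s_0=w_0w_1w_0=r_{\alpha_0+\alpha_1}$ and $s_1=w_3w_2w_3=r_{\alpha_2+\alpha_3}$. Both $\alpha_0+\alpha_1$ and $\alpha_2+\alpha_3$ are orthogonal to $\beta$, since $(\alpha_0+\alpha_1)\bullet(\alpha_1+\alpha_2)=1-2+1=0$.
\item $s'_0=\sigma_2w_2w_0$ and $s'_1=\sigma_1\sigma_2\sigma_1w_3w_1$ are compositions of commuting reflections in $\alpha_0,\alpha_2$ (respectively $\alpha_1,\alpha_3$) with the diagram automorphisms swapping $\alpha_0\leftrightarrow\alpha_2$ (respectively $\alpha_1\leftrightarrow\alpha_3$). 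Tracking $\beta$ under these shows it is fixed.
\item $\tau=\sigma_1$ swaps $\alpha_1\leftrightarrow\alpha_2$, so it fixes $\beta$.
\end{itemize}
From these computations we also verify the relations. Since $(\alpha_0+\alpha_1)+(\alpha_2+\alpha_3)=\delta$, the group $\langle s_0,s_1\rangle$ is a $W(A_1^{(1)})$ with roots of squared length $4$. Similarly $\langle s_0',s_1'\rangle\cong W(A_1^{(1)})$, acting with ordinary length roots on the sublattice orthogonal to $\alpha_0+\alpha_1$ and $\alpha_2+\alpha_3$. The two copies commute, and conjugation by $\tau$ swaps the generators within each copy, as stated.

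\textbf{Step 2: $G$ is no bigger.} This is the main obstacle. We would use the structure $\widehat{W}(A_3^{(1)})\cong (W(A_3)\ltimes T_{\nought{P}})\rtimes\langle\sigma_2\rangle$, together with the fact that the action on $Q^{\perp}/\Z\delta$ factors through the finite group $W(A_3)\rtimes \Aut(A_3)\cong S_4\times\Z/2$. The argument has three parts.
\begin{itemize}
\item Write an element as $w=T_h\bar w$. Because $T_h(\beta)=\beta-(\beta\bullet h)\delta$, the condition $w(\beta)=\beta$ forces $\bar w$ to fix the image $\bar\beta$ of $\beta$ modulo $\delta$, and then $h$ to satisfy one linear condition.
\item The stabiliser of a root in $S_4\times\Z/2$ (with $\bar\beta=e_1-e_3$ in the $\epsilon$-model of $A_3$) has order $2\cdot2\cdot2=8$, including $-1$-type elements composed with the sign-flip automorphism. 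The translation part is the sublattice $\{h\in\nought{P}:h\bullet\beta=0\}$, which has rank $2$.
\item Compare with the subgroup generated in Step 1. It contains translations $s_0s_1$ and $s_0's_1'$ whose weights span that rank-$2$ lattice; this needs checking, and possibly an index-$2$ argument involving $\tau$. Its finite image also has order $8$. Hence the two groups coincide.
\end{itemize}
If the lattice index comparison is delicate, we would instead verify directly that every element of $\widehat W$ fixing $\beta$ permutes the set $\{\pm(\alpha_0+\alpha_1),\pm(\alpha_2+\alpha_3)\}+\Z\delta$ of roots orthogonal to $\beta$. From this we would reduce, by multiplying by $s_0,s_1,s_0',s_1',\tau$, to an element fixing all simple data, which must then be the identity.

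\textbf{Step 3: the stated isomorphism type.} The product $\langle s_0,s_1\rangle\times\langle s_0',s_1'\rangle$ is normal in $G$ and has index $2$, with complement $\langle\tau\rangle$. This gives $(W(A_1^{(1)})\times W(A_1^{(1)}))\rtimes\Z/2\Z$. Recording that one factor is generated by reflections in roots of squared length $4$ yields the notation $W((A_1+\underset{|\alpha|^2=4}{A_1})^{(1)})$.
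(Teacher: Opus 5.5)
Your strategy is the paper's: write $w=T_h\bar w\sigma$, force the finite part to fix $\bar\beta$, then impose $h\bullet\beta=0$. The membership checks in Step 1 are fine. But Step 2, which is the entire content of the proposition, does not close as written, because both of its counts are wrong.

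\textbf{The finite part.} The image of $\widehat{W}(A_3^{(1)})$ acting on $Q^{\perp}/\Z\delta$ is $S_4\times\{\pm1\}$, of order $48$. Already $S_4$ is transitive on the $12$ roots of $A_3$, so the stabiliser of $\bar\beta$ has order $4$, not $8$. With $\alpha_1=e_1-e_2$, $\alpha_2=e_2-e_3$, $\alpha_3=e_3-e_4$ this stabiliser is $\{1,(24),-(13),-(13)(24)\}$. The generated group already surjects onto it: $s_0,s_1\mapsto(24)$, $s_0',s_1'\mapsto-(13)(24)$ and $\tau\mapsto-(13)$. So $\tau$ contributes nothing to the finite image, and the index-$2$ issue you anticipated lives entirely in the translation part.

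\textbf{The translation part.} There $G\cap T_{\nought{P}}=\{T_h: h\in L\}$ with $L=\{c_1h_1+c_2h_2+c_3h_3 : c_1+c_2=0\}=\Z(h_1-h_2)\oplus\Z h_3$. However, $s_0s_1=T_{h_1-h_2-h_3}$ and $s_0's_1'=T_{\operatorname{KNY}}^{-1}$ span only the index-$2$ sublattice $\{a(h_1-h_2)+bh_3 : a\equiv b \bmod 2\}$. In fact the finite images of the two $W(A_1^{(1)})$ factors meet trivially, so the translations in $\langle s_0,s_1\rangle\times\langle s_0',s_1'\rangle$ are exactly $\langle s_0s_1\rangle\times\langle s_0's_1'\rangle$. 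Your assertion that these two weights span $L$ is therefore false. The missing element is $T_{h_3}=T_{\operatorname{Sak}}$, which fixes $\alpha_1$ and $\alpha_2$ and so lies in $G$.

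\textbf{The repair.} Use the identity $s_0s_0'\tau=T_{\operatorname{Sak}}^{-1}$: this element fixes $\alpha_1,\alpha_2$ and sends $\alpha_3\mapsto\alpha_3-\delta$. Then:
\begin{itemize}
\item $\{T_h:h\in L\}$ is generated by $s_0s_1$, $s_0's_1'$ and $s_0s_0'\tau$;
\item every $g\in G$ becomes a translation after multiplication by one of $1,s_0,\tau,s_0'$;
\item hence $G=\langle s_0,s_1,s_0',s_1',\tau\rangle$.
\end{itemize}
The same identity shows $\tau\notin\langle s_0,s_1\rangle\times\langle s_0',s_1'\rangle$, which your Step 3 needs for the index-$2$ semidirect product structure.

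\textbf{Root lengths.} These are also off. We have $\beta\bullet\beta=-2$, since it is the class of a $(-2)$-curve, and the same holds for $\alpha_0+\alpha_1$ and $\alpha_2+\alpha_3$. The vectors of squared length $4$ are $\alpha_0+\alpha_2$ and $\alpha_1+\alpha_3$, and $s_0'$, $s_1'$ act on $Q^{\perp}$ as the reflections in these. So the factor of type $\underset{|\alpha|^2=4}{A_1^{(1)}}$ is $\langle s_0',s_1'\rangle$, not $\langle s_0,s_1\rangle$, and your Step 3 labelling must be swapped.

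Finally, $w(\beta)=-\beta$ is excluded because $-\beta$ is not effective; the constraint $a_1+a_2=0$ itself does not see the sign.
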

    We ascribe the type $(A_1 + \underset{|\alpha|^2=4}{A_1})^{(1)}$ to the Weyl group in \Cref{prop:symmetry-type-LUE} because it can be described in terms of a lattice in $\Pic(\X)$ which is, in a similar sense to some in the Sakai classification, a root lattice of type $(A_1 + \underset{|\alpha|^2=4}{A_1})^{(1)}$.
    To illustrate this, introduce the sublattice of $Q(A_3^{(1)})$ given by  $Q_1 + Q_2$, where
\begin{equation}
\begin{aligned}
Q_1 &= \Z \beta_0 + \Z \beta_1, 
&&
\begin{aligned}
\beta_0 &= \alpha_0 + \alpha_1 = \h_q + \h_p - \E_1 -\E_2 -\E_5 -\E_6, \\
\beta_1 &= \alpha_2 + \alpha_3 = \h_q + \h_p - \E_3 -\E_4 -\E_7 -\E_8, 
\end{aligned}
&&&\beta_0 + \beta_1 = \delta,
\\
Q_2 &= \Z \beta_0' + \Z \beta_1', 
&&
\begin{aligned}
\beta_0' &= \alpha_0 + \alpha_2 =  2 \h_p - \E_1 -\E_2 - \E_3 - \E_4, \\
\beta_1' &= \alpha_1 + \alpha_3 = 2 \h_q - \E_5 -\E_6 - \E_7 - \E_8, 
\end{aligned}
&&&\beta_0' + \beta'_1 = \delta,
\end{aligned}
\end{equation}
so $Q_1 + Q_2 = \Z \beta_1 \oplus \Z \gamma_1 \oplus \Z \delta,$ where $\oplus$ indicates orthogonal direct sum.
The action of the generators $s_0,s_1,s_0',s_1'$ on this lattice corresponds to the usual realisation of the affine Weyl group as reflections:
\begin{equation} \label{eq:sonbetas}
% \begin{gathered}
s_0  : 
\left\{ 
\begin{aligned}
\beta_0 &\mapsto - \beta_0, \\
\beta_1 &\mapsto \beta_1 + 2 \beta_0,  \\
\beta'_0 &\mapsto \beta'_0, \\
\beta'_1 &\mapsto \beta'_1,
\end{aligned}
\right. 
\quad
s_1  : 
\left\{ 
\begin{aligned}
\beta_0 &\mapsto \beta_0 + 2 \beta_1, \\
\beta_1 &\mapsto - \beta_1,   \\
\beta'_0 &\mapsto \beta'_0, \\
\beta'_1 &\mapsto \beta'_1,
\end{aligned}
\right. 
\quad
s_0'   : 
\left\{ 
\begin{aligned}
\beta_0 &\mapsto \beta_0, \\
\beta_1 &\mapsto \beta_1,   \\
\beta'_0 &\mapsto - \beta'_0,\\
\beta'_1 &\mapsto \beta'_1+2\beta'_0,
\end{aligned}
\right. 
\quad
s_1' : 
\left\{ 
\begin{aligned}
\beta_0 &\mapsto \beta_0, \\
\beta_1 &\mapsto \beta_1,  \\
\beta'_0 &\mapsto \beta'_0 + 2 \beta'_1,\\
\beta'_1 &\mapsto -\beta'_1.
\end{aligned}
\right. 
% \end{gathered}
\end{equation}
Further, the element $\tau$ acts by simultaneously permuting the two pairs $\{\beta_0,\beta_1\}$ and $\{\beta_0',\beta_1'\}$, in cycle notation $\tau=(\beta_0\,\beta_1)(\beta_0' \, \beta_1')$.
Note that the intersection matrix of the elements $\left\{ \beta_0, \beta_1, \gamma_0, \gamma_1 \right\}$ is 
\begin{equation}
\left(\begin{array}{cccc}-2 & 2 & 0 & 0 \\2 & -2 & 0 & 0 \\0 & 0 & -4 & 4 \\0 & 0 & 4 & -4\end{array}\right),
\end{equation}
which is (the multiple by $-1$ of) the direct sum of generalised Cartan matrices associated with root systems $A_1^{(1)}$ and $\underset{|\alpha|^2=4}{A_1^{(1)}}$, as in \cite[Sec. 8.2.23]{KajNouYam:2017:GAPE} and \cite[Sec. 8.2.20]{KajNouYam:2017:GAPE} respectively.
This leads us to ascribe the type $(A_1+\underset{|\alpha|^2=4}{A_1})^{(1)}$ to this `root lattice', and to include this in the description of the symmetry type of the discrete Painlev\'e equation in this section.
The extension by $\tau$ corresponds not to the full extension by automorphisms of the pair of Dynkin diagrams associated to this root system, but rather a single automorphism that acts simultaneously on both; see \Cref{fig:dynkindiagramA1A1}.
\begin{figure}[htb]
    \centering
    $$
    \begin{tikzpicture}[
					elt/.style={circle,draw=black!100,thick, inner sep=0pt,minimum size=1.3ex},scale=0.8]
				\begin{scope}[xshift=0cm,yshift=0cm]
				\path 	(0,0) 	node 	(b0) [elt, label={[xshift=0pt, yshift = -25 pt] $\beta_{0}$} ] {}
				        (1.4,0) node 	(b1) [elt, label={[xshift=0pt, yshift = -25 pt] $\beta_{1}$} ] {};
				\draw [black,thick, double distance = .4ex] (b0) -- (b1);
                \draw [purple, dashed, <->, line width=0.5pt] (b0.north) to[out=60, in=120] node[midway, above] {$\tau$} (b1.north) ;
				% \draw [purple, dashed, line width = 0.5pt] (0.5,-0.3) -- (0.5,0.3)	node [pos=1,above] {\small $\sigma_{1}\sigma_{2}\sigma_{1}\sigma_{2}$};
				\end{scope}
				\begin{scope}[xshift=3cm,yshift=0cm]
				\path 	(0,0) 	node 	(b0) [elt, label={[xshift=0pt, yshift = -25 pt] $\beta'_{0}$} ] {}
				        (1.8,0) node 	(b1) [elt, label={[xshift=0pt, yshift = -25 pt] $\beta'_{1}$} ] {};
				\draw [black,thick, double distance = .4ex] (b0) -- (b1);
                \draw [purple, dashed, <->, line width=0.5pt] (b0.north) to[out=60, in=120] node[midway, above] {$\tau$} (b1.north) ;
				% \draw [purple, dashed, line width = 0.5pt] (1,-0.3) -- (1,0.3)	node [pos=1,above] {\small $\sigma_{1}$};
				\end{scope}
			\end{tikzpicture}
            $$
    \caption{Dynkin diagram of root lattice of type $(A_1+\underset{|\alpha|^2=4}{A_1})^{(1)}$ and automorphism $\tau$.}
    \label{fig:dynkindiagramA1A1}
\end{figure}
\begin{proof}[Proof of \Cref{prop:symmetry-type-LUE}]
    In light of the results reviewed in \Cref{subsec:paramconstraintsandobstructionstosymmetries}, the subgroup we want to describe is the stabiliser in $\widehat{W}(A_3^{(1)})$ of the set of classes of nodal curves, which in this case is 
    \begin{equation} \label{eq:stab:laguerre}
        \left\{ w \in \widehat{W}(A_3^{(1)}) ~|~ w(\alpha_1+\alpha_2) = \alpha_1+\alpha_2 \right\}.
    \end{equation}
    The proof is done via a similar computation to that in \cite{dP2symmetry}, in which the following (setwise) stabiliser of a subset of the symmetry root lattice for Sakai surfaces of type $D_5^{(1)}$ was considered:
    \begin{equation}\label{eq:stab:dP2}
        \left\{ w \in \widehat{W}(A_3^{(1)}) ~|~ w\left( {\{ \alpha_0+\alpha_1,\alpha_2+\alpha_3\}}\right) = {\{ \alpha_0+\alpha_1,\alpha_2+\alpha_3\}} \right\}.
    \end{equation}
    In \cite{dP2symmetry}, the subgroup in \cref{eq:stab:dP2} was shown to be isomorphic to the direct product of two copies of $W(A_1^{(1)})\rtimes\Aut(A_1^{(1)})$, i.e. the extension of $W((A_1+\underset{|\alpha|^2=4}{A_1})^{(1)})$ by the whole group $\Z/2\Z\times\Z/2\Z$ of Dynkin diagram automorphisms (that respect root lengths).

    A set of generators for the stabiliser in \cref{eq:stab:dP2} was computed in \cite{dP2symmetry} in two ways, first by direct computation and second using tools from the normalizer theory of parabolic subgroups of Coxeter groups as surveyed in \cite{yangtranslations}.
    To keep the present paper as self-contained as possible, we give a brief account of how the former approach can be applied to the problem at hand of computing the stabiliser in \cref{eq:stab:laguerre}.

    There is a decomposition of $\widehat{W}(A_3^{(1)})$ into finite and translation parts, and we can write any element of $\widehat{W}(A_3^{(1)})$ as 
    $$T_h \,w \,\sigma, \quad h\in \nought{P}, \quad w \in W(A_3),\quad \sigma \in \{1, \sigma_1\}$$
    Then the problem of computing \eqref{eq:stab:laguerre} is reduced by the translation formula \eqref{eq:translationformula} to computing directly, for each of the finitely many $w \sigma$, the set of translations that send $w \sigma(\alpha_1+\alpha_2)$ back to $\alpha_1+\alpha_2$, i.e. 
    $$ \left\{ T_h ~|~ T_h w \sigma(\alpha_1+\alpha_2)=\alpha_1+\alpha_2\right\} = \left\{ T_h ~|~ w \sigma(\alpha_1+\alpha_2)- \left( w \sigma(\alpha_1+\alpha_2)\bullet h \right) \delta = \alpha_1+\alpha_2\right\}.$$
Doing so leads to a set of generators of the stabiliser, which can be reduced straighforwardly to that in \cref{eq:stabilisergeneratorsLUE}. 
The difference between the structure of the subgroup \eqref{eq:stab:dP2} computed in \cite{dP2symmetry} and the that of the subgroup \eqref{eq:stab:laguerre} is that the former included the full group $\Z/2\Z\times\Z/2\Z$ of Dynkin diagram automorphisms of $(A_1+\underset{|\alpha|^2=4}{A_1})^{(1)}$, whereas the latter only allows the single nontrivial automorphism $\tau$ that acts simultaneously on the two subdiagrams, as shown in \Cref{fig:dynkindiagramA1A1}.

    %     To use this to prove the theorem, note that the subgroup \eqref{eq:stab:dP2} is conjugate by, for example, $w_1 w_3$ to
    %     \begin{equation}\label{eq:stab:LUE}
    % \left\{ w \in \widehat{W}(A_3^{(1)}) ~|~ w\left( {\{ \alpha_1+\alpha_2,\alpha_0+\alpha_3\}}\right) = {\{ \alpha_1+\alpha_2,\alpha_0+\alpha_3\}} \right\},
    % \end{equation}
    % and the generators in \eqref{eq:stabilisergeneratorsLUE} can be obtained from those in \cite[Proof of Th. 1]{dP2symmetry} via such a conjugation.
    
    % So after conjugating this description in terms of generators, we have a similar description of the subgroup in \cref{eq:stab:LUE}.
    % Using this, it is straightforward to verify that restricting to the subgroup of this that fixes pointwise, rather than permutes, the subset ${\{ \alpha_1+\alpha_2,\alpha_0+\alpha_3\}}$ is described in terms of generators in \cref{eq:stabilisergeneratorsLUE} as claimed. 
\end{proof}

\subsection{Generalised semi-classical Meixner weight on $\Z_{\geq 0}$} % (fold)
\label{sec:meixner}

We next consider the discrete system \eqref{eq-Meixner-disc} and establish \Cref{mainthm:genMeixner}.
We further omit proofs when they are standard computations.
\begin{notation*}
For the discrete and differential systems from the generalised Meixner weight \eqref{weight-gM} and associated $D_5^{(1)}$ surfaces we use the following notation: coordinates $(x,y)$; parameters $\beta, \gamma, n$; continuous independent variable $c$; centres of blowups $z_{i}$; exceptional divisors $M_{i}$.	
For the birational maps on $\p^1 \times \p^1$ we will use $\phi_{\boldsymbol{b},c}$, for surfaces $Z_{\boldsymbol{b},c}$, and for isomorphisms $\tilde{\phi}_{\boldsymbol{b},c}$.
\end{notation*}

\subsubsection{The discrete and differential systems}

Recall from \cite{SmetVanAssche,FilipukVanAssche} the discrete and differential systems satisfied by $x=x_n(c)$ and $y=y_n(c)$, which are 
\begin{equation}\label{eq-Meixner-disc-body}
	\left\{
		(\bar{x} + y)(x + y) = \frac{(\gamma-1)}{c^2} y (y-c) \left(y- c \frac{\gamma-\beta}{\gamma-1}\right)  ,\quad
		(x + \ubar{y})(x + y) = \frac{x(x+c)}{x- \tfrac{cn}{\gamma-1}} \left(x + c \frac{\gamma-\beta}{\gamma-1}\right),
	\right.
\end{equation}
and 
\begin{equation}\label{eq-Meixner-diff}
\left\{
	\begin{aligned}
		% c^2 \frac{dx}{dc} &= c (n x+n y+x)-(\gamma -1) x (x+y) + \frac{x (c+x) (c (\gamma -\beta )+(\gamma -1) x)}{x+y}  ,\\
		\frac{dx}{dc} &=  \frac{(n+1)x+n y}{c}-\frac{\gamma -1}{c^2} x (x+y) + \frac{x (x+c) (c (\gamma -\beta )+(\gamma -1) x)}{c^2(x+y)}  ,\\
		 \frac{dy}{dc} &= -  x +  \frac{y}{c} + y \left( -1 +\frac{(\beta -1) (y-c)}{c (x+y)} + \frac{(\gamma-1)(y-c)^2}{c^2(x+y)}  \right)
         % -\frac{(\beta -1) y (c-y)}{c (x+y)}
	\end{aligned}
	\right.
\end{equation}
respectively. 
We assume $\gamma\neq1$ and in this section that $\beta\neq \gamma$.
Note that the differential system \eqref{eq-Meixner-diff} is derived by combining the discrete system \eqref{eq-Meixner-disc-body} with the Toda type differential-difference system
\begin{equation}
    \frac{d}{dc} a_n^2 = \frac{a_n^2}{c} (b_n - b_{n-1}),\quad \frac{db_n}{dc} =\frac{1}{c}(a_{n+1}^2 - a_n^2), 
\end{equation}
derived in \cite{BoelenFilipukVanAssche}, in which $a_n=a_n(c), b_n=b_n(c)$ are related to $x_n(c),y_n(c)$ according to 
$$a_n^2 = n c - (\gamma-1)x_n, \quad b_n = n+\gamma - \beta +c - \frac{(\gamma-1)}{c} y_n.$$
In the same way as in previous sections, we consider the discrete system \eqref{eq-Meixner-disc-body} as a family of birational maps
\begin{equation} \label{eq-Meixner-birationalmap}
\begin{aligned}
    \phi_{\boldsymbol{b},c} : \p^1 \times \p^1 &\dashrightarrow \p^1 \times \p^1,\\
    (x,y)&\mapsto (\bar{x},\bar{y}),
\end{aligned}
\end{equation}
with parameter space $\mathscr{B}\ni \boldsymbol{b}=(\beta,\gamma,n)$, and evolution of parameters $\boldsymbol{b}\mapsto\bar{\boldsymbol{b}}$ given by $n\mapsto n+1$.
The independent variable space for the differential system \eqref{eq-Meixner-diff} is $\mathscr{S}=\C\setminus\{0\}\ni c$.

\subsubsection{Surfaces}

We again follow the identification procedure and construct the surfaces on which the discrete system \eqref{eq-Meixner-disc-body} (and the differential system \eqref{eq-Meixner-diff}) is regularised. 

\begin{lemma}
    Through the sequence of eight blowups of points $z_1,\dots,z_8$ given in \Cref{fig:Meixner:pointlocations}, we obtain a rational surface $Z_{\boldsymbol{b},c} = \Bl_{z_1 \cdots z_8}(\p^1 \times \p^1)$ such that the birational map \eqref{eq-Meixner-birationalmap} defined by the discrete system \eqref{eq-Meixner-disc-body} becomes an isomorphism
    $$ \tilde{\phi}_{\boldsymbol{b},s} : Z_{\boldsymbol{b},s} \longrightarrow Z_{\bar{\boldsymbol{b}},s}.$$
\end{lemma}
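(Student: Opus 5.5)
The plan is to split the map $\phi_{\boldsymbol{b},c}$ into two elementary half-steps, find and resolve the base points of each half-step and of its inverse, and check that the union of all of them is a set of eight points $z_1,\dots,z_8$ on which the full map lifts to an isomorphism.

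The first half-step is
$$(x,y)\mapsto (\bar{x},y),\qquad \bar{x} = -y + \frac{(\gamma-1)\,y(y-c)(y-c\kappa)}{c^2(x+y)}, \qquad \kappa=\frac{\gamma-\beta}{\gamma-1}.$$
The second half-step is
$$(\bar{x},y)\mapsto(\bar{x},\bar{y}),$$
obtained from the second equation in \eqref{eq-Meixner-disc-body} shifted by $n\mapsto n+1$. It is linear-fractional in $\bar{y}$, with coefficients rational in $\bar{x}$ involving $\bar{x}(\bar{x}+c)(\bar{x}+c\kappa)$ and $\bar{x}-\tfrac{c(n+1)}{\gamma-1}$.

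Each half-step is a fibrewise Möbius map on $\p^1\times\p^1$ (in $x$ for fixed $y$, respectively in $y$ for fixed $\bar x$). So its indeterminacy points and contracted curves can be read off directly, as follows.

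For the first half-step, the indeterminacies in the affine chart are where the numerator and denominator of $\bar{x}+y$ vanish simultaneously. These are the three points on the line $x+y=0$ with $y\in\{0,c,c\kappa\}$, namely $(0,0)$, $(-c,c)$ and $(-c\kappa,c\kappa)$. The inverse half-step has its own indeterminacies, and the fibre $y=\infty$ is contracted. Checking the charts $(X,y)$, $(x,Y)$ and $(X,Y)$ with $X=1/x$, $Y=1/y$, I expect a cascade of infinitely near points over $(x,y)=(\infty,\infty)$, tangent to the direction of $x+y=0$. For the second half-step, the analogous points are on $x+y=0$ with $x\in\{0,-c,-c\kappa\}$ (with the coefficient $n$ in the pole $x=\tfrac{cn}{\gamma-1}$). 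Again there is a cascade at infinity.

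I will then check that the base points of the forward half-steps coincide with those of the inverse of the other half-steps, after the parameter shift $n\mapsto n+1$, so that the composite only needs eight points. The $n$-dependent data must enter so that $Z_{\boldsymbol{b},c}$ and $Z_{\bar{\boldsymbol b},c}$ carry the same combinatorial configuration. The expected outcome is as follows:
\begin{itemize}
\item several points on the anti-diagonal $x+y=0$, including $(0,0)$ and points at $c$-dependent positions;
\item a chain of infinitely near points over $(\infty,\infty)$, whose deepest centre depends on $n,\beta,\gamma,c$ through the Laurent expansion of $\bar{x}+y$ near infinity.
\end{itemize}
The total must be eight, since the anticanonical class must be $2\h_x+2\h_y-\sum \mathcal{M}_i$ and the $D_5^{(1)}$ configuration must appear.

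Finally, I will confirm that $\phi_{\boldsymbol{b},c}$ lifts to an isomorphism. After the blowups, every curve contracted by $\phi_{\boldsymbol{b},c}$ or by its inverse must be mapped onto an exceptional divisor, and every indeterminacy must be resolved. Equivalently, the pushforward must send the eight classes $\mathcal{M}_i$ to classes of $(-1)$-curves and preserve the intersection form and $-\K$. I will do this by tracking images of the exceptional curves and of the lines $x+y=0$, $x=\infty$, $y=\infty$ in local blowup charts, using the conventions of \Cref{not:blowupcharts}.

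The main obstacle is the cascade at $(\infty,\infty)$. Deciding how many infinitely near points are needed, and computing the parameter-dependent coordinate of the deepest one (which must agree with the corresponding point of the inverse map at $n+1$), requires careful series expansion in the charts $u_i,v_i$. I would first compute in the chart $X=1/x$, $Y=1/y$ with the substitution $X=Y(-1+\dots)$ to capture the anti-diagonal tangency, and iterate until the map becomes regular. As a cross-check, I would run the same desingularisation on the differential system \eqref{eq-Meixner-diff}, which should yield the same points.
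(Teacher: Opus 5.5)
Your strategy is the standard computation the paper appeals to, and it does work: split $\phi_{\boldsymbol{b},c}$ into the two half-steps, resolve the indeterminacies of each half-step and its inverse by blowups, then check that the lift is biregular. However, the configuration you set out to confirm is wrong in three concrete respects, and the expansion you propose at infinity looks in the wrong place. Below, write $F(y)=\frac{\gamma-1}{c^2}y(y-c)(y-c\kappa)$, $k=\frac{\gamma-1}{c^2}$, $h(Y)=(1-cY)(1-c\kappa Y)$ and $e=\frac{c(n+1)}{\gamma-1}$.

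\begin{enumerate}
\item \emph{The cascade over $(\infty,\infty)$ is tangent to $x=\infty$, not to $x+y=0$.} In $X=1/x$, $Y=1/y$ the first half-step is $\bar X = Y^2(X+Y)/\bigl(k\,h(Y)X - Y(X+Y)\bigr)$.
After blowing up the origin and setting $X=U_5Y$, the map is regular at $U_5=-1$, the anti-diagonal direction; there it simply contracts $M_5$.
The only base point on $M_5$ is $U_5=0$, which is the direction of the line $x=\infty$.
The next base point is at $U_6=X/Y^2=1/k=\frac{c^2}{\gamma-1}$.
After that the first half-step is regular, and $M_7$ maps onto the line $y=\infty$ via $\bar x=kU_7-c(1+\kappa)-1/k$.
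So the first half-step needs only $z_1,z_2,z_3$ and three cascade points $z_5\leftarrow z_6\leftarrow z_7$. Your substitution $X=Y(-1+\cdots)$ lands on a regular point.

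\item \emph{The second half-step has no cascade at infinity.} Write it as $\bar y=-\bar x+G(\bar x)/(\bar x+y)$ with $G(\bar x)=\frac{\bar x(\bar x+c)(\bar x+c\kappa)}{\bar x-e}$.
Then $G(\bar x)-\bar x^2=O(\bar x)$, so on the fibre $\bar x=\infty$ the M\"obius map is the nondegenerate map $y\mapsto c(1+\kappa)+e-y$.
Its base points are the three anti-diagonal points and the single point $(\bar x,y)=(e,\infty)$. The line $\bar x=e$ is contracted to $(\bar x,\bar y)=(e,\infty)$.

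\item \emph{The last two points come from the interaction of the half-steps, not from a single expansion.}
The deepest point $z_8$ is the preimage under the first half-step of the base point $(e,\infty)$ of the second. Solving $kU_7-c(1+\kappa)-1/k=e$ gives $U_7=\frac{c^3(2\gamma-\beta+c+n)}{(\gamma-1)^2}$.
Its $n$-dependence therefore cannot come from the expansion of $\bar x+y=F(y)/(x+y)$, which is independent of $n$.
Symmetrically, the backward map $(x,y)\mapsto(x,\ubar y)$ has the base point $z_4=(\frac{cn}{\gamma-1},\infty)$. This point lies neither on the anti-diagonal nor in the cascade, and it is absent from your expected list.
\end{enumerate}

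With these corrections you get exactly $3+1+4=8$ points.
Among the anticanonical components are the line $x=\infty$ (class $\h_x-\M_5-\M_6$) and the line $y=\infty$ (class $\h_y-\M_4-\M_5$).
The anti-diagonal (class $\h_x+\h_y-\M_1-\M_2-\M_3-\M_5$) passes through $z_5$ but not $z_6$.

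Finally, the lattice check in your last paragraph is not equivalent to biregularity; it is only a consistency check. The actual proof is the chart-by-chart verification that the lifted map and its inverse have no remaining indeterminacy points.
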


\begin{figure}[htbp]
\begin{equation}
\begin{gathered}
z_1 : (x,y)=(0,0), \quad z_2 : (x,y)=(-c,c), \quad  z_3 : (x,y) =  \left(\tfrac{c(\beta-\gamma)}{\gamma-1},\tfrac{c(\gamma-\beta)}{\gamma-1} \right),  \\
z_4 : (x,y)=\left( \frac{c n }{\gamma-1}, \infty\right), \\
\begin{aligned}
z_5 &: (x,y)=(\infty, \infty) 							& \leftarrow   	\qquad  	&z_6 : (U_5,V_5) = \left( \frac{y}{x} , \frac{1}{y}\right) = (0,0) \\
	&										& 			 		&\uparrow \\
	&										& 				&z_7 : (U_6, V_6) = \left(  \frac{U_5}{V_5}, V_5\right) = \left(  \frac{c^2}{\gamma-1},0 \right) \\
	&										& 			 		&\uparrow \\
	&										& 				&z_8 : (U_7, V_7) = \left( \frac{U_6 - \frac{c^2}{\gamma-1}}{V_6}, V_6\right) = \left(  \frac{c^3(2 \gamma- \beta + c + n)}{(\gamma-1)^2} ,0\right).
\end{aligned}
\end{gathered}
\end{equation}    
\caption{Blowup point locations for the generalised semi-classical Meixner weight}
    \label{fig:Meixner:pointlocations}
\end{figure}

We denote the identification of all Picard groups $\Pic(Z_{\boldsymbol{b},c})$ into a single lattice by 
$$\Pic(\mathcal{Z}) = \Span_{\Z}\left\{\h_{x},\h_{y},\M_1,\dots,\M_8\right\},$$
where, similarly to above, $\h_x$ and $\h_y$ correspond to pullbacks of classes of fibres of the projection from $\p^1\times\p^1$ onto its two factors and $\M_i$ comes from the exceptional divisor of the blowup of $z_i$. 
The intersection form is given by
$$        \h_x\bullet\h_y=1, \quad \h_x\bullet\h_x=\h_y\bullet\h_y=\h_x\bullet\M_i=\h_y\bullet\M_i=0, \quad \M_i\bullet\M_j=-\delta_{ij}, \quad\text{ for }~i,j=1,\dots,8.
$$
and the anticanonical class corresponds to 
$$-\K_{\mathcal{Z}}:=2\h_x+2\h_y-\M_1-\M_2-\M_3-\M_4-\M_5-\M_6-\M_7-\M_8\in\Pic(\mathcal{Z}).$$

\begin{proposition} \label{lem:prop:ACdiv-pLUE}
        The surface $Z_{\boldsymbol{b},s}$ is a Sakai surface of type $\mathcal{R}=D_5^{(1)}$.
\end{proposition}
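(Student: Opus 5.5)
The plan is to argue exactly as for the perturbed Laguerre surface in \Cref{sec:defLUE}, now for $Z_{\boldsymbol{b},s}$, the family obtained from the blowups of \Cref{fig:Meixner:pointlocations} with the independent variable $c=s$. The steps are: exhibit a rational 2-form $\omega$ on $Z_{\boldsymbol{b},s}$ without zeroes, compute $D=-\operatorname{div}\omega$, decompose $D$ into irreducible components with multiplicities, and check that the intersection graph of the components is the $D_5^{(1)}$ diagram with $D=D_0+D_1+2D_2+2D_3+D_4+D_5$. Since each component is then orthogonal to $D$, $D$ is of canonical type. Its class is $-\K_{\mathcal{Z}}$, and since the points $z_i$ are distinguished from the index-one Halphen case by the parameters, $D$ is the unique effective anticanonical divisor. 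So $Z_{\boldsymbol{b},s}$ is a Sakai surface of type $D_5^{(1)}$.

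\textbf{Choice of 2-form.} First I note that $z_1,z_2,z_3$ all lie on the $(1,1)$-curve $x+y=0$, which also passes through $z_5=(\infty,\infty)$. The point $z_4$ lies on $y=\infty$, and the first infinitely near point $z_6$ (given by $y/x=0$) lies on the strict transform of $x=\infty$. This suggests taking
$$\omega=\frac{dx\wedge dy}{x+y},$$
which in the chart $(X,Y)=(1/x,1/y)$ becomes $\pm\frac{dX\wedge dY}{XY(X+Y)}$. Its polar divisor on $\p^1\times\p^1$ is therefore $\{x+y=0\}+\{x=\infty\}+\{y=\infty\}$, of class $2\h_x+2\h_y$. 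I then pull $\omega$ back through the eight blowups. A blowup of a point on the polar divisor lowers the pole order along the new exceptional curve by one relative to the sum of the pole orders of the curves through the point.

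\textbf{Pole orders along the exceptional curves.}
\begin{itemize}
\item The blowups of $z_1,z_2,z_3,z_4$ each give pole order $0$ on $M_1,\dots,M_4$.
\item At $z_5$ three polar curves meet, so $M_5$ acquires pole order $3-1=2$.
\item At $z_6$ the curves $M_5$ and $x=\infty$ meet, so $M_6$ gets pole order $2+1-1=2$.
\item The point $z_7$ lies on $M_6$ only, so $M_7$ gets pole order $1$.
\item The point $z_8$ lies on $M_7$ only, so $M_8$ gets pole order $0$.
\end{itemize}
This yields the candidate
$$D=(\h_x+\h_y-\M_1-\M_2-\M_3-\M_5)+(\h_y-\M_4-\M_5)+2(\M_5-\M_6)+2(\M_6-\M_7)+(\h_x-\M_5-\M_6)+(\M_7-\M_8).$$
Here $\M_5-\M_6$ meets the first two components and $\M_6-\M_7$. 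The class $\M_6-\M_7$ in turn meets $\h_x-\M_5-\M_6$ and $\M_7-\M_8$. This is the $D_5^{(1)}$ graph with $\delta_2=\M_5-\M_6$ and $\delta_3=\M_6-\M_7$. Each class has self-intersection $-2$, and the sum equals $-\K_{\mathcal{Z}}$. This sum is a consistency check on the multiplicities.

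\textbf{Main obstacle.} The delicate point is verifying in coordinates that the infinitely near points behave as claimed. Specifically, I must check three things: that $z_7$ and $z_8$ do not lie on the strict transforms of $x=\infty$, $y=\infty$ or $x+y=0$; that $z_8$ is not on $M_6$; and that the nonzero values $\frac{c^2}{\gamma-1}$ and $\frac{c^3(2\gamma-\beta+c+n)}{(\gamma-1)^2}$ do not accidentally make $\omega$ acquire a zero. I would do this by writing $\omega$ explicitly in each chart $(U_i,V_i)$ of \Cref{not:blowupcharts}, confirming $\omega\propto dU_i\wedge dV_i/V_i^{k}$ with the exponents $k$ listed above and no numerator vanishing. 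Separately, the assumptions $\gamma\neq1$ and $\beta\neq\gamma$ guarantee that $z_1,z_2,z_3$ are distinct points of $\C^2$. This is needed so that $M_1,M_2,M_3$ are genuine $(-1)$-curves rather than producing an extra $(-2)$-curve; that degenerate situation is the case treated in the next section.
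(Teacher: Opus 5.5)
Your proposal is correct and follows the paper's proof: the same 2-form $dx\wedge dy/(x+y)$, the same six components with multiplicities $1,1,2,2,1,1$ and the same $D_5^{(1)}$ incidence graph, with your pole-order bookkeeping through $z_5,\dots,z_8$ supplying the computation the paper leaves implicit. Two small corrections: first, $\beta\neq\gamma$ only separates $z_3$ from $z_1$, while $z_3\neq z_2$ needs $\beta\neq1$, which is exactly the degeneration treated in \Cref{sec:degenMeixner}; second, your uniqueness remark should be made concrete, either by citing $\chi(\delta)\neq0$ or by checking directly that $\{x+y=0\}+\{x=\infty\}+\{y=\infty\}$ is the only $(2,2)$-curve through $z_1,\dots,z_8$.
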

\begin{proof}
    The unique effective anticanonical divisor of $Z_{\boldsymbol{b},c}$ is that coming from the pullback under the blowup morphism of the rational 2-form on $\p^1\times\p^1$ defined in the affine chart $(x,y)$ by $\frac{ d x\wedge d y}{x+y}$.
    Its irreducible components $D_i$ are indicated in blue on \Cref{fig:Meixner-soic}.
    The corresponding elements of $\Pic(\mathcal{Z})$ are given in \Cref{fig:d-roots-Meixner}, as well as the associated Dynkin diagram. 
\end{proof}

\begin{figure}[htb]
	\begin{tikzpicture}[>=stealth,basept/.style={circle, draw=red!100, fill=red!100, thick, inner sep=0pt,minimum size=1.2mm}]
	\begin{scope}[xshift=0cm,yshift=0cm]
	\draw [black, line width = 1pt] (-0.4,0) -- (2.9,0)	node [pos=1,right] {\small $y=0$};
	\draw [black, line width = 1pt] (-0.4,2.5) -- (2.9,2.5) node [pos=0,left] {\small $y=\infty$};
	\draw [black, line width = 1pt] (0,-0.4) -- (0,2.9) node [pos=1,above] {\small $x=0$};
	\draw [black, line width = 1pt] (2.5,-0.4) -- (2.5,2.9) node [pos=0,below] {\small $x=\infty$};
	\draw [black, line width = 1pt] (-0.4,-0.4) -- (2.9,2.9) node [pos=0,below left] {\small $x+y=0$};
	\node (z1) at (0,0) [basept,label={[xshift = -7pt, yshift=0pt] \small $z_{1}$}] {};
	\node (z2) at (.7,.7) [basept,label={[xshift=-7pt, yshift=-7pt] \small $z_{2}$}] {};
	\node (z3) at (1.2,1.2) [basept,label={[xshift = -7pt, yshift=-7pt] \small $z_{3}$}] {};
	\node (z4) at (1.5,2.5) [basept,label={[xshift = 0pt, yshift=0pt] \small $z_{4}$}] {};
	\node (z5) at (2.5,2.5) [basept,label={[xshift = -7pt, yshift=0pt] \small $z_{5}$}] {};
	\node (vert) at (2.5,2.1) {} ;
	\node (z6) at (3,1.7) [basept,label={[xshift = 0pt, yshift=-15pt] \small $z_{6}$}] {};
	\node (z7) at (3.625,1.7) [basept,label={[xshift = 0pt, yshift=-15pt] \small $z_{7}$}] {};
	\node (z8) at (4.25,1.7) [basept,label={[xshift = 0pt, yshift=-15pt] \small $z_{8}$}] {};
%	\draw [red, line width = 0.8pt, ->] (z3) -- (1.3536,1.3536) --  (z2);
	% \draw [red, line width = 0.8pt, ->] (z3) -- (1.2,1.2) --  (z2);
	\draw [red, line width = 0.8pt, ->] (z6) -- (2.5,2.1) -- (z5);
	\draw [red, line width = 0.8pt, ->] (z7) -- (z6);	
	\draw [red, line width = 0.8pt, ->] (z8) -- (z7);	
	\end{scope}
	\draw [->] (6.75,1)--(4.75,1) node[pos=0.5, below] {$\operatorname{Bl}_{z_{1}\cdots z_{8}}$};
	\begin{scope}[xshift=8cm,yshift=-1cm]
	\draw [red, line width = 1pt] (0,0.6) -- (0,5)	node [pos=1, above] {\small $\h_{x}-\M_{1}$};
	\draw [red, line width = 1pt] (-0.2,1) -- (1,-0.2)	node [pos=1, below] {\small $\M_{1}$};
	\draw [red, line width = 1pt] (0.6,0) -- (4.2,0)	node [pos=1, right] {\small $\h_{y}-\M_{1}$};
	\draw [red, line width = 1pt] (.6,1.9) -- (1.4,1.0)	node [pos=0, xshift=+3pt,yshift=+7pt] {\small $\M_{2}$};
	\draw [red, line width = 1pt] (1.5,2.8) -- (2.3,1.9)	node [pos=0, xshift=+3pt,yshift=+7pt] {\small $\M_{3}$};
	% \draw [darkgreen, line width = 1pt] (.4,1.4) -- (1.7,2.7)	node [pos=1, xshift=-28pt,yshift=-3pt] {\small $\M_{2}-\M_{3}$};
	% \draw [red, line width = 1pt] (1.5,2.2) -- (1.5,5)	node [pos=1, above] {\small $\h_{x}-\M_{2}$};
	\draw [blue, line width = 1pt] (.2,.2) -- (4.5,4.5)	node [pos=0, below left] {\small $\h_{x}+\h_y-\M_{1}-\M_{2}-\M_{3}-\M_{5}$};
%	\draw [blue, line width = 1pt] (2.8,3.9) -- (4.3,2.4)	node [pos=1, right] {\small $M_{5}-M_{6}$};
	\draw [red, line width = 1pt] (3.1,3.8) -- (1.9,5)		node [pos=1, xshift=-7pt,yshift=+7pt] {\small $\M_{4}$};
	\draw [red, line width = 1pt] (2.9,-.3) -- (2.9,4.3)		node [pos=0, below] {\small $\h_{x}-\M_{4}$};
	\draw [blue, line width = 1pt] (3.4,4.9) -- (5.1,3.2)	node [pos=0, above] {\small $\M_{5}-\M_{6}$};
	\draw [blue, line width = 1pt] (5.1,3.7) -- (3.7,2.3)	node [pos=0, right] {\small $\M_{6}-\M_{7}$};
	\draw [blue, line width = 1pt] (4.1,3.1) -- (5.2,2)		node [pos=1, right] {\small $\M_{7}-\M_{8}$};
	\draw [red, line width = 1pt]  (4.7,1.9) -- (5.3,2.5)		node [pos=1, right] {\small $\M_{8}$};
	\draw [blue, line width = 1pt]  (3.9,-.3) -- (3.9,2.8)	node [pos=0.5, right] {\small $\h_{x}-\M_{5}-\M_{6}$};
	\draw [blue, line width = 1pt] (-0.3,4.7) -- (3.9,4.7)	node [pos=0, left] {\small $\h_{y}-\M_{4}-\M_{5}$};
	\end{scope}
	\end{tikzpicture}
	\caption{The $D_{5}^{(1)}$ Sakai surface from the generalised semi-classical Meixner weight}
	\label{fig:Meixner-soic}
\end{figure}

\begin{figure}[htb]
\begin{equation*}\label{eq:d-roots-dM}			
	\raisebox{-32.1pt}{\begin{tikzpicture}[
			elt/.style={circle,draw=black!100,thick, inner sep=0pt,minimum size=2mm}]
		\path 	(-1,1) 	node 	(d0) [elt, label={[xshift=-10pt, yshift = -10 pt] $\delta_{0}$} ] {}
		        (-1,-1) node 	(d1) [elt, label={[xshift=-10pt, yshift = -10 pt] $\delta_{1}$} ] {}
		        ( 0,0) 	node  	(d2) [elt, label={[xshift=-10pt, yshift = -12 pt] $\delta_{2}$} ] {}
		        ( 1,0) 	node  	(d3) [elt, label={[xshift=10pt, yshift = -12 pt] $\delta_{3}$} ] {}
		        ( 2,1) 	node  	(d4) [elt, label={[xshift=10pt, yshift = -10 pt] $\delta_{4}$} ] {}
		        ( 2,-1) node 	(d5) [elt, label={[xshift=10pt, yshift = -10 pt] $\delta_{5}$} ] {};
		\draw [black,line width=1pt ] (d0) -- (d2) -- (d1)  (d2) -- (d3) (d4) -- (d3) -- (d5);
	\end{tikzpicture}} \qquad
			\begin{alignedat}{2}
			\delta_{0} &= \mathcal{H}_x + \mathcal{H}_y -\M_{1}-\M_{2}-\M_{3}-\M_{5}, &\qquad  \delta_{3} &=\M_6 - \M_7,\\
			\delta_{1} &= \mathcal{H}_y - \M_4 - \M_5, &\qquad  \delta_{4} &= \mathcal{H}_x - \M_5 - \M_6,\\
			\delta_{2} &= \M_5 - \M_6, &\qquad  \delta_{5} &= \M_7 - \M_8.
			\end{alignedat}
\end{equation*}
	\caption{The surface root basis from the generalised semi-classical Meixner weight}
	\label{fig:d-roots-Meixner}	
\end{figure}
\subsubsection{Induced dynamics on the Picard lattice}

\begin{lemma} \label{lem:induceddynamicsonPicMeixner}
The isomorphisms $\tilde{\phi}_{\boldsymbol{b},c}$ induce via pushforward $(\tilde{\phi}_{\boldsymbol{b},c})_* : \Pic(Z_{\boldsymbol{b},c})\to \Pic(Z_{\bar{\boldsymbol{b}},c})$ the following action on $\Pic(\mathcal{Z})$:
\begin{equation}
\Phi : \left\{
	\begin{aligned}
		\mathcal{H}_x 	&\mapsto 4 \h_x + 3 \h_y - 2 \M_1 - 2 \M_2 - 2 \M_3 - 3 \M_4 - \M_5 - \M_6 - \M_7, \\\
		\mathcal{H}_y 	&\mapsto 2 \h_x + \h_y - \M_1 -\M_2 - \M_3 - \M_4, \\
		\M_1			&\mapsto  \h_x + \h_y -\M_2 - \M_3 -\M_4, \\
		\M_2			&\mapsto  \h_x + \h_y -\M_1 - \M_3 -\M_4, \\
		\M_3			&\mapsto  \h_x + \h_y -\M_1 - \M_2 -\M_4, \\
		\M_4			&\mapsto  \M_8, \\
		\M_5			&\mapsto  2 \h_x + \h_y -\M_1 - \M_2 -\M_3 - \M_4 - \M_7, \\
		\M_6			&\mapsto  2 \h_x + \h_y -\M_1 - \M_2 -\M_3 - \M_4 - \M_6, \\
		\M_7			&\mapsto  2 \h_x + \h_y -\M_1 - \M_2 -\M_3 - \M_4 - \M_5, \\
		\M_8			&\mapsto  \h_x - \M_4. 
	\end{aligned}
\right.
\end{equation}
\end{lemma}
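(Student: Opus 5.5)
To establish \Cref{lem:induceddynamicsonPicMeixner}, I will compute the pushforward $(\tilde{\phi}_{\boldsymbol{b},c})_*$ geometrically, by finding the images of enough irreducible curves on $Z_{\boldsymbol{b},c}$, and then pin down the lattice map by linearity.

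First I rewrite \eqref{eq-Meixner-disc-body} as a forward map. The first equation gives $\bar{x}=-y+\frac{(\gamma-1)y(y-c)(y-c\frac{\gamma-\beta}{\gamma-1})}{c^2(x+y)}$. The second equation, shifted by $n\mapsto n+1$, gives $\bar{y}=-\bar{x}+\frac{\bar{x}(\bar{x}+c)(\bar{x}+c\frac{\gamma-\beta}{\gamma-1})}{(\bar{x}-\frac{c(n+1)}{\gamma-1})(\bar x+y)}$. So $\phi_{\boldsymbol{b},c}$ is a composition of two elementary maps: $(x,y)\mapsto(\bar{x},y)$, followed by $(\bar{x},y)\mapsto(\bar{x},\bar{y})$.

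Rather than compose bidegrees blindly, I will determine the images of the following classes directly:
\begin{itemize}
\item the generators $\M_i$, $i=1,\dots,8$;
\item the fibre classes $\h_x$, via a generic line $x=\text{const}$;
\item the fibre classes $\h_y$, via a generic line $y=\text{const}$.
\end{itemize}

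For $\h_y$, substituting $y=y_0$ makes $\bar{x}$ a M\"obius function of $x$. Hence $\bar{y}$ is rational of degree $2$ in $\bar{x}$ along the curve, and the image is a curve of bidegree $(2,1)$ in $(\h_x,\h_y)$. It passes simply through $z_1,\dots,z_4$, matching $2\h_x+\h_y-\M_1-\dots-\M_4$. I then check which of $z_1',\dots,z_8'$ (the points for parameter $\bar{\boldsymbol{b}}$) the image meets, by tracking it in the charts of \Cref{fig:Meixner:pointlocations}.

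For the exceptional curves, I compute images in local blowup coordinates. The expected pattern is as follows:
\begin{itemize}
\item the curves $\M_1,\M_2,\M_3$ are sent to strict transforms of $(1,1)$-curves through the remaining points;
\item $\M_4$ is sent to $\M_8$;
\item $\M_8$ is sent to the line $x=\frac{c(n+1)}{\gamma-1}$, of class $\h_x-\M_4$.
\end{itemize}
The image of $\h_x$ then follows from the images of the anticanonical components, since $\Phi$ must preserve $-\K_{\mathcal Z}$ and permute the $D_i$. In practice I will obtain the remaining images $\Phi(\M_5),\Phi(\M_6),\Phi(\M_7)$ and $\Phi(\h_x)$ from the linear constraints:
\begin{itemize}
\item $\Phi$ preserves the intersection form;
\item $\Phi$ fixes $-\K_{\mathcal Z}$;
\item $\Phi$ maps each of the blue components $\delta_0,\dots,\delta_5$ of \Cref{fig:d-roots-Meixner} to another such component.
\end{itemize}
The last constraint comes from the fact that the 2-form $\frac{dx\wedge dy}{x+y}$ is preserved up to a constant. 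The computed images of $\M_1,\dots,\M_4,\M_8,\h_y$, together with how $\Phi$ permutes the $\delta_i$, should determine $\Phi$ uniquely. As a consistency check, I will verify that the proposed matrix sends each $\delta_i$ to some $\delta_j$; for example, $\M_5-\M_6\mapsto \M_6-\M_7$.

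The main obstacle is the infinitely near cascade $z_5\leftarrow z_6\leftarrow z_7\leftarrow z_8$ at $(\infty,\infty)$. Pushing curves through this chain by hand requires expansions to third order in $V_5$, and the position of $z_8$ depends on $n$. Rather than compute there, I will rely on the lattice constraints above, using the easily computed images ($\h_y$, $\M_1$–$\M_4$, $\M_8$) as input. I will confirm by checking a single image through the cascade, namely that $\M_5-\M_6$ is sent to $\M_6-\M_7$.
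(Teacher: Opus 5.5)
Your argument is correct, but it is organised differently from the paper's. The paper only says that the lemma follows from a standard computation by finding images of curves. That means pushing $\h_x$, $\h_y$ and every exceptional curve through $\phi_{\boldsymbol{b},c}$ explicitly, including third-order expansions in the cascade $z_5\leftarrow z_6\leftarrow z_7\leftarrow z_8$, whose last point depends on $n$. You compute only a few images and recover the rest by rigidity. $\Phi$ is an isometry fixing $-\K_{\mathcal{Z}}$, and it permutes $\delta_0,\dots,\delta_5$ because $\phi_{\boldsymbol{b},c}^*\bigl(\frac{d\bar{x}\wedge d\bar{y}}{\bar{x}+\bar{y}}\bigr)=\frac{dx\wedge dy}{x+y}$; this is immediate from your two-step factorisation, each step contributing a factor $-1$.

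Two points should be made explicit. First, the permutation of the $\delta_i$ is not known in advance, but it is forced. The class $\Phi(\M_5)=\Phi(\h_y)-\M_8-\Phi(\delta_1)$ has self-intersection $-1$ only when $\Phi(\delta_1)=\delta_5$; the other five choices give $-3$ or $-5$. Adjacency then forces $\delta_2\leftrightarrow\delta_3$ and $\delta_0\mapsto\delta_4$. Solving successively gives $\Phi(\M_6)$ and $\Phi(\M_7)$, then $\Phi(\delta_5)=\Phi(\M_7)-(\h_x-\M_4)=\delta_0$, and finally $\Phi(\h_x)$ from $\Phi(\delta_4)=\delta_1$, all agreeing with the stated formulas. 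So your planned check through the cascade is only a cross-check.

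Second, $\M_4\mapsto\M_8$ and $\M_8\mapsto\h_x-\M_4$ are not computations away from the cascade: the first lands in it and the second starts in it. You can justify them coarsely, though. $\phi_{\boldsymbol{b},c}(x,y)\to(\infty,\infty)$ as $(x,y)\to z_4$, and $\phi_{\boldsymbol{b},c}^{-1}$ contracts the line $\bar{x}=\frac{c(n+1)}{\gamma-1}$ to $(\infty,\infty)$. Since $E_8$ is the only $(-1)$-curve over that point, both images follow. In fact the images of $\h_y,\M_1,\M_2,\M_3$ alone already determine $\Phi$ by the same kind of finite lattice check.

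Your route avoids the hardest local expansions at the price of a short lattice argument that rests on the anticanonical divisor being preserved. The direct computation is more laborious but verifies every image independently.
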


\subsubsection{Identification with the standard model}

We present the identification of $\Pic(\X)$ and $\Pic(\mathcal{Z})$ that matches the linear dynamics in \Cref{lem:induceddynamicsonPicMeixner} with the translation $T_{\operatorname{Sak}}$ associated to the discrete Painlev\'e equation \eqref{eq-Sakai-dP-S3}.
\begin{lemma}\label{lem:KNY-to-Meixner-Pic} 
The identification $\Pic(\mathcal{Z}) \to \Pic(\X)$ of Picard lattices between the standard Kajiwara-Noumi-Yamada
	and generalised Meixner surfaces is given as follows:
\begin{equation}
		\begin{aligned}
			\mathcal{H}_{q} & = \mathcal{H}_x + 2 \mathcal{H}_y - \M_1 - \M_2 - \M_5 - \M_6, 
    %         &\qquad 
				% \mathcal{H}_{x} &= 3 \mathcal{H}_{q} + 2 \mathcal{H}_p - \E_1 - \E_3 - \E_5 - \E_6 - 2 \E_7 - 2 \E_8,
                \\
			\mathcal{H}_{p} &= 2\mathcal{H}_{x} + 3\mathcal{H}_{y} - \M_{1} - 2 \M_{2} - 2 \M_{5} - \M_6 - \M_{7} - \M_8,  
    %         &\qquad 	
				% \mathcal{H}_{y} & = 2 \mathcal{H}_{q} + \mathcal{H}_{p} - \mathcal{E}_{1} - \mathcal{E}_{5} - \mathcal{E}_{7} - \mathcal{E}_{8}, 
                \\
			\mathcal{E}_{1} &= \mathcal{H}_x + \mathcal{H}_y - \M_1 - \M_1 - \M_5, 
    %         &\qquad 
				% \mathcal{F}_{1}	&= \mathcal{H}_{q}+\mathcal{H}_p - \E_1 - \E_7 - \E_8,
                \\ 
			\mathcal{E}_{2} &= \M_3, 
    %         &\qquad 
				% \mathcal{F}_{2}	&= 2 \mathcal{H}_{q}+ \mathcal{H}_p - \E_1 - \E_5 - \E_6 - \E_7 - \E_8,
                \\ 
			\mathcal{E}_{3} &= \mathcal{H}_{x} - \mathcal{M}_{5}, 
    %         &\qquad 
				% \mathcal{F}_{3}	&= \E_{2},
                \\ 
			\mathcal{E}_{4} &= \M_4, 
    %         &\qquad 
				% \mathcal{F}_{4}	&= \E_{4},
                \\ 
			\mathcal{E}_{5} &= \mathcal{H}_x + \mathcal{H}_y -\M_2 - \M_5 - \M_6, 
    %         &\qquad 
				% \mathcal{F}_{5}	&= 2 \mathcal{H}_q - \mathcal{H}_p - \E_1 - \E_3 - \E_5 - \E_7 - \E_8,
                \\ 
			\mathcal{E}_{6} &= \mathcal{H}_y - \M_2, 
    %         &\qquad 
				% \mathcal{F}_{6}	&= \mathcal{H}_{q} + \mathcal{H}_p - \E_5 - \E_7 - \E_8,
                \\ 
			\mathcal{E}_{7} &= \mathcal{H}_x + 2 \mathcal{H}_y - \M_1 - \M_2 - \M_5 - \M_6 - \M_8, 
    %         &\qquad 
				% \mathcal{F}_{7}	&= \mathcal{H}_{q} - \mathcal{E}_{8},
                \\ 
			\mathcal{E}_{8} &=  \mathcal{H}_x + 2 \mathcal{H}_y - \M_1 - \M_2 - \M_5 - \M_6 - \M_7.
    %         &\qquad 
				% \mathcal{F}_{8}	&= \mathcal{H}_{q} - \mathcal{E}_{7}.
		\end{aligned}
	\end{equation}
	This results in the following correspondences between the surface roots:
	\begin{equation}\label{eq:geom-KNY-Meixner}
	\begin{aligned}
		\delta_{0} &= \mathcal{E}_{1} - \mathcal{E}_{2} = \mathcal{H}_x + \mathcal{H}_y - \M_1 - \M_2 -\M_3 - \M_5, &\quad
			\delta_{3} &=\mathcal{H}_{p} - \mathcal{E}_{5} - \mathcal{E}_{7} = 
				\M_6 - \M_7, \\
		\delta_{1} &= \mathcal{E}_{3} - \mathcal{E}_{4} = \mathcal{H}_y - \M_4 - \M_5, &\quad
			\delta_{4} &= \mathcal{E}_{5} - \mathcal{E}_{6}  = \mathcal{H}_x - \M_5 -\M_6, \\
		\delta_{2} &=  \mathcal{H}_{q} - \mathcal{E}_{1} - \mathcal{E}_{3} = \M_5 -\M_6, &\quad
			\delta_{5} &= \mathcal{E}_{7} - \mathcal{E}_{8} = \M_7 - \M_8.
	\end{aligned}
	\end{equation}
	The symmetry root basis in $\Pic(\X)$ given in \Cref{fig:a-roots-a3-KNY} maps to $\Pic(\mathcal{Z})$ according to
    \begin{equation}\label{eq:sym-KNY-Meixner}
	\begin{aligned}
			\alpha_{0} &= \mathcal{H}_{p} - \mathcal{E}_{1} - \mathcal{E}_{2}
		=  \mathcal{H}_{x} + 2 \mathcal{H}_{y} - \M_2 - \M_3 - \M_5 - \M_6 - \M_7 - \M_8, \\
			\alpha_{1} &= \mathcal{H}_{q} - \mathcal{E}_{5} - \mathcal{E}_{6} = - \mathcal{M}_{1} + \mathcal{M}_{2}, \\
			\alpha_{2} &= \mathcal{H}_{p} - \mathcal{E}_{3} - \mathcal{E}_{4} = 
			2 \mathcal{H}_x + 2 \mathcal{H}_y - \M_1 - 2 \M_2 - \M_4 - \M_5 -\M_6 - \M_7 - \M_8,\\
			\alpha_{3} &= \mathcal{H}_{q} - \mathcal{E}_{7} - \mathcal{E}_{8} = -  \mathcal{H}_{x} - 2 \mathcal{H}_y + \M_1 + \M_2 + \M_5 + \M_6 + \M_7 + \M_8.
		\end{aligned}
	\end{equation}
\end{lemma}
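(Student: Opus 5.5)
The plan is to check directly that the stated map $\Upsilon:\Pic(\mathcal{Z})\to\Pic(\X)$ has the properties required in Step 4 of the identification procedure. These are: it is a lattice isometry preserving the anticanonical class; it matches the surface root bases of \Cref{fig:d-roots-Meixner} and \Cref{fig:d-roots-d5-KNY}; and it conjugates $\Phi$ of \Cref{lem:induceddynamicsonPicMeixner} to $T_{\operatorname{Sak}}$. Note first that the displayed formula for $\E_1$ contains $\M_1$ twice. It should read $\E_1=\h_x+\h_y-\M_1-\M_2-\M_5$, since $\E_1-\E_2=\delta_0$ then holds with $\E_2=\M_3$. I will work with that corrected version.

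\textbf{Isometry and anticanonical class.} I compute the pairwise intersections of the ten images $\h_q,\h_p,\E_1,\dots,\E_8$ in $\Pic(\mathcal{Z})$. For example, $\h_q\bullet\h_q=2\cdot1\cdot2-1-1-1-1=0$ and $\h_q\bullet\h_p=2\cdot3+2\cdot2\cdot... $, evaluated via $\h_x\bullet\h_y=1$ and $\M_i\bullet\M_j=-\delta_{ij}$. The targets are $\h_q\bullet\h_p=1$, $\E_i\bullet\E_j=-\delta_{ij}$, and all other pairings zero.

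Because the images of a $\Z$-basis then have the standard Gram matrix, which is unimodular, $\Upsilon$ is an isomorphism of lattices. Its inverse can be written down by solving the triangular-looking system. Summing the images gives $2\h_q+2\h_p-\sum\E_i\mapsto 2\h_x+2\h_y-\sum\M_i$, so $\Upsilon$ preserves $-\K$.

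\textbf{Surface and symmetry roots.} Substituting into $\delta_0=\E_1-\E_2,\dots,\delta_5=\E_7-\E_8$ gives \eqref{eq:geom-KNY-Meixner}. Each right-hand side is one of the classes of the blue components listed in \Cref{fig:d-roots-Meixner}, so effectiveness of the anticanonical components and the $D_5^{(1)}$ diagram are respected. Substituting into the definitions of $\alpha_0,\dots,\alpha_3$ from \Cref{fig:a-roots-a3-KNY} then yields \eqref{eq:sym-KNY-Meixner}. As consistency checks, the sum of these four classes should equal $\delta$, and $\alpha_1=\M_2-\M_1$ should satisfy $\alpha_1\bullet\alpha_1=-2$.

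\textbf{Matching the dynamics.} Next I apply $\Phi$ to the classes in \eqref{eq:sym-KNY-Meixner}. For instance, $\Phi(\M_2-\M_1)=(\h_x+\h_y-\M_1-\M_3-\M_4)-(\h_x+\h_y-\M_2-\M_3-\M_4)=\M_2-\M_1$, so $\alpha_1$ is fixed. The aim is to get $\alpha_0\mapsto\alpha_0-\delta$, $\alpha_2\mapsto\alpha_2$ and $\alpha_3\mapsto\alpha_3+\delta$, which is \eqref{eq:translationonroots-Sakai}.

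This only matches on $Q^\perp$, which is not enough on its own. For the full commutative square \eqref{eq:Picidentcommutativediagram}, I must check $\Upsilon\circ\Phi=T_{\operatorname{Sak}}\circ\Upsilon$ on all ten basis elements. To do this I take the images of $\h_q,\dots,\E_8$ and apply $\Phi$. I then compare the result with $\Upsilon$ applied to the right-hand sides of the $T_{\operatorname{Sak}}$ formulas in \Cref{subsubsec:Sakaiexample}. Two quick cases are $\E_7\mapsto\E_1$ and $\E_8\mapsto\E_2$, which become checks of $\Phi$ on $\h_x+2\h_y-\M_1-\M_2-\M_5-\M_6-\M_{8}$ and on $\M_3$.

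\textbf{Main obstacle.} The main obstacle is purely the bookkeeping in this last step, especially the images of $\h_q$ and $\h_p$, which involve many terms. I would organise it by precomputing $\Phi$ on $\h_x+\h_y$ and on $\h_x+2\h_y-\M_1-\M_2-\M_5-\M_6$, since these blocks recur.

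Finally, the classes $\E_i$ and $\h_q-\E_i$ should correspond to classes of irreducible $(-1)$-curves on $Z_{\boldsymbol{b},c}$, such as $\M_3$, $\M_4$ and $\h_x-\M_5$. This effectiveness check confirms that the matching is realisable by an actual isomorphism in Step 5.
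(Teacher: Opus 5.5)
Your plan is the right kind of argument, and it is what the paper's omitted ``standard computation'' amounts to: check that the map is an isometry preserving $-\K$, that it sends surface roots to surface roots, and that it intertwines $\Phi$ with $T_{\operatorname{Sak}}$. As written, however, it would not go through. The statement contains a second misprint that you did not catch, and several checks you report as successful in fact fail because of it.

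\textbf{The missed misprint.} The printed entry $\E_3=\h_x-\M_5$ is wrong, and with it every stage of your verification breaks:
\begin{itemize}
\item $\h_q\bullet\E_3=2-1=1\neq 0$, so the Gram matrix is not the standard one.
\item $2\h_q+2\h_p-\sum_i\E_i$ comes out as $\h_x+3\h_y-\sum_i\M_i$, not $-\K_{\mathcal{Z}}$.
\item $\E_3-\E_4=\h_x-\M_4-\M_5$ and $\h_q-\E_1-\E_3=-\h_x+\h_y+\M_5-\M_6$ are not the stated $\delta_1,\delta_2$.
\item $\h_p-\E_3-\E_4$ has leading part $\h_x+3\h_y$ instead of $2\h_x+2\h_y$, so it is not the stated $\alpha_2$.
\item $\Phi(\h_x-\M_5)=2\h_x+2\h_y-\M_1-\M_2-\M_3-2\M_4-\M_5-\M_6$, whereas $T_{\operatorname{Sak}}(\E_3)=\h_q-\E_8$ requires the image $\Upsilon^{-1}(\h_q-\E_8)=\M_7$.
\end{itemize}
The correct entry is forced by $\delta_1=\E_3-\E_4=\h_y-\M_4-\M_5$ together with $\E_4=\M_4$, and the stated $\alpha_2$ confirms it: $\E_3=\h_y-\M_5$. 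With this and your correction of $\E_1$, every check succeeds. The ten images have the standard Gram matrix, $-\K$ is preserved, and the surface- and symmetry-root formulas hold. For example, $\Phi(\h_y-\M_5)=\M_7$ as required, and the $T_{\operatorname{Sak}}$ relations for $\E_7,\E_8,\h_p$ also check out.

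\textbf{The closing effectiveness remark.} This also needs revising. In $X_{\boldsymbol{a},t}$ the classes $\E_1,\E_3,\E_5,\E_7$ are themselves reducible, for example $\E_3=\delta_1+\E_4$. So their images need not, and do not, correspond to irreducible $(-1)$-curves. Moreover, $\h_x-\M_5=\delta_4+\M_6$ is reducible and is not the image of any $\E_i$; it is just the misprinted $\E_3$.

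\textbf{Reducing the bookkeeping.} You can avoid most of what you call the main obstacle. Let $g$ be the quotient of two isometries of $\Pic$ that fix $\delta$ and agree on $Q+Q^\perp$, which has rank $9$. Then $g(v)-v$ is orthogonal to $Q+Q^\perp$, so it lies in the radical of $Q^\perp$, which is $\Z\delta$. Write $g(v)=v+k(v)\delta$ with $k$ linear. Preservation of $\E_1\bullet\E_1$ gives $k(\E_1)(\E_1\bullet\delta)=0$. Since $\E_1\bullet\delta=1$, $k(\E_1)=0$, so $k$ vanishes on a full-rank sublattice and $g=\mathrm{id}$. Hence, besides your $\alpha_i$ computation, it suffices to check that $\Phi$ permutes the Meixner surface roots as $(\delta_0\,\delta_4\,\delta_1\,\delta_5)(\delta_2\,\delta_3)$. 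This is exactly how $T_{\operatorname{Sak}}$ acts on $\delta_0,\dots,\delta_5$, and it holds.
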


Under the identification in \Cref{lem:KNY-to-Meixner-Pic}, the discrete system \eqref{eq-Meixner-disc-body} corresponds to the following action on the symmetry roots
\begin{equation}\label{eq-PhiOnRoots-Meixner}
\Phi : 
\left\{
	\begin{aligned}
		\alpha_0 &\mapsto \alpha_0 - \delta, \\
		\alpha_1 &\mapsto \alpha_1, \\
		\alpha_2 &\mapsto \alpha_2, \\
		\alpha_3 &\mapsto \alpha_3 + \delta,	
	\end{aligned}
\right.
\end{equation}
which coincides with that of $T_{\operatorname{Sak}}$, given in \cref{eq:translationonroots-Sakai}, which generates the Sakai discrete Painlev\'e equation \eqref{eq-Sakai-dP-S3}.

\begin{lemma} \label{lem:rootvarsgenMeixner}
    The root variables $a_i = \chi(\alpha_i)$, $i=0,1,2,3$, defined with respect to the period map $\chi$ given by the 2-form on $Z_{\boldsymbol{b},c}$ defined by $ k\frac{dx \wedge dy}{x+y}$, $k=\frac{1-\gamma}{c}$, for the symmetry roots 
   \begin{equation*} 
   \begin{aligned}
    \alpha_{0} &=  \mathcal{H}_{x} + 2 \mathcal{H}_{y} - \M_2 - \M_3 - \M_5 - \M_6 - \M_7 - \M_8, \\
	\alpha_{1} &= - \mathcal{M}_{1} + \mathcal{M}_{2}, \\
	\alpha_{2} &= 2 \mathcal{H}_x + 2 \mathcal{H}_y - \M_1 - 2 \M_2 - \M_4 - \M_5 -\M_6 - \M_7 - \M_8, \\
    \alpha_{3} &= -  \mathcal{H}_{x} - 2 \mathcal{H}_y + \M_1 + \M_2 + \M_5 + \M_6 + \M_7 + \M_8,
    \end{aligned}
    \end{equation*}
     are given by 
	\begin{equation}
		a_{0} =  n+1,\quad  a_{1} =\gamma-1 ,\quad
		a_{2} = 2-\beta , \quad a_{3} = -\gamma - n,
	\end{equation}
    so in particular this matches the normalisation $a_0+a_1+a_2+a_3=1$.
\end{lemma}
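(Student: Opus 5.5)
The plan is to compute $\chi$ by Sakai's residue method, the same standard computation as in \Cref{lem:rootvarspLUE} and \Cref{prop:rootvarsLUE}, using the $2$-form $\omega = k\frac{dx\wedge dy}{x+y}$. First I verify that the pullback of $\omega$ to $Z_{\boldsymbol{b},c}$ has no zeros and that its pole divisor is the anticanonical divisor $D$ of \Cref{fig:d-roots-Meixner}; this is the content of the preceding proposition. Then I compute the residue $1$-forms of $\omega$ along the multiplicity-one components $D_0, D_1, D_4, D_5$.

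The key component is $D_0$, the strict transform of the line $x+y=0$, whose class is $\h_x+\h_y-\M_1-\M_2-\M_3-\M_5$. Parametrised by $x$, the residue of $\omega$ along it is $k\,dx$ up to sign. For a root written as $\alpha = [C_1]-[C_2]$, with $C_1, C_2$ exceptional $(-1)$-curves meeting $D_0$ at $P_1, P_2$, the period is $\chi(\alpha)=\int_{P_2}^{P_1}\operatorname{res}_{D_0}\omega$. The points $z_1$, $z_2$, $z_3$ lie on $x+y=0$ at $x=0$, $x=-c$ and $x=\tfrac{c(\beta-\gamma)}{\gamma-1}$ respectively. Since $\alpha_1=\M_2-\M_1$, this gives immediately
$$\chi(\alpha_1)=k\,(-c-0)=\tfrac{1-\gamma}{c}(-c)=\gamma-1,$$
and this fixes the sign convention. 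Likewise $\alpha_0+\alpha_3=\M_1-\M_3$ is a difference of exceptional curves meeting $D_0$, so $\chi(\alpha_0+\alpha_3)$ is an integral of $k\,dx$ between the $x$-coordinates of $z_3$ and $z_1$, and hence equals $\pm(\beta-\gamma)$.

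To separate $a_0$ from $a_3$ and to obtain $a_2$, I rewrite each remaining root as a difference of classes of irreducible $(-1)$-curves meeting a single component of $D$. For $\alpha_3=(\h_x+\M_1+\M_2) - \dots$ this is not directly available, so instead I would use representations through the multiplicity-one components at infinity. One example is $\alpha_2+\alpha_3 = \h_x-\M_2-\M_4+\dots$, which can be massaged into the difference between the fibre $x=-c$ through $z_2$ and the fibre $x=\tfrac{cn}{\gamma-1}$ through $z_4$. These two curves have classes $\h_x-\M_2$ and $\h_x-\M_4$ and both meet $D_4$, the strict transform of the fibre $x=\infty$, or a component over $z_5$. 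The corresponding residue integral then produces the $n$-dependence. For $D_5 = \M_7-\M_8$ and the chain over $z_5,\dots,z_8$, I would compute the residue in the chart $(U_7,V_7)$. There the location $\frac{c^3(2\gamma-\beta+c+n)}{(\gamma-1)^2}$ of $z_8$ enters, and this is what produces $n+1$ in $a_0$.

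The main obstacle is the bookkeeping over the infinitely near points $z_5,\dots,z_8$. The residue along $D_5$ must be computed in the third blowup chart, where $\omega$ acquires factors $V_7^{-1}$, and both the constant $k$ and the orientation must be tracked consistently with the sign fixed by $\alpha_1$.

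As a final check I would verify the normalisation $\chi(\delta)=a_0+a_1+a_2+a_3=1$. I would also check compatibility with the dynamics \eqref{eq-PhiOnRoots-Meixner}, namely that $n\mapsto n+1$ sends $a_0\mapsto a_0+1$ and $a_3\mapsto a_3-1$ while fixing $a_1$ and $a_2$. These checks matter because my quick evaluation of $\chi(\M_1-\M_3)$ gives $\pm(\beta-\gamma)$, whereas the stated values give $a_0+a_3=1-\gamma$. I expect this discrepancy to be resolved, possibly by correcting a constant in one of the stated $a_i$, at exactly this step.
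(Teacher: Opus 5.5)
Your method, residues of $k\,\frac{dx\wedge dy}{x+y}$ along the multiplicity-one components of $D$, is the standard computation the paper invokes without giving details. The discrepancy you found is real: the lemma as printed cannot be proved.

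The stated values sum to $(n+1)+(\gamma-1)+(2-\beta)+(-\gamma-n)=2-\beta$. So the claimed normalisation already fails unless $\beta=1$.

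There is also no sign ambiguity left in your $D_0$ computation. The roots $\alpha_1=\M_2-\M_1$ and $\alpha_0+\alpha_3=\M_1-\M_3$ are differences of exceptional curves over points of the same component. Once $a_1=\gamma-1$ fixes the convention $\chi(\M_i-\M_j)=k\,(x(z_i)-x(z_j))$, you get $a_0+a_3=\beta-\gamma$ exactly, not $\pm(\beta-\gamma)$. Likewise $\chi(\M_2-\M_3)=\chi(\delta)-a_2=\beta-1$, which is consistent with $a_2=2-\beta$ and with the nodal curve appearing at $\beta=1$.

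The printed $a_3=-\gamma-n$ is just the $\beta=1$ value from \Cref{prop:rootvarsdegenMeixner}. The correct value is $a_3=\beta-\gamma-n-1$. This agrees with the factor $\gamma+n+1-\beta=-a_3$ in the formula for $p$ in \Cref{prop:coords-KNY-Meixner}, just as $\gamma+n=-a_3$ in the degenerate case.

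So the step that fails is your last one. Instead of expecting the discrepancy to resolve in favour of the statement, you should finish the computation, which pins down the wrong entry. Your sketch of that part needs two repairs.

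First, $\alpha_2+\alpha_3$ is exactly $\h_x-\M_2-\M_4=(\h_x-\M_2)-\M_4$, the fibre $x=-c$ minus $M_4$. Both curves meet $D_1$, the strict transform of $y=\infty$, not $D_4$. Vertical fibres are disjoint from $x=\infty$, and your "massaged" difference $(\h_x-\M_2)-(\h_x-\M_4)=\M_4-\M_2$ is not even orthogonal to $D_0$. Along $D_1$ the residue is $k\,dx$ with the sign opposite to that on $D_0$. You can check this with the opposite roots $(\h_x-\M_4)-\M_1$ on $D_0$ and $\M_4-(\h_x-\M_1)$ on $D_1$. This gives $a_2+a_3=-k\bigl(-c-\tfrac{cn}{\gamma-1}\bigr)=1-\gamma-n$.

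Second, $\alpha_3$ is directly available, contrary to what you say. Write $\alpha_3=\M_8-[C]$, where $C=\{c^2x+y(c(\gamma+c-1)-(\gamma-1)y)=0\}$ is the curve in the denominator of $p$ and has class $\h_x+2\h_y-\M_1-\M_2-\M_5-\M_6-\M_7$. It meets $D_5$ (class $\M_7-\M_8$) at $U_7=c^3(c+\gamma-1)/(\gamma-1)^2$. The curve $M_8$ meets $D_5$ at $U_7=c^3(2\gamma-\beta+c+n)/(\gamma-1)^2$. In the chart $(U_7,V_7)$ the residue is a constant multiple of $dU_7$. With the orientation fixed by $a_1$ one gets $\chi([C_1]-[C_2])=-\tfrac{(\gamma-1)^2}{c^3}\bigl(U_7(C_1\cap D_5)-U_7(C_2\cap D_5)\bigr)$. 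Hence $a_3=\beta-\gamma-n-1$, and then $a_0=n+1$, $a_2=2-\beta$ and $\sum a_i=1$.

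As a check on relative signs, take $\alpha_0=[C']-\M_8$, with $C'$ of class $\h_x+2\h_y-\M_2-\M_3-\M_5-\M_6-\M_7$. This curve meets $D_5$ at $U_7=c^3(c+2\gamma-1-\beta)/(\gamma-1)^2$. It gives $a_0=n+1$ directly, and the resulting $a_0+a_3=\beta-\gamma$ agrees with the $D_0$ value.
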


We are now ready to establish the remaining part of \Cref{mainthm:genMeixner}.
\begin{proposition}\label{prop:coords-KNY-Meixner}
The following change of variables and parameter matching simultaneously identifies the discrete and differential systems \eqref{eq-Meixner-disc-body} and \eqref{eq-Meixner-diff} from the generalised Meixner weight with the Sakai discrete Painlev\'e equation \eqref{eq-Sakai-dP-S3} and the KNY Hamiltonian form \eqref{eq-KNY-Ham5-sys} of $\pain{V}$ respectively: 
\begin{equation}\label{eq:KNYtoMeixner}
   	 \left\{
		\begin{aligned}
   	 	q&= \frac{(\gamma-1)y(y-c)}{c^2(x+y)},\\
   		p&= \frac{ c(x+c)}{y-c} + \frac{ c^2 (\gamma+n+1-\beta)(x+y)}{c^2 x + y \left( c(\gamma+c-1) - (\gamma-1)y\right)}, \\
	   	 \end{aligned}
		 \qquad
		 \begin{aligned}
		 a_{0}&= n+1,		&\quad &a_{1}=\gamma-1,\\  
		a_{2}&= 2-\beta,   		&\quad &a_{3}=- \gamma - n, 
		\end{aligned}
	\qquad t=c.
	\right.
    \end{equation}
\end{proposition}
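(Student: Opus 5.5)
The plan follows Step 5 of the identification procedure. By \Cref{lem:KNY-to-Meixner-Pic}, the lattice identification $\Upsilon:\Pic(\mathcal{Z})\to\Pic(\X)$ tells us which curves on $Z_{\boldsymbol{b},c}$ must be sent to the fibres of the two rulings of $\p^1\times\p^1$ underlying $X_{\boldsymbol{a},t}$. The coordinate $q$ should be a generator of the pencil $|\h_q|=|\h_x+2\h_y-\M_1-\M_2-\M_5-\M_6|$. The coordinate $p$ should be a generator of the pencil $|\h_p|=|2\h_x+3\h_y-\M_1-2\M_2-2\M_5-\M_6-\M_7-\M_8|$.

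\textbf{Constructing $q$.} I will first write the general element of the linear system of bidegree $(1,2)$ curves in $(x,y)$ through $z_1=(0,0)$, $z_2=(-c,c)$, $z_5=(\infty,\infty)$ and the infinitely near point $z_6$. This is a two-dimensional space of polynomials, so it gives a pencil. Its elements are spanned by $(\gamma-1)y(y-c)$ and $c^2(x+y)$, and their ratio gives $q$ up to a M\"obius transformation. That transformation is fixed by requiring the following special members of the pencil to have the prescribed classes:
\begin{itemize}
\item the member of class $\E_1+(\h_q-\E_1-\E_3)+\E_3$, i.e.\ containing $\delta_0=\h_x+\h_y-\M_1-\M_2-\M_3-\M_5$ (the line $x+y=0$), must go to $q=\infty$;
\item the member containing $\alpha_1=\M_2-\M_1$ must go to $q=0$;
\item the member containing $\h_q-\E_7-\E_8$ must go to $q=1$.
\end{itemize}
The root variables in \Cref{lem:rootvarsgenMeixner} ensure the normalisations agree with the KNY point configuration in \Cref{fig:KNY:pointlocations}.

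\textbf{Constructing $p$.} The class $\h_p$ has bidegree $(3,2)$ with a double point at $z_2$ and conditions along the chain $z_5\leftarrow\cdots\leftarrow z_8$. Rather than solving for a general bidegree $(3,2)$ polynomial, I will use the decomposition $\h_p=\E_3+\E_4+\alpha_2$ together with the parametrisation of the divisors visible in \Cref{fig:Meixner-soic}. This gives an ansatz of the form
\[
p=\frac{c(x+c)}{y-c}+\frac{\kappa\,(x+y)}{\text{bidegree }(1,2)\text{ denominator}},
\]
where the first term handles the $(x,y)$-poles along $y=c$. I will fix the constant $\kappa$ and the denominator by imposing passage through the infinitely near points $z_7$ and $z_8$. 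Then I will normalise $p$ by an affine map so that $\E_1$ and $\E_3$ sit at $p=-t$ and $p=0$ on the fibre $q=\infty$, with $t=c$. I expect this step to be the main obstacle: the multiplicity-two conditions at $z_2$ and at $z_5,z_6$ make the linear algebra delicate, and the extra parameter $t$ must be matched consistently with $\sigma$-type sign choices.

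\textbf{Verification.} With the candidate $(q,p)$ in hand, I will check directly that the map sends \eqref{eq-Meixner-disc-body} to \eqref{eq-Sakai-dP-S3}, with $\bar a_0=a_0+1$ and $\bar a_3=a_3-1$, which matches $n\mapsto n+1$ under the parameter identification. I will also check that it sends \eqref{eq-Meixner-diff} to \eqref{eq-KNY-Ham5-sys} with $t=c$. Both checks are substitutions amenable to computer algebra. For the differential system, any constant factor between $dt$ and $dc$ would have to be absorbed by the identification $t=c$; failure there would signal a wrong normalisation of $p$, to be corrected by the residual affine freedom.
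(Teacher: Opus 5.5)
Your overall route is the one the paper intends: it omits this proof as a standard Step 5 computation followed by direct substitution. However, the verification step, which is what actually proves the statement, will not succeed with the data as written, and your proposal asserts that it will.

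The parameter values violate the normalisation: $(n+1)+(\gamma-1)+(2-\beta)+(-\gamma-n)=2-\beta$. So your claim that \Cref{lem:rootvarsgenMeixner} ``ensures the normalisations agree with the KNY point configuration'' fails for $\beta\neq1$. Directly from the formula, put $C_8:=c^2x+y\bigl(c(\gamma+c-1)-(\gamma-1)y\bigr)$. Then $q-1=-C_8/\bigl(c^2(x+y)\bigr)$, which gives the identity
\[
(q-1)\,p=-(\gamma+n+1-\beta)-\frac{(x+c)\,C_8}{c\,(x+y)(y-c)}.
\]
The curve $C_8=0$ is the image of $E_8$, on which $(q-1)p\equiv a_3$ in the KNY model. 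Hence $a_3=\beta-\gamma-n-1=1-a_0-a_1-a_2$.

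The flow confirms this. On the line $x=0$ one has $q=(\gamma-1)(y-c)/c^2$ and $p=\frac{\gamma-1}{q}-\frac{\gamma+n+1-\beta}{q-1}$. There \eqref{eq-Meixner-diff} gives $\frac{dq}{dc}=q^2-q+\frac{(\beta-n-2)q-(\gamma-1)}{c}$. This matches the first equation of \eqref{eq-KNY-Ham5-sys} at $t=c$ only for $a_1=\gamma-1$ and $a_3=\beta-\gamma-n-1$.

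Your discrete check would pass and hide the problem. The map \eqref{eq-Sakai-dP-S3} involves only $a_0,a_1,a_2$ and $t$, and $a_3\mapsto a_3-1$ holds for either value of $a_3$. The differential check with $a_3=-\gamma-n$, by contrast, fails unless $\beta=1$. You must impose $\sum a_i=1$ and prove the statement with $a_3=\beta-\gamma-n-1$. The printed value is correct only in the degenerate case of \Cref{prop:coords-KNY-dM}.

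Secondly, your recipe for the special members uses classes that are not effective: $\alpha_1=\M_2-\M_1$, $\alpha_3=\h_q-\E_7-\E_8$, and the trivial identity $\h_p=\E_3+\E_4+\alpha_2$. So it cannot be carried out as stated. The correct description is as follows.
\begin{itemize}
\item $q=\infty$ is the member containing $\{x+y=0\}\cup\{y=\infty\}$, which you have right.
\item $q=0$ is $\{y(y-c)=0\}$ together with $\{x=\infty\}$. Its components are $\h_y-\M_1=\h_q-\E_5$, $\h_y-\M_2=\E_6$ and $\h_x-\M_5-\M_6=\delta_4$.
\item $q=1$ is the member $C_8=0$ through $z_7$, of class $\E_8$.
\item The polar divisor of $p$ is $\delta_3+\delta_4+\delta_5+\E_6+\E_8$, i.e.\ $\{x=\infty\}\cup\{y=c\}\cup\{C_8=0\}$. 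This is exactly where the two denominators in $p$ come from.
\end{itemize}
Two smaller slips: $\h_p$ has multiplicity two only at $z_2$ and $z_5$, not at $z_6$. In your own convention it has bidegree $(2,3)$, not $(3,2)$.
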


Similarly to the case of the perturbed Laguerre weight in \Cref{sec:defLUE}, if $n$ is assumed generic and complex, the surfaces $Z_{\boldsymbol{b},c}$ are generic and we have the full symmetry group $\widehat{W}(A_3^{(1)})$.

\subsection{Semi-classical Meixner weight on $\Z_{\geq 0}$} % (fold)
\label{sec:degenMeixner}

\begin{notation*}
For the discrete and differential systems from the semi-classical Meixner weight \eqref{weight-M} and associated $D_5^{(1)}$ surfaces we use the same notation as for the generalised case:
coordinates $(x,y)$; parameters $\gamma, n$; continuous independent variable $c$; centres of blowups $z_{i}$; exceptional divisors $M_{i}$; birational maps $\phi_{\boldsymbol{b},c}$; surfaces $Z_{\boldsymbol{b},c}$; isomorphisms $\tilde{\phi}_{\boldsymbol{b},c}$; induced action $\Phi$ on $\Pic(\mathcal{Z})$.
\end{notation*}

\subsubsection{The discrete and differential systems}

We first recall the discrete and differential systems derived in  \cite{BoelenFilipukVanAssche}, which are
\begin{equation}\label{eq-degen-Meixner-disc-body}
	\left\{
		(\bar{x} + y)(x + y) = \frac{(\gamma-1)}{c^2} y (y-c)^2  ,\quad
		(x + \ubar{y})(x + y) = \frac{x(x+c)^2}{x- \tfrac{cn}{\gamma-1}},
	\right.
\end{equation}
and 
\begin{equation}\label{eq-degen-Meixner-diff}
\left\{
	\begin{aligned}
		\frac{dx}{dc} &=  \frac{(n+1)x+n y}{c}-\frac{\gamma -1}{c^2} x (x+y) + \frac{(\gamma -1)x (x+c)^2}{c^2(x+y)}  ,\\
		 \frac{dy}{dc} &= -  x +  \frac{y}{c} + y \left( -1 + \frac{(\gamma-1)(y-c)^2}{c^2(x+y)}  \right),
         % -\frac{(\beta -1) y (c-y)}{c (x+y)}
	\end{aligned}
	\right.
\end{equation}
respectively.
Note that these can be obtained by setting $\beta=1$ in the systems \eqref{eq-Meixner-disc-body} and \eqref{eq-Meixner-diff}.

\subsubsection{Surfaces}

The surfaces on which the discrete system \eqref{eq-degen-Meixner-disc-body} is regularised are essentially the result of setting $\beta=1$ in those from the generalised Meixner case above, but again this degeneration causes a nodal curve to appear, so we provide some details.
\begin{lemma}
    Through the sequence of eight blowups of points $z_1,\dots,z_8$ given in \Cref{fig:degenMeixner:pointlocations}, we obtain a rational surface $Z_{\boldsymbol{b},c} = \Bl_{z_1 \cdots z_8}(\p^1 \times \p^1)$ such that the birational map defined by the discrete system \eqref{eq-degen-Meixner-disc-body} becomes an isomorphism
    $$ \tilde{\phi}_{\boldsymbol{b},s} : Z_{\boldsymbol{b},s} \longrightarrow Z_{\bar{\boldsymbol{b}},s}.$$
\end{lemma}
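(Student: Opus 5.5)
The plan is to obtain the space of initial conditions by specialising the construction for \eqref{eq-Meixner-disc-body} to $\beta=1$ and tracking which centres of blowups merge. At $\beta=1$ the point $z_3$ of \Cref{fig:Meixner:pointlocations} becomes $\left(\tfrac{c(1-\gamma)}{\gamma-1},\tfrac{c(\gamma-1)}{\gamma-1}\right)=(-c,c)=z_2$. So we expect the same eight-point configuration as in the generic case, with two changes. First, $z_3$ should now be infinitely near $z_2$, lying on $M_2$. Second, $z_8$ is specialised to $(U_7,V_7)=\left(0\cdot\right)$-chart value $\left(\tfrac{c^3(2\gamma-1+c+n)}{(\gamma-1)^2},0\right)$. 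The points $z_1,z_4,z_5,z_6,z_7$ should be unchanged. The proof will then be the standard one: compute indeterminacies and contracted curves of $\phi_{\boldsymbol{b},c}$ and $\phi_{\boldsymbol{b},c}^{-1}$ and check that all of them are resolved by these blowups.

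\textbf{Step 1 (factorisation).} Write $\phi_{\boldsymbol{b},c}$ as a composition of two simpler maps. The first is $(x,y)\mapsto(\bar x,y)$, with
$$\bar x=-y+\frac{(\gamma-1)y(y-c)^2}{c^2(x+y)}.$$
The second is $(\bar x,y)\mapsto(\bar x,\bar y)$, obtained from the second equation of \eqref{eq-degen-Meixner-disc-body} shifted by $n\mapsto n+1$:
$$\bar y=-\bar x+\frac{\bar x(\bar x+c)^2}{(\bar x-\tfrac{c(n+1)}{\gamma-1})(\bar x+y)}.$$
Each factor is a QRT-type half-map that is linear fractional in one variable. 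Its base points on $\p^1\times\p^1$ can be read off directly:
\begin{itemize}
\item the first factor has base points where $x+y=0$ meets $y(y-c)^2=0$, i.e. $(0,0)$ and $(-c,c)$, together with points at infinity;
\item the second factor has the analogous points in $(\bar x,y)$, together with $\bar x=\tfrac{c(n+1)}{\gamma-1}$, $y=\infty$.
\end{itemize}

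\textbf{Step 2 (resolving the merged point).} At $(-c,c)$ the factor $(y-c)^2$ has a double zero, so after blowing up $z_2$ there is a further base point on $M_2$. In the chart $u_2=x+c$, $v_2=(y-c)/(x+c)$, we will check that this point is $v_2=-1$. That is the direction of the line $x+y=0$, namely the intersection of $M_2$ with the strict transform of $x+y=0$, which is the component $D_0$ of the anticanonical divisor. Blowing it up gives $z_3$, and $M_2-M_3$ becomes an effective $(-2)$-class not contained in $D$. We expect this class to be a nodal curve, to be identified in the sequel with a root of zero period, namely $a_2=2-\beta=1$.

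\textbf{Step 3 (the cascade at infinity).} Redo the cascade $z_5\leftarrow z_6\leftarrow z_7\leftarrow z_8$ at $(\infty,\infty)$ with $\beta=1$. We then verify that the lifted map sends each exceptional curve and each contracted line to a curve, consistently with the action $\Phi$ of \Cref{lem:induceddynamicsonPicMeixner}. In particular, $\M_3$ should now map to the class of a curve passing appropriately through the merged configuration. Once all indeterminacies of both $\phi$ and $\phi^{-1}$ lift to isomorphisms, the lemma follows.

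\textbf{Main obstacle.} The main difficulty is Step 2: deciding precisely where on $M_2$ the second point sits, and confirming that no additional blowup is needed. The first factor vanishes to second order along $y=c$ while the denominator $x+y$ vanishes to first order, so the order-of-contact computation must be done carefully in the charts $(u_2,v_2)$. If a ninth point appeared, the configuration would need reconsideration. I expect, however, that exactly one infinitely near point appears, located at $D_0\cap M_2$.
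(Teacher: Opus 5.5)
Your proposal is correct and is essentially the paper's approach. The paper omits the proof as the standard computation of indeterminacy points and contracted curves of $\phi_{\boldsymbol{b},c}$ and its inverse, and presents the configuration exactly as the $\beta=1$ degeneration of the generic one, with $z_3$ becoming infinitely near $z_2$ at the intersection of $M_2$ with the strict transform of $x+y=0$. Your local check that this point is $v_2=-1$ (i.e.\ $(U_2,V_2)=(-1,0)$ in the paper's chart), that one further blowup suffices, and your specialised value for $z_8$ all agree with the paper's point data; only the chart wording in that $z_8$ sentence is garbled.
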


\begin{figure}[htb]
\begin{equation}
\begin{gathered}
z_1 : (x,y)=(0,0), \quad z_2 : (x,y)=(-c,c) ~~  \leftarrow~~ z_3 : (U_2,V_2) = \left(  \frac{x+c}{y-c},y-c \right) = (-1, 0),  \\
z_4 : (x,y)=\left( \frac{c n }{\gamma-1}, \infty\right), \\
\begin{aligned}
z_5 &: (x,y)=(\infty, \infty) 							& \leftarrow   	\qquad  	&z_6 : (U_5,V_5) = \left( \frac{y}{x} , \frac{1}{y}\right) = (0,0) \\
	&										& 			 		&\uparrow \\
	&										& 				&z_7 : (U_6, V_6) = \left(  \frac{U_5}{V_5}, V_5\right) = \left(  \frac{c^2}{\gamma-1},0 \right) \\
	&										& 			 		&\uparrow \\
	&										& 				&z_8 : (U_7, V_7) = \left( \frac{U_6 - \frac{c^2}{\gamma-1}}{V_6}, V_6\right) = \left(  \frac{c^3(2 \gamma+ c + n-1)}{(\gamma-1)^2} ,0\right).
\end{aligned}
\end{gathered}
\end{equation}    
\caption{Blowup point locations for the semi-classical Meixner weight}
    \label{fig:degenMeixner:pointlocations}
\end{figure}

We give an illustration of the configuration of exceptional divisors $M_1,\dots,M_8$ arising from the blowups in \Cref{fig:degenMeixner:pointlocations}, with curves labelled by their classes in $\Pic(Z_{\boldsymbol{b},s})$.
In constrast to the generic case in \Cref{fig:Meixner-soic}, the point $z_3$ is now infinitely near to $z_2$, lying at the intersection of $M_2$ and the strict transform of the curve on $\p^1 \times \p^1$ defined by $x+y =0$.
This again means there is a nodal curve, which we indicate in green in \Cref{fig:DM-soic}.

\begin{figure}[htbp]
	\begin{tikzpicture}[>=stealth,basept/.style={circle, draw=red!100, fill=red!100, thick, inner sep=0pt,minimum size=1.2mm}]
	\begin{scope}[xshift=0cm,yshift=0cm]
	\draw [black, line width = 1pt] (-0.4,0) -- (2.9,0)	node [pos=1,right] {\small $y=0$};
	\draw [black, line width = 1pt] (-0.4,2.5) -- (2.9,2.5) node [pos=0,left] {\small $y=\infty$};
	\draw [black, line width = 1pt] (0,-0.4) -- (0,2.9) node [pos=1,above] {\small $x=0$};
	\draw [black, line width = 1pt] (2.5,-0.4) -- (2.5,2.9) node [pos=0,below] {\small $x=\infty$};
	\draw [black, line width = 1pt] (-0.4,-0.4) -- (2.9,2.9) node [pos=0,below left] {\small $x+y=0$};
	\node (z1) at (0,0) [basept,label={[xshift = -7pt, yshift=0pt] \small $z_{1}$}] {};
	\node (z2) at (1,1) [basept,label={[xshift=5pt, yshift=-15pt] \small $z_{2}$}] {};
	\node (z3) at (1.2,1.75) [basept,label={[xshift = -7pt, yshift=-7pt] \small $z_{3}$}] {};
	\node (z4) at (1.5,2.5) [basept,label={[xshift = 0pt, yshift=0pt] \small $z_{4}$}] {};
	\node (z5) at (2.5,2.5) [basept,label={[xshift = -7pt, yshift=0pt] \small $z_{5}$}] {};
	\node (vert) at (2.5,2.1) {} ;
	\node (z6) at (3,1.7) [basept,label={[xshift = 0pt, yshift=-15pt] \small $z_{6}$}] {};
	\node (z7) at (3.625,1.7) [basept,label={[xshift = 0pt, yshift=-15pt] \small $z_{7}$}] {};
	\node (z8) at (4.25,1.7) [basept,label={[xshift = 0pt, yshift=-15pt] \small $z_{8}$}] {};
%	\draw [red, line width = 0.8pt, ->] (z3) -- (1.3536,1.3536) --  (z2);
	\draw [red, line width = 0.8pt, ->] (z3) -- (1.2,1.2) --  (z2);
	\draw [red, line width = 0.8pt, ->] (z6) -- (2.5,2.1) -- (z5);
	\draw [red, line width = 0.8pt, ->] (z7) -- (z6);	
	\draw [red, line width = 0.8pt, ->] (z8) -- (z7);	
	\end{scope}
	\draw [->] (6.75,1)--(4.75,1) node[pos=0.5, below] {$\operatorname{Bl}_{z_{1}\cdots z_{8}}$};
	\begin{scope}[xshift=8cm,yshift=-1cm]
	\draw [red, line width = 1pt] (0,0.6) -- (0,5)	node [pos=1, xshift=-10pt,yshift=10pt] {\small $\h_{x}-\M_{1}$};
	\draw [red, line width = 1pt] (-0.2,1) -- (1,-0.2)	node [pos=1, below] {\small $\M_{1}$};
	\draw [red, line width = 1pt] (0.6,0) -- (4.2,0)	node [pos=1, right] {\small $\h_{y}-\M_{1}$};
	\draw [red, line width = 1pt] (.4,1.8) -- (1.3,0.9)	node [pos=1, right] {\small $\M_{3}$};
	\draw [darkgreen, line width = 1pt] (.4,1.4) -- (1.7,2.7)	node [pos=1, xshift=-28pt,yshift=-3pt] {\tiny $\M_{2}-\M_{3}$};
	\draw [red, line width = 1pt] (1.5,2.2) -- (1.5,5)	node [pos=1, xshift=-7pt,yshift=10pt] {\small $\h_{x}-\M_{2}$};
	\draw [blue, line width = 1pt] (.2,.2) -- (4.5,4.5)	node [pos=0, left] {\small $\h_{x}+\h_y-\M_{1,2,3,5}$};
%	\draw [blue, line width = 1pt] (2.8,3.9) -- (4.3,2.4)	node [pos=1, right] {\small $M_{5}-M_{6}$};
	\draw [red, line width = 1pt] (3.1,3.8) -- (1.9,5)		node [pos=0, xshift=9pt,yshift=5pt] {\small $\M_{4}$};
	\draw [red, line width = 1pt] (2.9,-.3) -- (2.9,4.3)		node [pos=0, below] {\small $\h_{x}-\M_{4}$};
	\draw [blue, line width = 1pt] (3.4,4.9) -- (5.1,3.2)	node [pos=0, above] {\small $\M_{5}-\M_{6}$};
	\draw [blue, line width = 1pt] (5.1,3.7) -- (3.7,2.3)	node [pos=0, right] {\small $\M_{6}-\M_{7}$};
	\draw [blue, line width = 1pt] (4.1,3.1) -- (5.2,2)		node [pos=1, right] {\small $\M_{7}-\M_{8}$};
	\draw [red, line width = 1pt]  (4.7,1.9) -- (5.3,2.5)		node [pos=1, right] {\small $\M_{8}$};
	\draw [blue, line width = 1pt]  (3.9,-.3) -- (3.9,2.8)	node [pos=0.5, right] {\small $\h_{x}-\M_{5}-\M_{6}$};
	\draw [blue, line width = 1pt] (-0.3,4.7) -- (3.9,4.7)	node [pos=0, left] {\small $\h_{y}-\M_{4}-\M_{5}$};
	\end{scope}
	\end{tikzpicture}
	\caption{The $D_{5}^{(1)}$ Sakai surface (with nodal curve) from the semi-classical Meixner weight}
	\label{fig:DM-soic}
\end{figure}

The following is established along the same lines as \Cref{lem:prop:ACdiv-LUE}.
\begin{proposition} \label{lem:prop:ACdiv-degenMeixner}
        $Z_{\boldsymbol{b},c}$ is a Sakai surface of type $\mathcal{R}=D_5^{(1)}$ with $\M_2-\M_3\in \Delta^{\operatorname{nod}}$ for every $\boldsymbol{b},c$.
\end{proposition}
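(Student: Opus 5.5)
I will prove the statement along the same lines as \Cref{lem:prop:ACdiv-LUE}. There are two parts: identify the unique effective anticanonical divisor and its components, and exhibit the nodal curve.

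\textbf{Anticanonical divisor.} I will take the $2$-form $\omega$ on $Z_{\boldsymbol{b},c}$ obtained by pulling back $\frac{dx\wedge dy}{x+y}$ from $\p^1\times\p^1$. On $\p^1\times\p^1$ its polar divisor is $(x+y=0)+2(x=\infty)+2(y=\infty)$, which has class $2\h_x+2\h_y$. I will then trace $\operatorname{div}\omega$ through the blowups in \Cref{fig:degenMeixner:pointlocations}, using the standard rule: blowing up a point of multiplicity $m$ on the polar divisor subtracts $m-1$ copies of the new exceptional class from $-\operatorname{div}\omega$.
\begin{itemize}
\item The points $z_1$, $z_2$ lie on $x+y=0$ with multiplicity $1$. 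The point $z_3$ lies on the intersection of $M_2$ with the strict transform of $x+y=0$. Since $M_2$ is not polar, $z_3$ also has multiplicity $1$.
\item The point $z_4$ lies on $y=\infty$ (multiplicity $2$). The point $z_5=(\infty,\infty)$ has multiplicity $4$, and the infinitely near points $z_6,z_7,z_8$ have multiplicities computable in the charts $(U_i,V_i)$.
\end{itemize}
The outcome should be
\[
D=(\h_x+\h_y-\M_1-\M_2-\M_3-\M_5)+(\h_y-\M_4-\M_5)+2(\M_5-\M_6)+2(\M_6-\M_7)+(\h_x-\M_5-\M_6)+(\M_7-\M_8).
\]
This is exactly the computation of the generic case $\beta\neq1$. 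The only change is that $z_3$ now sits on $M_2$, and this does not alter the local multiplicity of the polar divisor there. Hence the components and classes are those of \Cref{fig:d-roots-Meixner}, with the $D_5^{(1)}$ intersection graph. The sum of the classes is $2\h_x+2\h_y-\sum\M_i=-\K_{\mathcal{Z}}$, and the divisor is unique because $\omega$ has no zeros and $\dim|-\K|=0$ by the standard Sakai argument. This gives surface type $D_5^{(1)}$.

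\textbf{Nodal curve.} Because $z_3$ is infinitely near $z_2$ (on $M_2$), the strict transform of $M_2$ after blowing up $z_3$ is an irreducible smooth rational curve of class $\M_2-\M_3$, with self-intersection $-2$. I will check that it is not a component of $D$: its class is not among the $\delta_i$. By the genus formula remark after \Cref{def:nodalcurve}, it is then disjoint from $D$. Its intersection numbers with the $\delta_i$ can also be verified directly to vanish; for instance $(\M_2-\M_3)\bullet\delta_0=1-1=0$. So $\M_2-\M_3\in\Delta^{\operatorname{nod}}$ for all $\boldsymbol{b},c$.

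\textbf{Main obstacle.} The delicate step is confirming that $z_3$ does not lie on $D$ in a way that changes multiplicities. In particular I need $z_3$ to sit on the transform of $x+y=0$ but not on any other polar component. This requires checking in the chart $(U_2,V_2)$ that the line $x+y=0$ corresponds to $U_2=-1$, which matches the given coordinates of $z_3$.
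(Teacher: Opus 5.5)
Your route is the paper's: its proof just says ``along the same lines as'' the Laguerre case. That means reading off the components of $D=-\operatorname{div}\omega$ for $\omega=\frac{dx\wedge dy}{x+y}$, noting they give the classes of \Cref{fig:d-roots-Meixner}, and observing that the strict transform of $M_2$ after blowing up $z_3\in M_2$ is a rational $(-2)$-curve of class $\M_2-\M_3$ not among those components. Your nodal-curve paragraph is correct. So is your check that $z_3$ is the point of $M_2$ on the strict transform of $x+y=0$ (via $U_2=-1$), and so is the divisor $D$ you finally write down. But the multiplicity data you feed into the blowup rule is wrong, and your $D$ does not follow from it.

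The factor $1/(x+y)$ vanishes to first order along $x=\infty$ and $y=\infty$, so the double poles of $dx\wedge dy$ there become simple. With $X=1/x$, $Y=1/y$ one gets $\omega=-\frac{dX\wedge dy}{X(1+Xy)}$ and $\omega=\frac{dX\wedge dY}{XY(X+Y)}$. Hence the polar divisor on $\p^1\times\p^1$ is $\{x+y=0\}+\{x=\infty\}+\{y=\infty\}$. Your $\{x+y=0\}+2\{x=\infty\}+2\{y=\infty\}$ has class $3\h_x+3\h_y$, not $2\h_x+2\h_y$.

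Consequently $z_4$ has multiplicity $1$, not $2$. Likewise $z_5=(\infty,\infty)$ has multiplicity $3$, not $4$: the curve $x+y=0$ passes through $(\infty,\infty)$, which is why $\M_5$ occurs in $\delta_0$. With your numbers the rule would put $M_4$ into $D$ with coefficient $1$ and $M_5$ with coefficient $3$, contradicting the outcome you state.

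With the correct data the computation works. For example, in the chart $(U_5,V_5)$ one finds $\omega=\frac{dU_5\wedge dV_5}{U_5V_5^2(U_5+1)}$, so $z_6$ has multiplicity $2+1$. The multiplicities at $z_1,\dots,z_8$ are $1,1,1,1,3,3,2,1$, giving coefficients $0,0,0,0,2,2,1,0$ for $M_1,\dots,M_8$ in $D$, which is exactly your divisor. So this step must actually be carried out rather than asserted; once corrected, your argument is complete and agrees with the paper's.

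A minor point: ``$\omega$ has no zeros'' only gives $D\in|-\K_{\mathcal{Z}}|$. Uniqueness requires excluding the index-one Halphen case, which the paper also leaves implicit.
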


\subsubsection{Identification with the standard model}

Again identifying all $\Pic(Z_{\boldsymbol{b},c})$ into the single lattice $\Pic(\mathcal{Z})$, the linear action induced on $\Pic(\mathcal{Z})$ is the same as in \Cref{lem:induceddynamicsonPicMeixner}, so we can use the same identification on the level of Picard lattices as long as we give consideration to the nodal curve.
\begin{lemma}\label{lem:KNY-to-degenMeixner} 
The identification $\Pic(\mathcal{Z}) \to \Pic(\X)$ of Picard lattices can be taken to be the same as in \Cref{lem:KNY-to-Meixner-Pic}.
	This results in the same matching of surface and symmetry root bases, but the class of the nodal curve on $Z_{\boldsymbol{b},c}$ corresponds under this identification to
	\begin{equation}
		\M_2 - \M_3 = \alpha_0+\alpha_1+\alpha_3 = \delta - \alpha_2 = 2\mathcal{H}_q + \mathcal{H}_p - \E_1 - \E_2 - \E_5 -\E_6-\E_7-\E_8.
	\end{equation}
\end{lemma}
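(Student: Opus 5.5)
The lemma is a lattice-level statement, so I would prove it in two steps. First I would justify reusing the identification, and then I would compute the image of the new nodal class.

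\textbf{Reusing the identification.} The construction in \Cref{fig:degenMeixner:pointlocations} differs from that in \Cref{fig:Meixner:pointlocations} only in the position of $z_3$. The enumeration of the blowups is unchanged, so $\Pic(\mathcal{Z})$ is the same abstract lattice with the same intersection form and anticanonical class. The components of the anticanonical divisor (shown in blue in \Cref{fig:DM-soic}) have the same classes as in \Cref{fig:d-roots-Meixner}. The one point to check is the component $\h_x+\h_y-\M_1-\M_2-\M_3-\M_5$. Since $z_3$ now lies on $M_2$ at its intersection with the strict transform of $x+y=0$, this strict transform still loses multiplicity one at each of $z_1,z_2,z_3,z_5$, so its class is unchanged. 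The remark preceding the lemma states that the induced action $\Phi$ is given by the same formulas as in \Cref{lem:induceddynamicsonPicMeixner}. The lattice isomorphism of \Cref{lem:KNY-to-Meixner-Pic} is an isometry sending surface roots to surface roots and intertwining $\Phi$ with $T_{\operatorname{Sak}}$, and all of these are purely lattice-theoretic facts. It therefore serves again, giving the same matchings \eqref{eq:geom-KNY-Meixner}, \eqref{eq:sym-KNY-Meixner} and \eqref{eq-PhiOnRoots-Meixner}.

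\textbf{The image of the nodal class.} I would use the expressions \eqref{eq:sym-KNY-Meixner} and compute directly in $\Pic(\mathcal{Z})$. The $\h$-terms and $\M_5,\dots,\M_8$ cancel in
$$\alpha_0+\alpha_3=(\h_x+2\h_y-\M_2-\M_3-\M_5-\M_6-\M_7-\M_8)+(-\h_x-2\h_y+\M_1+\M_2+\M_5+\M_6+\M_7+\M_8)=\M_1-\M_3 .$$
Adding $\alpha_1=\M_2-\M_1$ then gives $\M_2-\M_3=\alpha_0+\alpha_1+\alpha_3$, and since $\delta=\alpha_0+\alpha_1+\alpha_2+\alpha_3$ this equals $\delta-\alpha_2$. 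On the KNY side,
$$\delta-\alpha_2=(2\h_q+2\h_p-\E_1-\cdots-\E_8)-(\h_p-\E_3-\E_4),$$
which is exactly the stated class $2\h_q+\h_p-\E_1-\E_2-\E_5-\E_6-\E_7-\E_8$.

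\textbf{Consistency check.} This is the step needing the most care, though it is not strictly part of the claimed identity. An identification for a non-generic family must be compatible with effectiveness, so it should send the nodal class to a class that is effective on the constrained KNY surfaces. By \Cref{prop:nodalcurvesrootvars}, the class $\delta-\alpha_2$ is the class of a nodal curve precisely when $\chi(\delta-\alpha_2)=1-a_2=0$. In the KNY model this curve is realised by a curve of bidegree $(2,1)$ passing through $b_1,b_2,b_5,b_6,b_7,b_8$. By \Cref{lem:rootvarsgenMeixner}, $a_2=2-\beta$, so $a_2=1$ holds exactly when $\beta=1$, which is the semi-classical Meixner case. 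This confirms the constraint $a_2=1$ claimed in \Cref{mainthm:Meixner}. For the lattice identity itself no obstacle arises beyond the direct computation above.
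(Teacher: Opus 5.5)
Your proposal is correct and follows essentially the paper's route. The paper reuses the identification of \Cref{lem:KNY-to-Meixner-Pic} because $\Phi$ on $\Pic(\mathcal{Z})$ is unchanged, and it obtains the nodal class by the same direct computation $\alpha_0+\alpha_1+\alpha_3=\M_2-\M_3$ from \eqref{eq:sym-KNY-Meixner} that you carry out; your consistency check mirrors the paper's subsequent discussion of the $(2,1)$-curve through $b_1,b_2,b_5,b_6,b_7,b_8$ and the constraint $a_2=1$. One small refinement: specialising the generic formula $a_2=2-\beta$ at $\beta=1$ is a limiting argument (the generic blowup configuration degenerates there), so it is cleaner to apply \Cref{prop:nodalcurvesrootvars} on $Z_{\boldsymbol{b},c}$ itself, which gives $\chi(\M_2-\M_3)=1-a_2=0$ directly, as confirmed by \Cref{prop:rootvarsdegenMeixner}.
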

Under the identification, we have the same action on symmetry roots as in \cref{eq-PhiOnRoots-Meixner}, which coincides with that of the Sakai discrete Painlev\'e equation \eqref{eq-Sakai-dP-S3}.
The presence of a nodal curve again leads to a constraint on root variables.

\begin{proposition} \label{prop:rootvarsdegenMeixner}
    The root variables $a_i=\chi(\alpha_i)$, $i=0,1,2,3$, defined with respect to the period map $\chi$ given by the 2-form on $Z_{\boldsymbol{b},c}$ defined by $ k\frac{dx \wedge dy}{x+y}$, $k=\frac{1-\gamma}{c}$, are given by 
	\begin{equation}
		a_{0} =  n+1,\quad  a_{1} =\gamma-1 ,\quad
		a_{2} = 1 , \quad a_{3} = -\gamma - n.
	\end{equation}
\end{proposition}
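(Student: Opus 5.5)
The plan is to compute the root variables directly by the residue method. We use the 2-form $\omega = k\frac{dx\wedge dy}{x+y}$ with $k=\frac{1-\gamma}{c}$, exactly as in \Cref{lem:rootvarsgenMeixner}, and then set $\beta=1$. For $a_0$, $a_1$ and $a_3$ we will choose representations of $\alpha_0,\alpha_1,\alpha_3$ as differences of classes of irreducible $(-1)$-curves whose meeting points with $D$ do not involve $z_2$ or $z_3$ in a way changed by the degeneration. The generic computation then specialises directly, giving $a_0=n+1$, $a_1=\gamma-1$ and $a_3=-\gamma-n$.

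\textbf{Step 1: the subtle root $\alpha_1$.} In the generic case one would write $\alpha_1=\M_2-\M_1$ using the exceptional curves $M_2$ and $M_1$. Now $\M_2=(\M_2-\M_3)+\M_3$ is no longer irreducible, so instead we will use $\alpha_1=\M_3-\M_1$. Here $M_3$ is the exceptional curve of $z_3$, and it meets $D_0$, the proper transform of $x+y=0$, at a single point. Similarly, $M_1$ meets $D_0$ at the point over $z_1=(0,0)$. Then
$$\chi(\alpha_1)=\int_{P_1}^{P_3}\operatorname{Res}_{D_0}\omega,$$
where $P_1,P_3$ are these intersection points. On $D_0$, parametrised by $x$ (so $y=-x$), the residue is $k\,dx$ up to sign. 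Since $M_3$ hits $D_0$ at the point lying over $(-c,c)$, we get $a_1=\pm k(-c-0)=\pm(\gamma-1)$. The sign will be fixed by the same convention as in the generic lemma, which gives $a_1=\gamma-1$.

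\textbf{Step 2: remaining roots and normalisation.} For $a_0$ and $a_3$ we reuse the generic expressions, making sure no $\M_2$ appears as the class of a single $(-1)$-curve. For $\alpha_0$, for instance, we could use the proper transform of a suitable curve through $z_5,\dots,z_8$ minus $\M_3$, or rearrange with $\h_x-\M_2$, which is still an irreducible $(-1)$-curve (see \Cref{fig:DM-soic}). Then $a_2$ follows from the normalisation $a_0+a_1+a_2+a_3=1$, which gives $a_2=1$. As a consistency check, \Cref{prop:nodalcurvesrootvars} says $\chi(\M_2-\M_3)=0$. By \Cref{lem:KNY-to-degenMeixner}, $\M_2-\M_3=\delta-\alpha_2$, and with $\chi(\delta)=1$ this forces $a_2=1$. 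This confirms the value independently of the limit $\beta\to1$ in $a_2=2-\beta$.

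\textbf{Main obstacle.} The main difficulty is Step 1: choosing representatives that avoid reducible classes, and keeping track of signs and orientations in the residue integrals so that they match the conventions of \Cref{lem:rootvarsgenMeixner}. The values for $a_0$ and $a_3$ are routine once valid representatives have been chosen.
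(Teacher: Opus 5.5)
Your strategy (residue computations as in the generic case, with representatives that do not treat $\M_2$ as the class of an irreducible curve) is the one the paper indicates; compare the proof of \Cref{prop:rootvarsLUE}. However, Step 1 rests on a false identity. In $\Pic(\mathcal{Z})$ one has $\alpha_1=\M_2-\M_1=(\M_3-\M_1)+(\M_2-\M_3)$, so $\alpha_1\neq\M_3-\M_1$: you cannot discard the nodal component of $\M_2$ without changing the class. Your integral actually computes $\chi(\M_3-\M_1)=\chi(\alpha_1)-\chi(\M_2-\M_3)$. This equals $a_1$ only because $\chi(\M_2-\M_3)=0$, which holds by \Cref{prop:nodalcurvesrootvars}, or directly because $\omega$ restricts to zero on the compact curve $M_2-M_3\subset Z_{\boldsymbol{b},c}\setminus D$. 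That fact has to be invoked in Step 1 itself.

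Alternatively, in the spirit of the paper's $\alpha_0=(\h_g-\L_7)-\L_2$, take a genuine decomposition into irreducible $(-1)$-curves, e.g.\ $\alpha_1=(\h_x-\M_1)-(\h_x-\M_2)$. For generic $n$ the lines $x=0$ and $x=-c$ avoid $z_3$ and $z_4$, and both meet $D_1$, the proper transform of $y=\infty$, where the residue is again $\pm k\,dx$. Integrating between $x=-c$ and $x=0$ gives $a_1=\gamma-1$. The sign is fixed because in the generic case both decompositions of $\alpha_1$ are available and must agree.

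Your treatment of $a_2$ is also incomplete. Here $a_0+a_1+a_3=\chi(\alpha_0+\alpha_1+\alpha_3)=\chi(\M_2-\M_3)=0$, so $a_2=\chi(\delta)$. Deriving $a_2$ ``from the normalisation'' therefore assumes exactly what is being claimed, namely that $k=\frac{1-\gamma}{c}$ gives $\chi(\delta)=1$. Your route through \Cref{prop:nodalcurvesrootvars} presupposes the same fact, so it confirms nothing independently. This needs its own argument. One option is a direct residue computation with $\alpha_2=[2\h_x+2\h_y-\M_1-2\M_2-\M_5-\M_6-\M_7-\M_8]-[\M_4]$, both curves meeting only $D_1$. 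Another, once the map in \Cref{prop:coords-KNY-dM} is known to be an isomorphism of surfaces, is to observe that it pulls $dq\wedge dp$ back to exactly $k\frac{dx\wedge dy}{x+y}$; indeed $p-\frac{c(x+c)}{y-c}=\frac{\gamma+n}{1-q}$ there.

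Nor can the normalisation be quoted from \Cref{lem:rootvarsgenMeixner}. As printed, its values sum to $2-\beta$. Since $\alpha_0+\alpha_3=\M_1-\M_3$, the same $D_0$ residue computation gives $a_0+a_3=\beta-\gamma$ there, so its $a_3$ should read $\beta-1-\gamma-n$. This is harmless at $\beta=1$, but it means ``the generic computation specialises directly'' must actually be carried out at $\beta=1$ rather than quoted.
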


In order to obtain the change of variables to the special case of the KNY discrete Painlev\'e equation \eqref{eq-KNY-dP-S3} with parameter constraint $a_2=1$, we need to construct the subfamily of surfaces $X_{\boldsymbol{a},t}$ with this constraint on root variables.
Again this is straightforward given that we have identified in \Cref{lem:KNY-to-degenMeixner} that the nodal curve should correspond to 
$$\delta- \alpha_2 = 2\mathcal{H}_q + \mathcal{H}_p - \E_1 - \E_2 - \E_5 -\E_6-\E_7-\E_8,$$
so the subfamily of surfaces should have the points $b_1,b_2,b_5,b_6,b_7,b_8$ lying on a curve in $\p^1 \times \p^1$ of bidegree (2,1).
The condition for this to happen is precisely that $\chi(\delta-\alpha_2)=\chi(\alpha_0+\alpha_1+\alpha_3)=1-a_2=0$, which is as expected according to \Cref{prop:nodalcurvesrootvars}.
The following is established similarly to \Cref{prop:coords-KNY-pLUE,prop:coords-KNY-LUE,prop:coords-KNY-Meixner}.

\begin{proposition}\label{prop:coords-KNY-dM} The following change of variables and parameter matching simultaneously identifies the discrete and differential systems \eqref{eq-degen-Meixner-diff} and \eqref{eq-degen-Meixner-disc-body} with the KNY Hamiltonian form \eqref{eq-KNY-Ham5-sys} of $\pain{V}$ and the Sakai discrete Painlev\'e equation \eqref{eq-Sakai-dP} respectively, in the special case $a_2 = 1$: 
    \begin{equation}\label{eq:KNYtodM-5}
   	 \left\{
		\begin{aligned}
   	 	q&= \frac{(\gamma-1)y(y-c)}{c^2(x+y)},\\
   		p&= \frac{ c(x+c)}{y-c} + \frac{ c^2 (\gamma+n)(x+y)}{c^2 x + y \left( c(\gamma+c-1) - (\gamma-1)y\right)}, \\
	   	 \end{aligned}
		 \qquad
		 \begin{aligned}
		 a_{0}&= n+1,		&\quad &a_{1}=\gamma-1,\\  
		a_{2}&= 1,   		&\quad &a_{3}=- \gamma - n, 
		\end{aligned}
	\qquad t=c.
	\right.
    \end{equation}
\end{proposition}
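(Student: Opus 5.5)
The plan is to obtain this as the $\beta=1$ specialisation of \Cref{prop:coords-KNY-Meixner}, and then to justify carefully that the specialisation is legitimate on the non-generic surfaces. First, set $\beta=1$ in the change of variables \eqref{eq:KNYtoMeixner}. Then $\gamma+n+1-\beta=\gamma+n$, which gives exactly the formulas for $q,p$ stated above. The parameter matching becomes $a_2=2-\beta=1$, consistent with \Cref{prop:rootvarsdegenMeixner}. The systems \eqref{eq-degen-Meixner-disc-body} and \eqref{eq-degen-Meixner-diff} are themselves the $\beta=1$ specialisations of \eqref{eq-Meixner-disc-body} and \eqref{eq-Meixner-diff}. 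The conjugation identities (discrete map with \eqref{eq-Sakai-dP-S3}, vector field with \eqref{eq-KNY-Ham5-sys}) are rational identities in $x,y$ whose coefficients are polynomial in $\beta$. They hold for all $\beta\neq\gamma$, so by continuity they persist at $\beta=1$, provided neither side degenerates identically there.

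Second, I would check that the map $(x,y)\mapsto(q,p)$ remains birational at $\beta=1$. For this I would write down an explicit inverse. One can solve the first equation for $x$ in terms of $q,y$: $x=\frac{(\gamma-1)y(y-c)}{c^2 q}-y$. Substituting into the $p$ equation yields an equation of degree one in $y$ after clearing denominators. Then I confirm that the denominator $c^2x+y(c(\gamma+c-1)-(\gamma-1)y)$ is not identically zero on the relevant locus. Alternatively, and more robustly, I would verify directly that the substitution carries \eqref{eq-degen-Meixner-disc-body} to \eqref{eq-Sakai-dP-S3} with $a_2=1$ and \eqref{eq-degen-Meixner-diff} to \eqref{eq-KNY-Ham5-sys} with $t=c$. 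This is a finite symbolic computation that I would carry out with computer algebra, together with the parameter evolution $n\mapsto n+1$, i.e. $a_0\mapsto a_0+1$, $a_3\mapsto a_3-1$.

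Third, for the geometric content I would show that the map lifts to an isomorphism $Z_{\boldsymbol{b},c}\to X_{\boldsymbol{a},t}$ with $a_2=1$, inducing the lattice identification of \Cref{lem:KNY-to-degenMeixner}. The key point is to check where the nodal curve goes. The strict transform of $M_2-M_3$ must map to the strict transform of the bidegree $(2,1)$ curve through $b_1,b_2,b_5,\dots,b_8$, and this curve exists precisely because $1-a_2=0$. The main obstacle is this degeneration step. In the generic case the points $z_2,z_3$ are distinct, while here $z_3$ is infinitely near $z_2$. I must make sure the images of the exceptional curves computed generically do not collide so as to produce extra indeterminacy. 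I would handle this by tracking the charts $(U_2,V_2)$ near $z_2$ explicitly and confirming that the lifted map is regular there. Granting the direct symbolic verification of the second step, however, the statement as an identification of systems already follows.
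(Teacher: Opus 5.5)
Your argument is correct, but its main line differs from the paper's. The paper proves this proposition the same way as \Cref{prop:coords-KNY-LUE}. It reruns the geometric procedure on the constrained subfamily of $X_{\boldsymbol{a},t}$ with $a_2=1$, in which $b_1,b_2,b_5,\dots,b_8$ lie on a bidegree $(2,1)$ curve. It builds the isomorphism $Z_{\boldsymbol{b},c}\to X_{\boldsymbol{a},t}$ realising \Cref{lem:KNY-to-degenMeixner} and reads the change of variables off from it. Finally it checks the matching of both systems by direct calculation.

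You instead observe that the systems, the change of variables and the parameter matching are all literally the $\beta=1$ case of \Cref{prop:coords-KNY-Meixner}, so the algebraic identities specialise. Your continuity step is harmless and barely needed. Since $\gamma\neq1$, the value $\beta=1$ already satisfies the standing assumption $\beta\neq\gamma$, and no denominator in the identities vanishes identically there. Only the generic surface construction degenerates, because $z_3\to z_2$.

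Your birationality check is also immediate. Using the formula for $q$, the denominator in $p$ equals $c^2(x+y)(1-q)$, and $c(x+c)/(y-c)=(\gamma-1)y/(cq)-c$. Hence $p=\frac{(\gamma-1)y}{cq}-c+\frac{\gamma+n}{1-q}$, which is linear in $y$ as you predicted.

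The specialisation route is the cheapest proof of the statement read as an identification of systems. What it does not give by itself is the geometric content that the paper's route delivers directly. That content is that the map is an isomorphism of the non-generic surfaces, carrying the nodal curve $\M_2-\M_3$ to the $(2,1)$-curve of class $\delta-\alpha_2$. This is what underlies \Cref{prop:symmetry-type-Meixner} and \Cref{mainthm:Meixner}. As you note yourself, a limit of isomorphisms of generic surfaces need not remain an isomorphism once two blowup centres collide. So your third step would have to be carried out directly on the degenerate family, as the paper does.
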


\subsubsection{Symmetry type}

As in the case of the unperturbed Laguerre weight, the nodal curve and associated parameter constraint mean that we have a symmetry type corresponding to a proper subgroup of $\widehat{W}(A_3^{(1)})$.
This is isomorphic to the subgroup in \Cref{prop:symmetry-type-LUE}, but with a different embedding in $\widehat{W}(A_3^{(1)})$, which we give in the following.

\begin{proposition} \label{prop:symmetry-type-Meixner}
    The subgroup of $\widehat{W}(A_3^{(1)})=W(A_3^{(1)})\rtimes \Aut(A_3^{(1)})$ whose Cremona action is compatible with the parameter constraint  $a_2=1$ on root variables $\boldsymbol{a}=(a_0,a_1,a_2,a_3)\in \mathscr{A}$ is 
    $$
    \left< s_0, s_1, w_0, w_1, \tau \right> \cong \left(W(A_1^{(1)})\times W(A_1^{(1)})\right)\rtimes \Z/2\Z \cong  W( (A_1 + \underset{|\alpha|^2=4}{A_1})^{(1)} ) \rtimes \Z / 2\Z,$$
    where the generators are given in terms of the simple reflections $w_i$ and $A_3^{(1)}$ Dynkin diagram automorphisms by
    \begin{equation} \label{eq:stabilisergeneratorsmeixner}
        \begin{aligned}
            &s_0 = w_0, &&s_1= w_3 w_1 w_2 w_1 w_3, && \langle s_0,s_1\rangle\cong W(A_1^{(1)}), &&\tau s_0 = s_1 \tau,
            \\
            &s'_0 = \sigma_2 w_3 w_2 w_0 w_3, &&s'_1= \sigma_1\sigma_2\sigma_1 , && \langle s'_0,s'_1\rangle\cong W({A_1^{(1)}}), &&\tau s_0' = s_1' \tau,
        \end{aligned}
         \quad 
        \quad \tau = \sigma_2\sigma_1\sigma_2 w_2 w_3.
    \end{equation}
\end{proposition}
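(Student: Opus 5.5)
The plan is to reduce \Cref{prop:symmetry-type-Meixner} to \Cref{prop:symmetry-type-LUE} by conjugation in $\widehat{W}(A_3^{(1)})$, rather than recompute the stabiliser from scratch. By \Cref{lem:KNY-to-degenMeixner} and Sakai's theorem on Cremona isometries quoted in \Cref{subsec:paramconstraintsandobstructionstosymmetries}, the group to be described is the stabiliser
$$\operatorname{Stab}(\delta-\alpha_2)=\left\{ w\in \widehat{W}(A_3^{(1)}) ~|~ w(\delta-\alpha_2)=\delta-\alpha_2\right\}.$$
Since every element of $\widehat{W}(A_3^{(1)})$ fixes $\delta$, this equals the stabiliser of $\alpha_2$. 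Compatibility with the constraint $a_2=1$ is then read off from \Cref{prop:nodalcurvesrootvars}: $\chi(\delta-\alpha_2)=1-a_2$.

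\textbf{Step 1: find a conjugating element.} I look for $g\in\widehat{W}(A_3^{(1)})$ with $g(\alpha_1+\alpha_2)=\delta-\alpha_2$, modulo sign and up to adding multiples of $\delta$. Both vectors are real roots of $Q(A_3^{(1)})$ of squared length $2$. Their finite parts are $\alpha_1+\alpha_2$ and $-\alpha_2$. Both are roots of $A_3$, so they lie in a single $W(A_3)$-orbit. The $\delta$-shift is then corrected by a translation $T_h$ using \eqref{eq:translationformula}.

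A cheap candidate is a product of simple reflections. For example, $w_1(\alpha_1+\alpha_2)=\alpha_2$. Composing with an element sending $\alpha_2\mapsto\delta-\alpha_2$ would finish this step. One option is a translation $T_h$ with $\alpha_2\bullet h=\pm1$ combined with $w_2$. I will simply search short words, checking each candidate directly on the $\alpha_i$.

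Once $g$ is found, we have
$$\operatorname{Stab}(\delta-\alpha_2)=g\,\operatorname{Stab}(\alpha_1+\alpha_2)\,g^{-1}.$$
\Cref{prop:symmetry-type-LUE} then gives the abstract isomorphism type $\left(W(A_1^{(1)})\times W(A_1^{(1)})\right)\rtimes\Z/2\Z$ at once.

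\textbf{Step 2: verify the stated generators directly.} This step avoids depending on the exact form of $g$. First, each of $s_0,s_1,s_0',s_1',\tau$ from \eqref{eq:stabilisergeneratorsmeixner} must fix $\alpha_2$. I check this on $Q^\perp$ using the permutation actions $\sigma_1=(\alpha_0\alpha_3)(\alpha_1\alpha_2)$ and $\sigma_2=(\alpha_0\alpha_2)$, together with the reflection formula. For instance, $w_0$ fixes $\alpha_2$ because $\alpha_0\bullet\alpha_2=0$; the other four are short computations.

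Second, I introduce the analogue of the lattices $Q_1,Q_2$. The natural guess is to take $\beta_0=\alpha_0$ with its partner $\beta_1=\delta-\alpha_0$, which is orthogonal to $\alpha_2$. The second pair should be two norm-$4$ vectors summing to $\delta$ and orthogonal to both $\alpha_2$ and $\beta_0$. I expect these to be the images under $g$ of the $\beta$'s from the Laguerre case. I then confirm that $s_0,s_1,s_0',s_1'$ act as in \eqref{eq:sonbetas} and that $\tau=(\beta_0\beta_1)(\beta_0'\beta_1')$. This exhibits the Cartan matrix of type $(A_1+\underset{|\alpha|^2=4}{A_1})^{(1)}$ and the single diagonal automorphism.

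\textbf{Step 3: show the listed elements generate the whole stabiliser.} The generators satisfy the right relations, and the subgroup they generate has the right index: this is inherited by conjugation from \Cref{prop:symmetry-type-LUE}. So it suffices to check one of two things. Either $\langle s_0,s_1,s_0',s_1',\tau\rangle$ equals $g\langle\text{Laguerre generators}\rangle g^{-1}$, or we argue directly as in the proof of \Cref{prop:symmetry-type-LUE}. In the direct argument, we write $w=T_h\,u\,\sigma$ with $u\in W(A_3)$ and $\sigma\in\{1,\sigma_1\}$. For each of the finitely many $u\sigma$, we solve
$$u\sigma(\alpha_2)-\left(u\sigma(\alpha_2)\bullet h\right)\delta=\alpha_2$$
for $h$, and then reduce the resulting generating set.

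\textbf{Main obstacle.} I expect the hardest part to be this final bookkeeping: showing that the resulting finite list of coset representatives and translations reduces exactly to the five stated generators. The isomorphism type is easy by conjugation. If the reduction gets messy, I will instead compute $g$ explicitly and conjugate the Laguerre generators of \eqref{eq:stabilisergeneratorsLUE}. Showing that they generate the same group as the listed ones then only needs equalities of finitely many words in $\widehat{W}(A_3^{(1)})$, which can be checked on $\Pic(\X)$.
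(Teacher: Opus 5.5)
Your proposal is correct and is essentially the paper's argument. The paper also passes from $\delta-\alpha_2$ to $\alpha_2$, conjugates by $w_1$ (which sends $\alpha_1+\alpha_2$ to $\alpha_2$) to land on the Laguerre stabiliser of \Cref{prop:symmetry-type-LUE}, and then reduces the conjugated generators to the listed ones.

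Your extra search for an element sending $\alpha_2\mapsto\delta-\alpha_2$ is unnecessary, since $w_1$ already suffices. Your direct checks in Steps 2--3 are still worthwhile, because the conjugated generators are not all literally the listed ones: for example, $w_1(\sigma_2 w_2 w_0)w_1$ is the reflection in $\beta_0'+2\beta_1'$, i.e.\ $s_1's_0's_1'$.
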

\begin{proof}
    In this case, the subgroup of symmetries compatible with the nodal curve and parameter constraint is
    \begin{equation}
        \left\{ w \in \widehat{W}(A_3^{(1)}) ~|~ w(\delta-\alpha_2) = \delta-\alpha_2 \right\} = \left\{ w \in \widehat{W}(A_3^{(1)}) ~|~ w(\alpha_2) = \alpha_2 \right\},
    \end{equation}
    which can conjugated by, for example, $w_1$, to 
    \begin{equation}\label{eq:stab:meixner}
        \left\{ w \in \widehat{W}(A_3^{(1)}) ~|~ w\left(\alpha_1+\alpha_2\right) = \alpha_1+\alpha_2 \right\}.
    \end{equation}
    This stabiliser was computed in \Cref{prop:symmetry-type-LUE}, so conjugating the generators in \eqref{eq:stabilisergeneratorsLUE} by $w_1$ gives generators for the subgroup \eqref{eq:stab:meixner}, from which the claimed expressions in \eqref{eq:stabilisergeneratorsmeixner} can be obtained.
\end{proof}

This subgroup can again be described in terms of a root lattice of type $(A_1+\underset{|\alpha|^2=4}{A_1})^{(1)}$ in $\Pic(\X)$, which in this case is $Q_1+Q_2$, where 
\begin{equation}
\begin{aligned}
Q_1 &= \Z \beta_0 + \Z \beta_1, 
&&\quad
\begin{aligned}
\beta_0 &= \alpha_0 = \h_q  - \E_1 -\E_2, \\
\beta_1 &= \alpha_1 + \alpha_2 + \alpha_3 = 2\h_q + \h_p - \E_3 -\E_4 -\E_5-\E_6 -\E_7 -\E_8, 
\end{aligned}
\\
Q_2 &= \Z \beta_0' + \Z \beta_1', 
&&\quad
\begin{aligned}
\beta_0' &= \alpha_0 + \alpha_2 + 2 \alpha_3 = 2 \h_q + 2 \h_p -\E_1 - \E_2 -\E_3 -\E_4 - 2\E_7 - 2 \E_8, \\
\beta_1' &= \alpha_1 - \alpha_3 = - \E_5 - \E_6 + \E_7 + \E_8.
\end{aligned}
\end{aligned}
\end{equation}
The generators $s_0,s_1,s_0',s_1'$ in \Cref{prop:symmetry-type-Meixner} act on the lattice in the same way as shown in \Cref{eq:sonbetas}, and $\tau$ again corresponds to the Dynkin diagram automorphism shown in \Cref{fig:dynkindiagramA1A1}.

\subsection{Relation between differential systems from perturbed Laguerre and generalised Meixner weights}

Since the translations $T_{\operatorname{KNY}}$ and $T_{\operatorname{Sak}}$ are not conjugate in $\widehat{W}(A_3^{(1)})$, there cannot exist any birational change of variables relating the discrete systems \eqref{eq-pLUE-disc-body} and \eqref{eq-Meixner-disc-body} from the perturbed Laguerre and generalised Meixner weights, respectively. 
However, if we disregard the discrete dynamics and consider only the differential systems, a relation between these can be obtained, since conjugation of group elements is not required for this.
While this identification has no exact meaning for the recurrence coefficients themselves, since the discrete systems are disregarded, it was used in \cite{ondifferentialsystems} to obtain new double scaling limits for the differential system from the generalised Meixner weight, based on known ones for the perturbed Laguerre case, in which large $n$ behaviours are governed by solutions of the $\pain{XXXIV}$ equation.

To differentiate between parameters in the differential systems \eqref{eq-pLUE-diff} and \eqref{eq-Meixner-diff}, consider them in the forms 
\begin{equation}\label{eq-pLUE-diff-relsection}
	\left\{
	\begin{aligned}
		\frac{df}{ds} &=\frac{(\gamma_{\operatorname{L}}+2 g) f^2  - 4 f g+ (\alpha_{\operatorname{L}}-\gamma_{\operatorname{L}}-s +2n_{\operatorname{L}})f + 2 g - \alpha_{\operatorname{L}} - 2n_{\operatorname{L}}}{s} ,\\
		\frac{dg}{ds} &= \frac{g^2(1-f^2) - (\alpha_{\operatorname{L}}+2n_{\operatorname{L}}+\gamma_{\operatorname{L}} f^2)g + (\alpha_{\operatorname{L}}+n_{\operatorname{L}})n_{\operatorname{L}}}{s f},
	\end{aligned}
	\right.
\end{equation}
and 
\begin{equation}\label{eq-Meixner-diff-relsection}
\left\{
	\begin{aligned}
		\frac{dx}{dc} &=  \frac{(n_{\operatorname{M}}+1)x+n_{\operatorname{M}} y}{c}-\frac{\gamma_{\operatorname{M}} -1}{c^2} x (x+y) + \frac{x (x+c) (c (\gamma_{\operatorname{M}} -\beta_{\operatorname{M}} )+(\gamma_{\operatorname{M}} -1) x)}{c^2(x+y)}  ,\\
		 \frac{dy}{dc} &= -  x +  \frac{y}{c} + y \left( -1 +\frac{(\beta_{\operatorname{M}} -1) (y-c)}{c (x+y)} + \frac{(\gamma_{\operatorname{M}}-1)(y-c)^2}{c^2(x+y)}  \right),
         % -\frac{(\beta -1) y (c-y)}{c (x+y)}
	\end{aligned}
	\right.
\end{equation}
respectively.
The relation between these is obtained from an identification of $\Pic(\Y)$ and $\Pic(\mathcal{Z})$ that, while not matching the translations associated with the discrete systems, can be realised by an isomorphism $Y_{\boldsymbol{b},s}\to Z_{\boldsymbol{b},c}$ under appropriate parameter matching.
    The identification is given by 
\begin{equation}
		\begin{aligned}
			\mathcal{H}_{f} & = 2\mathcal{H}_x +  \mathcal{H}_y - \M_1 - \M_2 - \M_4 - \M_5, 
            &&
                \\
			\mathcal{H}_{g} &= \h_x,  
            &&
                \\
			\mathcal{L}_{1} &= \h_x-\M_4, 
                \\ 
			\mathcal{L}_{2} &= \M_3, 
                \\ 
			\mathcal{L}_{3} &= \mathcal{H}_{x} - \M_{5}, 
                \\ 
			\mathcal{L}_{4} &= \M_6, 
                \\ 
			\mathcal{L}_{5} &= \M_7, 
                \\ 
			\mathcal{L}_{6} &= \M_8, 
                \\ 
			\mathcal{L}_{7} &= \mathcal{H}_x - \M_1, 
                \\ 
			\mathcal{L}_{8} &=  \mathcal{H}_x - \M_2.
		\end{aligned}
	\end{equation}

The isomorphism realising this leads to the following, in which the parameter matching is obtained by computation of root variables.
The following can be verified by direct computation.
\begin{proposition} \label{prop:relationbetweendifferentialsystems}
    The following change of variables gives a transformation between the differential systems \eqref{eq-pLUE-diff-relsection} and \eqref{eq-Meixner-diff-relsection}:
        \begin{equation}
   	 \left\{
		\begin{aligned}
   	 	x&=  \frac{s\, g }{\gamma_{\operatorname{L}}} ,\\
   		y&=  \frac{s\, g (\gamma_{\operatorname{L}} f+(f-1) g+n_{\operatorname{L}})}{\gamma_{\operatorname{L}} (g-n_{\operatorname{L}})}, \\
        c&=s,        
	   	 \end{aligned}
	\right.
    \quad\text{or inversely,} \qquad
    \left\{
		\begin{aligned}
   	 	f&= \frac{(x+y) ((\gamma_{\operatorname{M}}-1) x-c \,n_{\operatorname{M}})}{(\gamma_{\operatorname{M}}-1) x (c+x)} ,\\
   		g&=\frac{(\gamma_{\operatorname{M}}-1) x}{c}  , \\
        s&=c.
	   	 \end{aligned}
	\right.
    \end{equation}
    under the parameter matching
    \begin{equation}
    \left\{
		\begin{aligned}
        \beta_{\operatorname{M}} &= \alpha_{\operatorname{L}}+\gamma_{\operatorname{L}} + n_{\operatorname{L}}+1,\\
        \gamma_{\operatorname{M}} &= \gamma_{\operatorname{L}}+1,\\
        n_{\operatorname{M}} &= n_{\operatorname{L}},
        \end{aligned}
        \right.
        \quad\text{or inversely,} \qquad
    \left\{
		\begin{aligned}
        \alpha_{\operatorname{L}} &= \beta_{\operatorname{M}} - \gamma_{\operatorname{M}} - n_{\operatorname{L}},\\
        \gamma_{\operatorname{L}} &= \gamma_{\operatorname{M}}-1,\\
        n_{\operatorname{L}} &= n_{\operatorname{M}}.
        \end{aligned}
        \right.
    \end{equation}
\end{proposition}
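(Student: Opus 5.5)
The statement is a direct verification: the map $(f,g,s)\mapsto(x,y,c)$ is explicit, so I will substitute it into \eqref{eq-Meixner-diff-relsection}, use \eqref{eq-pLUE-diff-relsection} for $df/ds$, $dg/ds$, and check both equations hold identically under the parameter matching. First I will check that the two displayed transformations are mutually inverse. From $x=sg/\gamma_{\operatorname{L}}$ and $c=s$, $\gamma_{\operatorname{M}}=\gamma_{\operatorname{L}}+1$, we get $g=(\gamma_{\operatorname{M}}-1)x/c$ immediately. Solving the expression for $y$, which is linear in $f$, gives
\[
f=\frac{(x+y)\left((\gamma_{\operatorname{M}}-1)x-c\,n_{\operatorname{M}}\right)}{(\gamma_{\operatorname{M}}-1)x(c+x)}.
\]
This confirms the map is birational and fibrewise an isomorphism of the underlying rational surfaces after the blowups.

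Rather than verify everything by brute force, I will use the geometry to see why the identity must hold. The identification of $\Pic(\Y)$ with $\Pic(\mathcal{Z})$ displayed above preserves the intersection form and the anticanonical class. It sends the surface root basis of \Cref{fig:d-roots-pLUE} to that of \Cref{fig:d-roots-Meixner}; for instance $\L_4-\L_5=\M_6-\M_7$ and $\h_g-\L_3-\L_4=\M_5-\M_6$. Then:
\begin{itemize}
\item $x$ is a coordinate on the pencil $|\h_g|=|\h_x|$, so $x$ must be a multiple of $g$, which is the first formula.
\item $y$ is determined by requiring that the exceptional curves $\L_i$ land on the prescribed classes, e.g. $\L_7\mapsto \h_x-\M_1$ and $\L_8\mapsto \h_x-\M_2$.
\item Matching root variables, computed from the period maps of $df\wedge dg/f$ and $k\,dx\wedge dy/(x+y)$ as in \Cref{lem:rootvarspLUE,lem:rootvarsgenMeixner}, yields the parameter matching $\beta_{\operatorname{M}}=\alpha_{\operatorname{L}}+\gamma_{\operatorname{L}}+n_{\operatorname{L}}+1$, $\gamma_{\operatorname{M}}=\gamma_{\operatorname{L}}+1$, $n_{\operatorname{M}}=n_{\operatorname{L}}$.
\end{itemize}

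Conceptually, both differential systems have been identified with the KNY Hamiltonian system \eqref{eq-KNY-Ham5-sys}, via \Cref{prop:coords-KNY-pLUE} and \Cref{prop:coords-KNY-Meixner}, with $t=s$ and $t=c$. The parameter sets are
\[
(\alpha_{\operatorname{L}}+n,\,-n,\,\gamma_{\operatorname{L}}+n,\,1-\alpha_{\operatorname{L}}-\gamma_{\operatorname{L}}-n)
\quad\text{and}\quad
(n+1,\,\gamma_{\operatorname{M}}-1,\,2-\beta_{\operatorname{M}},\,-\gamma_{\operatorname{M}}-n).
\]
Under the stated matching these differ by a Dynkin diagram automorphism combined with reflections from \Cref{thm:bir-weyl-d5}. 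Composing one KNY change of variables with the corresponding Bäcklund transformation and the inverse of the other yields a transformation between the systems, and it should coincide with the displayed one. I would use this as the logical proof: Bäcklund transformations preserve the flow of \eqref{eq-KNY-Ham5-sys} and fix $t$ up to the sign changes from $\sigma_1,\sigma_2$.

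The main obstacle is finding the correct Weyl group word relating the two parameter quadruples, since the sign of $t$ must also come out right. I expect $\sigma_2$ combined with reflections; if the sign of $t$ fails, I will insert $\sigma_1$ as well. As a fallback, I will do the direct substitution, clearing the denominators $\gamma_{\operatorname{L}}(g-n_{\operatorname{L}})$ and $(x+y)$. This is routine and can be checked by computer algebra, and I expect the authors intend exactly this direct verification.
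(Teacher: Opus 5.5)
Your fallback, direct substitution, is the paper's proof and it works. The route you designate as the logical proof, however, breaks down exactly at the step you flag as the main obstacle.

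\textbf{Where the B\"acklund route fails.} The gM quadruple you quote from \Cref{lem:rootvarsgenMeixner} sums to $2-\beta_{\operatorname{M}}$. Under the stated matching this equals $1-\alpha_{\operatorname{L}}-\gamma_{\operatorname{L}}-n$, whereas the pL quadruple sums to $1$. Every generator in \Cref{thm:bir-weyl-d5} preserves $a_0+a_1+a_2+a_3$, so no element of $\widehat{W}(A_3^{(1)})$ relates the two quadruples, and your search for a word would fail.

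The cause is a misprint in that lemma. Write $\alpha_3=(\h_x-\M_3-\M_4)-\delta$. Integrate the residue $k\,dx$ of $k\frac{dx\wedge dy}{x+y}$ along $x+y=0$, from $z_3$ to the point where the line $x=\frac{cn}{\gamma_{\operatorname{M}}-1}$ meets it. Using the same convention that gives $a_1=\gamma_{\operatorname{M}}-1$ and $a_2=2-\beta_{\operatorname{M}}$, this yields $a_3=\beta_{\operatorname{M}}-\gamma_{\operatorname{M}}-n-1$. That reduces to $-\gamma_{\operatorname{M}}-n$ only when $\beta_{\operatorname{M}}=1$. The $p$-formula in \Cref{prop:coords-KNY-Meixner} already contains $-a_3=\gamma+n+1-\beta$. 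The same correction is needed for the four root-variable equations in your third bullet to be mutually consistent.

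With this correction, the pL parameters are carried to the gM ones by applying $w_2$ and then $w_3$. This is a pure Weyl group element with no diagram automorphism. So your guess involving $\sigma_2$, which sends $t\mapsto -t$, would also have been wrong.

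Even with the right word, composing the two KNY changes of variables with this B\"acklund transformation only produces \emph{a} transformation between the differential systems. Your claim that it ``should coincide'' with the displayed one is unjustified. Checking it means computing the composite explicitly, which is no shorter than direct substitution. The route also rests on \Cref{prop:coords-KNY-pLUE,prop:coords-KNY-Meixner}, which are themselves established only by direct verification.

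\textbf{Why the direct substitution works.} This is what the paper means by ``verified by direct computation'', and it is manageable. Set $x=sg/\gamma_{\operatorname{L}}$ and $x+y=\frac{sg(g+\gamma_{\operatorname{L}})f}{\gamma_{\operatorname{L}}(g-n)}$.
\begin{itemize}
\item Multiply the right-hand side of the $x$-equation in \eqref{eq-Meixner-diff-relsection} by $\gamma_{\operatorname{L}}$. It becomes $g-g(g+\gamma_{\operatorname{L}})f+\frac{(g-n)(g-\alpha_{\operatorname{L}}-n)}{f}$.
\item On the other side, $\gamma_{\operatorname{L}}\frac{dx}{dc}=g+s\frac{dg}{ds}$. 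The two agree by the $g$-equation of \eqref{eq-pLUE-diff-relsection}.
\item For the second equation, $u=x+y$ satisfies a polynomial equation under \eqref{eq-Meixner-diff-relsection}. This is because the numerator of the terms over $x+y$ vanishes on $y=-x$.
\item Logarithmic differentiation of $f=\frac{u(\gamma_{\operatorname{L}}x-cn)}{\gamma_{\operatorname{L}}x(c+x)}$ then returns the $f$-equation of \eqref{eq-pLUE-diff-relsection}.
\end{itemize}
Your inverse-map check and geometric bullet points are fine as motivation, but they are not part of the proof.
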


\section{Discussion} \label{sec:discussionconclusion}

The main point of this paper was to demonstrate that, when specifying examples of discrete Painlev\'e equations that arise in the context of semi-classical orthogonal polynomials,
one should specify not just the surface type $\mathcal{R}$ and generic symmetry type $\mathcal{R}^{\perp}$, but also the conjugacy class $[w]$ of the element that generates the dynamics and the true symmetry type $\mathcal{S}$.

Not only do we believe that this should be taken into consideration in any attempt to match different weights with the Sakai classification scheme, but it seems that the kinds of non-generic discrete Painlev\'e equations we have studied, corresponding to surfaces with nodal curves, appear frequently in similar contexts.
For example, the instances of the discrete Painlev\'e equations labelled by $\dpain{IV}$ and $\dpain{V}$ appearing in \cite{borodinboyarchenko} are also non-generic, and their parameter constraints correspond to nodal curves.
Further, the instance of $\dpain{IV}$ that appears there corresponds to a single nodal curve on a $D_5^{(1)}$ surface, which is exactly the same situation that we saw here. Our description of the translation element and symmetry group for the non-generalised Meixner case in \Cref{sec:degenMeixner} is applicable to the system  derived in \cite[Th. 9.3 (b)]{borodinboyarchenko}.
It would be interesting to determine if there is some fundamental significance of nodal curves in such contexts.

Further, the nodal curves we have been concerned with in this paper are those fixed by the discrete Painlev\'e dynamics under consideration.
Another way that nodal curves appear is as rational curves on which continuous Painlev\'e flows become linearisable as Riccati equations.
Discrete Painlev\'e equations that do not fix such a nodal curve will act as B\"acklund transformations generating hierarchies of hypergeometric-type special solutions, and indeed it is often these solutions that are of interest to express recurrence coefficients of semi-classical orthogonal polynomials.  

Another feature of the examples studied here that we would like to emphasise is the result of \Cref{prop:relationbetweendifferentialsystems}. 
Despite the discrete systems associated with weights (gM) and (pL) being inequivalent, if one does not impose these discrete systems and considers only the differential systems, the two coincide under a change of variables and parameter identification. 
This is a reflection of a key difference in how surface type relates to Painlevé equations between the differential and discrete cases: while there is a unique Painlevé differential equation corresponding to its surface type, there can be multiple inequivalent discrete Painlevé equations on the same surface family.
% It may be interesting to investigate whether this could be exploited, similarly to in \cite{ondifferentialsystems}, to find new double-scaling degenerations between Painlev\'e equations related to other weights, via identifications of differential but not discrete systems.

    Finally, we remark that one could pose a further refined version of the Painlev\'e identification problem, that asks not just for the equation but for a specific solution of interest.
    Often a particular solution of a discrete Painlev\'e equation is the one relevant to the orthogonal polynomials, and to specify such a solution one should also include with the data $(\mathcal{R},\mathcal{R}^{\perp},\mathcal{S},[w])$ an initial condition or other characterising behaviour. 
    In many cases coming from semi-classical orthogonal polynomials, this is a seed solution for a hierarchy of hypergeometric-type special solutions of the relevant Painlev\'e equation.

\subsection*{Acknowledgements} 
 
GF acknowledges  support of the grant entitled ``Geometric approach to  ordinary differential equations" (01.03.2023-29.02.2024) funded under New Ideas 3B  competition within Priority Research Area III    implemented under the “Excellence Initiative – Research University” (IDUB) Programme  (University of Warsaw) (nr  01/IDUB/2019/94). 
The work of the GF is also partially supported by the project PID2021-124472NB-I00  funded by MCIN/AEI/10.13039/501100011033 and by  ``ERDF A way of making Europe".
AS acknowledges the support of Japan Society for the Promotion of Science (JSPS) through KAKENHI grant numbers 21F21775, 22KF0073 and 24K22843.
AD and AS thank Tomoyuki Takenawa for many helpful discussions.
We thank the anonymous referees for their useful comments and suggestions.

\bibliographystyle{amsalpha}

\begin{thebibliography}{DFLS21b}

\bibitem[Adl98]{adlerdiscreteKN}
Vsevolod E. Adler,
\emph{B\"acklund transformation for the Krichever-Novikov equation},
Internat. Math. Res. Notices 1998, no.~1, 1--4. \MR{1601866}

\bibitem[ABS03]{ABSlist}
Vsevolod E. Adler, Alexander I. Bobenko, and Yuri B. Suris,
\emph{Classification of integrable equations on quad-graphs. The consistency approach},
Comm. Math. Phys.  \textbf{233} (2003), no.~3, 513--543. \MR{1962121}

\bibitem[AGS25]{ASIDEnotes}
Gessica Alecci, Michele Graffeo, and Alexander Stokes,
\emph{Classical Algebraic Geometry and Discrete Integrable Systems},
To appear on ``Symmetries and Integrability of Difference Equations - Lecture Notes of ASIDE15", AMS Contemporary Mathematics. arXiv:2510.12647 [math.AG], 2025.


\bibitem[AHJN16]{AHJN}
James Atkinson, Phil Howes, Nalini Joshi, and Nobutaka Nakazono, 
\emph{Geometry of an elliptic difference equation related to Q4},
J. Lond. Math. Soc. (2)   \textbf{93} (2016), no.~3, 763--784. \MR{3509963}

\bibitem[BC09]{BasorChenPV}
Estelle Basor and Yang Chen,
\emph{Painlev\'e V and the distribution function of a discontinuous linear statistic in the Laguerre unitary ensembles}
J. Phys. A \textbf{42} (2009), no.~3, 035203, 18 pp. \MR{2525311}

\bibitem[BFVA11]{BoelenFilipukVanAssche}
Lies Boelen, Galina Filipuk, and Walter Van Assche, 
\emph{Recurrence coefficients of generalized Meixner polynomials and Painlev\'e equations},
J. Phys. A   \textbf{44} (2011), no.~3, 035202, 19 pp. \MR{2749070}

\bibitem[Bor03]{borodin2003}
Alexei Borodin,
\emph{Discrete gap probabilities and discrete Painlevé equations},
Duke Math. J. \textbf{117} (2003), no.3, 489--542. \MR{1979052}

\bibitem[BB03]{borodinboyarchenko}
Alexei Borodin and Dmitriy Boyarchenko, 
\emph{Distribution of the first particle in discrete orthogonal polynomial ensembles},
Comm. Math. Phys. \textbf{234} (2003), no.~2, 287–338.
\MR{1962463}

\bibitem[CI97]{ChenIsmail}
Yang Chen and Mourad E. H. Ismail,
\emph{Ladder operators and differential equations for orthogonal polynomials},
J. Phys. A \textbf{30} (1997), no.~22, 7817–7829. \MR{1616931}

\bibitem[Chi78]{Chiharabook}
Theodore S. Chihara,
\emph{An introduction to orthogonal polynomials},
Math. Appl., Vol. 13
Gordon and Breach Science Publishers, New York-London-Paris, 1978. xii+249 pp. \MR{0481884}

\bibitem[Cla13]{ClarksonOPs}
Peter~A. Clarkson, 
\emph{Recurrence coefficients for discrete orthonormal polynomials and the Painlev\'e equations}, 
J. Phys. A \textbf{46} (2013), no.~18, 185205, 18 pp. \MR{3055670}

\bibitem[Cla19]{Cla:2019:OPPE}
\bysame,
\emph{Open problems for {P}ainlev\'{e} equations}, SIGMA
  Symmetry Integrability Geom. Methods Appl. \textbf{15} (2019), Paper No. 006,
  20. \MR{3904441}

\bibitem[CJ99]{cresswelljoshi}
Clio Cresswell and Nalini Joshi,
\emph{The discrete first, second and thirty-fourth Painlev\'e hierarchies},
J. Phys. A   \textbf{32} (1999), no.~4, 655--669. \MR{1671841}


% \bibitem[Dol83]{dolgachevweylgroup}
% Igor V. Dolgachev,
% \emph{Weyl groups and Cremona transformations},
% in Peter Orlik ed. \emph{Singularities, Part 1 (Arcata, Calif., 1981)}, 
% Proc. Sympos. Pure Math. \textbf{40}
% American Mathematical Society, Providence, RI, 1983, 283–294. \MR{0713067}

% \bibitem[DO88]{dolgachevortland}
% Igor V. Dolgachev and David Ortland,
% \emph{Point sets in projective spaces and theta functions}, 
% Astérisque No. 165 (1988), 210 pp. \MR{1007155}

\bibitem[DFLS24]{hamiltonians}
Anton Dzhamay, Galina Filipuk, Adam Lig\c{e}za, and Alexander Stokes,
\emph{Different Hamiltonians for differential Painlev\'e equations and their identification using a geometric approach},
J. Differential Equations \textbf{399} (2024), 281--334. \MR{4732165}

\bibitem[DFS20]{DzhFilSto:2020:RCDOPWHWDPE}
Anton Dzhamay, Galina Filipuk, and Alexander Stokes, \emph{Recurrence
  coefficients for discrete orthogonal polynomials with hypergeometric weight
  and discrete {P}ainlev\'{e} equations}, J. Phys. A \textbf{53} (2020),
  no.~49, 495201, 29. \MR{4188771}

\bibitem[DFS21]{ondifferentialsystems}
\bysame, 
\emph{On differential systems related to generalized Meixner and deformed Laguerre orthogonal polynomials}, 
Integral Transforms Spec. Funct. \textbf{32} (2021), no.~5-8, 483--492. \MR{4280695}

\bibitem[DFS22]{DzhFilSto:2022:DERCSOPTRPEGA}
\bysame, \emph{Differential equations for the recurrence coefficients of
  semiclassical orthogonal polynomials and their relation to the {P}ainlev\'{e}
  equations via the geometric approach}, Stud. Appl. Math. \textbf{148} (2022),
  no.~4, 1656--1702. \MR{4433344}

\bibitem[DK20]{antonalisa}
Anton Dzhamay and Alisa Knizel,
\emph{ $q$-Racah ensemble and $q$-$\operatorname{P}(E_7^{(1)}/A_1^{(1)})$ discrete Painlev\'e equation},
Int. Math. Res. Not. IMRN (2020), no.24, 9797–9843. \MR{4190389}

\bibitem[DSSW25]{dP2symmetry}
Anton Dzhamay, Yang Shi, Alexander Stokes, and Ralph Willox,
\emph{What is the symmetry group of a d-$\Pain{II}$ discrete Painlev\'e equation?},
Mathematics (2025), 13, 1123.

\bibitem[DT18]{DzhTak:2018:SASGTDPE}
Anton Dzhamay and Tomoyuki Takenawa, \emph{On some applications of {S}akai's
  geometric theory of discrete {P}ainlev\'{e} equations}, SIGMA Symmetry
  Integrability Geom. Methods Appl. \textbf{14} (2018), Paper No. 075, 20.
  \MR{3830210}

\bibitem[FS23A]{quasihamiltonians}
Galina Filipuk and Alexander Stokes, 
\emph{On Hamiltonian structures of quasi-Painlevé equations},
J. Phys. A \textbf{56} (2023), no. 49, Paper No. 495205, 37 pp.
\MR{4671825}

\bibitem[FS23B]{takasakipaper}
\bysame,
\emph{{T}akasaki's rational fourth
 {P}ainlev\'e-{C}alogero system and geometric regularisability of
 algebro-{P}ainlev\'e equations}, 
 Nonlinearity \textbf{36} (2023), no. 10, 5661–5697.
\MR{4646039}

\bibitem[FVA11]{FilipukVanAssche}
Galina Filipuk and Walter Van Assche, 
\emph{Recurrence coefficients of a new generalization of the Meixner polynomials},
SIGMA Symmetry Integrability Geom. Methods Appl.   \textbf{7} (2011), Paper 068, 11 pp. \MR{2861208}

\bibitem[FIK91]{fokasitskitaev}
Athanassios S. Fokas, Alexander R. Its, and Alexander V. Kitaev,
\emph{Discrete Painlev\'e equations and their appearance in quantum gravity},
Comm. Math. Phys.   \textbf{142} (1991), no.~2, 313--344. \MR{1137067}


\bibitem[FO10]{forresterormerod}
P.~J. Forrester and C.~M. Ormerod, 
\emph{Differential equations for deformed Laguerre polynomials}, 
J. Approx. Theory {\bf 162} (2010), no.~4, 653--677. \MR{2606639}


\bibitem[BR04]{GramaniReview}
Basile Grammaticos and Alfred Ramani,
\emph{Discrete Painlevé equations: a review},
Discrete integrable systems, 245–321.
Lecture Notes in Phys., 644
Springer-Verlag, Berlin, 2004.
\MR{2087743}

\bibitem[HHR00]{genCharlier}
Mahouton Hounkonnou, C. Hounga, and Andr\'e Ronveaux,
\emph{Discrete semi-classical orthogonal polynomials: generalized Charlier},
J. Comput. Appl. Math. \textbf{114} (2000), no.~2, 361–366. \MR{1737084}


\bibitem[HDC20]{HuDzhChe:2020:PLUEDPE}
Jie Hu, Anton Dzhamay, and Yang Chen, 
\emph{Gap probabilities in the {L}aguerre
  unitary ensemble and discrete {P}ainlev\'{e} equations}, 
  J. Phys. A \textbf{53} (2020), no.~35, 354003, 18 pp. \MR{4137540}

  \bibitem[HDC25]{qlaguerre}
  Jie Hu, Anton Dzhamay, and Yang Chen,
\emph{On the recurrence coefficients for the $q$-Laguerre weight and discrete Painlev\'e equations},
J. Phys. A \textbf{58} (2025), no.~2, 025211, 21 pp. \MR{4850803}

\bibitem[Ism05]{Ismailbook}
Mourad E. H. Ismail,
\emph{Classical and quantum orthogonal polynomials in one variable},
Encyclopedia Math. Appl., \textbf{98},
Cambridge University Press, Cambridge, 2005. xviii+706 pp. \MR{2191786}


\bibitem[JGTR06]{joshietal}
Nalini Joshi, Basile Grammaticos, Thamizharasi Tamizhmani, and Alfred Ramani,
\emph{From integrable lattices to non-QRT mappings},
Lett. Math. Phys.   \textbf{78} (2006), no.~1, 27--37. \MR{2271126}

% \bibitem[Kac90]{KAC1990}
% Victor G. Kac,
% \emph{Infinite-dimensional Lie algebras}, 
% Cambridge University Press, Cambridge,
% third edition, 1990.
% \MR{1104219}

\bibitem[KNT11]{KNT}
Kenji Kajiwara, Nobutaka Nakazono, and Teruhisa Tsuda,
\emph{Projective reduction of the discrete Painlev\'e system of type $(A_2+A_1)^{(1)}$},
Int. Math. Res. Not. IMRN 2011, no.~4, 930--966. \MR{2773334}


\bibitem[KNY17]{KajNouYam:2017:GAPE}
Kenji Kajiwara, Masatoshi Noumi, and Yasuhiko Yamada, \emph{Geometric aspects
  of {P}ainlev\'e equations}, J. Phys. A \textbf{50} (2017), no.~7, 073001,
  164. \MR{3609039}

% \bibitem[K\"on05]{konig}
% Wolfgang K\"onig,
% \emph{Orthogonal polynomial ensembles in probability theory},
% Probab. Surv. \textbf{2} (2005), 385–447. \MR{2203677}

\bibitem[LDFZ25]{D6lixing}
Xing Li, Anton Dzhamay, Galina Filipuk, and Da-jun Zhang,
\emph{Recurrence relations for the generalized Laguerre and Charlier orthogonal polynomials and discrete Painlevé equations on the $D_6^{(1)}$ Sakai surface},
Math. Phys. Anal. Geom. \textbf{28} (2025), no.~1, Paper No.~5, 30 pp.
\MR{4879077}

% \bibitem[LNV18]{RMTintro}
% Giacomo Livan, Marcel Novaes, and Pierpaolo Vivo,
% \emph{Introduction to random matrices},
% SpringerBriefs Math. Phys. \textbf{26}
% Springer, Cham, 2018. ix+124 pp. \MR{3751374}

% \bibitem[Loo81]{looijenga}
% Eduard Looijenga, 
% \emph{Rational surfaces with an anticanonical cycle},
% Ann. of Math. (2)   \textbf{114} (1981), no.~2, 267-–322. \MR{0632841}


\bibitem[LC17]{LC17}
Shulin Lyu and Yang Chen,
\emph{The largest eigenvalue distribution of the Laguerre unitary ensemble},
Acta Math. Sci. Ser. B (Engl. Ed.)   \textbf{37} (2017), no.~2, 439--462. \MR{3613423}.

\bibitem[Mal23]{malmquist}
J.~Malmquist, \emph{Sur les {\'e}quattions diff{\'e}retielles du second ordre
 dont l'int{\'e}grale g{\'e}n{\'e}rale a ses points critiques fixes}, Ark.
 Mat. Astr. Fys. \textbf{17} (1922-23), no.~8, 1--89.


\bibitem[MC20]{MC20}
Chao Min and Yang Chen,
\emph{Painlev\'e V, Painlev\'e XXXIV and the degenerate Laguerre unitary ensemble},
Random Matrices Theory Appl. \textbf{9} (2020), no. 2, 2050016, 22 pp. \MR{4085376}


\bibitem[Oka79]{Oka:1979:FAESOPCFPP}
Kazuo Okamoto, \emph{Sur les feuilletages associ\'{e}s aux \'{e}quations du
  second ordre \`a points critiques fixes de {P}. {P}ainlev\'{e}}, Japan. J.
  Math. (N.S.) \textbf{5} (1979), no.~1, 1--79. \MR{614694}

\bibitem[Oka80]{Oka:1980:PHAWPE}
\bysame, \emph{Polynomial {H}amiltonians associated with {P}ainlev\'{e}
  equations. {I}}, Proc. Japan Acad. Ser. A Math. Sci. \textbf{56} (1980),
  no.~6, 264--268. \MR{581468}


\bibitem[Oka87]{Oka:1987:SPEIFPEP}
\bysame, \emph{Studies on the {P}ainlev\'{e} equations. {II}. {F}ifth
  {P}ainlev\'{e} equation {$P_{\rm V}$}}, Japan. J. Math. (N.S.) \textbf{13}
  (1987), no.~1, 47--76. \MR{914314}

% \bibitem[Rai13]{rains}
% Eric Rains,
% \emph{Generalized Hitchin systems on rational surfaces},
% arXiv:1307.4033 [math.AG], 2013.

\bibitem[RCG09]{RCG}
Alfred Ramani, Adrian Stefan Carstea, and Basile Grammaticos,
\emph{On the non-autonomous form of the $Q_4$ mapping and its relation to elliptic Painlev\'e equations},
J. Phys. A   \textbf{42} (2009), no.~32, 322003, 8 pp. \MR{2525848}


% \bibitem[RG09]{dPsinfinite}
% Alfred Ramani and Basile Grammaticos,
% \emph{The number of discrete Painlev\'e equations is infinite},
% Phys. Lett. A   \textbf{373} (2009), no. 34, 3028--3031. \MR{2559808}

\bibitem[Ron86]{ronveaux86}
Andr\'e Ronveaux,
\emph{Discrete semiclassical orthogonal polynomials: generalized Meixner},
J. Approx. Theory \textbf{46} (1986), no.~4, 403--407. \MR{0842801}

\bibitem[Sak01]{SAKAI2001}
Hidetaka Sakai, \emph{Rational surfaces associated with affine root systems and
  geometry of the {P}ainlev\'{e} equations}, Comm. Math. Phys. \textbf{220}
  (2001), no.~1, 165--229. \MR{1882403}


\bibitem[Shi25]{yangtranslations}
Yang Shi,
\emph{Translation in affine Weyl groups and its application in discrete integrable systems},
Proc. A. \textbf{481} (2025), no.~2312, Paper No. 20240749, 21~pp.


 \bibitem[Sho39]{shohat}
 Jacques Shohat,
 \emph{A differential equation for orthogonal polynomials},
Duke Math. J. \textbf{5} (1939), no.~2, 401--417. \MR{1546133}

\bibitem[SVA12]{SmetVanAssche}
Christophe Smet and Walter Van Assche,
\emph{Orthogonal polynomials on a bi-lattice},
Constr. Approx.   \textbf{36} (2012), no.~2, 215--242. \MR{2957309}

\bibitem[Sze75]{szegobook}
Gabor Szeg\H o, {\it Orthogonal polynomials}, fourth edition, 
American Mathematical Society Colloquium Publications, Vol. XXIII, Amer. Math. Soc., Providence, RI, 1975. \MR{0372517}

\bibitem[Tak03]{takenawaD5}
Tomoyuki Takenawa,
\emph{Weyl group symmetry of type $D_5^{(1)}$ in the $q$-Painlev\'e V equation},
Funkcial. Ekvac.   \textbf{46} (2003), no.~1, 173--186. \MR{1996297}

\bibitem[TD24]{antonelizaveta}
Elizaveta Trunina and Anton Dzhamay, 
\emph{Orthogonal Polynomials for the Gaussian Weight with a Jump and Discrete Painlevé Equations}, 
Proceedings of the 16th International Symposium on Orthogonal Polynomials, Special Functions and Applications (OPSFA-16), 47–61.
CRM Ser. Math. Phys.
Springer, 2025. \MR{4982239}

\bibitem[VA07]{VanAsscheDESFOP}
Walter Van Assche,
\emph{Discrete Painlev\'e equations for recurrence coefficients of orthogonal polynomials},
Difference equations, special functions and orthogonal polynomials, 687–725.
World Scientific Publishing Co. Pte. Ltd., Hackensack, NJ, 2007. \MR{2451211}


\bibitem[VA18]{walterbook}
\bysame,
\emph{Orthogonal polynomials and Painlevé equations},
Austral. Math. Soc. Lect. Ser., 27
Cambridge University Press, Cambridge, 2018. xii+179 pp. \MR{3729446}

\bibitem[VAF03]{VanAsscheFoup}
Walter Van Assche and Mama Foupouagnigni,
\emph{Analysis of non-linear recurrence relations for the recurrence coefficients of generalized Charlier polynomials},
J. Nonlinear Math. Phys. \textbf{10} (2003), 231--237. \MR{2063533}


\bibitem[ZCZ25]{siqi}
Menghun Zhu, Siqi Chen, and Xuhao Zhang,
\emph{Recurrence coefficients for the time-evolved Jacobi weight and discrete Painlev\'e equations on the $D_5$ Sakai surface},
	arXiv:2511.04176 [math.CA], 2025.
\end{thebibliography}

\providecommand{\bysame}{\leavevmode\hbox to3em{\hrulefill}\thinspace}
\providecommand{\MR}{\relax\ifhmode\unskip\space\fi MR}

\begingroup
\footnotesize % Set font size to footnotesize

\endgroup

\clearpage

\end{document}